\DeclareMathOperator*{\E}{\mathbb{E}}
\newcommand{\sketch}{\boldsymbol{s}}
\renewcommand{\d}{\,\mathrm{d}}
\newcommand{\supp}{\operatorname{supp}}
\numberwithin{equation}{section}
\newtheorem{theorem}{Theorem}[section]
\newtheorem{lemma}[theorem]{Lemma}
\theoremstyle{definition}
\newtheorem{proposition}[theorem]{Proposition}
\newtheorem{corollary}[theorem]{Corollary}
\newtheorem{definition}[theorem]{Definition}
\newtheorem{claim}{Claim}
\DeclareMathOperator{\dist}{dist}
\newcommand{\eps}{\varepsilon}
\newcolumntype{L}{>{$}l<{$}} 
\newcommand{\ALG}{\textsf{ALG}}
\newcommand{\calD}{\mathcal{D}}
\newcommand{\calS}{\mathcal{S}}
\newcommand{\calQ}{\mathcal{Q}}
\def\calD{\mathcal{D}} 
\def\calS{\mathcal{S}}
\def\FindRep{\textsc{FindRep}}
\def\frak{\mathfrak}
\def\Sketch{\textsc{Sketch}}
\def\sketch{\calS}
\def\yes{\mathsf{yes}} 
\def\no{\mathsf{no}}
\author{Xi Chen\thanks{Xi Chen is supported in part by NSF grants IIS-1838154, CCF-2106429, and CCF-2107187.} \\ Columbia University\\ \normalsize \texttt{xc2198@columbia.edu}
\and 
Yumou Fei \\ Peking University \\ \normalsize \texttt{feiym2002@stu.pku.edu.cn}
\and
Shyamal Patel\thanks{Shyamal Patel is supported in part by NSF grants IIS-1838154, CCF-2106429, CCF-2107187, CCF-2218677, ONR grant ONR-13533312, and an NSF Graduate Student Fellowship.} \\ Columbia University\\ \normalsize\texttt{shyamalpatelb@gmail.com}}
\date{}
\title{Distribution-Free Testing of Decision Lists\\ with a Sublinear Number of Queries\vspace{0.3cm}}
\begin{document}

\maketitle

\begin{abstract}
We give a distribution-free testing algorithm for 
  decision lists with $\tilde{O}(n^{11/12}/\eps^3)$ queries.
This is the first sublinear algorithm for this problem, which
  shows that, unlike halfspaces, testing is strictly easier than learning for decision lists.
Complementing the algorithm, we show that any distribution-free tester
  for decision lists must make $\tilde{\Omega}(\sqrt{n})$ queries, or draw
  $\tilde{\Omega}(n)$ samples when the algorithm is sample-based.
\end{abstract}

\section{Introduction}

A Boolean function $f:\{0,1\}^n\rightarrow \{0,1\}$ is called
  a \emph{decision list} (or $1$-decision list) if there exists~a list of pairs 
  $(\alpha_1,\beta_1),\ldots,(\alpha_k,\beta_k)$ where each $\alpha_i$
  is a literal and $\beta_i\in \{0,1\}$, such that~$f(x)$~is~set~to be  
  $\beta_j$ of the smallest index $j$ such that $\alpha_j$ is satisfied by $x$ and is set to
  be a default value $\beta_{k+1}\in \{0,1\}$ if no literal is satisfied.
Decision lists were first introduced by Rivest \cite{Rivest87},
  and have been one of the most well studied classes of Boolean functions 
  in computational learning theory.
In particular, 
the fundamental theorem of Statistical Learning \cite{MLbook}
  shows  that  the  VC dimension  of  a class essentially 
  captures the number of  samples needed for its PAC learning \cite{Valiant},
  which gives a tight bound of $\Theta(n)$ samples for learning 
  decision lists. Moreover, this bound is tight even if we give the learner query access to the underlying decision list \cite{turan1993lower}.

In this paper, we study the \emph{distribution-free testing} of decision lists, where
  the goal of  a tester is to determine 
  whether an unknown  function $f:\{0,1\}^n\rightarrow \{0,1\}$ is a decision list  or $\eps$-far from decision lists with respect  to an 
  unknown  distribution $\calD$ over $\{0,1\}^n$ (i.e., $\Pr_{x\sim \calD} [f(x)\ne g(x)]\ge \eps$ for any decision list $g$), given  oracle (query) access to $f$ and sampling access to $\calD$.
Inspired by~the PAC learning model, 
   distribution-free  testing was  first introduced  
  by Goldreich, Goldwasser, and Ron \cite{goldreich1998property}
  and has been studied 
  extensively \cite{AilonChazelle,HalevyKushilevitz,GlasnerServedio,HalevyKushilevitz2,HalevyKushilevitz3,DolevRon,chen2016tight,blais2021vc,ChenPatel}.
While testing is known to be no harder than proper learning \cite{goldreich1998property}, much of the work is motivated by understanding whether concept classes well studied in learning theory 
can be tested more efficiently under the distribution-free testing model.

In \cite{GlasnerServedio}, Glasner and Servedio obtained a lower bound of $\tilde{\Omega}(n^{1/5})$\footnote{For convenience we focus on the case when $\eps$ is a constant in the discussion of related work.} on the query complexity\footnote{For distribution-free testers, the query complexity refers to the number of queries made on $f$ plus the number of samples drawn from $\calD$. In many cases we simply refer to it as the number of queries made by the algorithm.} of distribution-free testing of conjunctions, decision lists and halfspaces.\footnote{Recall that conjunctions are a subclass of decision lists, which in turn are a subclass of halfspaces.}
In \cite{DolevRon},~Dolev and~Ron obtained a distribution-free testing algorithm for conjunctions with $\tilde{O}(\sqrt{n})$ queries.
Later in \cite{chen2016tight}, Chen and Xie gave
  a tight bound of $\tilde{\Theta}(n^{1/3})$ for conjunctions; their $\tilde{\Omega}(n^{1/3})$ lower bound applies~to decision lists and halfspaces as well.
For sample-based distribution-free testing,\footnote{A tester is sample-based if it can only draw samples $x_1,\ldots,x_q\sim \calD$ and receive $f(x_1),\ldots,f(x_q)$.} on the~other~hand,   Blais, Ferreira Pinto Jr. and Harms \cite{blais2021vc} obtained strong lower bounds for a number of concept~classes based on a variant of VC dimension they proposed called the ``lower~VC''~dimension.~In particular, they showed that the distribution-free testing of halfspaces requires $\tilde{\Omega}(n)$ samples. Indeed even for general testers  with queries, Chen and Patel \cite{ChenPatel} recently showed that $\tilde{\Omega}(n)$ queries are necessary, which implies that testing halfspaces is as hard as PAC learning.

To summarize, before this work, there remains wide gaps in our understanding of distribution-free testing of decision lists. In particular, it is not known whether sample-based distribution-free testing requires $\tilde{\Omega}(n)$ samples, and it is not known, when queries are allowed, whether there exists~a distribution-free tester for decision lists with query complexity sublinear in $n$. \medskip

\noindent\textbf{Our Contribution.}
We give the first sublinear distribution-free tester for decision lists:

\begin{theorem}\label{theo:mainalg}
There is a two-sided, adaptive, distribution-free testing algorithm for decision lists that makes $\tilde{O}(n^{11/12}/\eps^3)$ queries and has the same running time.\footnote{For the running time we assume that standard bitwise operations such as bitwise AND, OR and XOR over $n$-bit strings each cost one step.}
\end{theorem}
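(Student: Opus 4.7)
The plan is to exploit the layered structure of decision lists. If $f$ equals the decision list $g = (\alpha_1, \beta_1), \ldots, (\alpha_k, \beta_k)$, then $\{0,1\}^n$ partitions into layers $L_1, \ldots, L_{k+1}$, where $L_j$ is the set of inputs whose first $g$-satisfied literal is $\alpha_j$, and $f$ is constant on each layer. I would fix a threshold $\mu$ and call $L_j$ \emph{heavy} if $\Pr_{\calD}[L_j] \ge \mu$; there are at most $1/\mu$ such heavy layers. The tester then tries to recover a candidate decision list that matches $f$ on the heavy part and separately certifies consistency on the light part.

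For heavy-layer recovery, I would draw $\tilde O(1/(\mu \eps))$ samples from $\calD$, so that every heavy layer receives at least one representative with high probability. For each sample $x$ I would identify the variable of its triggering literal by a standard Hamming-path binary search: pair $x$ with a witness $y$ of opposite $f$-label, then binary-search along the path between them for a coordinate whose flip toggles $f$; each such identification costs $O(\log n)$ queries. I would then bucket the samples by their identified literal, check that output bits are consistent within each bucket, and pin down a linear order on the identified literals by finding inputs that satisfy two candidate literals simultaneously and letting $f$'s value there break ties. Verification on the heavy mass is then a standard sample-and-compare against the reconstructed candidate $\wt g$.

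The main obstacle, and the part that shapes the final exponent, is certifying the light layers: each one has $\calD$-mass below $\mu$, but collectively they can carry $\Omega(1)$ mass, so we cannot simply absorb them into the tolerance. Choosing $\mu$ small enough that the aggregate light mass is bounded by $\eps$ would force $\mu \lesssim \eps/n$ and push the sample budget back up to $\tilde \Theta(n)$, destroying sublinearity. A more delicate argument is needed; I would look for a structural lemma saying that, once the heavy prefix of the list has been pinned down, the residual behavior of $f$ on the ``uncovered'' part of $\calD$ can be certified to be \emph{extendible} to a decision list consistent with $\wt g$ using a substantially cheaper test (e.g.\ by verifying that, conditional on none of the identified heavy literals firing, the function admits a suitable witness structure on a small random set of inputs).

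For the final parameter balance, write $\mu = n^{-\beta}$. The dominant costs are the sample budget $\tilde O(1/(\mu \eps)) = \tilde O(n^\beta/\eps)$, the literal-identification cost $\tilde O((1/\mu)\log n) = \tilde O(n^\beta)$, and the cost $N(\beta,n)$ of the light-layer certification, which I expect to scale like a root of $n$ times a power of $1/\mu$. The exponent $11/12 = 1 - 1/12$ should emerge by equating these dominant terms at a specific polynomial choice of $\mu$; the bulk of the technical work will be designing and analyzing the light-layer subroutine tightly enough that $N(\beta,n)$ matches $\tilde O(n^\beta/\eps)$ at $\beta = 11/12$, yielding the claimed $\tilde O(n^{11/12}/\eps^3)$ bound (with the $1/\eps^3$ coming from the $1/(\mu\eps)$ sample factor together with the two additional $1/\eps$ factors needed for the ordering step and the final verification).
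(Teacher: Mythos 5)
Your proposal is a genuinely different route from the paper's, and it contains a gap large enough that it does not constitute a proof. You yourself flag the gap: you need a ``structural lemma'' certifying the light layers, but you give no candidate statement of that lemma and no subroutine to verify it, and you acknowledge that ``the bulk of the technical work'' is exactly this missing piece. You cannot recover $11/12$ by dimensional analysis of a subroutine whose form you have not specified. The whole difficulty of the problem (and the reason the paper does not simply absorb the light mass into the tolerance) lives in that unaddressed step.

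There is also a smaller but real problem in the heavy-layer step. Identifying the firing literal of a sample $x$ in a decision list by Hamming-path binary search against a witness $y$ of opposite label finds \emph{some} coordinate whose flip toggles $f$ at an intermediate point, but this coordinate need not be the triggering literal of $x$: flipping one bit along the path can wake up earlier, higher-priority rules. The paper handles exactly this difficulty with its $\textsc{MaxIndex}$ procedure, and crucially $\textsc{MaxIndex}$ only runs efficiently on \emph{small} blocks; that restriction is what forces the elaborate bookkeeping (a $\textsc{Sketch}$ of sorted samples, $\textsc{FindBlock}$ binary search into that sketch, and $\textsc{FindBigBlocks}$ to carve out large blocks that must be treated separately). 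Your proposal's ``standard Hamming-path binary search'' has no analogue of this guard, so the heavy-layer reconstruction as written is not sound either.

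For comparison, the paper does not attempt learn-then-test reconstruction at all. It first reduces general decision lists to monotone decision lists (the reduction iterates over sampled candidate ``default strings'' $r$ and tests $f(\cdot \oplus r)$). For monotone decision lists it builds a sketch consistent with $f$, buckets $[n]$ into blocks via $\textsc{FindBlock}$, defines an auxiliary bipartite digraph $H$ on pairs $(u, W)$ recording the triggering variable and the opposite-label support, classifies cycles of $H$ into five types depending on whether they span far-apart blocks, neighboring blocks, big or small blocks, and whether they are length-$2$, length-$3$, or longer alternating cycles; and shows via a feedback-vertex-set argument that if $f$ is $\eps$-far then some type has $\Omega(\eps)$ mass, which a bipartite or hypergraph birthday-paradox lemma then detects. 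The exponent $11/12$ comes from optimizing over the block-size threshold $\delta$, balancing the $\tilde O(n^{1-\delta/2})$ and $\tilde O(n^{1-\delta})$ costs of preprocessing against the $\tilde O(n^{3/4+\delta})$ cost of the type-$5$ (hypergraph) test. This is not the optimization you set up, and it is not a variable playing the role of your $\mu$.
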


Theorem \ref{theo:mainalg} is obtained by first giving an $\tilde{O}(n^{11/12}/\eps^2)$-query algorithm for \emph{monotone} decision lists in Section \ref{sec:mdl} (where a decision list is said to be monotone \cite{GLR01} if all literals $\alpha_i$ in the list are positive), and combining it with a reduction to testing general decision lists in Section \ref{sec:gdl}.

On the lower bound side, we show that any distribution-free testing algorithm for decision lists must make $\tilde{\Omega}(\sqrt{n})$ queries, and must draw $\tilde{\Omega}(n)$ samples when the algorithm is sample-based.

\begin{theorem}\label{thm:lower_bound_wrapup}
Any two-sided, adaptive distribution-free testing algorithm for decision lists must make $\tilde{\Omega}(\sqrt{n})$ queries when $\eps$ is a sufficiently small constant. The same lower bound also applies to testing monotone decision lists.
\end{theorem}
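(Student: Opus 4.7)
By Yao's minimax principle, it suffices to exhibit two distributions $\mathcal{Y}_\yes$ and $\mathcal{Y}_\no$ over pairs $(f, \calD)$---supported respectively on monotone decision list instances and on instances that with probability $1 - o(1)$ are $\eps$-far from every decision list with respect to $\calD$---such that the total variation distance between the induced transcripts of any deterministic algorithm making $q = o(\sqrt{n}/\polylog(n))$ queries and samples is $o(1)$. I would take $\calD$ to be uniform on a structured support $S \subseteq \{0,1\}^n$ of appropriate size (concretely, a set of Hamming-weight-two inputs $e_i + e_j$ for edges of an auxiliary graph on $[n]$) chosen so that the value of a monotone decision list on $S$ is tightly determined by the restriction of $(\sigma, c)$ to the coordinates appearing in $S$.

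The yes-distribution $\mathcal{Y}_\yes$ draws a uniformly random ordering $\sigma$ and label vector $c$ and sets $f$ equal to the resulting monotone decision list on all of $\{0,1\}^n$. Designing $\mathcal{Y}_\no$ is the main conceptual step, because a naive choice of random labels on $S$ leaves an order-one distinguishing gap in simple statistics such as $\Pr[f(e_i + e_j) = f(e_i + e_k)]$, which equals $2/3$ under $\mathcal{Y}_\yes$ (since with probability $1/3$ the index $i$ is earliest in $\sigma$ among $\{i, j, k\}$, forcing both pair-labels to equal $c_i$) but $1/2$ for i.i.d.\ random labels. To close this gap, the no-distribution should sample $f|_S$ from a generative process whose small-support joint marginals match those of $\mathcal{Y}_\yes$---for instance by planting a random orientation and labeling whose projection to any constant-size subgraph mimics a decision list but whose global orientation is non-transitive---while keeping $f$ off $S$ equal to a random decision list in both cases, so that queries outside $S$ carry no distinguishing information.

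The $\eps$-far property for $\mathcal{Y}_\no$ follows from a counting argument bounding the number of distinct decision-list restrictions to $S$; for the size of $S$ needed for the birthday bound below, this is cleanest when $S$ is made large enough that $2^{|S|}$ dominates the number of decision-list restrictions, so that a generic no-labeling is at distance $\Omega(1)$ from every decision list with probability $1 - o(1)$. The indistinguishability analysis proceeds via a birthday bound together with a hybrid over adaptive queries: with $q$ samples drawn i.i.d.\ from $\calD$, the probability that any two samples share a coordinate is $O(q^2/n) = o(1)$ when $q = o(\sqrt n)$, and on the event of no such sharing the joint distribution of sample-labels matches under $\mathcal{Y}_\yes$ and $\mathcal{Y}_\no$ by the marginal-matching design. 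Adaptive queries can deliberately create interactions with sampled inputs, but a standard hybrid argument together with the matching of low-order marginals bounds the total incremental distinguishing advantage by $o(1)$ across all $q$ rounds.

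The main obstacle is the construction of $\mathcal{Y}_\no$ itself, which must simultaneously (i)~match the low-order joint marginals of $\mathcal{Y}_\yes$ at the scale observable with $q = o(\sqrt{n})$ samples and queries, and (ii)~globally fail to be consistent with any decision list with constant probability. The marginal-matching requirement is nontrivial because decision lists impose strong local correlations---such as the $2/3$-statistic above---that generic non-decision-list constructions do not reproduce. The lower bound for general (non-monotone) decision lists follows once the no-instance is verified to be $\eps$-far from all decision lists, not only monotone ones; this is either automatic from the counting argument applied to the full class of decision lists or can be ensured by a symmetrization of the construction that rules out realizations using negative literals.
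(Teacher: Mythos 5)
Your proposal correctly identifies the right high-level scaffolding: Yao's minimax principle, placing the distribution on Hamming-weight-two points $e_i \vee e_j$, a birthday bound ensuring $o(\sqrt n)$ samples rarely hit the same group, and---crucially---the observation that a naive no-distribution fails to match low-order statistics (your $2/3$ vs.\ $1/2$ computation for $\Pr[f(e_i \vee e_j)=f(e_i \vee e_k)]$ is exactly the obstruction the paper's construction is designed to circumvent). This is essentially the same route as the paper. However, two concrete gaps remain.

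First, your $\eps$-far argument is in internal tension. You propose a counting argument showing a \emph{generic} labeling of $S$ is far from every decision list, but you also require the no-labeling to be carefully \emph{non}-generic so that its low-order marginals match the yes-case. Those two requirements cannot both be served by a counting bound applied to a uniformly random labeling; once the labeling is constrained to match marginals, the counting argument says nothing about it. Separately, the quantitative step is dubious: with $|S| = \Theta(n)$ the number of decision-list restrictions to $S$ can be as large as $n!\cdot 2^{n+1}$, which dwarfs $2^{|S|}$, so $2^{|S|}$ does not dominate. The paper sidesteps all of this by exhibiting an explicit no-function (odd/even cycle labelings within groups of four indices) and verifying directly via a convexity-of-halfspaces argument that the four induced constraints $f(e_{a}\vee e_{b})=f(e_{c}\vee e_{d})=1$, $f(e_{b}\vee e_{c})=f(e_{d}\vee e_{a})=0$ cannot simultaneously hold for any LTF. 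This gives a deterministic $1/4$-far bound, with no randomness or counting at all, and also establishes hardness against the larger class of linear threshold functions for free.

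Second, and more seriously, you do not address queries with large support. Your argument that ``queries outside $S$ carry no distinguishing information'' is unsubstantiated: an adaptive query $q \in \{0,1\}^n$ can have $\Omega(n)$ ones and simultaneously intersect many of the $\Theta(n)$ groups; a hybrid over rounds does not by itself bound what such a query could reveal. The paper needs a dedicated structural lemma (its Lemma on ``large queries'') exploiting a deliberate feature of the construction---the first $n/2$ positions in $\pi$ are all $1$-rules---so that any query introducing $\geq 10\log n$ fresh coordinates is overwhelmingly likely to evaluate to $1$ in the yes-case and can be replaced by the constant answer $1$. Only after this pruning step can one show that with high probability the algorithm's transcript touches at most two indices per group, at which point the explicit bijection between yes-permutations and no-permutations consistent with the transcript finishes the argument. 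Without an analogue of this pruning, your hybrid argument has an uncontrolled term.
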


\begin{theorem}\label{thm:sample_based_wrapup}
    Any two-sided, sample-based distribution-free testing algorithm for decision lists must draw $\tilde{\Omega}(n)$ samples when $\eps$ is a sufficiently small constant. The same lower bound also applies to testing monotone conjunctions, conjunctions, and monotone decision lists. 
\end{theorem}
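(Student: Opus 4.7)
The plan is to prove Theorem~\ref{thm:sample_based_wrapup} by Yao's minimax principle. Since monotone conjunctions are contained in each of the four concept classes named in the theorem, and since being $\eps$-far from decision lists implies being $\eps$-far from each of those classes (as they are all sub-collections of $\mathrm{DL}$), it suffices to design two distributions $\calD_\yes$ and $\calD_\no$ over pairs $(f,\calD)$ such that (a) under $\calD_\yes$ the function $f$ is always a monotone conjunction, (b) under $\calD_\no$ the function $f$ is $\eps$-far from every decision list with respect to $\calD$, and (c) no deterministic sample-based algorithm using $q=\tilde{o}(n)$ samples can distinguish them with advantage $\Omega(1)$.

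Following the lower VC dimension framework of Blais, Ferreira Pinto Jr., and Harms \cite{blais2021vc}, we take $\calD$ to be a mixture that places mass $1-\gamma$ on a single uninformative input $x_0=1^n$ (on which every function we consider evaluates to $1$) and mass $\gamma/m$ uniformly on a carefully chosen payload set $P\subset\{0,1\}^n$ of size $m=\tilde\Theta(n)$, where $\gamma>\eps$ is a small constant. Under $\calD_\no$ the labels on $P$ are drawn i.i.d.\ uniform, and $f$ is extended arbitrarily to $\{0,1\}^n$ so that it is globally far from DL. Under $\calD_\yes$, $f$ is drawn from a prior on monotone conjunctions chosen so that its labeling on $P$ mimics uniform labels as closely as possible.

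The NO case is controlled by a counting argument: since $1$-decision lists have $O(n\log n)$ bits of description, they realize at most $2^{(1-\delta)m}$ distinct labelings on $P$ once $m=Cn\log n$ for large enough $C$, and hence a union bound with the Chernoff inequality shows that a uniformly random labeling of $P$ is at Hamming distance $\Omega(m)$ from every DL labeling with probability $1-2^{-\Omega(m)}$; this translates to $\Omega(\gamma)=\Omega(\eps)$ distance from every DL with respect to $\calD$. For the indistinguishability side, with $q=\tilde{o}(n)$ samples only $\tilde O(q\gamma)$ fall in the payload, and by a birthday argument using $m\gg q^2$ they are distinct with high probability; conditional on this event, the labels in $\calD_\no$ are exactly i.i.d.\ uniform bits, and those in $\calD_\yes$ are within $o(1)$ total variation of the same, yielding $o(1)$ total variation between the two $q$-sample distributions.

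The main obstacle is the design of $P$ and the MC prior so that the induced label distribution under $\calD_\yes$ is close to uniform on every $q$-subset of $P$. On the natural indicator set $\{1^n-e_i:i\in[n]\}$, a random MC $c_T$ with $T$ uniform in $2^{[n]}$ gives perfectly i.i.d.\ uniform labels, but $|P|=n$ is too small to invoke the counting bound against DLs and, worse, this set is already fully shattered by DLs (using rules with negative literals), so the NO-case distance requirement fails outright. Overcoming this tension by enlarging $P$ to size $\tilde\Theta(n)$ via adjoining further structured points and accompanying the MC prior (possibly with additional randomness over $P$) by an argument that preserves the near-uniformity of the label marginals on $q$-subsets up to an $o(1)$ TV slack, while simultaneously keeping the DL-realizable labelings sub-exponential in $m$, is the technical heart of the argument.
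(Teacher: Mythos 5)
There is a genuine gap in your proposal: you correctly identify the lower-VC-dimension framework of Blais, Ferreira Pinto Jr., and Harms, but you explicitly leave unresolved what you call the ``technical heart of the argument''---namely how to design the payload set $P$ and the prior over monotone conjunctions so that the label distribution looks close to i.i.d.\ uniform on $q$-subsets. Without this, the two cases are not shown to be indistinguishable and the proof does not close. Note also a quantitative tension in your sketch that would be hard to resolve directly: your birthday argument requires $m \gg q^2$, so with $q = \tilde{o}(n)$ you would need $m = \tilde{\Omega}(n^2)$ payload points; yet monotone conjunctions have VC dimension $n$ and cannot come close to fitting random labels on a set that large, so no MC prior can make the YES-case marginals look uniform on $q$-subsets of a size-$m$ payload once collisions are ruled out.

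The paper circumvents exactly this tension by \emph{not} building $\calD_{\yes}$ and $\calD_{\no}$ directly. Instead it reduces the support-size-distinction problem ($\mathsf{SSD}$, \Cref{def:SSD}) to the testing problem: given sample access to $\calD$ over $[10n]$, one draws a uniformly random map $\varphi$ into the slice $\{y : \|y\|_1 = n - \log n\}$ and a uniformly random labeling $f$ of that slice, then feeds $\{(\varphi(x^{(k)}), f(\varphi(x^{(k)})))\}$ to the assumed tester. The crucial structural step is \Cref{lem:shatter}: a set of $\frac{n}{10\log n}$ i.i.d.\ random points from that slice is shattered by monotone conjunctions with probability $1 - O(n^{-1})$. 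Thus when $|\supp(\calD)|$ is small, the random labeling is realizable by a monotone conjunction; when $|\supp(\calD)| \geq 5n$, \Cref{lem:Sauer-Shelah} (Sauer--Shelah) shows the random labeling is $\Omega(1)$-far from every LTF, hence from every decision list. Because $\varphi$ and $f$ are drawn by the reduction itself, independently of $\calD$, the conditional distribution of labeled samples is \emph{identical} in both cases given the multiset of collisions among the $x^{(k)}$---so there is nothing to engineer; the indistinguishability is inherited wholesale from the $\mathsf{SSD}$ lower bound of \cite{wu2019chebyshev,blais2021vc}. This reduction is the missing idea your proposal needs.
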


As a warm-up for our main algorithm behind Theorem \ref{theo:mainalg}, we give an optimal distribution-free testing algorithm for \emph{total orderings}, which highlights, in a simplified setting, some of the most crucial ideas behind the main algorithm for monotone decision lists. To our knowledge, this is also the first tester for total orderings in the distribution-free setting. 
The input consists of 1) oracle access to a comparison function $<_\sigma$ over $[n]$
  (i.e., one can pick $i\ne j\in [n]$ to reveal whether $i<_\sigma j$ or $j<_\sigma i$);
  and 2) sampling access to a distribution $\calD$ over the 
  set of $\smash{n\choose 2}$ many $2$-subsets of $[n]$.~The goal is to determine whether $<_\sigma$ is a total ordering
  or $\eps$-far from total orderings with respect to $\calD$.
  {Equivalently, $<_\sigma$ can be considered as a tournament graph $G_\sigma$
 over $[n]$ and the algorithm is given oracle access to it (i.e., one can pick $u\ne v \in [n]$ and query whether $(u,v)$ or $(v,u)$ is in $G_\sigma$.
The goal~is~to decide whether $G_\sigma$ is acyclic or $\eps$-far from acyclic with respect to $\calD$ (i.e., any feedback edge set of $G_\sigma$ has probability mass at least $\eps$ in $\calD$, where a feedback edge set is a set of edges such that the graph $G_\sigma$ becomes acyclic after its removal).}

\begin{theorem}\label{thm:warm-up}
There is a two-sided, adaptive distribution-free testing algorithm for total orderings that makes $\tilde{O}(\sqrt{n}/\eps)$ queries. On the other hand, any such algorithm for total orderings must make $\Omega(\sqrt{n})$ queries when $\eps$ is a sufficiently small constant.
\end{theorem}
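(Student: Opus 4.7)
My plan is a two-phase algorithm combining a random skeleton with edge-based consistency checks. In Phase 1, I would draw a uniformly random set $R \subseteq [n]$ of $\tilde{\Theta}(\sqrt{n})$ vertices and run mergesort on $R$ under the oracle $<_\sigma$, spending $\tilde{O}(\sqrt{n})$ queries to produce a linear order $\prec$ on $R$ consistent with the comparisons made. In Phase 2, I would draw $\tilde{\Theta}(\sqrt{n}/\eps)$ edges $\{u,v\}$ from $\calD$ and, for each edge, binary-search both $u$ and $v$ against the sorted $R$ at cost $O(\log n)$ per endpoint, rejecting as soon as the binary searches expose contradictory comparisons or the induced positions of $u,v$ in $\prec$ disagree with the orientation of $(u,v)$ in $G_\sigma$. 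The total cost is $\tilde{O}(\sqrt{n}/\eps)$, and completeness is immediate.

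\textbf{Soundness.} If $<_\sigma$ is $\eps$-far w.r.t.\ $\calD$, I would decompose $G_\sigma$ into its strongly connected components (SCCs): since collapsing the multi-vertex ones yields an acyclic tournament, the $\calD$-mass of edges internal to multi-vertex SCCs is at least $\eps$. The analysis splits at the $\sqrt{n}$ threshold. If some multi-vertex SCC $S$ has $|S| \geq \sqrt{n}$, then by a birthday argument $|R \cap S| \geq 2$ with constant probability; since $S$ is strongly connected, every pair of its vertices extends to a directed 3-cycle (Moon/Camion), and binary-searching a sampled $\calD$-edge inside $S$ produces a contradiction against the $\prec$-order of $R \cap S$. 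Conversely, if every multi-vertex SCC has size $< \sqrt{n}$, then $\Omega(\sqrt{n})$ small SCCs carry the cyclic mass of at least $\eps$; here the binary searches in Phase 2 would tend to locate both endpoints of a cyclic edge in the same $\prec$-gap, and a more delicate argument---likely requiring $R$ to be supplemented with anchors drawn from the $\calD$-marginal, or a direct collision check among sampled edges---is needed to force detection. Tightening the small-SCC case so that the $\sqrt{n}$ threshold arises as the right balance point is the main technical obstacle.

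\textbf{Lower bound.} For $\Omega(\sqrt{n})$, the plan is a Yao-style indistinguishability argument. Partition $[n]$ uniformly at random into $n/3$ disjoint triples, pick a uniformly random linear order on the triples, and orient every between-triple pair transitively according to that order; let $\calD$ be uniform over the $n$ triple-internal edges. In the yes-distribution each triple is oriented transitively, so $<_\sigma$ is a total ordering; in the no-distribution each triple is oriented as a directed 3-cycle, so the minimum weighted feedback arc set contains one edge per triple for a total $\calD$-mass of $1/3$, and $<_\sigma$ is $\Omega(1)$-far. Because the triple partition is hidden and uniformly random, any $<_\sigma$-query lands inside a triple with probability only $O(1/n)$, so oracle queries contribute negligibly for $Q = o(\sqrt{n})$ total accesses, while each $\calD$-sample reveals one edge of a uniformly random triple; a birthday-paradox argument over the $n/3$ triples shows $\Omega(\sqrt{n})$ samples are required before any triple is hit twice and its structure becomes visible. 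The main subtlety---verifying that adaptive queries cannot circumvent the birthday bound---is handled by a standard Yao-style coupling in the spirit of \cite{chen2016tight, ChenPatel}.
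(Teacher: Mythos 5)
Your plan correctly identifies the two phases (build a sorted skeleton, then binary-search sampled edges against it), and your lower-bound construction via random triples is essentially the simplest instance of the paper's argument (the paper's footnote even notes that triangles already give the $\Omega(\sqrt{n})$ lower bound; the pentagon construction is used there only to foreshadow the upper-bound ideas). But the upper bound has two substantive gaps, one of which you name and one you do not.

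The gap you do not name is that you draw the skeleton $R$ \emph{uniformly} from $[n]$. This fails outright. If $\calD$ is supported on pairs from a vertex subset $V'$ with $|V'| = n^{0.1}$, a uniform $R$ of size $\tilde\Theta(\sqrt{n})$ will with high probability contain no vertex of $V'$, so every endpoint of every sampled $\calD$-edge binary-searches to the \emph{same} gap of $R$, and the skeleton gives zero information. The paper avoids this by drawing the sketch not uniformly but from $\calD^*$, the marginal of $\calD$ on vertices (sample an edge, then one of its endpoints uniformly). The guarantee you actually need, and which uniform sampling cannot give, is that every $R$-gap (``block'') carries $\calD^*$-mass $\tilde O(1/\sqrt{n})$ (Lemma 3.3); this is what bounds the number of sampled vertices per block and hence the query cost of the pairwise checks.

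The gap you do name---the small-SCC case---is exactly the crux, and ``anchors drawn from the $\calD$-marginal, or a direct collision check among sampled edges'' is the right instinct but not a proof. Concretely, the paper handles it by a different decomposition than SCCs: it classifies cycles of $G_\sigma$ by whether they contain an edge that goes \emph{backward} across a block boundary (a ``long'' edge) or are entirely contained in one block (``local''). Long cycles are caught by sampling $O(1/\eps)$ edges and checking for a long edge directly (Lemma 3.8). Local cycles are caught by a \emph{bipartite} birthday collision: sample $\tilde O(\sqrt{n}/\eps)$ edges from $\calD$ \emph{and} $\tilde O(\sqrt{n}/\eps)$ vertices from $\calD^*$, and for every sampled edge $(u,v)$ and sampled vertex $w$ landing in the same block, explicitly query $\{u,w\}$ and $\{v,w\}$ and look for a directed triangle (Lemma 3.9, proved via the bipartite birthday-paradox Lemma 2.1). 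The reason this stays within $\tilde O(\sqrt{n}/\eps)$ queries is again the $\calD^*$-mass bound on blocks. Your proposal checks each $\calD$-edge only against the $R$-skeleton, which corresponds only to the long-cycle case; you have no mechanism to detect a local triangle formed by a $\calD$-edge and a third vertex inside the same block, nor a bound on how many pairs you would need to test. Without both the $\calD^*$ sketch and the bipartite-collision step, the soundness argument does not close.

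Your lower bound sketch is sound, and the Yao coupling you defer to is indeed standard. For completeness, note that the paper's full two-sided lower bound (Section 6) is stated for decision lists and is then observed to transfer; the fact that the no-instance triples collide on a shared vertex is precisely what makes the triangle construction one-sided-friendly, and also why the paper's expository pentagon construction (which has non-adjacent edges inside a group) is used to motivate the sketch idea in the first place.
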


The paper is organized as follows. In \Cref{sec:prelim}, we introduce the preliminaries, including two birthday-paradox-type lemmas that will be proved later in \Cref{sec:birthday}. In \Cref{sec:warmup}, we prove the upper bound part of \Cref{thm:warm-up}, which serves as a warm-up for the proof of \Cref{theo:mainalg}. The proof of \Cref{theo:mainalg} is then presented in \Cref{sec:mdl,sec:gdl}. In \Cref{sec:lowerbound}, we prove a lower bound result (\Cref{thm:lower_bound}) that implies \Cref{thm:lower_bound_wrapup} as well as the lower bound part of \Cref{thm:warm-up}. In \Cref{sec:sample_based}, we prove a lower bound result (\Cref{thm:sample_based}) that implies \Cref{thm:sample_based_wrapup}.

\subsection{Technical Overview}

\begin{figure}[t]
\label{fig:d-no-total-order}
\begin{tikzpicture}[scale=.9]

\begin{scope}[very thick,decoration={
    markings,
    mark=at position 0.5 with {\arrow{>}}}
    ] 

\coordinate (A) at (0,1.5);
\coordinate (B) at (-1.5,0.5);
\coordinate (C) at (-1.118,-1.118);
\coordinate (D) at (1.118,-1.118);
\coordinate (E) at (1.5,0.5);

\draw[postaction={decorate}] (A) -> (B);
\draw[postaction={decorate}] (B) -> (C);
\draw[postaction={decorate}] (C) -> (D);
\draw[postaction={decorate}] (D) -> (E);
\draw[postaction={decorate}] (E) -> (A);
\draw[postaction={decorate}, dotted] (A) -> (C);
\draw[postaction={decorate}, dotted] (C) -> (E);
\draw[postaction={decorate}, dotted] (E) -> (B);
\draw[postaction={decorate}, dotted] (B) -> (D);
\draw[postaction={decorate}, dotted] (D) -> (A);

\node at (A) [above] {$\pi(1)$};
\node at (B) [left] {$\pi(2)$};
\node at (C) [left] {$\pi(3)$};
\node at (D) [right] {$\pi(4)$};
\node at (E) [right] {$\pi(5)$};


\coordinate (A2) at (5,1.5);
\coordinate (B2) at (3.5,0.5);
\coordinate (C2) at (3.882,-1.118);
\coordinate (D2) at (6.118,-1.118);
\coordinate (E2) at (6.5,0.5);

\draw[postaction={decorate}] (A2) -> (B2);
\draw[postaction={decorate}] (B2) -> (C2);
\draw[postaction={decorate}] (C2) -> (D2);
\draw[postaction={decorate}] (D2) -> (E2);
\draw[postaction={decorate}] (E2) -> (A2);
\draw[postaction={decorate}, dotted] (A2) -> (C2);
\draw[postaction={decorate}, dotted] (C2) -> (E2);
\draw[postaction={decorate}, dotted] (E2) -> (B2);
\draw[postaction={decorate}, dotted] (B2) -> (D2);
\draw[postaction={decorate}, dotted] (D2) -> (A2);

\node at (A2) [above] {$\pi(6)$};
\node at (B2) [left] {$\pi(7)$};
\node at (C2) [left] {$\pi(8)$};
\node at (D2) [right] {$\pi(9)$};
\node at (E2) [right] {$\pi(10)$};

 \node at (8.25,0) {$\dots$};



\coordinate (A3) at (12,1.5);
\coordinate (B3) at (10.5,0.5);
\coordinate (C3) at (10.882,-1.118);
\coordinate (D3) at (13.118,-1.118);
\coordinate (E3) at (13.5,0.5);

\draw[postaction={decorate}] (A3) -> (B3);
\draw[postaction={decorate}] (B3) -> (C3);
\draw[postaction={decorate}] (C3) -> (D3);
\draw[postaction={decorate}] (D3) -> (E3);
\draw[postaction={decorate}] (E3) -> (A3);
\draw[postaction={decorate}, dotted] (A3) -> (C3);
\draw[postaction={decorate}, dotted] (C3) -> (E3);
\draw[postaction={decorate}, dotted] (E3) -> (B3);
\draw[postaction={decorate}, dotted] (B3) -> (D3);
\draw[postaction={decorate}, dotted] (D3) -> (A3);

\node at (A3) [above] {$\pi(n-4)$};
\node at (B3) [left] {$\pi(n-3)$};
\node at (C3) [left] {$\pi(n-2)$};
\node at (D3) [right] {$\pi(n-1)$};
\node at (E3) [right] {$\pi(n)$};
\end{scope}
\end{tikzpicture}
\caption{One-side Lower Bound Construction for Total Orderings. An edge from $x$ to $y$ indicates that $x <_\sigma y$. The solid edges in the figure denote those in the support of $\calD_{NO}$.}\label{fig:tot-ord-lb}
\end{figure}
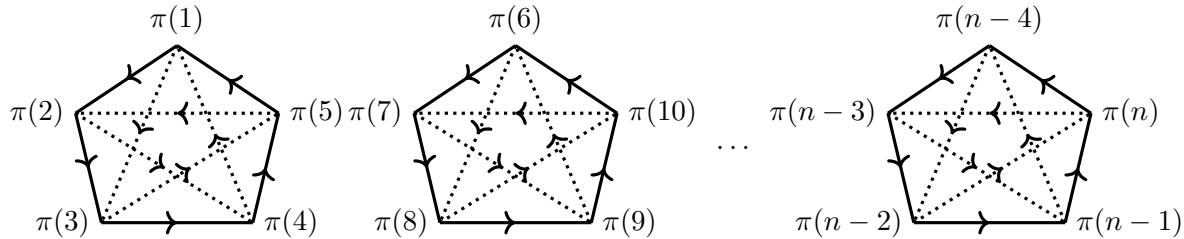

We start by describing an easy $\Omega(\sqrt{n})$ one-sided\footnote{Recall that a testing algorithm  is \emph{one-sided} if it never rejects $(<_\sigma,\calD)$ when $<_\sigma$ is a total ordering.} lower bound for total orderings.
We construct a distribution $\calD_{\text{NO}}$
  over pairs $(<_\sigma,\calD)$ such that   $<_\sigma$ is far from total orderings 
  under $\calD$.
It suffices to show that no deterministic algorithm with $o(\sqrt{n})$ queries
  can  find a violation in $(<_\sigma,\calD)\sim\calD_{\text{NO}}$ (or equivalently,
  a (directed) cycle in the tournament graph $G_\sigma$) with probability at least $2/3$. 

To draw $(<_\sigma,\calD)\sim \calD_{NO}$ \footnote{As it will become clear soon, partitioning
  $[n]$ into triangles would already yield the $\Omega(\sqrt{n})$ one-sided lower
  bound. The more involved construction of $\calD_{\text{NO}}$ here poses 
  extra challenges to motivate discussion on some of the most crucial
    ideas behind our testing algorithm for total orderings.} 
  we first draw a random permutation $\pi$ over~$[n]$ and use it partition~$[n]$ into $n/5$ groups, where the $k$-th group $V_k$ consists
  of vertices $\pi(5k-4),\ldots,\pi(5k)$, for each $k\in [n/5]$.
The comparison function $<_\sigma$ over each group $V_k$ is set according to 
  Figure \ref{fig:d-no-total-order}.
Across two different groups, $<_\sigma$ is made to be consistent with a total ordering, namely, $\pi(x) <_\sigma \pi(y)$ if $\lceil  {x}/{5} \rceil< \lceil {y}/{5} \rceil$. Finally the distribution $\calD$ is uniform over edges
  $\{\pi(5k-4),\pi(5k-3)\},\{\pi(5k-3),\pi(5k-2)\},\{\pi(5k-2),\pi(5k-1)\},
  \{\pi(5k-1),\pi(5k)\},\{\pi(5k),\pi(5k-4)\}$ of each group $k\in [n/5]$.
We write $E_k$ to denote the set of these five edges in the $k$-th group $V_k$.  

Clearly, to make $<_{\sigma}$ into a total ordering, one must change at least one edge in each $E_k$, so $<_\sigma$ is $(1/5)$-far from total orderings. On the other hand, in order for a one-sided algorithm to~reject, it must find a cycle in $V_k$ for some $k$.
Using a birthday paradox argument, with only~$o(\sqrt{n})$~samples, edges 
  sampled from $\calD$ most likely lie in distinct groups. 
When this happens, it is unlikely for~the algorithm to find 
  a cycle using $o(\sqrt{n})$ queries to the black-box oracle. 
Our lower bound for decision lists follows a similar high-level scheme, but with extra care to handle the case where the tester queries a string $x$ with large support
  (ignoring some details, testing total orderings can be thought of as testing decision lists with the restriction that the algorithm can only query the function $f $ on  strings
  $x$ with support size $2$).

We now use instances in $\calD_{\text{NO}}$ to discuss ideas behind our 
  $\tilde{O}(\sqrt{n}/\eps)$-query tester for total orderings. In particular, consider a one-sided tester that aims to find a violation (i.e., a cycle in $G_\sigma$) in $(<_\sigma,\calD)$ from $\calD_{\text{NO}}$. It must draw $\Omega(\sqrt{n})$ samples from $\calD$. After doing so, it is likely to have drawn two edges from the same $E_k$, say
  $\{\pi(5k-4,5k-3)\}$ and $\{\pi(5k-2,5k-1)\}$ for some $k\in [n/5]$.
If the algorithm continues to query the rest of four edges between these 
  four vertices, then a cycle will be found as desired.
That said, the algorithm does not know which pair of edges sampled from $\calD$
  lies in the same group, and working on all pairs would require $\Omega(n)$ queries.\footnote{Note that if the algorithm receives two samples that are 
    consecutive in the same group, then it certainly knows this because they
    share a vertex; the construction, however, makes sure that the triangle
    they form is never a cycle.}

To circumvent this issue, we create a ``sketch'' to attack $(<_\sigma,\calD)$ from 
  $\calD_{\text{NO}}$ as follows:
\begin{flushleft}\begin{enumerate}
\item Sample $\sqrt{n}$ vertices from $[n]$ uniformly at random;  
 	sort them into $\ell_1<_\sigma \ell_2 <_\sigma \cdots <_\sigma \ell_{\sqrt{n}}$ using $O(\sqrt{n}\log n)$ queries on $<_\sigma$;
\item Partition $[n]$ into \emph{blocks} $B_0,\ldots,B_{\sqrt{n}}$ where $B_i$
    consists of all $k\in [n]$ such that running binary search of $k$
    on $\ell_1,\ldots,\ell_{\sqrt{n}}$ sandwiches it in $\ell_i<_\sigma k<_
    \sigma\ell_{i+1}$.
\end{enumerate}\end{flushleft}
We note the following properties of the sketch:
\begin{flushleft}\begin{enumerate}
\item We cannot afford to compute the blocks but given any $ k \in [n]$, it is easy to find  the block $B_i$ that contains $k$
     with $O(\log n)$ queries (by just running binary search);
\item With high probability (over samples used to build the sketch),
   every $B_i$ is of size $\tilde{O}(\sqrt{n})$.
\end{enumerate}\end{flushleft} 
With this sketch in hand, we can use it to find a violation in 
  $(<_\sigma,\calD)$ from $\calD_{\text{NO}}$ efficiently by
  (1) sampling $O(\sqrt{n})$ edges from $\calD$ so that with high probability
    two edges $\{u,v\}$ and $\{u',v'\}$ from the same group are sampled;
  (2) find the block of every vertex in the $O(\sqrt{n})$ edges sampled
    in (1); let $U$ denote this set of $O(\sqrt{n})$ vertices;
  (3) for every block $B_i$ and every two vertices in $U\cap B_i$, query 
    $<_\sigma$ on them and reject if a cycle is found within $U\cap B_i$ for some $i$.
Given that most likely $u,v,u',v'$ lie in the same block,
  the algorithm finds a violation with high probability;
  its query complexity is at most $\tilde{O}(\sqrt{n})$ because
  $|U\cap B_i|$ can be bounded from above by $O(\log n)$ with high probability.
So the savings come from the fact that we only query edges between
  vertices in the same block.

The algorithm for the general case (rather than just dealing with
  instances from $\calD_{\text{NO}}$) follows the same high level idea.
It starts by building a sketch but the vertices $\ell_1,\ldots,\ell_{\sqrt{n}}$
  used to build it are no longer sampled uniformly but from a natural distribution $\calD^*$
  over $[n]$
  defined from $\calD$: to draw $\ell\sim \calD^*$, one first draws an edge
  from $\calD$ and then set $\ell$ to be one of its two vertices uniformly.
Accordingly, the second property of the sketch becomes that every $B_i$ has
  probability mass at most $O(1/\sqrt{n})$ in $\calD^*$.
With such a sketch in hand, 
  we consider cycles in $<_\sigma$. 
Since $<_\sigma$ is $\eps$-far~from~total orderings under $\calD$, 
  we can divide cycles in $G_\sigma$ into two types: those with vertices lying in multiple blocks (called \emph{long} cycles) and those that are completely contained within a single block (called \emph{local} cycles), and consider two cases:
  the distance to total orderings mainly comes from long cycles or local cycles.
To deal with the case where there are many violating long cycles, we show that $\{u,v\}\sim \calD$ satisfies
  $u<_\sigma v$, $u\in B_i$, $v\in B_j$ but $i>j$ with probability $\Omega(\eps)$.
As a result, drawing $O(1/\eps)$ samples from $\calD$ and finding
  buckets of their vertices leads to a violation with high probability.
The case of local cycles, on the other hand, is the case with instances of $\calD_{\text{NO}}$.
To this end we use a birthday paradox lemma in \Cref{sec:paradox} to show that with $\tilde{O}(\sqrt{n})$ samples $U$ from~$\calD^*$,
  some $U\cap B_i$ contains a cycle with high probability, which can be found by brute-force
  search of each $U\cap B_i$.

Unfortunately, several aspects of this approach break when attempting to adapt the algorithm to monotone decision lists. Note that a monotone decision list $f:\{0,1\}^n\rightarrow \{0,1\}$ naturally induces an ordering over strings $x \in \{0,1\}^n$ based the rule in $x$ that fires in $f$. That said, in this setting, one can only compare two strings $x,y$ with $f(x) \not = f(y)$: If $f(x\lor y)=f(x)$, then the rule that fires in $x$ is ranked higher. To accommodate this in the sketch, we bucket elements of $[n]$ into blocks $B_0, \dots, B_{\sqrt{n}}$ that now have alternating values, i.e. all indices $k \in B_i$ have that $f(e_k) = i \mod 2$, where $e_k \in \{0,1\}^n$ denotes the string in which the only $1$-entry is $k$. However, even for a monotone decision list $f$, blocks $B_0,\ldots,B_{\sqrt{n}}$ of a sketch no longer guarantee that all elements in $B_i$ are ranked higher than those in $B_{i+1}$; only a weaker guarantee holds that all elements in $B_i$ are ranked higher than those in $B_j$ for all $j > i + 1$.

The primary challenge when testing monotone decision lists is determining what constitutes a violation. In the case of total orderings, each comparison provides a concrete bit, indicating that one element is larger than the other under $<_\sigma$, and a violation is clearly defined as a cycle. However, in the case of a monotone decision list, querying a string $x$ with, say, $f(x) = 0$, only tells the algorithm that some zero rule fired in $x$ is ranked higher than all the one rules fired~in~$x$.
To address this, we design a procedure that determines the value of the maximum element $k \in \supp(x)$. However, this procedure is effective only for blocks $B_i$ that contain a small number of indices, say $n^{\delta}$ for some small constant $\delta>0$ (the number of queries made by the algorithm is linear in $n^{\delta}$ so is efficient only when $\delta$ is small).  Once we identify the maximum elements, cycles in an associated hypergraph
  naturally leads to violations. As a simple example, let $x$ and $y$ be two strings with $f(x)=0$ and $f(y)=1$. Let $k,\ell$ be maximum elements in $x$ and $y$, respectively. 
If in addition we have $\ell\in \supp(x)$ and $k\in \supp(y)$, then we get a violation
  because being the maximum element in $x$, $k$ should be ranked higher than $\ell$ but
  on the other hand, $y$ tells us that $\ell$ is ranked higher.)

Nevertheless, this procedure is insufficient for testing since many blocks in the sketch may have more than $n^\delta$ indices. For instance, if $f$ is a conjunction, there are only 2 blocks and at least one must be large. To handle such large blocks, we prove that if $f$ is a decision list and $B_i$ is a large block, then most elements of $B_i$ are smaller than those in $B_{i+1}$. If we could check that this property holds for a general $f$, which may not be a decision list, then we are in a similar setting to that of the total ordering case and can easily control violations involving elements from any large block. Verification of this property turns out to be somewhat tricky, but we demonstrate that it can be achieved with an argument similar in spirit to Dolev and Ron's conjunction tester, but crucially modified to use an asymmetric version of the birthday paradox.

Finally to extend our algorithm to test general decision lists, we note that given an arbitrary decision list $f$, if we know the default string $r$, then $f(x \oplus r)$ is now a monotone decision list. While it is not clear how to find $r$ exactly, we show that it suffices to find a string whose firing rule has sufficiently low priority in the decision list. We can then draw many sample strings and try each of them out as the candidate default string $r$. 
\section{Preliminaries}\label{sec:prelim}

\noindent\textbf{Notation.}
 Given a positive integer $n$, we write $[n]$ to denote $\{1,\ldots,n\}$. Given two integers $a\le b$,
we write $[a : b]$ to denote the set of integers $\{a,\ldots, b\}$ between 
  $a$ and $b$.
Given a probability~distribution $\calD$ over a finite set $S$, we write $\calD(p)$ to denote
  the probability mass of $p\in S$ in $\calD$, and~write  
  $\calD(P)$ for a given subset $P\subseteq S$ to denote $\sum_{p\in P} \calD(p)$. We will denote by $\supp(\calD)$ the set $\{p\in S:\calD(p)>0\}$. 
Throughout the paper, drawing a set $T$ of $m$ samples from $\calD$ always means to draw $m$ independent samples from $\calD$ (with replacements) and take $T$ to be the set they form (so in general $|T|$ could be smaller than $m$).

For any string $x\in\{0,1\}^n$, let $\supp(x)$ denote the set $\{i\in[n]:x_{i}=1\}$. Given two strings $x$ and $y\in \{0,1\}^n$, we write $x\vee y$ to denote the bitwise OR of $x$ and $y$, i.e., $x\vee y\in \{0,1\}^n$ with
  $(x\vee y)_i= x_i\lor y_i$ for all $i\in [n]$, and $x\oplus y$ to denote their bitwise XOR, with $(x\oplus y)_i=x_i\oplus y_i$ for all $i\in [n]$.
Given $i\in [n]$ we write $e_i$ to denote the string in $\{0,1\}^n$ such that $(e_i)_i=1$ and $(e_i)_j=0$ for all $j\ne i$.
Given a probability distribution $\calD$ over $\{0,1\}^n$ and $r\in \{0,1\}^n$, we write $\calD\oplus r$ to denote the distribution over $\{0,1\}^n$ with $\calD\oplus r(x)=\calD(x\oplus r)$.

Given $f:\{0,1\}^n\rightarrow \{0,1\}$, $x\in \{0,1\}^n$ is a $1$-string of $f$ if $f(x)=1$ and a $0$-string if $f(x)=0$.
\medskip

\noindent \textbf{Monotone Decision Lists.}
A function $f:\{0,1\}^n\rightarrow \{0,1\}$ is said to be a monotone decision list if it can be represented by a pair $(\pi,\nu)$\footnote{Note though that the representation is not unique in general.}, where $\pi$ is a permutation over $[n]$ and $\nu\in\{0,1\}^{n+1}$, such that
$f(x)=\nu_j$ if $j$ is the smallest integer in $[n]$ such that $x_{\pi(j)}=1$, and $f(x)=\nu_{n+1}$ when~$x=0^n$.
Variable $i\in [n]$ is said to be a $b$-rule variable if $\nu_j=b$ for $j=\pi^{-1}(i)$, where $b\in \{0,1\}$.
We write $\textsc{MonoDL}$ to denote the class of monotone decision lists.

Given $\pi$ and $x\in \{0,1\}^n$, we write $\min_{\pi}(x)$ to denote the smallest $j\in [n]$ such that $x_{\pi(j)}=1$ and it is set to $n+1$ when $x=0^n$.
Let $f$ be an \emph{arbitrary} Boolean function and $x,y$ be two strings with $f(x)\ne f(y)$. We write $x\succ_f y$ (or $y\prec_f x$) if $f(x\lor y)=f(x)$. 
Note that when $f$ is a monotone decision list, we have $x\succ_f y$ if and only if $\min_{\pi}(x)<\min_{\pi}(y)$. As such, for a decision list the ordering simply corresponds to the ordering of the rules in the decision list.
\medskip

\noindent \textbf{Decision Lists.} 
A function $f:\{0,1\}^n\rightarrow \{0,1\} $ is said to be a decision list if $g:=f(x\oplus r)$ is a monotone decision list for some $r\in \{0,1\}^n$.
Equivalently, $f$ is a decision list if it can be
  represented by a triple $(\pi, \mu,\nu)$, where $\pi:[n]\rightarrow [n]$ is a permutation over $[n]$,
  $\mu\in \{0,1\}^n$, and $\nu\in \{0,1\}^{n+1}$, such that 
$f(x)=\nu_j$ if $j$ is the smallest integer in $[n]$ such that $x_{\pi(j)}=\mu_{\pi(j)}$, and 
  $f(x)=\nu_{n+1}$ if no such $j$ exists.
Similarly, we say variable $i\in [n]$ is a $b$-rule variable if $\nu_j=b$ for $j=\pi^{-1}(i)$.
Given $\pi,\mu$ and $x\in \{0,1\}^n$,
  we  let $\min_{\pi,\mu} (x)$  denote the smallest $j$
  with $x_{\pi(j)}=\mu_{\pi(j)}$, and it is set to $n+1$ if no such $j$ exists.
\medskip

\noindent \textbf{Distribution-free Testing.} We review the model of
  distribution-free property testing. 
Let $f,g:\{0,1\}^n\rightarrow \{0,1\}$
denote two Boolean functions  and $\calD$ denote a distribution over $\{0, 1\}^n$.

\def\fC{\frak{C}}

We define the distance between $f$ and $g$ with respect to $\calD$ as
$$
\dist_\calD(f,g)=\Pr_{{x}\in \calD} \big[f({x})\ne g(x)\big].
$$
Given a class $\frak{C}$ of Boolean functions 
  (such as the class of (monotone) decision lists), we define
$$
\dist_\calD(f,\frak{C})=\min_{g\in \frak{C}}\Big(\dist_\calD(f,g)\Big)
$$
as the distance between $f$ and $\fC$ with respect to $\calD$.
We also say that $f$ is $\epsilon$-far from $\fC$ with respect to $\calD$ for some $\epsilon\ge 0$
  if $\dist_\calD(f,\fC)\ge \epsilon$.
Now we define distribution-free testing algorithms. 

Let $\fC$ be a class of Boolean functions over $\{0, 1\}^n$.
A distribution-free testing algorithm~$\ALG$ for $\fC$ has access to a pair $(f,\calD)$, where
  $f$ is an unknown Boolean function $f:\{0,1\}^n\rightarrow \{0,1\}$
  and $\calD$ is an unknown probability distribution   over $\{0,1\}^n$, via
\begin{enumerate}
\item a black-box oracle that returns the value $f(x)$ when $x\in \{0,1\}^n$ is queried; and \vspace{-0.1cm}
\item a sampling oracle that returns a sample $x\sim \calD$
  drawn independently each time.
\end{enumerate}
The algorithm $\ALG$ takes $(f,\calD,\eps)$ as input, where $\eps>0$ is a  distance parameter,
  and satisfies:
\begin{enumerate}
\item If $f\in \fC$, then $\ALG$ accepts with probability at least $2/3$; and\vspace{-0.1cm}
\item If $f$ is $\eps$-far from $\fC$ with respect to $\calD$,
  then $\ALG$ rejects with probability at least $2/3$.
\end{enumerate}

We say an algorithm is sample-based if it can only receive a sequence of samples $z_1,\ldots,z_q\sim \calD$ together with $f(z_1),\ldots,f(z_q)$.

\subsection{Birthday Paradox Lemmas}\label{sec:paradox}

As highlighted earlier in the sketch of our algorithms, birthday paradox arguments play an important role in the analysis.
We include the proof of two birthday paradox lemmas in Section \ref{sec:birthday},~one for bipartite graphs and one for hypergraphs. 
The bipartite graph lemma (\Cref{lem:bipbirthday} below) has been previously incorporated as a crucial component of the analysis in~\cite{DolevRon}
for the distribution-free testing of monomials, though without an explicit statement. In Section  \ref{sec:birthday}, we provide an alternative proof of this lemma, drawing upon the classical birthday paradox from probability theory, and then extend the proof to work for hypergraphs (Lemma \ref{lem:hypbirthday}).  

\begin{restatable}{lemma}{birthdayparadoxone}\label{lem:bipbirthday}
Let $G=(U,V,E)$ be a bipartite graph, with probability distributions $\mu$ on $U\cup \{\#\}$ and $\nu$ on $V\cup\{\#\}$. Assume that any vertex cover $C=C_{1}\sqcup C_{2}$ of $G$, where $C_{1}\subset U$ and $C_{2}\subset V$, has $\mu(C_{1})+\nu(C_{2})\geq \varepsilon$. Let $S$ be a set of $m$ independent samples from $\mu$ and $S'$ be a set of $m'$ independent samples from $\nu$, with $m$ and $m'$ satisfying $m\cdot m'\geq 100 |U|/\eps^2$ and $m,m'\geq 100/\eps$. With probability at least $0.99$, there exist $x\in S$ and $y\in S'$ such that $(x ,y )$ is an edge in $G$.
\end{restatable}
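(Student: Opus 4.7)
The plan is to combine bipartite LP duality with a second-moment (Paley--Zygmund) argument. The hypothesis states exactly that the minimum weighted vertex cover of $G$, with weights $\mu$ on $U$ and $\nu$ on $V$, is at least $\eps$. By LP duality for bipartite graphs (weighted König's theorem), this is equivalent to the existence of a fractional matching $w : E \to \mathbb{R}_{\ge 0}$ satisfying $\sum_{e \ni u} w_e \le \mu(u)$ and $\sum_{e \ni v} w_e \le \nu(v)$ and $\sum_e w_e \ge \eps$. My first step would be to leverage the integrality of the bipartite $b$-matching polytope (via a discretization/scaling argument, or by directly peeling the fractional solution along alternating paths) to extract an honest matching $M = \{(u_1,v_1),\ldots,(u_k,v_k)\}$ with pairwise disjoint vertices together with weights $w_i \le \min(\mu(u_i),\nu(v_i))$ whose total is still $\Omega(\eps)$. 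This reduces the problem to the disjoint-edges case and crucially ensures $k \le |U|$.

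Next I would set $Z = |\{\,i : u_i \in S \text{ and } v_i \in S'\,\}|$ and aim for $\Pr[Z \ge 1] \ge 0.99$ via Paley--Zygmund. For the first moment, using the bound $1-(1-x)^m \ge (1-e^{-1})\min(mx,1)$, one has
\[
\E[Z] \;\ge\; (1-e^{-1})^2 \sum_{i=1}^k \min\{m\mu(u_i),1\}\cdot\min\{m'\nu(v_i),1\}.
\]
I would then lower bound the right-hand side by a large constant through case analysis on whether $m\mu(u_i)$ and $m'\nu(v_i)$ each exceed $1$. Heavy-heavy edges contribute $\Omega(1)$ apiece, and light-light edges yield, via $w_i \le \min(\mu(u_i),\nu(v_i))$ together with Cauchy--Schwarz using $k \le |U|$, a bound of order $mm'(\sum_i w_i)^2/|U| \ge mm'\eps^2/|U| \ge 100$. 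The mixed regimes (one small, one large) are controlled by the hypothesis $m, m' \ge 100/\eps$ together with $\sum_i w_i \ge \eps$.

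For the second moment, the fact that $M$ is a genuine matching becomes essential: the $u_i$'s are distinct and so are the $v_i$'s. Sampling with replacement from $\mu$ satisfies $\Pr[u_i, u_j \in S] \le \Pr[u_i \in S]\Pr[u_j \in S]$ for $i \ne j$, which follows immediately from $1-x-y \le (1-x)(1-y)$, and likewise for $\nu$ on the $V$-side. Combined with the independence of $S$ and $S'$, this yields $\E[Z_i Z_j] \le \E[Z_i]\E[Z_j]$ for $i \ne j$, hence $\E[Z^2] \le \E[Z] + \E[Z]^2$. Paley--Zygmund then gives $\Pr[Z \ge 1] \ge \E[Z]/(1+\E[Z]) \ge 0.99$ as desired.

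The step I expect to be the main obstacle is the König/rounding step, that is, extracting an honest matching (not merely a fractional one) of mass $\Omega(\eps)$, since the constant factor lost in doing so has to be absorbed into the $100$ in the hypothesis. A close second is the case analysis in the first-moment bound: the mixed regime where, say, $m\mu(u_i) \ge 1$ but $m'\nu(v_i) < 1$ needs both $mm' \ge 100|U|/\eps^2$ and $m, m' \ge 100/\eps$ to close, and an asymmetric variant of the Cauchy--Schwarz step above is required to handle it.
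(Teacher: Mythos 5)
Your plan breaks down at the very step you flagged as the main obstacle, and the failure is not a lost constant factor but an unbounded one. The bipartite fractional matching the LP produces can be concentrated on a star. Take $U=\{u\}$, $V=\{v_{1},\ldots,v_{n}\}$, every $\{u,v_{j}\}$ an edge, $\mu(u)=\varepsilon$, and $\nu(v_{j})=\varepsilon/n$ for all $j$. Every vertex cover contains $u$ or all of $V$, so the hypothesis holds with value $\varepsilon$, and the LP optimum $\varepsilon$ is attained by $w_{uv_{j}}=\varepsilon/n$. Any sub-fractional-matching supported on a genuine matching uses at most one edge of this star, hence has total weight at most $\min(\mu(u),\nu(v_{j}))=\varepsilon/n$, not $\Omega(\varepsilon)$. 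Your first moment $\E[Z]$ then scales like $m'\cdot\varepsilon/n$ and never reaches the constant $99$ that Paley--Zygmund needs, even though the lemma itself is clearly true here (with $m,m'\ge 100/\varepsilon$ one samples $u$ and some $v_{j}$ with overwhelming probability). So ``reduce to the disjoint-edges case'' is not a reduction; it throws away exactly the high-degree configurations the lemma must handle, and no discretization or alternating-path peeling recovers the lost mass because the bottleneck is a single vertex of high fractional degree, not integrality.

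The paper avoids this by never rounding. After extracting the fractional matching $(\lambda_{e})$ via LP duality --- the one step you share --- it builds an auxiliary distribution $\mathcal{D}$ on $[\,|U|+1\,]$ whose mass at $i$ is the fractional degree $\sum_{e\ni u_{i}}\lambda_{e}$, together with a Markov kernel $K(\cdot\mid i)$ that splits that mass among the neighbors of $u_{i}$ proportionally to $\lambda$. This makes the $U$-side marginal dominated by $\mu$ and the $V$-side marginal dominated by $\nu$, and the desired collision then follows from a ``classical'' birthday paradox on $\mathcal{D}$ (proved by a martingale-plus-Chernoff argument, their Lemma 6.2), coupled back to the real samples by stochastic domination. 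In effect the paper collapses all edges incident to $u_{i}$ into one heavy atom of $\mathcal{D}$, so a star contributes its full fractional weight rather than the weight of a single edge; that is precisely what your matching extraction destroys. Your second-moment bookkeeping (the negative-correlation computation $\E[Z_{i}Z_{j}]\le\E[Z_{i}]\E[Z_{j}]$ for distinct matched pairs) is correct as far as it goes, but it is downstream of a premise that cannot be established.
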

\begin{restatable}{lemma}
{birthdayparadoxtwo}\label{lem:hypbirthday}
Let $G=(V,E)$ be a $k$-uniform hypergraph and let $\mu$ be a probability distribution  over $V\cup\{\#\}$ such that  any vertex cover $C$ of $G$ has $\mu(C)\geq \varepsilon$. Let $S$ be a set of $m$ samples from $\mu$ with  $$m\geq \frac{10k^{2} |V|^{(k-1)/k}}{\eps}.$$ Then $S$ contains an edge in $G$ with probability at least $0.99$.
\end{restatable}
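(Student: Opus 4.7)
The plan is to prove \Cref{lem:hypbirthday} by induction on $k$, using the bipartite \Cref{lem:bipbirthday} as the base case ($k=2$) and a link-hypergraph decomposition for the inductive step.

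For the base case, I would apply \Cref{lem:bipbirthday} to the bipartite graph whose two vertex classes are both equal to $V$ (each with marginal $\mu$, extended to $V \cup \{\#\}$) and whose bipartite edges are $\{(u, v) : \{u, v\} \in E\}$. The key observation is that any bipartite cover $(C_1, C_2)$ induces a graph cover $C_1 \cup C_2$ with $\mu(C_1) + \mu(C_2) \ge \mu(C_1 \cup C_2) \ge \varepsilon$, so the bipartite cover hypothesis of \Cref{lem:bipbirthday} is satisfied. Splitting $S$ evenly into two halves of size $m/2$ and applying \Cref{lem:bipbirthday} produces a $G$-edge contained in $S$ whenever $(m/2)^2 \ge 100|V|/\varepsilon^2$ and $m/2 \ge 100/\varepsilon$, both implied by $m \ge 40|V|^{1/2}/\varepsilon$.

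For the inductive step ($k \ge 3$), assuming the result for $(k-1)$-uniform hypergraphs, I would partition $S = S_1 \sqcup S_2$ into two independent samples from $\mu$ of sizes $m_1$ and $m_2$. For each $v \in V$, form the $(k-1)$-uniform \emph{link hypergraph} $G_v$ with vertex set $V \setminus \{v\}$ and edges $\{e \setminus \{v\} : v \in e \in E\}$. The strategy is to use $S_1$ to find a ``first vertex'' $v^*$ whose link $G_{v^*}$ has minimum cover mass at least some threshold $\delta$, then apply the inductive hypothesis on $G_{v^*}$ with $S_2$ to find a $(k-1)$-edge in $S_2$ that combines with $v^*$ into a $k$-edge of $G$. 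The central structural claim is that for appropriate $\delta, \eta$ the set $W = \{v \in V : \tau^*_\mu(G_v) \ge \delta\}$ has $\mu(W) \ge \eta$. I would prove this by cover-stitching: if $\mu(W) < \eta$, pick a min cover $C_v$ of $G_v$ with $\mu(C_v) < \delta$ for each $v \notin W$, and verify that $W \cup \bigcup_{v \notin W} C_v$ is a cover of $G$ (any hyperedge $e$ either meets $W$, or lies in $V \setminus W$, in which case for any $v \in e$ the link edge $e \setminus \{v\}$ is covered by $C_v$); this forces $\mu(W) + \sum_{v \notin W} \mu(C_v) \ge \varepsilon$.

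The main obstacle will be calibrating $\delta$ and $\eta$ to fit both steps within the budget $m \ge 10 k^2 |V|^{(k-1)/k}/\varepsilon$. The naive bound $\sum_{v \notin W} \mu(C_v) \le |V| \cdot \delta$ forces $\delta \approx \varepsilon/|V|$ if we want $\mu(W) = \Omega(\varepsilon)$, which then requires $m_2$ polynomial in $|V|$ and breaks the induction. I expect the resolution is a refined structural argument, likely combining the $k$-uniform rounding inequality $\tau^* \le k \cdot \tau^*_f$ (to pass the cover-mass assumption to a fractional matching of weight $\ge \varepsilon/k$) with an averaging argument over link hypergraphs, in order to concentrate cover mass in a small but nonempty set of good first vertices. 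The target is $\delta = \Theta(\varepsilon)$ together with $\mu(W) = \Omega(\varepsilon/|V|^{1/k})$, which would give $m_1 = \tilde{O}(|V|^{1/k}/\varepsilon)$ and $m_2 = \tilde{O}(k^2 |V|^{(k-2)/(k-1)}/\varepsilon)$, whose sum fits within $10 k^2 |V|^{(k-1)/k}/\varepsilon$ because $(k-1)/k - (k-2)/(k-1) = 1/(k(k-1)) > 0$.
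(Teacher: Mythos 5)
Your base case ($k=2$) is fine, and the cover-stitching observation in the inductive step is correct as far as it goes. But the proposal has a genuine gap at exactly the point you flag: the speculative ``refined structural argument'' with $\delta=\Theta(\varepsilon)$ and $\mu(W)=\Omega(\varepsilon/|V|^{1/k})$ is simply false, and the naive bound $\delta\approx\varepsilon/|V|$ cannot be improved in general. Take $V=[n]$ with $n$ a multiple of $k$, let $G$ consist of the $n/k$ disjoint $k$-edges $\{(i-1)k+1,\dots,ik\}$, and let $\mu$ be uniform on $V$. Every vertex cover must hit each edge, so $\tau^*_\mu(G)=1/k$, i.e.\ $\varepsilon=1/k$; yet for every $v$, the link $G_v$ is a single $(k-1)$-edge, so $\tau^*_\mu(G_v)=1/n$. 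Hence $W=\{v:\tau^*_\mu(G_v)\geq\delta\}$ is \emph{empty} whenever $\delta>1/n$, and in particular for $\delta=\Theta(\varepsilon)=\Theta(1/k)$ once $n\gg k$. Conversely, taking $\delta\leq 1/n$ so that $W\neq\emptyset$ forces $m_2\geq 10(k-1)^2|V|^{(k-2)/(k-1)}/\delta=\Omega(n^{1+(k-2)/(k-1)})$, which blows past the target $O(k^2 n^{(k-1)/k}/\varepsilon)$. So the link-hypergraph induction, with the cover-mass threshold as the ``quality measure'' for a first vertex, cannot meet the stated sample budget. Note also that disjoint edges under the uniform distribution is precisely the regime where the $n^{(k-1)/k}$ exponent is tight, so this is not an edge case to be handled separately but the heart of the problem.

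The paper sidesteps this entirely by not inducting on $k$. It solves the fractional-matching LP for $G$ under $\mu$, uses LP complementary slackness (tight vertex constraints form a cover) to deduce that the maximum fractional matching has total weight $\geq\varepsilon/k$, then applies Carath\'eodory to pass to an optimal solution supported on at most $|V|$ edges. Those $\leq|V|$ weighted edges become ``bins,'' each with $k$ ``types'' (one per vertex), and the lemma is reduced to a direct $k$-way birthday-collision calculation (\Cref{lem:birth_classical_hyp}) carried out via Poisson approximation and two convexity facts. No link decomposition appears, and the $|V|^{(k-1)/k}$ exponent emerges from the Poissonized collision estimate rather than from telescoping exponents across levels of an induction. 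If you want to salvage an inductive approach you would need a quality measure for first vertices that is sensitive to matching weight (not cover mass of the whole link), but at that point you are essentially re-deriving the LP argument.
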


\renewcommand{\algorithmicrequire}{\textbf{Input:}}

\section{Warm-up: Testing Total Orderings}\label{sec:warmup}

In this section, we present a distribution-free testing algorithm for  \emph{total orderings} as a warm-up to demonstrate some of the ideas (such as the use of \emph{sketches} and the classification of cycles into \emph{long cycles} and \emph{local cycles}) that will play  important roles in our algorithm for monotone decision lists.  

In  the problem of testing total orderings, we are given 
query access to a comparison function $<_{\sigma}$ over $[n]$ and sampling access to a distribution $\calD$ over $\binom{[n]}{2}$. For any $u\ne v \in [n]$, the tester can query $<_\sigma$ on $\{u,v\}$ to reveal if $u <_{\sigma} v$ or $v <_{\sigma} u$. 
Given $<_\sigma, \calD$ and $\eps$, the goal of the tester is to
\begin{flushleft}\begin{enumerate}
\item accept with probability at least $2/3$ if the comparison function $<_\sigma$ is a total ordering; and 
\item reject with probability at least $2/3$ if $<_\sigma$ is $\eps$-far from total orderings with respect to $\calD$, i.e.,  
$$
\Pr_{\{u,v\}\sim \calD} \Big[\big[u<_\sigma v\ \text{and}\ u>_\tau v\big]\ \text{or}\ \big[
u>_\sigma v\ \text{and}\ u<_\tau v\big]\Big]\ge \eps,\quad\text{for any total ordering $<_\tau$}.
$$
\end{enumerate}\end{flushleft}

We will prove the following theorem for the distribution-free testing of total orderings:

\begin{theorem}\label{thm:total}
There is a distribution-free tester for total orderings with $\tilde{O}(\sqrt{n}/\eps)$ queries. 
\end{theorem}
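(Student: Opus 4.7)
The plan is to implement the sketch-based algorithm outlined in the technical overview. First, I would define a distribution $\calD^{*}$ over $[n]$ by drawing $\{u,v\}\sim\calD$ and returning one of its two endpoints uniformly at random, so that each sample from $\calD^{*}$ costs one sample from $\calD$. The algorithm draws a set $L=\{\ell_{1},\ldots,\ell_{t}\}$ of $\tilde{O}(\sqrt{n}/\eps)$ samples from $\calD^{*}$, sorts them with $O(t\log t)$ queries to $<_{\sigma}$ (rejecting immediately if a cycle is revealed among the $\ell_{i}$'s), and partitions $[n]$ into blocks $B_{0},\ldots,B_{t}$ where $B_{i}=\{k\in[n]:\ell_{i}<_{\sigma}k<_{\sigma}\ell_{i+1}\}$. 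We cannot afford to enumerate the blocks, but for any $k\in[n]$ the index $\iota(k)$ of its block is computable by binary search on the sorted $\ell_{i}$'s using $O(\log n)$ comparison queries. A standard heavy-bucket concentration argument shows that with high probability every block has $\calD^{*}$-mass at most $\tilde{O}(\eps/\sqrt{n})$.

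The next stage handles \emph{long cycles}. The algorithm draws $O(1/\eps)$ fresh edges $\{u,v\}\sim\calD$, computes $\iota(u)$ and $\iota(v)$ by binary search, and rejects if the block indices disagree with the orientation of $<_{\sigma}$ on $\{u,v\}$. Any such witness certifies a cycle because within each block $B_{i}$ all elements sit strictly between $\ell_{i}$ and $\ell_{i+1}$ under $<_{\sigma}$. The structural claim to prove here is: if $<_{\sigma}$ is $\eps$-far from total orderings but no random edge from $\calD$ is cross-block inconsistent with probability more than $\eps/2$, then \emph{patching} $<_{\sigma}$ by reorienting every cross-block edge to agree with the block ordering yields a comparison function that is still $\Omega(\eps)$-far, and whose remaining violations are confined to single blocks. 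Because blocks are totally ordered by the sorted $\ell_{i}$'s, patching removes all cross-block violations without creating new ones and only moves $O(\eps)$ mass.

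This leaves the \emph{local cycle} case, where an $\Omega(\eps)$-fraction of the mass is captured by cycles entirely inside single blocks. Since every cycle in a tournament contains a triangle, it suffices to find one. The algorithm draws a further set $U$ of $\tilde{O}(\sqrt{n}/\eps)$ samples from $\calD^{*}$, computes $\iota(x)$ for each $x\in U$, and for every block $B_{i}$ brute-force queries all pairs in $U\cap B_{i}$, rejecting if a directed triangle is found. Applying the hypergraph birthday paradox (\Cref{lem:hypbirthday}) block-wise to the $3$-uniform triangle hypergraph, together with the block-mass bound $\tilde{O}(\eps/\sqrt{n})$, guarantees that with constant probability some $U\cap B_{i}$ contains a triangle. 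By Chernoff, $|U\cap B_{i}|$ concentrates at $\tilde{O}(1/\eps)$, so the total number of pairwise comparisons in this stage is $\tilde{O}(\sqrt{n}/\eps)$ after reusing samples across stages.

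The main obstacles I expect are (i) formalizing the dichotomy between the long-cycle and local-cycle contributions to the distance, i.e.\ verifying that the patching step neither creates new violations nor loses more than $O(\eps)$ mass, and (ii) showing that the local triangle hypergraphs inside the blocks have total cover mass $\Omega(\eps)$ under $\calD^{*}$ so that \Cref{lem:hypbirthday} can be invoked block-wise. Completeness (accepting true total orderings with probability at least $2/3$) is essentially free, since the algorithm rejects only upon exhibiting an explicit cycle in $G_{\sigma}$.
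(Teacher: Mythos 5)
Your sketch construction, long-cycle stage, and ``patching'' dichotomy (essentially \Cref{coro1}) all match the paper's approach and are correct. The gap is in the local-cycle stage: you propose drawing $\tilde O(\sqrt n/\eps)$ samples from $\calD^*$ only and hoping that three of them fall on a directed triangle inside one block, invoking \Cref{lem:hypbirthday} with $k=3$. For a $3$-uniform hypergraph on $|V|=n$ vertices with vertex-cover mass $\Omega(\eps)$, that lemma requires $m \geq 90\, n^{2/3}/\eps$ samples, which exceeds $\sqrt n/\eps$. The block-mass bound $\tilde O(\eps/\sqrt n)$ does not repair this: it is proved (equation (\ref{eq:goodcond})) only when $<_\sigma$ is a total ordering, and the reject-on-heavy-block safeguard in the NO case only controls the number of \emph{samples} landing in a block, not the number of \emph{vertices} there. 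A block of $\calD^*$-mass $\Theta(\eps/\sqrt n)$ can still contain $\Theta(n)$ vertices, so getting three independent $\calD^*$-samples to collide on a triangle inside it is a genuine three-way birthday event and does not occur with only $\sqrt n$-scale samples.

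The paper instead applies the \emph{bipartite} birthday paradox \Cref{lem:bipbirthday} in $\textsc{TestLocalCycles}$: it draws $O(\sqrt n/\eps)$ \emph{edges} $(u,v)$ from $\calD$ and $O(\sqrt n/\eps)$ \emph{vertices} $w$ from $\calD^*$, rejecting if some sampled edge together with some sampled vertex in the same block forms a directed triangle. Each $\calD$-sample yields both endpoints of an edge simultaneously, so only one additional collision (on $w$) is required --- a two-way birthday event. The auxiliary bipartite graph $H$ has $[n]$ as its smaller side, and \Cref{lem:bipbirthday} with $|U|=n$ and $m=m'=O(\sqrt n/\eps)$ gives exactly the budget you want; \Cref{lem:localcycles} supplies the required vertex-cover lower bound for $H$ by showing that every local cycle contains a directed triangle through one of its own edges. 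To fix your proposal, replace the $\calD^*$-only sampling in the local stage with this two-source scheme and use \Cref{lem:bipbirthday} in place of \Cref{lem:hypbirthday}.
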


We remark that our tester is optimal up to logarithmic factors. Indeed one can easily modify the lower bound proof from \Cref{sec:lowerbound} to show that any tester must make ${\Omega}(\sqrt{n})$ many queries {when $\eps$ is a sufficiently small constant.}
\subsection{Sketches}
The backbone of our tester for total orderings (as well as monotone decision lists in Section \ref{sec:mdl}) are \emph{sketches}, which, roughly speaking, can help us compare elements that are far in the ordering. 


\begin{definition}[Sketch]
A sketch $\calS=(s^{(1)}, \ldots, s^{(k)}) $ is a tuple  of distinct elements from $[n]$ for some $k\ge 1$. We say $\calS$ is \emph{consistent} with a comparison function $<_\sigma$ if $s^{(i)} <_{\sigma} s^{(i+1)}$ for all $i\in [k-1]$. 
\end{definition}

Note that when $<_\sigma$ is a total ordering, one can infer from a consistent sketch $\calS$ that $s^{(i)}<_\sigma s^{(j)}$ for all $i<j$. This, however, does not hold for general comparison functions. 

The procedure $\textsc{Sketch}$ described in Algorithm \ref{alg:total-order-sketching}  
efficiently builds a sketch  by simply sampling and sorting elements drawn from $\calD^*$, where 
  $\calD^*$ is a distribution over $[n]$ defined using 
  $\calD$ as follows$$\calD^*(i):=\frac{1}{2}\cdot \sum_{j\ne i} \calD(\{i,j\}),\quad\text{for each $i\in [n]$.}$$
Note that   sampling access to $\calD^*$ can be simulated using sampling access to $\calD$, query by query, by first sampling from $\calD$ and returning one of the two elements uniformly at random.

\begin{algorithm}[t!]
\caption{$\textsc{Sketch}(<_\sigma,\calD,\eps)$}\label{alg:total-order-sketching}
\begin{algorithmic}[1]
\Require {Oracle access to $<_\sigma$, sampling access to $\calD$ and $\eps>0$}
\State Draw $  O({\sqrt{n}}/{\eps})$ samples $\calD^*$ and let $S$ be the set of elements sampled
\State Sort elements in $S$ into $s^{(1)},\ldots,s^{(k)}$ by running MergeSort with $<_\sigma$, where $k=|S|\ge 1$
\State Query $\smash{\{s^{(i)}, s^{(i+1)}}\}$  and \textbf{reject} if $\smash{s^{(i)} >_{\sigma} s^{(i+1)}}$ for any $\smash{i\in [k-1]}$
\State \textbf{return} $\calS := (s^{(1)}, \ldots,s^{(k)} )$
\end{algorithmic}
\end{algorithm}
 
We summarize performance guarantees of $\textsc{Sketch}$ in the following lemma:

\begin{lemma}
\label{lem:total-ordering-small-mass-partition}
$\Sketch$ makes $\tilde{O}(\sqrt{n}/\eps)$ queries. It rejects or returns
  a sketch  consistent with $<_\sigma$.

Suppose that $<_\sigma$ is a total ordering. 
Then $\Sketch$ always returns a sketch 
  $\calS= (s^{(1)}, \ldots, s^{(k)})$ that is consistent with $<_\sigma$.
Moreover, with probability at least $1-o_n(1)$, $\calS$ satisfies for all $i\in [0:k]$,
\begin{equation}\label{eq:goodcond}
\Pr_{u \sim \calD^*} \left[ s^{(i)} <_\sigma  u <_\sigma s^{(i+1)}  \right] < \frac{100 \eps \log n}{\sqrt{n}}, 
\end{equation}
where the event above is $u<_\sigma s^{(1)}$ when $i=0$ and is
  $s^{(k)}<_\sigma u$ when $i=k$.
\end{lemma}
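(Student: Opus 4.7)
The plan is to prove the three assertions in the lemma in order: the query complexity, the ``reject-or-consistent'' guarantee, and the high-probability mass bound (\ref{eq:goodcond}). The first two are deterministic and can be read off from $\Sketch$ directly. The main work lies in (\ref{eq:goodcond}), which I plan to reduce to a standard largest-gap statement for uniform order statistics.

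For the query complexity, $\Sketch$ draws $|S| = O(\sqrt{n}/\eps)$ samples, sorts them with MergeSort using $O(|S| \log |S|) = \tilde{O}(\sqrt{n}/\eps)$ queries to $<_\sigma$, and then makes $|S| - 1$ verification queries in Step 3, totalling $\tilde{O}(\sqrt{n}/\eps)$. For the consistency guarantee, Step 3 explicitly checks $s^{(i)} <_\sigma s^{(i+1)}$ for each $i$ and rejects otherwise, so any returned sketch is consistent with $<_\sigma$. When $<_\sigma$ is a total ordering, MergeSort correctly sorts $S$ under $<_\sigma$, every consecutive pair then passes Step 3, and $\Sketch$ never rejects.

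For (\ref{eq:goodcond}), my plan is a CDF coupling to uniform samples. Let $\pi$ be the permutation of $[n]$ with $\pi(1) <_\sigma \pi(2) <_\sigma \cdots <_\sigma \pi(n)$, and set $W_j := \sum_{i \le j} \calD^*(\pi(i))$ with $W_0 = 0$ and $W_n = 1$. A sample from $\calD^*$ can be coupled to a uniform random variable $U$ on $[0,1]$ by outputting the unique $\pi(j)$ with $W_{j-1} < U \le W_j$. Under this coupling, the sketch corresponds to the sorted order statistics $U_{(1)} < \cdots < U_{(m)}$ of the $m = \Theta(\sqrt{n}/\eps)$ i.i.d.\ uniform draws, and a short computation shows that the $\calD^*$-mass of $\{u : s^{(i)} <_\sigma u <_\sigma s^{(i+1)}\}$ is at most $U_{(i+1)} - U_{(i)}$, with analogous bounds at the two boundary gaps ($i = 0$ and $i = k$).

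It then suffices to show that with probability $1 - o_n(1)$ every one of these uniform gaps is strictly less than $p^* := 100\eps \log n / \sqrt{n}$. For this I would partition $[0,1]$ into $N = \lceil 4/p^* \rceil$ sub-intervals each of length at most $p^*/4$ and union-bound over the event that some sub-interval is empty. A specific sub-interval is empty with probability at most $(1 - p^*/4)^m \le e^{-m p^*/4} = n^{-\Omega(1)}$, once the hidden constant in $m = \Theta(\sqrt{n}/\eps)$ is chosen sufficiently large; since $N \le n$, the union bound still gives total failure probability $o_n(1)$. When every sub-interval contains a sample, two consecutive order statistics must lie in the same or in adjacent sub-intervals, so every gap has length at most $2/N \le p^*/2 < p^*$, yielding (\ref{eq:goodcond}). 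The only mildly fiddly step will be the CDF-coupling bookkeeping and the treatment of the two boundary gaps, both of which the explicit coupling makes routine.
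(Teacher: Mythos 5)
Your proof is correct, and it reaches (\ref{eq:goodcond}) by a genuinely different route than the paper. The paper works directly in the discrete setting: WLOG $1 <_\sigma 2 <_\sigma \cdots <_\sigma n$, define for each $i\in[n]$ the smallest $j_i$ with $\calD^*([i,j_i]) \geq 100\eps\log n/\sqrt{n}$, observe that a violation of (\ref{eq:goodcond}) forces some interval $[i,j_i]$ to contain no sample, and union-bound over the $n$ choices of $i$ using the per-interval bound $(1 - 100\eps\log n/\sqrt{n})^{\sqrt n/\eps} \le n^{-100}$. You instead route through a CDF coupling to $[0,1]$, reduce the $\calD^*$-mass of each gap $(s^{(i)}, s^{(i+1)})$ to a consecutive gap between uniform order statistics, and then apply the standard ``largest-gap'' bound by partitioning $[0,1]$ into sub-intervals of length $\approx p^*/4$ and union-bounding over empty sub-intervals. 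Both are union bounds over $\tilde O(n)$ (or $\tilde O(\sqrt n / \eps)$) events with per-event failure probability $n^{-\Omega(1)}$, so they carry comparable content; the paper's version avoids the coupling machinery and so is shorter, while yours is a clean reduction to a textbook fact about uniforms.

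Two small points to tighten if you write this out in full. First, the claim ``$N \le n$'' is not true in general: with $N = \lceil 4/p^*\rceil = \lceil 4\sqrt n / (100\eps\log n)\rceil$, taking $\eps$ below $O(1/(\sqrt n \log n))$ gives $N > n$. The union bound still succeeds because the per-sub-interval failure probability $e^{-mp^*/4} = n^{-\Omega(1)}$ has a constant in the exponent that you control through the $\Theta(\sqrt n/\eps)$ constant, but you should not appeal to $N \le n$ as stated. Second, the sketch records \emph{distinct} elements, so if some element is sampled more than once, $k < m$; you should note that the collapse is monotone, so the $\calD^*$-mass between consecutive distinct values $s^{(i)}, s^{(i+1)}$ is still dominated by a single consecutive order-statistic gap $U_{(\ell+1)} - U_{(\ell)}$ (namely the one at which the collapsed value jumps from $s^{(i)}$ to $s^{(i+1)}$), and likewise for the boundary gaps via $U_{(1)}$ and $1 - U_{(m)}$.
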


\begin{proof}
The only nontrivial part of the lemma is to show that the event described at the end occurs with probability at least $1-o_n(1)$.
Without loss of generality, take $<_\sigma$ to be the total ordering with $1<_\sigma 2<_\sigma \cdots <_\sigma n$. For each $i$, let $j_i$ denote the smallest integer with $\calD^*([i,j_i]) \geq  {100 \eps \log n}/{\sqrt{n}}$. Note that
	\[\Pr\Big[ \textsc{Sketch} \text{ does not sample any element in $[i,j_i]$} \Big] \leq \left(1 - \frac{100 \eps \log n }{\sqrt{n}} \right)^{\sqrt{n}/\eps} \leq n^{-100}.\]
So by a union bound, $\textsc{Sketch}$ samples a point from each interval $[i,j_i]$ with high probability. The lemma  follows because, if there exists an $i$ such that
	\[\Pr_{ u \sim \calD^*} \left[ s^{(i)} <_\sigma u <_\sigma s^{(i+1)}  \right] \geq \frac{100 	\eps \log n }{\sqrt{n}} \]
then it must be the case that $\Sketch$ did not sample any point in some interval $[i,j_i]$.
\end{proof}

\def\FindBlock{\textsc{FindBlock}}

Given a sketch $\calS$ that is consistent with a total ordering $<_\sigma$,
  $\FindBlock(<_\sigma,\calS,u)$ (described in Algorithm \ref{alg:total-order-find-block}) returns the unique $i\in [0:k]$ such that
\begin{enumerate}
\item $i=0$ if $u<_\sigma s^{(1)}$; 
\item $i\in [k-1]$ if either $u=s^{(i)}$ or $s^{(i)}<_\sigma u<_\sigma s^{(i+1)}$; and
\item $i=k$ if either $u=s^{(k)}$ or $s^{(k)}<_\sigma u$.
\end{enumerate}
Indeed, $\FindBlock$ returns such an $i$ for $u$ even when $<_\sigma$ is 
  an arbitrary comparison function.
  
We summarize its performance guarantees below:

\begin{lemma}
$\FindBlock(<_\sigma,\calS,u)$ is deterministic and makes $O(\log n)$ queries.
It 
  always returns an $i\in [0:k]$ that satisfies the conditions above for $u$.
\end{lemma}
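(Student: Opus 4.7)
The plan is to implement $\FindBlock$ as a deterministic binary search on the sketch, using only the consecutive-consistency $s^{(i)} <_\sigma s^{(i+1)}$ (never relying on any transitive conclusion like $s^{(i)} <_\sigma s^{(j)}$ for $j > i+1$, which would fail for general comparison functions). Conceptually, I introduce sentinels $s^{(0)}$ and $s^{(k+1)}$ that are treated as $-\infty$ and $+\infty$ respectively. The algorithm maintains a search interval $[\ell, r] \subseteq [0:k]$ with the invariant
\begin{equation*}
s^{(\ell)} <_\sigma u \quad \text{and} \quad u <_\sigma s^{(r+1)},
\end{equation*}
starting from $\ell = 0$, $r = k$ (where both halves of the invariant hold vacuously by the sentinel convention).

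At each iteration, if $\ell = r$ the algorithm returns $\ell$. Otherwise, it picks some midpoint $m \in [\ell+1 : r]$, queries $\{u, s^{(m)}\}$ (after first checking the special case $u = s^{(m)}$, in which case it returns $m$), and updates: if $u <_\sigma s^{(m)}$, set $r \gets m - 1$; if $s^{(m)} <_\sigma u$, set $\ell \gets m$. In either case the invariant is preserved, and $r - \ell$ shrinks by at least a factor of roughly two, so termination occurs within $O(\log k) = O(\log n)$ iterations, giving the claimed query bound. Determinism is immediate from the description.

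For correctness, I would argue by cases on the value returned. If the algorithm exits by returning $m$ after discovering $u = s^{(m)}$, then condition (2) or the $u = s^{(k)}$ subcase of (3) is satisfied by definition. Otherwise it returns $i = \ell = r$, at which point the invariant yields $s^{(i)} <_\sigma u <_\sigma s^{(i+1)}$; translating the sentinel conventions, $i = 0$ means $u <_\sigma s^{(1)}$ (condition (1)), $i = k$ means $s^{(k)} <_\sigma u$ (condition (3)), and $i \in [k-1]$ gives exactly condition (2). This establishes that $\FindBlock$ returns an $i$ satisfying the three listed conditions for any comparison function $<_\sigma$, regardless of whether $<_\sigma$ is transitive.

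I do not anticipate a real obstacle here — the argument is essentially textbook binary search. The only mildly delicate point is making sure the invariant is formulated so that it accommodates the sentinel boundaries cleanly, which is what makes the case analysis for the returned $i \in \{0, k\}$ fall out without extra work.
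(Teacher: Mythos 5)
Your proposal is correct and follows essentially the same approach as the paper: a deterministic binary search over the sketch maintaining a bracketing invariant, using only the single pairwise comparisons actually made (never transitivity), with $O(\log k) = O(\log n)$ queries. The only cosmetic difference is that the paper's Algorithm~2 handles all equality checks $u = s^{(i)}$ and the two boundary cases upfront, whereas you fold them into the loop via sentinels; as you correctly observe, even when an equality $u = s^{(j)}$ is not detected during the search, the returned index still satisfies one of the three conditions because the invariant is established purely from queries actually made.
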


Given any $<_\sigma$ and a sketch $\calS$ consistent with $<_\sigma$,
  $\FindBlock$  (which is deterministic) uses $\calS$ to induce a partition of $[n]$ into \emph{blocks}.
We say $u\in [n]$ lies in 
  the $\ell$-th block (with respect to $\calS$) for some $\ell\in [0:k]$ if $\ell=\FindBlock(<_\sigma,\calS,u)$.

\begin{algorithm}[t!]
\caption{$\textsc{FindBlock}(<_\sigma,\calS,u)$}\label{alg:total-order-find-block}
\begin{algorithmic}[1]
\Require{Oracle access to $<_\sigma$, a 
  sketch $\calS = (s^{(1)}, \ldots, s^{(k)})$ consistent with $<_{\sigma}$ and $u \in [n]$}
\State \textbf{return $i$} if $u = s^{(i)}$ for some $i\in [k]$ 
\State \textbf{return} $0$ if $u<_\sigma s^{(1)}$; \textbf{return} $k$ if $s^{(k)}<_\sigma u$
\State Set $\smash{\text{upper} \gets k}$ and $\smash{\text{lower} \gets 1}$
\While{$\smash{\text{upper} - \text{lower} > 1}$}
\State Set $\text{mid} \gets \lfloor {(\text{upper} + \text{lower})}/{2} \rfloor$
\State If $s^{(\text{mid})}>_\sigma u$, set $\text{upper} \gets \text{mid}$; otherwise, $\text{lower} \gets \text{mid}$
\EndWhile
\State \textbf{return} $\text{mid}$
\end{algorithmic}
\end{algorithm}


\subsection{The order graph and classification of cycles}

We now move to discuss how we will reject comparison functions that are far from total orderings. Towards this goal, we define the \emph{order graph} and introduce some notation:

\begin{definition}[Order graph]
Given a comparison function $<_\sigma$, the order graph $G_\sigma$ is an orientation of the complete graph $K_n$, where edge $(u,v)$ is oriented towards $v$ if $u <_\sigma v$.
The distribution $\calD$  naturally induces a distribution over 
  edges of $G_{\sigma}$: the probability mass of an edge $(u,v)$ in $G_{\sigma}$
  is given by $\calD(\{u,v\})$.
For convenience we will still use $\calD$ to denote the distribution over edges of $G_\sigma$
  and write $\calD(R)$ to denote the total probability of a set of edges $R$ in $G_\sigma$.
\end{definition}

It's easy to see that if the order graph is acyclic if and only if $<_\sigma$ is a total ordering. Moreover, we can connect distance between $<_\sigma$ and total orderings
  with feedback edge sets of $G_\sigma$:

\begin{lemma}
\label{lem:total-order-workspace}
If $<_\sigma$ is $\eps$-far from total orderings with respect to $\calD$, then any set $R$
  of edges of $G_{\sigma}$ such that $G_\sigma$  is acyclic
  after removing $R$ (i.e., $R$ is a feedback edge set) must satisfy $\calD(R) \geq \eps$. 
\end{lemma}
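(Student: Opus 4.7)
The plan is to argue the contrapositive: suppose there exists a feedback edge set $R$ of $G_\sigma$ with $\calD(R) < \eps$, and then construct an explicit total ordering $<_\tau$ that is $\calD(R)$-close to $<_\sigma$, contradicting the $\eps$-far hypothesis.

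First I would form the directed graph $H := G_\sigma \setminus R$. By hypothesis $H$ is acyclic, so it admits a topological ordering of its vertices $[n]$; let $<_\tau$ denote any such linear extension. Because $<_\tau$ is a linear order on $[n]$, it is a bona fide total ordering in the sense required by the testing definition.

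Next I would compare $<_\sigma$ and $<_\tau$ pair by pair. Fix any $\{u,v\}\in\binom{[n]}{2}$ and without loss of generality assume $u<_\sigma v$, so the edge $(u,v)$ appears in $G_\sigma$. If $(u,v)\notin R$, then $(u,v)$ is still present in $H$, and a topological ordering of $H$ must place $u$ before $v$, i.e.\ $u<_\tau v$; hence $<_\sigma$ and $<_\tau$ agree on $\{u,v\}$. Consequently, the set of pairs on which $<_\sigma$ and $<_\tau$ disagree is contained in (the undirected projection of) $R$.

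Finally, using this containment,
\[
\Pr_{\{u,v\}\sim \calD}\Big[[u<_\sigma v\ \text{and}\ u>_\tau v]\ \text{or}\ [u>_\sigma v\ \text{and}\ u<_\tau v]\Big]\ \le\ \calD(R)\ <\ \eps,
\]
which contradicts the assumption that $<_\sigma$ is $\eps$-far from total orderings with respect to $\calD$. Hence every feedback edge set must have $\calD(R)\ge\eps$. The only slightly delicate point is making sure the construction does not run into issues when $R$ contains pairs that happen to be covered by other (non-$R$) directed paths in $H$; this is actually not an obstacle, since the topological extension already commits to one direction for every such pair and our bound only needs to count the disagreeing pairs, which are necessarily drawn from $R$. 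I expect no real technical difficulty beyond verifying this inclusion carefully.
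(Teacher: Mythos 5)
Your proof is correct and takes essentially the same approach as the paper: remove $R$ from $G_\sigma$, take a topological ordering of the acyclic remainder as the comparison total ordering $<_\tau$, and observe that all disagreements between $<_\sigma$ and $<_\tau$ lie in $R$, so $\dist_\calD(<_\sigma,<_\tau)\le\calD(R)$. The only difference is cosmetic: you phrase it as a contrapositive while the paper argues directly.
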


\begin{proof}
Let $R$ be such a set. As $G_\sigma$ after removing $R$ is acyclic, there is a total ordering of $[n]$ that is consistent with all edges of $G_\sigma$ except those in $R$. 
As a result, the distance between $<_\sigma$ and total orderings under $\calD$ is
  at most $\calD(R)$, from which we have that $\calD(R)\ge \eps$.
\end{proof}

Consider $(<_\sigma,\calD)$ such that $<_\sigma$ is $\eps$-far from
  total orderings with respect to $\calD$.
We use a sketch $\calS$ to classify cycles of $\calD$ into two types: \emph{long} cycles and \emph{local} cycles.

\begin{definition}[Long and local cycles]
Given a sketch $\calS = (s^{(1)},\ldots, s^{(k)})$, we 
  say a directed edge $(u,v)$ in $G_\sigma$ (which means
  that $u<_\sigma v$) is a \emph{long} edge (with respect to $\calS$)
  if $$\textsc{FindBlock}(<_\sigma,\calS,u)>\textsc{FindBlock}(<_\sigma,\calS,v).$$
A cycle in $G_\sigma$ is said to be a \emph{long} cycle
  if it contains at least one long edge.
A cycle in $G_\sigma$ is said to be a \emph{local} cycle
  if it does not contain any long edges.
\end{definition}

Given that every cycle is either long or 
   local, we have the following corollary of Lemma \ref{lem:total-order-workspace}:

\begin{corollary}
\label{coro1}
\emph{Suppose $<_\sigma$ is $\eps$-far from total orderings with respect to $\calD$, and $\calS$ is a sketch that is consistent with $<_\sigma$.
Then either any feedback edge set $R$ for long cycles of $G_\sigma$
  has $\calD(R)\ge \eps/2$, or any feedback edge set $R$ for local cycles of $G_\sigma$ has $\calD(R)\ge \eps/2$.}
\end{corollary}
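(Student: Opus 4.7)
The plan is to argue by contrapositive (equivalently, by contradiction): I will assume that both of the two options in the disjunction fail and derive a feedback edge set for \emph{all} cycles whose mass is strictly less than $\eps$, contradicting Lemma~\ref{lem:total-order-workspace}.

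First, I would set up the contradiction: suppose there exists a feedback edge set $R_1$ for the long cycles of $G_\sigma$ with $\calD(R_1) < \eps/2$, and simultaneously a feedback edge set $R_2$ for the local cycles of $G_\sigma$ with $\calD(R_2) < \eps/2$. By subadditivity of $\calD$,
\[
\calD(R_1 \cup R_2) \le \calD(R_1) + \calD(R_2) < \eps.
\]

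Next, I would verify that $R_1 \cup R_2$ is in fact a feedback edge set for the whole graph $G_\sigma$. Because every cycle in $G_\sigma$ either contains at least one long edge (long cycle) or contains no long edge (local cycle), the cycles of $G_\sigma$ partition into long cycles and local cycles. Any cycle surviving in $G_\sigma \setminus (R_1 \cup R_2)$ is in particular a cycle of $G_\sigma$, hence either long or local. A surviving long cycle would contradict that $R_1$ is a feedback edge set for long cycles (since $G_\sigma \setminus (R_1 \cup R_2) \subseteq G_\sigma \setminus R_1$), and likewise for local cycles and $R_2$. Therefore $G_\sigma \setminus (R_1 \cup R_2)$ is acyclic.

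Finally, applying Lemma~\ref{lem:total-order-workspace} to $R_1 \cup R_2$, which is a feedback edge set for $G_\sigma$, we obtain $\calD(R_1 \cup R_2) \ge \eps$, contradicting the strict inequality above. Hence at least one of the two disjuncts must hold, which is the desired conclusion. I do not anticipate any technical obstacle here: the key observation is simply that the classification of cycles into long and local is exhaustive, so feedback sets for the two classes combine into a feedback set for the union.
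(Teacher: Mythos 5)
Your proposal is correct and is exactly the argument the paper has in mind: the paper states the corollary without a written-out proof, noting only that it follows from Lemma~\ref{lem:total-order-workspace} together with the fact that every cycle is long or local, and your contrapositive argument with $R_1\cup R_2$ simply makes that implicit reasoning explicit.
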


$\textsc{TestLongCycles}$ (see Algorithm \ref{alg:test-long-cycles}) is the procedure
  that helps reject $(<_\sigma,\calD)$ when $\calD(R)\ge \eps/2$
  for any feedback edge  set $R$ of long cycles of $G_\sigma.$
It simply draws edges from $\calD$ and rejects when a long edge is found.
Given that a total ordering has no long edges, $\textsc{TestLongCycles}$
  is trivially one-sided.
Its performance guarantees are stated in the following lemma:

\begin{algorithm}[t!]
\caption{$\textsc{TestLongCycles}(<_\sigma,\calD,\eps,\calS)$}\label{alg:test-long-cycles}
\begin{algorithmic}[1]
\Require{Oracle access to $<_\sigma$, sampling access to $\calD$, $\eps>0$ and a sketch $\calS$ consistent with $<_\sigma$}
\State Draw $ {100}/{\eps}$ samples from $\calD$
\State For each $\{u,v\}$ sampled with $u<_\sigma v$, \textbf{reject} if $\textsc{FindBlock}(<_\sigma,\calS,u)>\textsc{FindBlock}(<_\sigma,\calS,v)$
\State \textbf{accept}
\end{algorithmic}
\end{algorithm}

\begin{lemma}\label{lem:testlongcycles}
$\textsc{TestLongCycles}(<_\sigma,\calD,\eps,\calS)$ makes $O(\log n/\eps)$ queries. 

When $<_\sigma$ is a total ordering, $\textsc{TestLongCycles}$  always accepts.

Suppose that any feedback edge set $R$ for long cycles 
  in $G_\sigma$ satisfies $\calD(R)\ge {\eps}/{2}$.
Then \textsc{TestLongCycles} rejects with probability at least $0.99$.
\end{lemma}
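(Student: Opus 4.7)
The plan is to handle each of the three claims in turn; each is essentially immediate from the construction, so the proof should be quite short.

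For the query complexity, I would observe that the algorithm draws $100/\eps$ samples from $\calD$, and for each sampled pair $\{u,v\}$ it makes a single query to $<_\sigma$ (to determine the direction) together with two invocations of $\textsc{FindBlock}$, each of which costs $O(\log n)$ queries by the preceding lemma. Multiplying gives the claimed bound of $O(\log n/\eps)$.

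For the completeness claim, I would verify that when $<_\sigma$ is a total ordering and $\calS=(s^{(1)},\ldots,s^{(k)})$ is consistent with $<_\sigma$, the block assignment induced by $\textsc{FindBlock}$ partitions $[n]$ into $B_0,\ldots,B_k$ such that every element of $B_i$ is $<_\sigma$-smaller than every element of $B_{i+1}$; this is immediate from the definition of $\textsc{FindBlock}$ combined with transitivity of $<_\sigma$. Consequently, for any edge $(u,v)$ with $u<_\sigma v$ one has $\textsc{FindBlock}(<_\sigma,\calS,u)\le \textsc{FindBlock}(<_\sigma,\calS,v)$, so no long edges exist and the algorithm never triggers its reject branch.

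For soundness, the key observation is that the set $L$ of all long edges of $G_\sigma$ is itself a feedback edge set for the long cycles, since by definition every long cycle contains at least one long edge. The hypothesis then gives $\calD(L)\ge \eps/2$, so the probability that none of the $100/\eps$ independent samples from $\calD$ lies in $L$ is at most $(1-\eps/2)^{100/\eps}\le e^{-50} < 0.01$. Conditioned on at least one sampled edge being in $L$, the algorithm will detect it deterministically through its queries to $<_\sigma$ and $\textsc{FindBlock}$ and will reject. The only point worth double-checking is the block-monotonicity claim in the completeness case, but this follows straight from the construction; beyond that, the argument is pure bookkeeping and presents no genuine obstacle.
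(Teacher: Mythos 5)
Your proposal is correct and matches the paper's argument: the paper likewise notes that the set of long edges is a feedback edge set for long cycles (so it has mass at least $\eps/2$), and that $100/\eps$ samples miss it with probability at most $(1-\eps/2)^{100/\eps}\le 0.01$. The completeness and query-count parts you spell out are also exactly what the paper relies on (stated informally in the surrounding text rather than in the proof itself).
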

\begin{proof}
Note that the set of long edges forms a feedback edge set for  long cycles. It follows that we sample a long edge on line 1 with probability at least
$1-(1 - \eps/2)^{ {100}/{\eps}} \ge 0.99.$
\end{proof}

Next we consider the case when any feedback edge set for
  local cycles of $G_\sigma$ has mass at least $\eps/2$.
It follows from the definition that a cycle $C$ is local if and only if all of its
  vertices lie in the same block, i.e., $\FindBlock(<_\sigma,\calS,u)$
  is the same for all $u\in C$.
The following lemma motivates the procedure $\textsc{TestLocalCycles}$ for this case.
To state the lemma, we let $H$ denote the following undirected bipartite graph:
  the left side of $H$ consists of edges of $G_\sigma$;
  the right side of $H$ consists of vertices $[n]$ of $G_\sigma$;
  $(u,v)$ and $w$ has an edge iff $v<_\sigma w<_\sigma u$ and 
$$
\FindBlock(<_\sigma,\calS,u)=\FindBlock(<_\sigma,\calS,v)=\FindBlock(<_\sigma,\calS,w).
$$ 
Combining with $u<_\sigma v$ as $(u,v)$ is an edge in $G_\sigma$,
  an edge between $(u,v)$ and $w$ in $H$ implies that $u,v,w$ form
  a directed triangle, a violation to $<_\sigma$ being a total ordering.

We are now ready to state the lemma:

\begin{lemma}\label{lem:localcycles}
Suppose that any feedback edge set $R$ for local cycles in $G_\sigma$
  has $\calD(R)\ge \eps/2$.
Then any vertex cover $C=C_1\sqcup C_2$ of $H$ must have 
  $\calD(C_1)+\calD^*(C_2)\ge \eps/2$.
\end{lemma}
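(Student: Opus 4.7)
The plan is to take an arbitrary vertex cover $C=C_1\sqcup C_2$ of $H$ and explicitly produce a feedback edge set $R$ for the local cycles of $G_\sigma$ whose total $\calD$-mass is at most $\calD(C_1)+\calD^*(C_2)$. Combined with the hypothesis that any feedback edge set for local cycles has $\calD$-mass at least $\eps/2$, this will yield the stated inequality $\calD(C_1)+\calD^*(C_2)\ge\eps/2$.

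Concretely, I will take
$$
R\ :=\ C_1\ \cup\ \bigl\{(u,v)\in G_\sigma\ :\ u,v\in C_2\ \text{and $u,v$ lie in the same block of }\calS\bigr\}.
$$
The central claim is that $R$ is a feedback edge set for local cycles of $G_\sigma$. Since a tournament is acyclic if and only if it contains no directed $3$-cycle, it suffices after removing $R$ to check that no local $3$-cycle survives inside any single block. So I will fix an arbitrary local $3$-cycle $a\to b\to c\to a$ with $a,b,c$ in a common block and argue that at least one of its three edges is in $R$. The key observation is that this $3$-cycle contributes exactly three edges to $H$, namely $((a,b),c)$, $((b,c),a)$, and $((c,a),b)$, because in each case the third vertex lies strictly between the endpoints of the selected edge under $<_\sigma$ (and the three vertices share a block). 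Assuming for contradiction that no edge of the $3$-cycle is in $R$, I get that none of $(a,b),(b,c),(c,a)$ lies in $C_1$ and that no two of $\{a,b,c\}$ are simultaneously in $C_2$. Applying the vertex-cover property to the three $H$-edges above then forces $c,a,b\in C_2$ respectively, contradicting the second conclusion.

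It remains to bound $\calD(R)$. The first part contributes exactly $\calD(C_1)$. For the second, unpacking the definition of $\calD^*$ and separating edges of $\calD$ by how many endpoints lie in $C_2$ yields
$$
\calD^*(C_2)\ =\ \sum_{\{u,v\}\subseteq C_2}\calD(\{u,v\})\ +\ \tfrac12\sum_{\{u,v\}:|\{u,v\}\cap C_2|=1}\calD(\{u,v\})\ \ge\ \calD\bigl(\bigl\{\{u,v\}:u,v\in C_2\bigr\}\bigr).
$$
Therefore $\calD(R)\le\calD(C_1)+\calD^*(C_2)$, and the hypothesis immediately finishes the proof. The main obstacle is choosing $R$ tightly enough to produce the coefficient $1$ (rather than $2$) in front of $\calD^*(C_2)$: the naive choice of deleting every $G_\sigma$-edge that touches $C_2$ would lose a factor of two, since each such edge is counted twice in $\sum_{v\in C_2}\sum_{j\ne v}\calD(\{v,j\})$. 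The point is that restricting $R$ to edges with \emph{both} endpoints in $C_2$ already suffices, and the main content of the proof is verifying, via the case analysis above, that this more parsimonious choice still breaks every local $3$-cycle.
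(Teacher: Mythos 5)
Your choice of $R$ is genuinely different from the paper's, and it is the cleaner one. The paper takes $R$ to be $C_1$ together with \emph{all} edges incident to a vertex of $C_2$, and then asserts $\calD(R)\le\calD(C_1)+\calD^*(C_2)$. But for edges with exactly one endpoint in $C_2$, that set has $\calD$-mass up to $2\calD^*(C_2)$, so the paper's stated bound for its own $R$ is off by a constant (harmless for the theorem, but an imprecision). Your more parsimonious choice --- only edges with both endpoints in $C_2$ (and in a common block) --- genuinely satisfies the coefficient-$1$ bound, and your computation of $\calD^*(C_2)$ via the one-endpoint/two-endpoint split is correct.

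However, there is a real gap in the step where you show $R$ is a feedback edge set for local cycles. You write that ``a tournament is acyclic if and only if it contains no directed $3$-cycle, so it suffices after removing $R$ to check that no local $3$-cycle survives.'' This reduction is unjustified: after deleting the edges of $R$, the graph $G_\sigma\setminus R$ restricted to a block is no longer a tournament, and a non-tournament digraph can easily have a directed cycle of length $\ge 4$ while containing no directed $3$-cycle (e.g.\ a plain directed $4$-cycle). So establishing that every local $3$-cycle meets $R$ does not by itself establish that every local cycle meets $R$. Your $3$-cycle case analysis itself is fine, but it is only the base case.

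The conclusion is nonetheless true, and the missing argument is exactly the structural fact the paper proves first: for any local cycle $C=c_0\cdots c_{k-1}$ and \emph{any} vertex $c_m$ of $C$, there is a cycle edge $(c_i,c_{i+1})$ such that $c_i\to c_{i+1}\to c_m\to c_i$ is a directed triangle (the paper's induction gives this for $c_0$, and a rotation of the indices gives it for every $c_m$). From the $H$-edge $\bigl((c_i,c_{i+1}),c_m\bigr)$ and the vertex-cover property you then get: either some cycle edge lies in $C_1$ (done), or every $c_m$ lies in $C_2$, in which case every cycle edge has both endpoints in $C_2$ and hence lies in your $R$ (done). With this replacement for the ``tournament $\Rightarrow$ enough to check $3$-cycles'' step, your tighter choice of $R$ gives a correct and in fact sharper proof of the lemma.
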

\begin{proof}
First we show that for any local cycle $C = c_0\ldots c_{k-1}$ in $G_\sigma$, there must exist $i$ and $j$ such that $c_i c_{(i+1)\ \text{mod}\ k} c_j$ forms a directed triangle. Assume for a contradiction that this is not the case. We start by noting that we must have that $c_0 <_\sigma c_j$ for all $j$. Indeed, assume that $c_0 <_\sigma c_i$. If $c_{i+1} <_\sigma c_0$ then $c_0 c_i c_{(i+1) \mod k}$ forms a directed triangle. Since the base case holds ($c_0 <_\sigma c_1$), we conclude that $c_0 <_\sigma c_j$ for all $j$. But now we have reached a contradiction since $c_{k-1} <_\sigma c_0$. 

Therefore, for any local cycle $C = c_0\ldots c_{k-1}$ of $G_\sigma$, there exist 
  $i$ and $j$ such that there is an edge between $(c_i,c_{i+1\ \text{mod}\ k})$ and $c_j$ in $H$.
We now claim that if $C=C_1\sqcup C_2$ is a vertex cover of $H$,
  then there is a feedback edge set $R$ for local cycles in $G_\sigma$
  with $\calD(R)\le \calD(C_1)+\calD^*(C_2)$, from which the lemma follows.
To see this is the case, we set $R$ to be the following set of edges 
  in $G_\sigma$: all edges in $C_1$ and all edges that are incident 
  to a vertex in $C_2$.
It is easy to verify that $R$ is a feedback edge set for local cycles of $G_\sigma$. This finishes the proof of the lemma.
\end{proof}
  

Based on Lemma \ref{lem:localcycles}, $\textsc{TestLocalCycles}$ (Algorithm \ref{alg:test-local-cycles}) mimics the bipartite birthday paradox  Lemma \ref{lem:bipbirthday} by drawing
  $\sqrt{n}/\eps$ samples $S$ from $\calD$ and $\sqrt{n}/\eps$ samples $T$
  from $\calD^*$.
Then for any edge $(u,v)$ in $S$ and any vertex $w$ in $T$ with all $u,v,w$ lying in the same block,
  we query~$\{u,w\}$ and $\{v,w\}$ to see if they form a directed
  triangle.
Naively, however, this could lead to $\Omega(n)$ queries (e.g., consider the worst case when all elements lie in the same block). However, by Lemma \ref{lem:total-ordering-small-mass-partition}, this is unlikely to occur when $<_\sigma$ is truly a total ordering so $\textsc{TestLocalCycles}$
  rejects when too many samples lie in the same block. 
This is where the algorithm makes two-sided errors though.

We state performance guarantees of $\textsc{TestLocalCycles}$ in
  the following lemma:
 
\begin{algorithm}[t!]
\caption{$\textsc{TestLocalCycles}(<_\sigma,\calD,\eps,\calS)$}\label{alg:test-local-cycles}
\begin{algorithmic}[1]
\Require{Oracle access to $<_\sigma$, sampling access to $\calD,$ $\eps>0$ and a sketch $\calS$ consistent with $<_\sigma$}
\State Draw $O(\sqrt{n}/\eps)$ edges $S$ from $\calD$ and draw $O(\sqrt{n}/\eps)$ elements $T$ from $\calD^*$
\State For every element $u$ in $T$ or an edge  of $S$,
  run $\textsc{FindBlock}(<_\sigma,\calS,u)$.
\State \textbf{reject} if any block has more than $1000\log n$ elements from $T$ 
\For{every $(u,v)\in S$ and $w\in T$ such that $\FindBlock$ puts 
  them in the same block}
\State Query $\{u,w\}$ and $\{v,w\}$ and \textbf{reject} if $u,v,w$ form a directed triangle in $G_\sigma$
\EndFor
\State \textbf{accept}
\end{algorithmic}
\end{algorithm}


\begin{lemma}
\label{lem:total-ordering-blocks-small}
$\textsc{TestLocalCycles}$ makes $\tilde{O}(\sqrt{n}/\eps)$ queries.

Suppose that $<_\sigma$ is a total ordering and $\calS$  is a sketch 
  that is consistent with $<_\sigma$ and satisfies (\ref{eq:goodcond}). Then \textsc{TestLocalCycles} accepts with probability at least $1 - o_n(1)$.

Suppose that any feedback edge set $R$ of local cycles in $G_\sigma$ has $\calD(R)\ge \eps/2$. Then it rejects with probability at least $1-o_n(1)$.
\end{lemma}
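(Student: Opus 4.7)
My plan is to establish the three claims in turn. For the query complexity, the $\textsc{FindBlock}$ invocations on line 2 each cost $O(\log n)$ queries and are performed on $O(\sqrt{n}/\eps)$ elements, contributing $\tilde{O}(\sqrt{n}/\eps)$ queries in total. If line 3 does not reject, every block contains at most $1000 \log n$ elements of $T$, so the loop on lines 4--6 processes at most $|S| \cdot 1000 \log n = \tilde{O}(\sqrt{n}/\eps)$ triples $(u,v,w)$, each using two further queries.

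For the completeness, when $<_\sigma$ is a total ordering $G_\sigma$ has no directed triangles, so line 5 never triggers and the only potential source of rejection is line 3. By the sketch property \eqref{eq:goodcond}, each block $B_i$ has $\calD^*$-mass at most $100 \eps \log n / \sqrt{n}$, so the expected number of $T$-samples lying in any single block is $O(\log n)$. A Chernoff bound gives $\Pr[|T \cap B_i| > 1000 \log n] \le n^{-\omega(1)}$, and a union bound over the at most $O(\sqrt{n}/\eps)$ nonempty blocks shows the probability of rejection on line 3 is $o_n(1)$.

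For the soundness I will apply the bipartite birthday paradox \Cref{lem:bipbirthday} to the graph $H$ defined just before \Cref{lem:localcycles}. I take the ``$U$''-side of \Cref{lem:bipbirthday} to be $[n]$ with distribution $\mu = \calD^*$, and the ``$V$''-side to be the edges of $G_\sigma$ with distribution $\nu = \calD$. By \Cref{lem:localcycles}, every vertex cover $C = C_1 \sqcup C_2$ of $H$ (with $C_1 \subset U$, $C_2 \subset V$) satisfies $\mu(C_1) + \nu(C_2) \ge \eps/2$. Choosing the hidden constants in line 1 of \textsc{TestLocalCycles} so that $m = m' \ge 30\sqrt{n}/\eps$, we have $m \cdot m' \ge 100 |U|/(\eps/2)^2$, so \Cref{lem:bipbirthday} yields, with probability at least $0.99$, some $w \in T$ and $(u,v) \in S$ forming an edge of $H$---that is, with $u,v,w$ all in the same block and $v <_\sigma w <_\sigma u$, so that $u,v,w$ form a directed triangle in $G_\sigma$. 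If line 3 does not reject, the inner loop visits this triple and rejects on line 5; if line 3 does reject, we are already done.

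The only delicate step is the alignment between the two lemmas: because \Cref{lem:bipbirthday} is asymmetric in requiring $m \cdot m' \ge 100|U|/\eps^2$, we must identify $[n]$ (the smaller of the two sides of $H$) with the ``$U$''-slot so that the size constraint becomes affordable at $m = m' = \tilde{O}(\sqrt{n}/\eps)$ rather than the $\Theta(n/\eps)$ that would arise if the $\Theta(n^2)$ edge-side of $H$ were used instead.
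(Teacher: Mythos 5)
Your proof follows the paper's approach closely, and the soundness and query-complexity parts are correct; you even articulate a useful point the paper leaves implicit (that one must map $[n]$, not the $\Theta(n^2)$ edge side of $H$, to the ``$U$''-slot of \Cref{lem:bipbirthday} in order for the $m\cdot m' \ge 100|U|/\eps^2$ requirement to be met at $m = m' = \tilde{O}(\sqrt{n}/\eps)$).

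However, your completeness argument has a small but genuine gap. You assert that ``each block $B_i$ has $\calD^*$-mass at most $100\eps\log n/\sqrt{n}$,'' attributing this to \eqref{eq:goodcond}. This is not what \eqref{eq:goodcond} says: it only bounds the $\calD^*$-mass of the \emph{open} interval $\{u : s^{(i)} <_\sigma u <_\sigma s^{(i+1)}\}$, whereas the block $B_i$ (for $i \in [k]$) additionally contains the sketch point $s^{(i)}$ itself, which can carry arbitrarily large $\calD^*$-mass. Consequently, the expected number of \emph{samples} landing in $B_i$ is not $O(\log n)$. What saves the argument is that $T$ is a \emph{set} (drawing with replacement and deduplicating, as the paper's conventions stipulate): no matter how many times $s^{(i)}$ is sampled, it contributes only one element to $T \cap B_i$. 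Therefore you should bound $|T \cap B_i| \le |\{u \in T : s^{(i)} <_\sigma u <_\sigma s^{(i+1)}\}| + 1$, apply Chernoff to the first term using \eqref{eq:goodcond}, and then add $1$ for the endpoint, exactly as the paper does (bounding the first term by $900\log n$ with probability $1 - n^{-9}$ and noting $900\log n + 1 < 1000\log n$). Separately, your claimed Chernoff tail of $n^{-\omega(1)}$ is an overclaim---a Chernoff bound with these parameters gives $n^{-c}$ for a constant $c$---but $n^{-c}$ for, say, $c = 9$ is still ample against a union bound over the $O(\sqrt n/\eps)$ blocks.
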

\begin{proof}
The query complexity follows from the fact that for any edge $(u,v)\in S$, there are at most $O(\log n)$ many $w\in T$ that lie in the same block; otherwise the procedure rejects on line 3.
So the number of potential triangles that we need to check is no more than $O(|S|\log n)=\tilde{O}(\sqrt{n}/\eps)$.

The no case follows directly from Lemma \ref{lem:localcycles} and Lemma \ref{lem:bipbirthday}.

For the yes case, we assume that the sketch $\calS$ satisfies  
\begin{equation*}
\Pr_{u \sim \calD^*} \left[ s^{(\ell)} <_\sigma u <_\sigma s^{({\ell+1})}  \right] \leq \frac{100 \eps \log n }{\sqrt{n}}
\end{equation*}
	for all $\ell$. 
Note that we only reject when $T$ contains more than $1000 \log n$ points from some block. For the $\ell$-th block, by a Chernoff bound, the probability of having more than $900\log n$ points $u\in T$ with $s^{(\ell)}<_\sigma u <_\sigma s^{(\ell+1)}$ is  at most $n^{-9}$.
So by a union bound, this does not happen with probability $1-o_n(1)$ for all blocks, in which case the number of points sampled in each block is no more than $900\log n+1<1000\log n$ even after counting the left end point of the block.
\end{proof}





\subsection{Putting it all together: An $\tilde{O}(\sqrt{n}/\eps)$ tester for total orderings}
We now have everything we need to analyze our testing algorithm 
  $\textsc{TestTotalOrdering}$.

\begin{algorithm}[t!]
\caption{$\textsc{TestTotalOrdering}(<_\sigma,\calD,\eps)$}\label{alg:total-order-test}
\begin{algorithmic}[1]
\Require{Oracle access $<_\sigma$, sampling access to $\calD$ and $\eps>0$}
\State Run $\textsc{Sketch}(<_\sigma,\calD,\eps)$ and \textbf{reject} if it rejects; otherwise let $\calS$ be its output
\State Run $\textsc{TestLongEdges}(<_\sigma,\calD,\eps, \calS)$ and 
  \textbf{reject} if it rejects
\State Run $\textsc{TestLocalCycles}(<_\sigma,\calD,\eps, \calS)$ and 
  \textbf{reject} if it rejects
\State \textbf{accept}
\end{algorithmic}
\end{algorithm}

\begin{proof}[Proof of Theorem \ref{thm:total}]
The query complexity is trivial.

	When $<_\sigma$ is a total ordering, 
    $\Sketch$ always returns a sketch $\calS$ consistent with $<_\sigma$
    and $\calS$  in addition satisfies (\ref{eq:goodcond}) with probability at least $1-o_n(1)$.
$\textsc{TestLongEdges}$ never rejects as it is one-sided. On the other hand, when $\calS$ satisfies (\ref{eq:goodcond}), by \Cref{lem:total-ordering-blocks-small}, \textsc{TestLocalCycles} accepts with probability at least $1-o_n(1)$. So the algorithm accepts with probability  $1-o_n(1)$ overall.

	Suppose now that $<_\sigma$ is $\eps$-far from total orderings with respect to $\calD$. Assume without loss of generality that $\Sketch$ returns a sketch $\calS$ consistent with $<_\sigma$; otherwise we are trivially done.~By Corollary \ref{coro1}, either any feedback edge set 
   of long cycles in $G_\sigma$ has mass at least $\eps/2$, in which case $\textsc{TestLongCycles}$ rejects with probability at least $0.99$ by Lemma \ref{lem:testlongcycles}, or any feedback edge set of local cycles has mass at least $\eps/2$, in which case $\textsc{TestLocalCycles}$ rejects with probability at least $1-o_n(1)$ by Lemma \ref{lem:total-ordering-blocks-small}. So the algorithm rejects with probability at least $2/3$ overall.
\end{proof}


\renewcommand{\algorithmicrequire}{\textbf{Input:}}

\def\Preprocess{\textsc{Preprocess}}
\def\nil{\mathsf{nil}}
\def\FindBlock{\textsc{FindBlock}}
\def\MaxIndex{\textsc{MaxIndex}}
\def\calL{\mathcal{L}}
\def\FindBigBlocks{\textsc{FindBigBlocks}}

\section{Testing Algorithm for Monotone Decision Lists}\label{sec:mdl}

In this section, we present a distribution-free testing algorithm for 
  testing monotone decision lists with $\tilde{O}(n^{11/12}/\eps^2)$ queries and running time.

\begin{theorem}\label{theo:monotonealg}
There is a two-sided, adaptive distribution-free testing algorithm for monotone~decision lists that makes $\tilde{O}(n^{11/12}/\eps^2)$ queries and has the same running time. 
\end{theorem}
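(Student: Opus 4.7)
The plan is to adapt the sketching framework from \Cref{sec:warmup} (for total orderings) to the richer setting of monotone decision lists. The algorithm will first construct a sketch $\calS = (s^{(1)},\ldots,s^{(k)})$ whose elements are singleton strings $e_{j_1},\ldots,e_{j_k}$, sampled from a distribution appropriately induced by $\calD$ and sorted under the partial order $\prec_f$. Because $\prec_f$ can only compare strings of differing $f$-value, I will enforce that the sketch \emph{alternates} in value, so that $f(s^{(i)}) = i \bmod 2$ for every $i$. The sketch induces a partition of $[n]$ into blocks $B_0,\ldots,B_k$ via binary search (analogous to $\FindBlock$), with the property that $f(e_j) = i \bmod 2$ for every $j \in B_i$. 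For a true monotone decision list, transitivity through an alternating sketch forces any index in $B_i$ to have strictly higher priority than any index in $B_j$ whenever $j > i+1$; the relative priorities of indices within a single block, or between adjacent blocks, remain ambiguous.

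Next I classify blocks as \emph{small} ($|B_i| \le n^\delta$) or \emph{big}, for a parameter $\delta$ to be chosen later. For small blocks I will design a subroutine $\MaxIndex$ that, on input $x$ with $\supp(x)$ contained in a single small block, returns the highest-priority index of $\supp(x)$ using $\tilde{O}(n^\delta)$ queries, by sweeping candidate maxima and comparing them under $\prec_f$. With $\MaxIndex$ available, any pair $(x,y)$ satisfying $f(x)\ne f(y)$, $\MaxIndex(x) \in \supp(y)$, and $\MaxIndex(y) \in \supp(x)$ is an explicit violation of the monotone decision list property. I will bundle such violations as hyperedges of a suitable uniform hypergraph and apply the hypergraph birthday paradox (\Cref{lem:hypbirthday}) to show that a modest number of samples from $\calD$ hits such a violation with high probability, provided a sufficient $\eps$-mass of violations is concentrated in small blocks.

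For big blocks, $\MaxIndex$ is too expensive, so I will instead verify a structural property: if $f$ is a monotone decision list and $B_i$ is big, then most indices of $B_i$ have lower priority than most indices of $B_{i+1}$. The algorithm rejects when this fails, and otherwise treats violations involving big blocks like the long edges in the total-orderings warm-up, catchable by a \textsc{TestLongCycles}-style sampling procedure. Verifying the structural property is where the asymmetric form of \Cref{lem:bipbirthday} becomes essential: one side of the comparison is sampled from a sketch-induced distribution while the other is sampled with a different budget, and the $m \cdot m' \ge 100|U|/\eps^2$ condition lets us spend the query budget asymmetrically across the two sides.

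Finally, I will balance parameters. Small-block work scales roughly as $n^{2/3} \cdot n^\delta/\eps^2$ (hypergraph birthday paradox samples times the per-sample cost of $\MaxIndex$), while big-block verification uses the asymmetric bipartite birthday paradox across at most $n^{1-\delta}$ big blocks; choosing $\delta$ appropriately should equalize the two terms at $\tilde{O}(n^{11/12}/\eps^2)$. The main obstacle I anticipate is the correctness of the big-block verification step: distributing the $\eps/2$ distance budget across many big blocks, and ensuring both that true monotone decision lists pass the structural check with high probability and that any $f$ passing it can be safely absorbed into the long-edge analysis without blowing up the distance parameter, will require a delicate accounting.
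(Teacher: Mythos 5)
Your high-level architecture — alternating sketch, $\FindBlock$-induced blocks with priority guarantees across non-adjacent blocks, a $\MaxIndex$ subroutine for small blocks, and an asymmetric bipartite birthday-paradox check for big blocks — matches the paper's. But there is a concrete gap in your small-block analysis: the only violation you describe catching is the \emph{pairwise} one, namely $(x,y)$ with $\MaxIndex(x)\in\supp(y)$, $\MaxIndex(y)\in\supp(x)$, $f(x)\ne f(y)$. This corresponds to a 2-cycle in the paper's auxiliary bipartite graph $H$ (its ``type-3'' cycles), and for that a \emph{bipartite} birthday paradox suffices. It does \emph{not} cover longer cycles that implicate three or four indices before a contradiction appears, with no single pair directly contradicting each other: e.g., three samples that certify $u_1\succ_f u_2\succ_f u_3$ while the sketch places $u_3$ two blocks above $u_1$ (the paper's ``type 4''), or a length-$\geq 4$ cycle confined to two adjacent small blocks (``type 5''). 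These need a separate case analysis, detection rules comparing $\MaxIndex$-outputs pairwise via $f(e_u\vee e_v)$ rather than via support membership, and a reduction of long alternating cycles to 4-cycles (the paper's \Cref{lem:alternating-4}); they are caught by the $3$- and $4$-uniform hypergraph birthday paradoxes, not a single ``suitable'' hypergraph. This omission shows in your cost accounting: you estimate the small-block work at $n^{2/3}\cdot n^\delta/\eps^2$, which is the $3$-uniform rate, but the true bottleneck is the $4$-uniform term $\tilde{O}(n^{3/4+\delta}/\eps^2)$, and it is this term (balanced against the preprocessing cost $\tilde{O}(n^{1-\delta}/\eps^2)$ and sketch cost $\tilde{O}(n^{1-\delta/2}/\eps)$, with $\delta=1/6$) that yields the exponent $11/12$. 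Your own balancing, taken literally, would give an exponent strictly below $11/12$, a sign that a case is missing.

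Two smaller inaccuracies worth flagging. First, your $\MaxIndex$ is stated for $x$ with $\supp(x)$ contained in a single small block, but $\calD$-samples generally have supports spanning all blocks; the paper's $\MaxIndex$ instead works for arbitrary $x$ whose \emph{$\FindBlock$-label} is a small block, and returns an index in that block dominating all of $\supp(x)$ — this broader guarantee is what makes the cycle classification go through. Second, the paper does not build its sketch from singletons $e_{j}$; each sketch entry $s^{(\ell)}$ is an OR of several $\calD$-samples. This matters for the ``scattered'' property (which controls the false-rejection probability of the big-block test) and for ensuring the sketch entries dominate their level; an induced distribution over singletons would need a separate argument to achieve the same.
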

  
It will be used in the next section to obtain a testing algorithm
  for general decision lists via a direct reduction, while losing an 
  extra factor of $1/\eps$.
We will focus on the query complexity of the algorithm in this section; 
  its time complexity upper bound
  follows from a standard implementation.

Similar to some of the procedures from the last section, 
  many of the procedures in this section
  (especially those in Sections \ref{sec:preprocess} and \ref{sec:maxindex}) are
  developed to
  extract structural information about an unknown input in the yes case, 
  here a monotone decision list $f:\{0,1\}^n\rightarrow \{0,1\}$.
So we encourage the reader to think about this case 
  when going through them.
Of course, these procedures will be executed on functions 
  that are not monotone decision lists.
This is why many of the lemmas about performance guarantees
  of these procedures consist of three parts:
1) the query complexity; 2) the performance guarantees
  when the function $f$ is a monotone decision list;
  and 3) the performance guarantees when $f$ is just an arbitrary function.

\subsection{Preprocessing}\label{sec:preprocess}

\begin{algorithm}[t!]
\caption{$\textsc{Preprocess}(f,\calD,\eps)$}\label{alg:preprocess}
\begin{algorithmic}[1]
\Require {Oracle access to $f:\{0,1\}^n\rightarrow \{0,1\}$, sampling access to $\mathcal{D}$ and $\eps>0$}
\State Draw a set $T^*$ of $\smash{ n^{1-\delta/2} /\eps}$ points from $\calD$ and 
  let $T\gets T^*\setminus \{0^n\}$
\State \textbf{accept} if $T$ 
  is either empty, contains $0$-strings only,
  or contains $1$-strings  of $f$  only 
\If{\Sketch$(f,T)=\nil$}
\State \textbf{reject}
\Else\ (letting $\calS=(s^{(1)},\ldots,s^{(k)})$ be the sketch returned) 
\State Run \textsc{FindBigBlocks}$(f,\calD,\eps,\calS)$ to obtain 
  $\calL\subseteq [0:k+1]$
\State \Return $(\calS,\calL)$
\EndIf
\end{algorithmic}
\end{algorithm}

Fix $\delta>0$ to be a positive constant, which will be set to be $1/6$ at the end
  to optimize the query complexity of the overall algorithm.

The preprocessing stage, $\Preprocess(f,\calD,\eps)$,
  is described in Algorithm \ref{alg:preprocess}.
At a high level, it either outputs a pair $(\calS,\calL)$, or tells the main algorithm that there is already enough evidence to either accept or reject the input.
Here $\calS$ is a \emph{sketch} consistent with $f$ to be defined next,
  which can be used to partition the set of variables $[n]$ into blocks
  (using a procedure with the same name $\FindBlock$),
  and $\calL$ contains some useful information about sizes of these blocks.

\Preprocess\  starts by  drawing a  set $T^*$ of $ n^{1-\delta/2} /\eps$ many independent samples from $\calD$, and 
  uses $T:=T^*\setminus \{0^n\}$ to build a \emph{sketch} $\calS$ of the underlying function $f$
  (unless when $T$ is either empty or consists of $0$-strings of $f$
  or $1$-strings of $f$ only, in which case
  the main algorithm accepts since either $\calD$ has
  most of its mass on $0^n$, or $f$ is very close to
  the all-$0$ or all-$1$ function).
 
We define sketches as follows:
  
\begin{definition}
A \emph{sketch} $\calS$ is a tuple $\calS=(s^{(1)},\ldots,s^{(k)})$ of strings in $\{0,1\}^n$ for some $k\ge 2$, such that $s^{(\ell)}\ne 0^n$ for all $\ell\in [k]$.
 We say a sketch $\calS$ is \emph{consistent} with a function $f:\{0,1\}^n\rightarrow \{0,1\}$ if 
    $$f(s^{(\ell)})\ne f(s^{(\ell+1)})\quad\text{and}\quad
 s^{(\ell)}\succ_f s^{(\ell+1)},\quad\text{
   for all $\ell\in [k-1]$.}$$ 
\end{definition}


To describe $\Sketch(f,T)$ we start with the following simple deterministic 
  procedure based~on binary search, called $\FindRep(f,X,Y)$ (Algorithm \ref{alg:findrep}), where $X,Y\subseteq \{0,1\}^n$ are two  sets  and $X$~is nonempty.
The goal of $\FindRep$ is to find a string $x^*\in X$ that satisfies 
\begin{equation}\label{eq:hehe}
  f\left(x^*\vee\left(\bigvee_{y\in Y}y\right) \right)=f\left( \bigvee_{z\in X\cup Y } z \right).
\end{equation}
Note that such a string always exists when $f$ is a monotone decision list.

We summarize properties of $\FindRep$ in the following lemma:

  
  
\begin{algorithm}[t!]
\caption{$\FindRep(f,X,Y )$}\label{alg:findrep}
\begin{algorithmic}[1]
\Require Oracle access to $f$, two sets $X,Y\subseteq\{0,1\}^{n}$ and $X$ is nonempty 
\State Let $b\gets f(\bigvee_{z\in X\cup Y} z)$ and $R\gets X$
\While{$|R|>1$}
    \State Partition $R$ into $R_{1}\sqcup R_{2}$ such that $|R_1|=\lfloor |R|/2\rfloor$ and $|R_2|=\lceil |R|/2\rceil$ 
   \If{$f (\bigvee_{z\in R_{1}\cup Y}z )= {b}$} 
      \State Set $R\gets R_{1}$ 
   \Else 
      \State Set $R\gets R_{2}$ 
   \EndIf
\EndWhile
\State \Return the string in the singleton set $R$
\end{algorithmic}
\end{algorithm}

\begin{lemma}
$\FindRep(f,X,Y)$ is deterministic and makes $O(\log |X|)$ queries on $f$.

When $f$ is~a monotone decision list,  $\FindRep$ always returns an $x^*\in X$
   that satisfies
  (\ref{eq:hehe}).

On the other hand, when $f$ is~an arbitrary function, $\FindRep$ always returns an $x^*\in X$ but~$x^*$ does not necessarily satisfy 
  (\ref{eq:hehe}).
\end{lemma}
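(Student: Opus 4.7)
The plan is to verify the three assertions of the lemma in turn, since they all concern the same algorithm and only differ in what can be proved about the returned $x^*$.

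For the query complexity, I would observe that the algorithm makes one initial query to compute $b$, and then one query per iteration of the while loop; since $|R|$ at least halves each iteration (from $|X|$ down to $1$), the loop runs at most $\lceil \log_{2} |X| \rceil$ times, giving $O(\log |X|)$ total queries on $f$. Because $R \subseteq X$ is maintained throughout (each update replaces $R$ by $R_{1}$ or $R_{2}$, both subsets of the current $R$), the element returned from the final singleton $R = \{x^*\}$ lies in $X$. These two observations are independent of any structural assumption on $f$, so they settle both the query bound and the ``returns $x^* \in X$'' clause in both the monotone and the general case.

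The heart of the lemma is the monotone decision list case. Here I would prove by induction on iterations that the algorithm maintains the invariant
$$f\left(\bigvee_{z \in R \cup Y} z\right) = b.$$
The base case $R = X$ is immediate from the definition of $b$. For the inductive step, let $(\pi, \nu)$ be a representation of $f$ and recall that for a monotone decision list, $f\bigl(\bigvee_{z \in Z} z\bigr) = \nu_{j}$ where $j = \min_{z \in Z} \min_{\pi}(z)$, because the variable $\pi(j)$ is set to $1$ in $\bigvee_{z \in Z} z$ iff it is set to $1$ in some $z \in Z$. So when $R$ is split into $R_{1} \sqcup R_{2}$, the argmin over $R \cup Y$ lies in $R_{1} \cup Y$ or in $R_{2} \cup Y$, and whichever side contains it yields a bitwise OR with the same $\min_{\pi}$ value as $\bigvee_{z \in R \cup Y} z$, hence the same $f$-value $b$. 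The algorithm selects $R_{1}$ whenever the side-$1$ query returns $b$, and otherwise falls back to $R_{2}$, which by the dichotomy just observed must then satisfy the invariant. When the loop terminates with $R = \{x^*\}$, the invariant is precisely the identity (\ref{eq:hehe}).

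For general $f$, the invariant-preservation step has no reason to hold, since nothing constrains $f$ evaluated on $\bigvee_{z \in R \cup Y} z$ relative to its values on $\bigvee_{z \in R_{1} \cup Y} z$ and $\bigvee_{z \in R_{2} \cup Y} z$. A concrete failure mode is $X = \{x_{1}, x_{2}\}$ with disjoint supports, $Y = \emptyset$, $f(x_{1}) = f(x_{2}) = 0$, and $f(x_{1} \vee x_{2}) = 1$: here $b = 1$, but either choice the algorithm makes returns some $x_{i}$ with $f(x_{i}) = 0 \ne 1$, violating (\ref{eq:hehe}). I do not expect any step of this proof to present a real obstacle; the only insight is recognizing that the min-based semantics of monotone decision lists is exactly what lets the binary search's invariant propagate through a single bit query, while an adversarial $f$ can obstruct it at every split.
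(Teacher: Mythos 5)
Your proof is correct. The paper actually states this lemma without proof, treating it as routine, so there is nothing to compare against; but your invariant argument is exactly the natural one that would be supplied. The key observation you isolate — that for a monotone decision list $f$ with representation $(\pi,\nu)$, one has $f\bigl(\bigvee_{z\in Z}z\bigr)=\nu_j$ with $j=\min_{z\in Z}\min_\pi(z)$ for nonempty $Z$ — is what makes the binary-search invariant $f\bigl(\bigvee_{z\in R\cup Y}z\bigr)=b$ propagate: if the argmin does not live in $R_1\cup Y$ then it must live in $R_2$, so the algorithm's fallback branch still preserves the invariant. Your counterexample for the general-$f$ clause is also valid. One small remark: your invariant argument implicitly uses that $R$ remains nonempty throughout (so that the OR is over a nonempty set), which is guaranteed since $X\ne\emptyset$ and each split keeps $|R_1|,|R_2|\ge 1$; worth a word in a polished writeup but not a gap.
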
 

We now describe the procedure $\Sketch(f,T)$ (Algorithm \ref{alg:sketch}), where $T\subseteq \{0,1\}^n\setminus \{0^n\}$
  contains at least one $0$-string and at least one $1$-string of $f$.
We summarize its properties below:

\begin{figure}
\begin{algorithm}[H]
\caption{$\textsc{Sketch} 
 (f,T)$}\label{alg:sketch}
\begin{algorithmic}[1]

\Require {Oracle access to $f$ and 
   $T\subseteq\{0,1\}^{n}\setminus\{0^{n}\}$ 
   has at least one $0$-string and  
   one $1$-string of $f$}

\State Let $m=|T|$, $T_{0}\gets\left\{x\in T:f(x)=0\right\}$ and $T_{1}\gets\left\{x\in T:f(x)=1\right\}$ 
\For{$i$ from $1$ to $m $}
\If{both $T_0$ and $T_1$ are nonempty}    
    \State Let $b=f(\vee_{x\in T_{0}\cup T_{1}}x) $; Set $x^{(i)}\gets\textsc{FindRep} (f,T_{b},T_{\overline{b}})$ and $T_b\gets T_b\setminus \{x^{( i)}\}$ 
\Else  
    \State Let $b$ be such that 
    $T_b\ne \emptyset$; Set $x^{(i)}\gets$ an arbitrary string in $T_b$ and 
    $T_{b}\gets T_{b}\setminus \{x^{(i)}\}$
\EndIf
\EndFor
\item Divide $[m]$ into a disjoint union of nonempty intervals $[m]=I_{1}\sqcup \dots \sqcup I_{k}$ such that\vspace{0.1cm}
    \begin{itemize}
    \item[i)] $f(x^{(i)})=f(x^{(j)})$ for all $\ell\in [k]$ and all $i,j\in I_{\ell}$ \vspace{-0.06cm}
    \item[ii)] $f(x^{(i)})\neq f(x^{(j)})$ for all $\ell\in[k-1]$, $i\in I_{\ell}$ and $j\in I_{\ell +1}$  \vspace{0.1cm}
    \end{itemize} 
\State Check if $k\ge 2$ and $\smash{\calS=(s^{(1)},\ldots,s^{(k)})}$ is consistent with $f$, where $\smash{s^{(\ell)}\gets \vee_{i\in I_{\ell}}\hspace{0.05cm}x^{(i)}}$ 
\State  \Return $\calS$ if so and \Return $\nil$ otherwise
\end{algorithmic}
\end{algorithm}
\end{figure}

\begin{lemma}\label{lem:sketch}
$\Sketch(f,T)$ is deterministic and makes $O(|T|\log |T|)$  queries on $f$.

When $f$ is a monotone decision list, it always returns
  a sketch $\calS$ that is consistent with $f$.

When $f$ is an arbitrary function, it returns either $\nil$
  or a sketch $\calS$ and in the latter case,~$\calS$
  is always a sketch consistent with $f$.
\end{lemma}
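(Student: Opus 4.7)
The plan is to handle the three claims separately. The query bound and the arbitrary-$f$ case are both short; the monotone decision list case is the substantive part. For the query bound I would just account: the main loop runs $|T|$ times, each iteration spends $O(1)$ queries to determine $b$ and $O(\log|T|)$ queries inside $\FindRep$ by the preceding lemma, and the final consistency check over a sketch of length $k\le|T|$ adds another $O(|T|)$ queries, totalling $O(|T|\log|T|)$. For the arbitrary-$f$ case (the third claim), the algorithm only returns $\calS$ after explicitly verifying consistency with $f$ at the end, and returns $\nil$ otherwise, so the claim is immediate by inspection of $\Sketch$.

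For the monotone decision list case, fix a representation $(\pi,\nu)$ of $f$ and write $p(x):=\min_\pi(x)$. Recall that for $x,y$ with $f(x)\ne f(y)$, the relation $x\succ_f y$ is equivalent to $p(x)<p(y)$. The crux of the proof is the following key claim: whenever $i<j$ and $f(x^{(i)})\ne f(x^{(j)})$, we have $p(x^{(i)})<p(x^{(j)})$. To prove it, I would observe that at iteration $i$ the string $x^{(j)}$ has not yet been removed and $f(x^{(j)})\ne f(x^{(i)})$, so setting $b:=f(x^{(i)})$, $x^{(j)}$ lies in $T_{\overline{b}}$ at that moment. Because $T_{\overline{b}}$ is nonempty, the algorithm takes the $\FindRep$ branch and returns an $x^{(i)}$ satisfying $f(x^{(i)}\vee\bigvee_{y\in T_{\overline{b}}}y)=b$; since every min-rule among strings in $T_{\overline{b}}$ is an $\overline{b}$-rule, this forces $p(x^{(i)})<p(y)$ for every $y\in T_{\overline{b}}$, in particular for $y=x^{(j)}$.

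Given the key claim, for any pair of consecutive intervals $I_\ell,I_{\ell+1}$ every index in $I_\ell$ has strictly smaller $p$-value than every index in $I_{\ell+1}$, so $p(s^{(\ell)})<p(s^{(\ell+1)})$; in particular $p(s^{(\ell)}\vee s^{(\ell+1)})=p(s^{(\ell)})$, whose rule has label $b_\ell=f(s^{(\ell)})$, giving $f(s^{(\ell)}\vee s^{(\ell+1)})=f(s^{(\ell)})$ and hence $s^{(\ell)}\succ_f s^{(\ell+1)}$ together with $f(s^{(\ell)})\ne f(s^{(\ell+1)})$. The bound $k\ge 2$ is immediate from $T$ containing both a $0$- and a $1$-string. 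The step I expect to be the most delicate is the key claim, specifically verifying that it is not derailed by the \emph{else} branch of $\Sketch$ (when $T_0$ or $T_1$ is empty); but that branch is only entered once one side has been fully exhausted, after which every subsequent $x^{(i)}$ shares a single $f$-value and falls into the final interval, so the key claim is only ever invoked at iterations that actually used $\FindRep$.
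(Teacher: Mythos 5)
Your proof is correct and takes essentially the same approach as the paper, which simply states the key claim (for $i<j$ with $f(x^{(i)})\ne f(x^{(j)})$, one has $\min_\pi(x^{(i)})<\min_\pi(x^{(j)})$) as ``easy to verify.'' You supply the missing details of that claim, including the observation that the else branch is only ever reached once one of $T_0,T_1$ is exhausted (so it never matters for the claim), which is exactly the bookkeeping the paper glosses over.
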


\begin{proof}
The query complexity is trivial as each run of $\FindRep$ has query complexity $O(\log |T|)$.

When $f$ is a decision list represented by $(\pi,\nu)$, it is easy to verify that for all $i<j$ in $[m]$ with $f(x^{(i)})\ne f(x^{(j)})$, we have $\min_{\pi}(x^{(i)})<\min_{\pi}(x^{(j)}).$
We also have, $k\ge 2$ given that $T$ contains at least one $0$-string and at least one $1$-string of $f$.
It follows that $\Sketch$ always returns a sketch $\calS$, and $\calS$ must be a sketch consistent with $f$.

For the general case, note that $\Sketch$ always verifies whether $\calS$ is a sketch consistent with $f$ or not, and returns $\nil$ when $\calS$ is not. The lemma follows.
\end{proof}


It is clear from Lemma \ref{lem:sketch} that 
  if $\Sketch$ returns $\nil$ in $\Preprocess$, 
  we know for sure that $f$ is not a monotone decision list and 
  thus, should be rejected.
When $\Sketch$ returns a sketch $\calS$ in $\Preprocess(f,\calD)$,
  we know it must be consistent with $f$ and 
  $\Preprocess$ continues by running a procedure called $\FindBigBlocks(f,\calD,\eps,\calS)$, which 
  uses a procedure called $\FindBlock(f,\calS,x)$ that plays a similar role 
  as the $\FindBlock$ in the last section.

\begin{algorithm}[t!]
 \caption{$\textsc{FindBlock}(f,\sketch,x)$}\label{alg:findblock}
\begin{algorithmic}[1]

\Require {Oracle access to $f$, a sketch $\calS=(s^{(1)},\ldots,s^{(k)})$ that is consistent with $f$ and $x\in \{0,1\}^{n}$}
\If{$f(s^{(1)})=f(x)$}
\If{$f(s^{(2)}\vee x)= f(x)$}
   \State \Return $1$  
\ElsIf{$\smash{f(s^{(2\lfloor k/2\rfloor)}\vee x)\ne f(x)}$}
    \State \Return $\smash{2\lfloor k/2\rfloor+1}$ 
\Else 
  \State Binary search to \Return an odd $\ell$ with 
       $f(s^{(\ell-1)}\vee x)\ne f(x)=f(s^{(\ell+1 )}\vee x)$
\EndIf \EndIf
\State The case when $f(s^{(1)})\ne f(x)$ (or equivalently,
  $f(s^{(2)})=f(x)$) is symmetric
 %
\end{algorithmic}
\end{algorithm}

To motivate $\FindBlock$, we make the following observation.
Let $f$ be any function and $\calS$
  be a sketch that is consistent with $f$.
Given $x\in \{0,1\}^n$, 
  there must exist an index $\ell\in [0:k+1]$ such that
  one of the following three conditions holds:
\begin{enumerate}
    \item either $\ell\in [2:k-1]$ such that $f(x)\ne f(s^{(\ell-1)})=f(s^{(\ell+1)})$ and 
    $s^{(\ell-1)}\succ_f x\succ_f s^{(\ell+1)}$;
    \item or $\ell\in \{0,1\}$ such that $f(x)\ne f(s^{(\ell+1)})$ and $x\succ_f s^{(\ell+1)}$;
    \item or $\ell\in \{k,k+1\}$ such that $f(x)\ne f(s^{(\ell-1)})$ and 
    $s^{(\ell-1)}\succ_f x$.
\end{enumerate}
Furthermore, $\ell$ is \emph{unique} when $f$ is a monotone decision list. 

The deterministic 
  procedure \FindBlock$(f,\calS,x)$ (Algorithm \ref{alg:findblock})  
  finds such an $\ell$   efficiently for any given string $x\in \{0,1\}^n$:

\begin{lemma}
\textsc{FindBlock}$(f,\calS,x)$ is deterministic and makes $O(\log k)$  queries.
It always returns an $\ell\in [0:k+1]$ for $x$ as described above,
  which is unique when $f$ is a monotone decision list.
\end{lemma}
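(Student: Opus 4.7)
The plan is to establish three separate claims: determinism together with the $O(\log k)$ query bound, correctness (the returned $\ell$ satisfies one of the three listed conditions), and uniqueness of $\ell$ when $f$ is a monotone decision list. Determinism and the query bound are immediate from inspection of Algorithm~\ref{alg:findblock}: only a constant number of oracle calls are made outside the concluding binary search, and that binary search halves an interval of size at most $k$ at each step, giving $O(\log k)$ iterations.

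For correctness I would split into the two symmetric cases $f(s^{(1)}) = f(x)$ and $f(s^{(1)}) \ne f(x)$ (the latter effectively shifts all indices by one) and analyze the first. Since $\calS$ is consistent with $f$, the values $f(s^{(i)})$ alternate in $i$, so $f(s^{(i)}) = f(x)$ precisely for odd $i$. This forces any $\ell$ satisfying one of the three conditions of the lemma to be odd. The algorithm first handles $\ell = 1$ by querying $f(s^{(2)} \vee x)$: equality with $f(x)$ is exactly the condition $x \succ_f s^{(2)}$, matching the second condition of the lemma. If this fails, we have $s^{(2)} \succ_f x$, and the second query tests whether $s^{(2\lfloor k/2 \rfloor)} \succ_f x$ as well, in which case the algorithm returns $\ell = 2\lfloor k/2\rfloor + 1 \in \{k, k+1\}$, covering the third condition for both parities of $k$. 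Otherwise we have pinned down the range $s^{(2)} \succ_f x \succ_f s^{(2\lfloor k/2 \rfloor)}$, and a standard binary search over the odd indices in between, maintaining invariants $s^{(\text{lower}-1)} \succ_f x$ and $x \succ_f s^{(\text{upper}+1)}$, will locate an odd $\ell$ with $s^{(\ell-1)} \succ_f x \succ_f s^{(\ell+1)}$, fulfilling the first condition of the lemma.

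For uniqueness when $f$ is a monotone decision list with representation $(\pi, \nu)$, I would appeal to the characterization noted earlier in the paper: for nonzero $y$, $y \succ_f x$ holds iff $\min_\pi(y) < \min_\pi(x)$. Consistency of $\calS$ then yields a strictly increasing sequence $\min_\pi(s^{(1)}) < \cdots < \min_\pi(s^{(k)})$, and $\min_\pi(x)$ must fall into exactly one of the gaps delimited by two consecutive same-valued entries of $\calS$, which uniquely determines $\ell$. The main obstacle I foresee is purely bookkeeping: carefully tracking the alternation of $f$-values along $\calS$, the parity of $k$, and the boundary cases $\ell \in \{0, 1\}$ and $\ell \in \{k, k+1\}$ while keeping the binary search invariants clean; nothing conceptually deep, but the case analysis needs to be written with care.
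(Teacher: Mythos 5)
Your proposal is correct, and essentially the argument the authors had in mind. The paper states this lemma without proof, treating it as a routine verification of Algorithm~\ref{alg:findblock}; your case analysis---parity of $\ell$ forced by $f(x)\ne f(s^{(\ell\pm1)})$, one branch of the algorithm per condition, binary search for the crossover point among the even-indexed comparisons $f(s^{(i)}\vee x)$, and uniqueness from the strictly increasing sequence $\min_\pi(s^{(1)})<\cdots<\min_\pi(s^{(k)})$---correctly supplies the omitted details.
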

Using $\FindBlock$  we partition 
  variables $[n]$ into \emph{blocks} (note that we cannot afford to compute these blocks but they are well defined given that $\FindBlock$ is deterministic):

\begin{definition}[Blocks]  For fixed $f$ and $\sketch$, we define 
the $\ell$-th \emph{block} $B_{f,\calS,\ell}$ with respect to $\calS$ as 
$$
B_{f,\calS,\ell}=\big\{i\in[n]:\textsc{FindBlock}({f,\sketch,e_{i}})=\ell\big\},\quad\text{for each $\ell\in [0:k+1]$}.
$$
We usually write $B_\ell$ to denote $B_{f,\calS,\ell}$
  when $f$ and $\calS$ are clear from the context.
\end{definition}

Before moving to  $\FindBigBlocks$,
  we record a lemma about $\calS$ when $f$ is a
  monotone decision list.
 The definition below and Lemma \ref{lem:scattered} will only be used in the analysis of the yes case.  

\begin{definition}\label{def:scattered}
Let $f$ be a monotone decision list and 
  $\calS$ be a sketch consistent with $f$. 
We say $\calS$ is \emph{scattered} if we have
$$
\Pr_{x\sim \calD} \Big[\FindBlock(f,\calS,x)=k+1\Big]\le 
\frac{10\eps\log n}{n^{1-\delta/2}}$$
and 
for every $\ell\in [k]$, we have (noting that $f(x)=f(s^{(\ell)})$ if $\FindBlock(f,\calS,x)=\ell$) 
$$
\Pr_{x\sim \calD} \Big[\FindBlock(f,\calS,x)=\ell\ \text{and}\ \exists\hspace{0.05cm}i\in [n]: f(e_i)\ne f(x)\ \text{and}\  
x\succ_f e_i\succ_f s^{(\ell)} \Big]\le 
\frac{10\eps\log n}{n^{1-\delta/2}}.$$
\end{definition}

\begin{lemma}\label{lem:scattered}
Let $f$ be a monotone decision list, and 
  $T^*$ be a set of $n^{1-\delta/2}/\eps$ strings 
  drawn from $\calD$ (as on line 1 of $\Preprocess(f,\calD,\eps)$). 
The probability that $\Preprocess(f,\calD,\eps)$ returns a sketch~$\calS$ that is not scattered is $o_n(1)$  over the randomness of $T^*$.
\end{lemma}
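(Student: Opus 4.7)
The plan is to exploit the simple structure of a monotone decision list $f=(\pi,\nu)$, in which $[n]$ is partitioned, in $\pi$-order, into maximal runs $R_1,\ldots,R_r$ of positions on which $\nu$ is constant, with consecutive runs taking opposite values. First I would verify that, when $\Sketch$ is run on $T=T^*\setminus\{0^n\}$, its outputs $x^{(1)},x^{(2)},\ldots$ are exactly the strings of $T$ sorted in increasing order of $\min_\pi$. This follows from the fact that, on a monotone decision list, $\FindRep(f,X,Y)$ always returns the element of $X$ with the smallest $\min_\pi$, so the outer loop of $\Sketch$ acts as a selection sort by priority. Grouping consecutive samples of equal $f$-value into intervals $I_1,\ldots,I_k$, each sketch element $s^{(\ell)}=\bigvee_{i\in I_\ell}x^{(i)}$ therefore has $\min_\pi(s^{(\ell)})=p_\ell$ equal to the smallest $\min_\pi$ in $I_\ell$; moreover there exist run indices $t_1<t_2<\cdots<t_k$ with $p_\ell\in R_{t_\ell}$ and $\nu(R_{t_\ell})$ alternating in $\ell$.

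Next I would re-express the two bad regions in the definition of scattered in terms of these runs. For $\ell\in[k]$, the event inside condition~(2) occurs precisely when $\min_\pi(x)$ lies in some $R_t$ with $\nu(R_t)=\nu(R_{t_\ell})$, $t<t_\ell$, and (when $\ell\ge 2$) $t>t_{\ell-1}$; call this set $B^{(\ell)}_{\calS}$. Similarly, the condition-(1) event $\FindBlock(f,\calS,x)=k+1$ occurs when $\min_\pi(x)\in R_t$ for some $t>t_k$ with $\nu(R_t)\ne\nu(R_{t_k})$; call this set $B^{\mathrm{top}}_{\calS}$. Each of these sets depends on at most two of the $t_\ell$'s, so as $\calS$ varies over all possible outputs of $\Preprocess$, the collection of all \emph{potential} bad regions is contained in a fixed family $\calB$ depending only on $f$ with $|\calB|=O(r^2)=O(n^2)$.

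The key structural claim is that every bad region of the \emph{achieved} sketch is disjoint from $T^*$. Indeed, for any $y\in T$ placed by $\Sketch$ into $I_{\ell'}$, we have $\min_\pi(y)\in[p_{\ell'},p_{\ell'+1})$ (dropping the upper bound when $\ell'=k$) and $f(y)=\nu(R_{t_{\ell'}})$, and a short case analysis against the definition of $\FindBlock$ shows $\FindBlock(f,\calS,y)=\ell'$. In particular, whenever $\FindBlock(f,\calS,y)=\ell\in[k]$ we have $\min_\pi(y)\ge p_\ell$, which makes the condition ``$\exists i: y\succ_f e_i\succ_f s^{(\ell)}$'' vacuous since it would demand $\pi^{-1}(i)\in(\min_\pi(y),p_\ell)=\emptyset$; an analogous argument rules out $\FindBlock(f,\calS,y)=k+1$ for $y\in T$. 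Since $0^n\notin B$ for any $B\in\calB$, this gives $T^*\cap B^{(\ell)}_{\calS}=T^*\cap B^{\mathrm{top}}_{\calS}=\emptyset$.

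The proof then concludes via a union bound. Setting $p^*=10\eps\log n/n^{1-\delta/2}$ and $m=|T^*|=n^{1-\delta/2}/\eps$, for any fixed $B\in\calB$ with $\Pr_\calD[B]\ge p^*$ we have $\Pr[T^*\cap B=\emptyset]\le(1-p^*)^m\le e^{-p^*m}=n^{-10}$. Summing over the $O(n^2)$ members of $\calB$ gives probability at most $O(n^{-8})=o_n(1)$ that any heavy potential bad region is missed by $T^*$. By the structural claim, if the achieved sketch were not scattered, some such heavy region would be missed, so $\Pr_{T^*}[\calS\text{ not scattered}]=o_n(1)$ as desired. The main obstacle is the structural claim in the third paragraph, which requires a careful case analysis of the monotone-DL semantics of $\FindRep$, $\Sketch$ and $\FindBlock$ to confirm that each sample of $T$ is anchored inside the canonical run $R_{t_{\ell'}}$ of its assigned block.
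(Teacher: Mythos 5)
Your proof is correct and follows essentially the same approach as the paper: parameterize the potential bad regions by pairs of runs (equivalently, intervals in $\pi$-order together with a bit), observe that whenever $T^*$ hits a region it cannot end up as a bad region of the achieved sketch, and union-bound over the $O(n^2)$ possibilities. The paper leaves your ``structural claim'' implicit, whereas you spell out the case analysis; the underlying argument is the same.
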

\begin{proof}
Let $(\pi,\nu)$ be a representation of $f$. 
It follows from Definition \ref{def:scattered} that
a necessary condition for $\calS$ to be not scattered is that there exists an interval $I$ in $[n]$ and $b\in \{0,1\}$ such that 
$$
\Pr_{x\sim \calD} \Big[f(x)=b\  \text{and}\ {\min}_\pi(x)\in I\Big]> \frac{10\eps \log n}{n^{1-\delta/2}}
$$
and no sample $x\in T^*$ 
  satisfies $f(x)=b$ and $\min_\pi(x)\in I$.
The lemma follows Chernoff bound and a union bound using the fact that there are only $O(n^2)$ many intervals $I$ in $[n]$.
\end{proof}


\begin{algorithm}[t!]
\caption{$\textsc{FindBigBlocks}(f,\calD,\eps,\calS)$}\label{alg:findbigblocks}
\begin{algorithmic}[1]
\Require {Oracle access to $f$, sampling access to $\mathcal{D}$, $\eps>0$ and a sketch $\sketch$ consistent with $f$ }
\State Create and initialize a counter $c_{\ell}\gets 0$ for each $\ell\in[0:k+1]$.
 \For{$n^{1-\delta} $ times}
    \State Sample an $i\sim [n]$ uniformly at random  
    \State Let $\ell\gets \textsc{FindBlock}({f,\sketch},e_{i})$ and update counter $c_{\ell}\gets c_{\ell}+1$
\EndFor
\State Set   $\calL$ to be $$\calL\gets \left\{\ell\in [0:k+1]:c_{\ell}\geq \frac{4\log n}{\eps}\right\}$$ 
\For{$200/\eps $ times} 
\State Set counter $c\gets 0$ and let 
$$N(\calL)\gets \Big\{\ell'\in [0:k+1]: \ell'\notin \calL\ \text{and}\ |\ell'-\ell|=1\ \text{for some $\ell\in \calL$}\Big\}$$ 
\For{$(100/\eps)\log (n/\eps)$ times}
\State Sample a string $x\sim \calD$
\State Let $\ell\gets \FindBlock(f,\calS,x)$ and update counter $c\gets c+1$ if $\ell\in N(\calL)$
\EndFor
\If{$c< 5\log (n/\eps)$}
  \State \Return $\calL$
\Else
  \State Set $\calL\gets \calL\cup N(\calL)$
\EndIf

\EndFor
\State \Return $\calL$\Comment{This line is reached with low probability}
\end{algorithmic}
\end{algorithm}

The preprocessing stage ends by running 
  $\textsc{FindBigBlocks}(f,\calD,\eps,\calS)$ (Algorithm \ref{alg:findbigblocks}):

\begin{lemma} \label{lem:numofbb}
$\textsc{FindBigBlocks}(f,\calD,\eps,\calS)$ makes 
  $\tilde{O}(n^{1-\delta}/\eps^2)$ queries and it always returns a subset
   $\calL\subseteq [0:k+1]$.
With probability at least $1-o_n(1)$, $\calL$ satisfies the 
  following properties\footnote{Note that this holds no matter whether $f$ is a 
  monotone decision list or not.}:
\begin{enumerate}
\item $|\calL|\le n^{1-\delta}$;
\item Let $N(\calL)$ denote the set of neighboring blocks of $\calL$: $$N(\calL):=\big\{\ell'\in [0:k+1]\setminus \calL: |\ell'-\ell|=1\ \text{for some $\ell\in \calL$}\big\}.$$
Then we have $$\Pr_{x\sim\calD}\Big[\FindBlock(f,\calS,x)\in 
N(\calL)\Big]\le 0.1\eps.$$
\item For every $\ell\in [0:k+1]\setminus \calL$, we have 
$$|B_{ \ell}|\le \frac{16n^{\delta}\log n}{\eps}.$$
\end{enumerate}
\end{lemma}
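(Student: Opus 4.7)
The plan is to handle the query bound and the three properties separately, deriving the first directly, property~3 from a Chernoff bound on the initial counters, and properties~1--2 from a two-sided Chernoff argument combined with a potential-function argument limiting the number of $N(\calL)$-additions. The query bound is immediate: the first loop makes $n^{1-\delta}$ calls to $\FindBlock$ at $O(\log k)=O(\log n)$ cost each, and the outer loop performs at most $(200/\eps)\cdot(100/\eps)\log(n/\eps)$ further $\FindBlock$ calls, totalling $\tilde{O}(n^{1-\delta}/\eps^2)$.

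For property~3, I would argue that every block $\ell$ with $|B_\ell|>16n^\delta\log n/\eps$ enters $\calL$ already in the first loop. Its counter $c_\ell$ is a sum of $n^{1-\delta}$ i.i.d.\ $\bern(|B_\ell|/n)$ variables with mean exceeding $16\log n/\eps$, so a multiplicative Chernoff lower tail gives $\Pr[c_\ell<4\log n/\eps]\le n^{-\Omega(1/\eps)}$; union-bounding over the at most $n$ blocks, every large block sits in the initial $\calL$. Since $\calL$ only grows during execution, property~3 persists throughout.

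For properties~1 and~2, the core is a two-sided Chernoff dichotomy on the counter $c$ inside the outer loop. Let $p_i=\Pr_{x\sim\calD}[\FindBlock(f,\calS,x)\in N(\calL)]$ at the start of iteration $i$, so $\E[c]=(100/\eps)\log(n/\eps)\cdot p_i$. If $p_i\ge 0.1\eps$ then $\E[c]\ge 10\log(n/\eps)$ and a multiplicative Chernoff lower tail gives $\Pr[c<5\log(n/\eps)]\le(\eps/n)^{\Omega(1)}$; if $p_i\le\eps/100$ then $\E[c]\le\log(n/\eps)$ and the Poisson-form bound $\Pr[X\ge k]\le(e\mu/k)^k$ gives $\Pr[c\ge 5\log(n/\eps)]\le(e/5)^{5\log(n/\eps)}=(\eps/n)^{\Omega(1)}$. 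Union-bounding across the $200/\eps$ iterations, both failure events have total probability $o_n(1)$, so on a good event $\calE$ we have: whenever the algorithm adds $N(\calL)$, the corresponding $p_i\ge\eps/100$; whenever it returns, $p_i\le 0.1\eps$. The latter statement gives property~2 directly.

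This dichotomy also limits $|\calL|$ and ensures the algorithm returns on time. Because $N(\calL)$ is disjoint from $\calL$, each addition on $\calE$ transfers at least $\eps/100$ of fresh probability mass into $\calL$ under the distribution on blocks induced by $\calD$ and $\FindBlock$; since total mass is at most $1$, there can be at most $100/\eps<200/\eps$ additions, and the algorithm returns before the final line. For the size bound, $\sum_\ell c_\ell=n^{1-\delta}$ forces the initial $\calL_0$ to contain at most $n^{1-\delta}\eps/(4\log n)$ blocks arranged in at most that many maximal intervals, and each of the $\le 100/\eps$ subsequent additions extends every interval by at most one index on either side, so $|\calL|\le\bigl(n^{1-\delta}\eps/(4\log n)\bigr)(1+200/\eps)=O(n^{1-\delta}/\log n)\le n^{1-\delta}$ for all sufficiently large $n$. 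The main obstacle is the constant bookkeeping: the thresholds $0.1\eps$ and $\eps/100$, the $200/\eps$ outer-loop budget, and the $4\log n/\eps$ counter threshold must fit together so that both Chernoff tails beat the union bound over iterations and so the potential argument closes strictly inside the outer-loop budget.
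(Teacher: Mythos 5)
Your proof is correct and follows essentially the same structure as the paper's: Chernoff concentration on the first loop's counters to control block sizes, a two-sided Chernoff dichotomy on the counter $c$ per outer-loop iteration (if $p_i \ge 0.1\eps$ then $c$ is large w.h.p., if $p_i \le \eps/100$ then $c$ is small w.h.p.), and a potential argument that the disjoint fresh mass transferred at each addition forces termination before the last line, which yields property 2. One small simplification you make: you bound $|\calL_0|$ deterministically from $\sum_\ell c_\ell = n^{1-\delta}$ and the inclusion threshold $c_\ell \ge 4\log n/\eps$, whereas the paper routes this through the Chernoff-derived lower bound $|B_\ell| \ge n^\delta\log n/\eps$ for $\ell\in\calL_0$ together with disjointness of blocks in $[n]$; both give the same $O(\eps n^{1-\delta}/\log n)$ order, but yours avoids one use of the concentration bound.
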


\begin{proof}
We start by considering the $\calL$ computed on line 6. By Chernoff bound and a union bound, we have that with probability at least $1-o_n(1)$, $\calL$ on line 6 satisfies the following conditions:
\begin{enumerate}
\item Every $\ell\in \calL$ satisfies $|B_\ell|\ge (n^\delta \log n)/\eps$, from which we also have (for the $\calL$ on line 6)
$$
|\calL|\le n\cdot \frac{\eps}{n^\delta\log n}=\frac{\eps n^{1-\delta}}{\log n}.
$$
\item Every $\ell\notin \calL$ satisfies 
   $|B_\ell|\le (16 n^\delta \log n)/\eps.$
\end{enumerate}
Condition 3 of the Lemma follow directly. For condition 1, note that the for loop beginning on line 7 only repeats for $O(1/\eps)$ rounds and in each round, $\calL$ can grow by no more than twice of the size of $\calL$ on line 6.
As a result, the final size of $\calL$ is no more than
$$
O\left(\frac{1}{\eps}\right)\cdot \frac{\eps n^{1-\delta}}{\log n}\le n^{1-\delta}.
$$

Finally, for condition 2, again by Chernoff bound and a union bound, with probability at least $1-o_n(1)$, we have for every iteration of the for loop on line 7:
\begin{enumerate}
\item When $c<5\log(n/\eps)$, we have $$\Pr_{x\sim \calD}\Big[\FindBlock(f,\calS,x)\in N(\calL)\Big]\le 0.1\eps;$$ 
\item When $c\ge 5\log(n/\eps)$, the probability is at least $0.01\eps$.
\end{enumerate}
Since we repeat the for loop $200/\eps$ times, it must end with an iteration with $c<5\log (n/\eps)$ (instead of reaching the last line) and this finishes the proof of the lemma.
\end{proof}

We say $\calL$ returned by $\FindBigBlocks$ is \emph{good}
  with respect to $\calS$ if it satisfies all conditions
  stated in Lemma \ref{lem:numofbb}.
We will refer to $\ell\in \calL$ as \emph{big} blocks and 
  $\ell\notin \calL$ as \emph{small} blocks.



\subsection{$\MaxIndex$}\label{sec:maxindex}

Let $f:\{0,1\}^n\rightarrow \{0,1\}$, $\calS=(s^{(1)},\ldots,s^{(k)})$ 
  be a sketch that is consistent with $f$, and 
  $\calL\subseteq [0:k+1]$ be a good set of big blocks with respect to $\calS$.
We describe a
  deterministic procedure $\MaxIndex$ that will play an important 
  role in the main testing algorithm. 

To motivate $\MaxIndex$,
consider the case when $f$ is a monotone decision list (even though it will be ran on general functions).
Given $x\in \{0,1\}^n\setminus \{0^n\}$,
  we would like to find an $i$ such that 
  $f(e_i)=f(x)$ and $f(e_i\vee e_j)=f(e_i)$ for all
  $j\in \supp(x)$ such that $f(e_j)\ne f(x)$, i.e.,
  $i$ is one of the $f(x)$-rule variables 
  that has priority higher than any of the $\overline{f(x)}$-rule
  variables in the support of $x$.
$\MaxIndex(f,\calS,x)$ (Algorithm \ref{alg:monDLmaxind}) achieves this with $\tilde{O}(n^\delta/\eps)$ 
  queries, with a caveat though that it only promises to 
  work when $x$ lies in a block not in $\calL$: $\FindBlock(f,\calS,x)\notin \calL$.


\begin{lemma}\label{lem:maxindex}
$\textsc{MaxIndex}(f,\calS,\calL,x)$ is a deterministic procedure. It makes $O(\log n)$ many queries on $f$ when $\FindBlock (f,\calS,x)\in \calL$ and $\tilde{O}(n^\delta/\eps)$ many queries
  when $\FindBlock (f,\calS,x)\notin \calL$. It returns either~an $i\in \supp(x)$ or $\nil$; whenever it returns an $i\in \supp(x)$, we always have \begin{equation}\label{eq:check}\FindBlock(f,\calS,e_i)=\FindBlock(f,\calS,x)\quad\text{and}\quad f(x)=f(e_i).\end{equation}
Suppose that $f$ is a monotone decision list. 
Then $\MaxIndex$ always returns an $i$. 
Moreover, when $\FindBlock(f,\calS,x)\notin \calL$, the $i\in \supp(x)$ returned  additionally satisfies 
 $$f(e_i\vee e_j)=f(e_i),\quad  \text{for all $j\in \supp(x)$}.$$
\end{lemma}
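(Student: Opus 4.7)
The plan is to case-split on whether $\FindBlock(f,\calS,x) \in \calL$ and, orthogonally, on whether $f$ is a monotone decision list. Determinism will be immediate from the pseudocode. Property~(\ref{eq:check}) can be enforced as an unconditional final safeguard: once the procedure has picked a candidate $i$, it queries $f(e_i)$ and runs $\FindBlock(f,\calS,e_i)$, and returns $i$ only if both values match $f(x)$ and $\FindBlock(f,\calS,x)$ respectively; otherwise it returns $\nil$. This safeguard costs $O(\log n)$ queries, so it never violates the stated complexity bounds, and it takes care of the general-$f$ guarantee automatically. That leaves us to verify, for the yes case, that on a monotone decision list the algorithm in fact finds a qualifying $i$ so the safeguard never triggers $\nil$.

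For the big-block case ($\FindBlock(f,\calS,x)\in\calL$) the algorithm can afford only $O(\log n)$ queries, so it runs a single $\FindRep$ over $X:=\{e_j:j\in\supp(x)\}$ with an auxiliary set $Y$ drawn from the sketch (for example a neighboring sketch string $s^{(\ell\pm 1)}$) chosen so as to force the returned element into block $\ell$, and then applies the safeguard. When $f$ is a monotone decision list, the correctness of $\FindRep$ combined with the consistency of $\calS$ with $f$ forces the returned $j^\flat$ to satisfy $f(e_{j^\flat})=f(x)$ and to lie in the same block as $x$ (since for a monotone DL the block of $e_j$ is determined entirely by $\pi^{-1}(j)$), so the safeguard will accept.

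For the small-block case, Lemma~\ref{lem:numofbb} gives $|B_\ell|\le 16 n^\delta\log n/\eps$. The plan is to isolate $S_\ell:=\supp(x)\cap B_\ell$ and then pick the priority-maximum element inside $S_\ell$. To stay within budget, the algorithm restricts attention to a pool of $\tilde O(n^\delta/\eps)$ candidates, running $\FindBlock(f,\calS,e_j)$ on each at $O(\log n)$ queries apiece; the size bound on $B_\ell$ caps the total cost at $\tilde O(n^\delta/\eps)$. Inside $S_\ell$, one maintains a running ``champion'' via $\FindRep$-style comparisons, using sketch strings of adjacent blocks as pivots to break the symmetry that arises when two candidates share the same $f$-value. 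In the monotone DL case the top-priority variable $j^*:=\arg\min_{j\in\supp(x)}\pi^{-1}(j)$ necessarily lies in $B_\ell$, because its rule is precisely what forces $\FindBlock(f,\calS,x)=\ell$; hence $j^*\in S_\ell$ and the champion produced by the sweep equals $j^*$.

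The main obstacle is the stronger dominance property $f(e_{j^*}\vee e_j)=f(e_{j^*})$ for \emph{every} $j\in\supp(x)$, including those $j\notin B_\ell$ which the algorithm never explicitly inspects. The resolution will use the block structure of monotone decision lists: consistency of $\calS$ with $f$ makes each block $B_m$ occupy a contiguous $\pi$-index interval, with lower-indexed blocks corresponding to smaller $\pi^{-1}$-values. Since $j^*\in B_\ell$ achieves the minimum $\pi^{-1}$ over all of $\supp(x)$, any $j\in\supp(x)\setminus B_\ell$ satisfies $\pi^{-1}(j)>\pi^{-1}(j^*)$, so $\min_\pi(e_{j^*}\vee e_j)=\pi^{-1}(j^*)$ and $f(e_{j^*}\vee e_j)=\nu_{\pi^{-1}(j^*)}=f(e_{j^*})$, as required. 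The safeguard then commits to returning $j^*$, closing out the proof.
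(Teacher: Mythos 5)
The high-level plan tracks the paper's actual algorithm reasonably well (use $\FindRep$ with the pivot $s^{(\ell+1)}$ in the big-block case, collect a bounded pool and do a final $\FindBlock$-based safeguard in the small-block case, observe $j^*\in B_\ell$), but the key step of the small-block case has a gap. You assert that ``the champion produced by the sweep equals $j^*$'' and then prove dominance \emph{only for} $j^*$. The paper's $\MaxIndex$, however, does not track a running champion: the while loop uses $\FindRep$ to pull elements of $B_\ell\cap\supp(x)$ into a set $U$ (and $\FindRep$ may well pull non-$j^*$ members first), and then the final loop returns the \emph{first} $e_i\in U$ that passes the three-part check $\FindBlock(f,\calS,e_i)=\ell$, $f(e_i)=f(x)$, $f(e_i\vee(\vee_{e\in E}e))=f(x)$. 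That returned element need not be $j^*$. So you still owe an argument that whichever $i$ is returned satisfies $f(e_i\vee e_j)=f(e_i)$ for all $j\in\supp(x)$. The missing piece is to show that the third condition, together with the loop's two exit conditions (the size cap $|U|\geq 16n^\delta\log n/\eps$, which forces $U'=B_\ell\cap\supp(x)$ when triggered, and the failed test $f(s^{(\ell+1)}\vee(\vee_{e\in E}e))\neq f(x)$), forces $\pi^{-1}(i)\leq\pi^{-1}(j)$ for every remaining $j\in E'$: otherwise the $\pi^{-1}$-minimizer $\hat j$ of $\{i\}\cup E'$ would lie in $E'$ with $f(e_{\hat j})=f(x)$, contradicting either exit condition. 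Your proof commits the safeguard to return $j^*$, which the algorithm never does.

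A smaller issue: the claim that ``each block $B_m$ occupies a contiguous $\pi$-index interval'' is false. For a monotone decision list, $B_\ell$ consists of the $j$ with $\min_\pi(s^{(\ell-1)})<\pi^{-1}(j)<\min_\pi(s^{(\ell+1)})$ \emph{and} $f(e_j)=f(s^{(\ell)})$; consequently $B_\ell$ and $B_{\ell-1}$ interleave over $\pi^{-1}$-positions in the range $(\min_\pi(s^{(\ell-1)}),\min_\pi(s^{(\ell)}))$. This error happens to be harmless in your write-up, since the dominance claim for $j^*$ only needs $j^*$ to be the global $\pi^{-1}$-argmin over $\supp(x)$—a tautology—not any ordering of blocks. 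But it should be corrected to avoid propagating a wrong structural picture into the analysis of the other testing procedures.
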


\begin{proof}
The query complexity part is trivial. For the case with a general Boolean function $f$, note that $\MaxIndex$ always checks (\ref{eq:check}) before returning $i$.

When $f$ is a monotone decision list,   every $e_i$ added to $U$ must satisfy $s^{(\ell-1)}\succ_f e_i\succ_f s^{(\ell+1)}$.~But given that $\calL$ is good and $\ell\notin \calL$
  and the while loop on line 8 is repeated enough number of times, 
  $U$ by the end must contain
  all $i\in \supp(x)$ with $f(e_i)=f(x)$ and $s^{(\ell-1)}\succ_f e_i\succ_f s^{(\ell+1)}$.
At least one of the $e_i$'s in $U$ at the end will satisfy all conditions checked on line 13.
\end{proof} 

\begin{algorithm}[t!]
\caption{$\textsc{MaxIndex}(f,\sketch,\calL,x)$}\label{alg:monDLmaxind}
\begin{algorithmic}[1]
\Require {Oracle access to $f$, a sketch $\calS=(s^{(1)},\ldots,s^{(k)})$ consistent with $f$, $\calL\subseteq [0:k+1]$ that is\newline good with respect to $\calS$, and a string $x\in \{0,1\}^{n}\setminus\{0^{n}\}$}
\State Let $\ell\gets \textsc{FindBlock} (f,\calS,x)$
   and $s^{(\ell+1)}\gets 0^n$ if $\ell\in \{k,k+1\}$
\If{$\ell\in\calL$}\Comment{Case when $\ell\in \calL$}
    \State Let $E\gets \{e_{j}:j\in\supp(x)\}$ and  $e_i\gets \FindRep(f,E,\{s^{(\ell+1)}\})$ 
     
    \State \Return $i$ if $\FindBlock(f,\calS,e_i)=\ell$ and $f(e_i)=f(x)$;
    \Return $\nil$ otherwise
\EndIf
\If{$\ell\notin \calL$}
\State Let $E\gets \{e_{j}:j\in\supp(x)\}$ and $U\gets \emptyset$ \Comment{Case when $\ell\notin \calL$}
\While{$|U|< (16n^{\delta}\log n)/\eps$ and $f(s^{(\ell+1)}\vee(\vee_{e\in E}\hspace{0.05cm}e))=f(x)$}  
\State Set $\smash{z\gets \textsc{FindRep}(f,E,\{s^{(\ell+1)}\})}$
\State Update $U\gets U\cup \{z\}$ and $E\gets E\setminus\{z\}$ 
\EndWhile
\For{each $e_i\in U$} 
  \State \Return $i$
    if  $\FindBlock(f,\calS,e_i)=\ell$,
    $f(e_i)=f(x)$ and 
    $f(e_i\vee (\vee_{e\in E}\hspace{0.05cm} e))=f(x)$
\EndFor
\State \Return $\nil$
\EndIf
 \end{algorithmic}
\end{algorithm}


\def\MonoDL{\textsc{MonotoneDL}}
\def\eps{\epsilon}
\def\FindBlock{\textsc{FindBlock}}
\def\MaxIndex{\textsc{MaxIndex}}
\def\nil{\textsf{nil}}

\subsection{The auxiliary graph and classification of its cycles}

After the preprocessing stage, let $\calS=(s^{(1)},\ldots,s^{(k)})$ be a sketch that is 
  consistent with $f$, and $\calL$ $\subseteq [0:k+1]$ be a good set of big blocks with respect to $\calS$.
(When analyzing the yes case later, we will add the condition that $\calS$ is scattered as well.)
Given $\calS$ and $\calL$, 
\begin{align*}\FindBlock(f,\calS,\cdot)&:\{0,1\}^n\rightarrow [0:k+1]\quad\text{and} \\
\MaxIndex(f,\calS,\calL,\cdot)&:\{0,1\}^n\setminus \{0^n\}\rightarrow [n]\cup \{\nil\}
\end{align*}are two well-defined 
  deterministic maps.~We use these two maps to classify cycles in the following
  auxiliary directed graph $H$:

\begin{definition}
We write $H$ to denote the directed (bipartite) graph with vertex set 
$$V(H)=\big\{(u,W):u\in[n]\ \text{and}\ W\subseteq [n]\big\},$$ 
and there is a directed edge from $(u,W_{1})$ to $(v,W_{2})$ in $H$ if and only if $v\in W_{1}$ and $f(e_u)\ne f(e_v)$.
\end{definition}

We will refer to $H$ as the \emph{auxiliary} graph.
The following definition shows how $\calS$ and $\calL$ together
  can be used to define a map $\varphi$ from $\{0,1\}^n$ to  $V(H)\cup \{\star,\nil\}$, which in turn induces a probability distribution  
  over $V(H)\cap \{\star,\nil\}$ from $\calD$:

\begin{definition}The map $\varphi_{f,\sketch,\calL}:\{0,1\}^n \rightarrow V(H)\cup \{\star, \nil\}$ is defined
as follows: $\varphi_{f,\calS,\calL}(0^n)=\star$; 
$$\varphi_{f,\sketch,\calL}(x):=\Big (\MaxIndex({f,\calS,\calL},x),\,\big\{i\in \supp(x): f(e_{i})\neq f(x)\big\}\Big)$$
if $x\ne 0^n$ and $\textsc{MaxIndex}(f,\calS,\calL,x)\ne \nil$; and $\varphi_{f,\calS,\calL}(x)=\nil$ otherwise.

For convenience, we will just write $\varphi$ for
  $\varphi_{f,\calS,\calL}$ when its subscripts are clear
  from the context.
Given $\varphi$ and $\calD$, we write 
  $\mathcal{D}\circ \varphi^{-1}$ to denote the push-forward of the probability distribution $\mathcal{D}$ by $\varphi$, i.e., for each $(u,W)\in V(H)$,
\begin{align*}
\mathcal{D}\circ\varphi^{-1} (u,W)&=\sum_{x\in \{0,1\}^n\setminus \{0^n\}} \calD(x)\cdot 1\big[\varphi (x)=(u,W)\big],\\[0.6ex]
\mathcal{D}\circ\varphi^{-1} (\nil)&=\sum_{x\in \{0,1\}^n\setminus \{0^n\}} \calD(x)\cdot 1\big[\varphi(x)=\nil\big]\quad\text{and}\quad\calD\circ\varphi^{-1}(\star)=\calD(0^n).
\end{align*} 
\end{definition}


%
The following lemma shows that, when $f$ is $\eps$-far
  from monotone decision lists with respect to $\calD$,
  any feedback vertex
  set of $H$ must have mass at least $\Omega(\eps)$
  in $\calD\circ\varphi^{-1}$. 





\begin{lemma}\label{prop:pushforward}
Suppose that $f$ is $\varepsilon$-far from monotone decision lists with respect to $\mathcal{D}$,
  $\calS$ is a sketch consistent with $f$, and
  $\calL$ is a good set of big blocks with respect to $\calS$.
Then either 
$ 
\calD\circ\varphi^{-1} (\emph{\nil})\ge \eps/2,
$ 
or we have $\mathcal{D}\circ \varphi^{-1}(U)\geq \varepsilon/2$ for any feedback vertex set $U\subseteq V(H)$ of $H$.
\end{lemma}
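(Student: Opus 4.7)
The plan is to prove the contrapositive. Suppose $\calD\circ\varphi^{-1}(\nil)<\eps/2$ and, toward contradiction, that some feedback vertex set $U\subseteq V(H)$ satisfies $\calD\circ\varphi^{-1}(U)<\eps/2$. From $U$ I will construct a monotone decision list $g$ with $\{x:g(x)\ne f(x)\}\subseteq\varphi^{-1}(U\cup\{\nil\})$, which immediately gives $\dist_\calD(f,g)<\eps$, contradicting the $\eps$-farness hypothesis.

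To build $g$, fix a topological ordering $\tau$ of the DAG $H\setminus U$ and collapse it down to an ordering on $[n]$ by setting
\[\rho(i):=\min\bigl\{\tau((i,W'))\,:\,(i,W')\in V(H)\setminus U\bigr\},\]
with the convention $\rho(i):=+\infty$ when no such $W'$ exists. Let $\pi$ be any permutation of $[n]$ sorting variables by increasing $\rho$ (ties broken arbitrarily), and let $g$ be the monotone decision list represented by $(\pi,\nu)$ with $\nu_{\pi^{-1}(i)}=f(e_i)$ for $i\in[n]$ and $\nu_{n+1}=f(0^n)$. The structural feature of $H$ this exploits is that whether $(u,W_1)\to(v,W_2)$ is an edge depends only on $u$, $W_1$, and $v$ and not on $W_2$; this is precisely what lets the topological order on pairs $(i,W')$ descend consistently to an order on first coordinates.

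The verification will go as follows. For any $x\ne 0^n$ with $\varphi(x)=(u,W)\notin U$, Lemma~\ref{lem:maxindex} gives $u\in\supp(x)$ and $f(e_u)=f(x)$, and by definition $W=\{j\in\supp(x):f(e_j)\ne f(x)\}$. Note $\rho(u)\le\tau((u,W))<+\infty$. For each $j\in W$, the edge $(u,W)\to(j,W')$ is present in $H$ for every $W'$ (since $f(e_u)\ne f(e_j)$); hence for every $(j,W')\in V(H)\setminus U$ it survives in $H\setminus U$ and forces $\tau((u,W))<\tau((j,W'))$. Minimizing over such $W'$ gives $\rho(u)<\rho(j)$, and this also holds trivially when $\rho(j)=+\infty$. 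Thus $u$ precedes every $j\in W$ in $\pi$, so the first element of $\supp(x)$ encountered by $\pi$ lies outside $W$, has the same $f$-value as $x$ on its singleton string, and therefore $g(x)=f(x)$. Combined with $g(0^n)=\nu_{n+1}=f(0^n)$, this gives the containment promised above.

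I don't expect a serious obstacle; the main point is that the asymmetric edge definition of $H$ makes the topological order collapse cleanly to a variable order. A minor subtlety worth double-checking in the write-up is that variables $i$ with $\rho(i)=+\infty$ never cause trouble, which holds because $\varphi(x)\notin U$ already guarantees that its first coordinate $u$ satisfies $\rho(u)<+\infty$.
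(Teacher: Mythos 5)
Your proof is correct and follows essentially the same strategy as the paper's: use the acyclic structure of $H\setminus U$ to build a monotone decision list $g$ that agrees with $f$ on every $x$ with $\varphi(x)\notin U\cup\{\nil\}$, then conclude via the triangle inequality on distances. Your explicit construction of the variable ordering through $\rho(i)=\min_{W'}\tau((i,W'))$ is a nice, slightly more careful way to descend the topological order to first coordinates than the paper's terse assertion that $H\setminus U$ ``induces a partial order over $I$,'' but the underlying idea and the key use of \Cref{lem:maxindex} (to get $u=\varphi(x)_1\in\supp(x)$ with $f(e_u)=f(x)$) are identical.
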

\begin{proof}
Assume that $\calD\circ\varphi^{-1}(\nil)<\eps/2$.
{Let $I$ denote the set of $i\in [n]$ such that $(i,W)\in V(H)\setminus U$ for some $W$.
As $H$ after removing $U$ is cycle-free, it induces a partial order over $I$. 
Consider any total order over $I$ consistent with that partial order, which together with values of $f(e_{i})$ (for $i\in I$) and $f(0^n)$ (for the case when $x_i=0$ for all $i\in I$) defines a monotone decision list $g$ on $\{0,1\}^{n}$.

For any $x\in\{0,1\}^{n}\setminus \{0^n\}$ such that 
  $\varphi (x)\in V(H)\setminus U$, we show that $g(x)=f(x)$. 
To see this is the case, taking any such $x$, we note that by Lemma \ref{lem:maxindex} about $\MaxIndex$,
$$
\varphi(x)_1\in \supp(x)\quad\text{and}\quad
f(x)=f\big(e_{\varphi(x)_1}\big).
$$  
On the other hand, assume for a contradiction that 
  in the monotone decision list $g$, $g(x)$ is set to be 
  $\overline{f(x)}=f(e_i)$ because of some variable $i\in I$.
(Note that it cannot be the case that  
  $x_i=0$ for all $i\in I$, since we already know that $\varphi(x)_1\in \supp(x)$
  and $\varphi(x)_1\in I$ using $\varphi(x)\in V(H)\setminus U$.)
Then we have that $H$ after removing $U$ has an edge
  from $\varphi(x)$ to $(i,W)$ for some $W$, as $i\in I$.
This~implies that $i$ is dominated by
  $\varphi(x)_1$ in the total order over $I$ and thus, $g(x)$ cannot be set according to $i$, a contradiction.}
  
Finally, given that $f(x)=g(x)$ for all $x\in\{0,1\}^{n}\setminus \{0^n\}$ such that 
  $\varphi (x)\in V(H)\setminus U$, we have
\[ \eps\le \mathrm{dist}_{\mathcal{D}}(f,\text{monDL})\leq \mathcal{D}\big(\{x:g(x)\neq f(x)\}\big)\leq \varepsilon/2+\mathcal{D}\circ \varphi ^{-1}(U).\]
This finishes the proof of the lemma.
\end{proof}

We further classify cycles of $H$ into six types using $\calS,\calL$ and the map $\FindBlock(f,\calS,\cdot)$.
It follows from Lemma \ref{prop:pushforward} that, for some~$c\in [5]$, 
  any vertex feedback set of type-$c$ cycles in 
  $H$ must have mass  $\Omega(\eps)$ in $\calD\circ\varphi^{-1}$
  (it will become clear that we don't need to deal with type-$0$ cycles).
Our main testing algorithm then consists of five procedures, each handling 
  one type of cycles.

\begin{definition}[Types of cycles in $H$]
Let $C$ be a directed cycle in $H$. We say 
\begin{enumerate}
\item[0.] $C$ is of type $0$ if it contains a vertex $(u,W)$ with $\FindBlock(f,\calS,e_u)\in N(\calL)$;

\item[1.] $C$ is of type 1 if it contains a directed edge $(u,W_{1})\rightarrow(v,W_{2})$ that satisfies
$$\textsc{FindBlock}(f,\sketch,e_v)\le \textsc{FindBlock}(f,\sketch,e_u)-2;$$

\item[2.] $C$ is of type 2 if it contains a directed edge $(u,W_1)\rightarrow (v,W_2)$ that satisfies
$$
\FindBlock(f,\calS,e_v)=\FindBlock(f,\calS,e_u)-1
$$
and both $\FindBlock(f,\calS,e_u),\FindBlock(f,\calS,e_v)\in \calL$;
\item[3.] $C$ is of type 3 if it contains a directed  edge $(u,W_{1})\rightarrow(v,W_{2})$ that satisfies
$$\big|\textsc{FindBlock}(f,\sketch,e_u)-
\textsc{FindBlock}(f,\sketch,e_v)\big|=1$$
and 
$\textsc{FindBlock}(f,\sketch,e_u),
\textsc{FindBlock}(f,\sketch,e_v)\notin \calL\cup N(\calL)$
and $f(e_u\vee e_v)\ne f(e_u)$;

\item[4.] $C$ is of type $4$ if it contains 
  two consecutive edges $(u,W_1)\rightarrow (v,W_2)\rightarrow
    (w,W_3)$ such that
\begin{equation}\label{eq:useful1}
\textsc{FindBlock}(f,\sketch,e_w)+2 =
\textsc{FindBlock}(f,\sketch,e_v)+1 =
\textsc{FindBlock}(f,\sketch,e_u) 
\end{equation}
and $\textsc{FindBlock}(f,\sketch,e_u),
\textsc{FindBlock}(f,\sketch,e_v),\textsc{FindBlock}(f,\sketch,e_w)\notin \calL\cup N(\calL)$ and
$$
f(e_u\vee e_v)=f(e_u)\quad\text{and}\quad
f(e_v\vee e_w)=f(e_v);
$$

\item[5.] $C$ is of type $5$ if it (1) has length at least $4$, (2) satisfies $\FindBlock(f,\calS,e_u)\notin \calL\cup N(\calL)$ for\\
  all $(u,W)$ in $C$, (3) satisfies $f(e_u\vee e_v)=f(e_u)$ for all
  edges $(u,W_1)\rightarrow (v,W_2)$ in $C$, and (4)
\begin{equation}\label{eq:useful2}
 \max_{(u,W)\in C} \FindBlock(f,\calS,e_u)=
 \min_{(u,W)\in C}
\FindBlock(f,\calS,e_u)+1.
\end{equation}
\end{enumerate}
\end{definition}

The following lemma shows that every cycle in $H$ falls into at least one of the five types:

\begin{lemma}\label{lem:classfication}
Any cycle in $H$ must be of type $c$ for at least one $c\in \{0,1,\ldots,5\}$. 
\end{lemma}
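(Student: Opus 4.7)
Let me set $\ell(u) := \FindBlock(f,\calS,e_u)$ for brevity. The plan is to assume that $C$ is not of any type $c\in\{0,1,\ldots,5\}$ and derive a contradiction from the sequence of block indices around $C$. A key preliminary observation is that because $\calS$ is consistent with $f$, the alternating pattern $f(s^{(\ell-1)})\neq f(s^{(\ell)})$ forces $f(e_u)$ to be determined by the parity of $\ell(u)$; hence every edge $(u,W_1)\to(v,W_2)$ of $H$ (which requires $f(e_u)\ne f(e_v)$) has odd difference $\ell(v)-\ell(u)$. Combined with the assumption that $C$ is not of type $1$, every edge-difference lies in $\{-1,1,3,5,\ldots\}$.

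Calling an edge a \emph{down edge} if $\ell(v)-\ell(u)=-1$ and an \emph{up edge} otherwise, I would next argue that both endpoints of every down edge lie outside $\calL\cup N(\calL)$: the not-type-$0$ hypothesis rules out either endpoint being in $N(\calL)$, and if one endpoint were in $\calL$ then its adjacent block (containing the other endpoint) would be in $N(\calL)$, a contradiction; not-type-$2$ then rules out both endpoints being in $\calL$. The not-type-$3$ hypothesis thereby implies $f(e_u\vee e_v)=f(e_u)$ on every down edge, and the not-type-$4$ hypothesis implies that no two consecutive edges of $C$ are both down.

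The core combinatorial step is that summing the differences around $C$ gives zero while each down contributes $-1$ and each up contributes at least $+1$; combined with \textit{no two consecutive downs} (which forces the number of up edges to be at least the number of down edges in a cyclic arrangement, since each down edge is followed by a distinct up edge), this pins down strict alternation of ups and downs with each up contributing exactly $+1$. Consequently $C$ visits precisely two consecutive block indices $\ell_0$ and $\ell_0+1$, every vertex is an endpoint of some down edge and hence lies outside $\calL\cup N(\calL)$, and not-type-$3$ applied to up edges (now also having both endpoints outside $\calL\cup N(\calL)$ and difference $1$) yields $f(e_u\vee e_v)=f(e_u)$ on every edge of $C$.

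If $C$ has length at least $4$, all conditions in the definition of type $5$ are then satisfied, a contradiction. In the remaining case where $C$ has length $2$, the down edge forces $f(e_u\vee e_v)=f(e_u)$ while the up edge forces $f(e_u\vee e_v)=f(e_v)$, contradicting $f(e_u)\ne f(e_v)$. The step I expect to be the main obstacle is the combinatorial alternation argument in the third paragraph — specifically, the two inequalities on counts of up and down edges must be set up carefully in the cyclic setting to conclude strict alternation, and care is needed to verify that all endpoints of up edges inherit the ``outside $\calL\cup N(\calL)$'' property from their down-edge neighbors.
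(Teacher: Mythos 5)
Your proof is correct and closely tracks the ideas in the paper's proof, but it is organized differently in two notable ways.

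First, you make explicit the parity observation that $f(e_u)$ is determined by $\ell(u)\bmod 2$, so that every edge of $H$ has odd block difference; together with ``not type $1$'' this forces every down edge to drop by exactly $1$. The paper uses this implicitly (e.g.\ when concluding that two consecutive edges satisfying (\ref{eq:useful1}) cannot exist, or when asserting (\ref{eq:useful2})) but never states it, and in fact the paper's argument would not quite go through if zero-difference edges were possible. Making this explicit is a genuine clarification.

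Second, the paper proceeds via a global dichotomy — if $C$ is not type $0,1,2$ then either every vertex lies in a block in $\calL$ or every vertex lies outside $\calL\cup N(\calL)$ — and then disposes of the first alternative by observing it forces a type-$2$ edge. You instead argue locally that \emph{every down edge} has both endpoints outside $\calL\cup N(\calL)$ (using not-type-$0$ to exclude $N(\calL)$ and not-type-$2$ to exclude $\calL$), and only at the very end, after establishing strict alternation, conclude that every vertex is a down-edge endpoint and hence outside $\calL\cup N(\calL)$. This subsumes the paper's type-$2$ case without a separate argument and is arguably a cleaner decomposition. You also spell out the cyclic counting argument (each up contributes $\ge 1$, each down contributes $-1$, no two consecutive downs $\Rightarrow$ strict alternation with ups of $+1$) that the paper compresses into ``Combining this with the assumption that the cycle is not type $1$, $C$ must satisfy (\ref{eq:useful2}).'' Both the concern you flag about the cyclic alternation argument and the ``outside $\calL\cup N(\calL)$'' inheritance check out. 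The only slip is a phrasing one in your second paragraph: the adjacent block to one in $\calL$ lies in $\calL\cup N(\calL)$, not necessarily in $N(\calL)$; the intended case split (exactly one endpoint in $\calL$ gives $N(\calL)$, both in $\calL$ gives type $2$) is clearly what you have in mind given the next clause, but as written the first clause asserts something false before the second patches it. Finally, your direct contradiction for length-$2$ cycles (both $f(e_u\vee e_v)=f(e_u)$ and $f(e_v\vee e_u)=f(e_v)$ force $f(e_u)=f(e_v)$) is equivalent to the paper's observation that such cycles are type $3$.
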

\begin{proof}
Let $C$ be a cycle in $H$ with the following vertex sequence:
$$(u_{1},W_{1}),(u_{2},W_{2}),\dots,(u_{\ell},W_{\ell}),(u_{\ell+1},W_{\ell+1})=(u_{1},W_{1}),$$
i.e. $u_{r+1}\in W_{r}$ for all $r\in[\ell]$.
Given that $H$ is bipartite, $\ell\ge 2$ and must be an even number.

We start by showing that if $C$ is not type $0$, $1$, or $2$ then it must be the case that 
\begin{itemize}
    \item either $\FindBlock(f,\calS,e_u)\in \calL$
  for all $(u,W)$ in $C$; 
  \item or $\FindBlock(f,\calS,e_u)\notin \calL\cup N(\calL)$ for all $(u,W)$ in $C$.
\end{itemize}




Otherwise, given the definition of $N(\calL)$, there must be 
  $(u,W_1) \in \calL$ and a $(v,W_2) \not \in \calL$ such that 
$$
\FindBlock(f,\calS,e_u)\le \FindBlock(f,\calS,e_v)-2.
$$
However, the part of the cycle $C$ from $(v,W_2)$ to $(u,W_1)$
  can never skip a block whenever it goes down in blocks (since $C$ is
  not type $1$).
Therefore, the cycle must visit every block between that of $u$
  and that of $v$.
But given that one of them is in $N(\calL)$, $C$ must be of type-$0$,
  a contradiction.
  
It is easy to show that, when $\FindBlock(f,\calS,e_u)\in \calL$
  for all $(u,W)$ in $C$, $C$ must be of type 2. So we are left with the case when $\FindBlock(f,\calS,e_u)\notin \calL\cup N(\calL)$ for all $(u,W)$ in $C$.




Assume that the cycle $C$ is not type 3 or type 4. We finish the proof
  by showing that $C$ must be of type 5.
First, given that $C$ is not of type 3 or type 4, there cannot be two consecutive edges $(u,W_1)\rightarrow (v,W_2)
  \rightarrow (w,W_3)$ that satisfy (\ref{eq:useful1}).
Combining this with the assumption that the cycle is not type 1, $C$
  must satisfy (\ref{eq:useful2}).
We also note that $C$ cannot be of length 2 (otherwise, it is of type 3),
  and it must satisfy $f(e_u\vee e_v)=f(e_u)$ for all
  edges $(u,W_1)\rightarrow (v,W_2)$ in $C$ (otherwise, it is of type 3).
This finishes the proof of the lemma.
\end{proof}

\begin{corollary}\label{coro:pushforward}
\emph{Suppose $f$ is $\eps$-far from monotone decision lists with respect to $\calD$, $\calS$ is a sketch
consistent with $f$, and $\calL$ is a good set of blocks with respect to $\calS$. Then either 
$ 
\calD\circ\varphi^{-1} (\emph{\nil})\ge \eps/2
$, 
or for some $c\in [5]$, 
  any feedback vertex set $U$ of type-$c$ cycles in $H$ has $\mathcal{D}\circ \varphi ^{-1}(U)\geq \Omega(\varepsilon)$.}
\end{corollary}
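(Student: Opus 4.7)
The plan is to assume the first alternative fails, so that $\calD\circ\varphi^{-1}(\nil)<\eps/2$, and then use Lemma \ref{prop:pushforward} together with the classification Lemma \ref{lem:classfication} and the ``goodness'' of $\calL$ to extract one type $c\in[5]$ whose minimum feedback vertex set is heavy. Under that assumption, Lemma \ref{prop:pushforward} yields that every feedback vertex set $U$ of (all cycles of) $H$ satisfies $\calD\circ\varphi^{-1}(U)\ge \eps/2$.

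Next I would dispose of type-$0$ cycles cheaply. Define
\[
U_0 := \big\{(u,W)\in V(H):\FindBlock(f,\calS,e_u)\in N(\calL)\big\}.
\]
By definition, $U_0$ is a feedback vertex set for every type-$0$ cycle. The key observation is that, by Lemma \ref{lem:maxindex}, whenever $\varphi(x)=(u,W)\in V(H)$ we have $\FindBlock(f,\calS,e_u)=\FindBlock(f,\calS,x)$; hence
\[
\calD\circ\varphi^{-1}(U_0)\ \le\ \Pr_{x\sim\calD}\big[\FindBlock(f,\calS,x)\in N(\calL)\big]\ \le\ 0.1\eps,
\]
where the last inequality uses condition~2 of Lemma \ref{lem:numofbb} for a good $\calL$.

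For each $c\in[5]$, let $U_c$ be a minimum-mass (under $\calD\circ\varphi^{-1}$) feedback vertex set of the type-$c$ cycles in $H$. By Lemma \ref{lem:classfication}, every cycle of $H$ is of some type in $\{0,1,\dots,5\}$, so $U_0\cup U_1\cup\cdots\cup U_5$ is a feedback vertex set of all cycles of $H$, and therefore has $\calD\circ\varphi^{-1}$-mass at least $\eps/2$ by Lemma \ref{prop:pushforward}. A union bound combined with the bound on $U_0$ then gives
\[
\sum_{c=1}^{5}\calD\circ\varphi^{-1}(U_c)\ \ge\ \eps/2-0.1\eps\ =\ 0.4\eps,
\]
so there exists $c\in[5]$ with $\calD\circ\varphi^{-1}(U_c)\ge 0.08\eps=\Omega(\eps)$. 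Since $U_c$ was chosen to minimize the mass, this gives the desired conclusion: any feedback vertex set of type-$c$ cycles has $\calD\circ\varphi^{-1}$-mass $\Omega(\eps)$.

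The argument is essentially a pigeonhole on top of Lemma \ref{prop:pushforward} and Lemma \ref{lem:classfication}; the one substantive step is the bound on $\calD\circ\varphi^{-1}(U_0)$, which is the only place we need that the $N(\calL)$ blocks of a good $\calL$ have small $\calD$-mass \emph{and} that $\MaxIndex$ preserves the block under $\FindBlock$. I expect this to be the only subtle point, and it is handled cleanly by invoking Lemma \ref{lem:maxindex} and condition~2 of Lemma \ref{lem:numofbb}.
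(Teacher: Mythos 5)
Your proof is correct and takes essentially the same approach as the paper's (which is a one-line appeal to Lemma~\ref{prop:pushforward}, Lemma~\ref{lem:classfication}, and the bound on the mass of type-$0$ vertices via goodness of $\calL$). You simply spell out the pigeonhole step and the bound on $\calD\circ\varphi^{-1}(U_0)$ — including the correct invocation of Lemma~\ref{lem:maxindex} to equate $\FindBlock$ of $e_u$ with $\FindBlock$ of $x$ when $\varphi(x)=(u,W)$ — in more detail than the paper does.
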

\begin{proof}
This follows from Lemma \ref{prop:pushforward}, Lemma \ref{lem:classfication},
  as well as the fact that the mass of $\calD\circ\varphi^{-1}$ on 
  $(u,W)$ with $\FindBlock(f,\calS,e_u)\in N(\calL)$ is at most $0.1\eps$
  given that $\calL$ is good.
\end{proof}

\subsection{The main testing algorithm and proof of Theorem \ref{theo:monotonealg}}

The testing algorithm $\textsc{MonotoneDL}(f,\calD,\eps)$ for monotone decision lists is given in Algorithm~\ref{alg:monDLtester}. After running the preprocessing stage to obtain a sketch $\calS$
  and a set $\calL$ of big blocks,
  the algorithm quickly  checks whether $\calD\circ \varphi^{-1}(\nil)\ge \eps/2$ or not in 
  line 8.
The rest of the algorithm then consists of  one procedure for each 
  of the five types of cycles.

We list performance guarantees of these 
  procedures in lemmas below.
In all lemmas we assume 
\begin{enumerate}
  \item $f:\{0,1\}^n\rightarrow \{0,1\}$ and 
  $\calD$ is a probability distribution over $\{0,1\}^n$; 
  \item $\calS=(s^{(1)},\ldots,s^{(k)})$
  is a sketch that is consistent with $f$; and
  \item $\calL$ is a good set of big blocks with respect to $\calS$.
\end{enumerate}
In each lemma we describe performance guarantees of the procedure when $f$ is a monotone decision list and when any  feedback vertex set for type-$c$ cycles in $H$, for some $c\in [5]$, is at least $\Omega(\eps)$.

\begin{algorithm}[t!]
\caption{$\textsc{MonotoneDL}(f,\calD,\eps)$}\label{alg:monDLtester}
\begin{algorithmic}[1]
\Require {Oracle access to $f:\{0,1\}^{n}\rightarrow\{0,1\}$, sampling access to $\mathcal{D}$ and $\eps>0$}

\State Run $\Preprocess(f,\calD,\eps)$
\If{it accepts or rejects}
    \State \Return the same answer
\Else
    \State Let $(\calS,\calL)$ be the pair it returns
\EndIf
\For{$O(1/\eps)$ times}
\State Draw $x\sim \calD$ and run $\MaxIndex(f,\calS,x)$;
  \textbf{reject}
  if it returns $\nil$
\EndFor
\State \textsc{TestType-1}$(f,\calD,\eps,\calS,\calL)$ and \textbf{reject} if it rejects 

\State \textsc{TestType-2}$(f,\calD,\eps,\calS,\calL)$ and \textbf{reject} if it rejects 
\State \textsc{TestType-3}$(f,\calD,\eps,\calS,\calL)$ and 
  \textbf{reject} if it rejects

\State \textsc{TestType-4}$(f,\calD,\eps,\calS,\calL)$ and 
  \textbf{reject} if it rejects

\State \textsc{TestType-5}$(f,\calD,\eps,\calS,\calL)$ and 
  \textbf{reject} if it rejects
\State \textbf{accept} 
\end{algorithmic}
\end{algorithm}

\begin{lemma}\label{lemma:type1}
$\textsc{TestType-1} $ makes $\tilde{O}(n^{0.5+\delta}/\eps^2)$ queries  
 and always  accepts when $f$ is a monotone decision list.
If any  feedback vertex set of type-$1$ cycles 
  in $H$ has probability mass $\Omega(\eps)$ in $\calD\circ\varphi^{-1}$, then 
  $\textsc{TestType-1}$ rejects
  with probability at least $0.9$.
\end{lemma}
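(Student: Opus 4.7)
The plan is a bipartite birthday-paradox sampling against a graph $G_1$ that captures type-$1$ edges of $H$ witnessed through $\varphi$. I define $G_1$ with left side $[n]$ and right side $\{0,1\}^n$, placing an edge between $v$ and $x$ exactly when $u_x := \MaxIndex(f,\calS,\calL,x) \ne \nil$, $v \in \supp(x)$, $f(e_v) \ne f(x)$, and $\FindBlock(f,\calS,e_v) \le \FindBlock(f,\calS,e_{u_x}) - 2$. I put $\calD$ on the right side and, on the left, the distribution $\calD^*$ defined by $\calD^*(v) := \Pr_{x \sim \calD}[\MaxIndex(f,\calS,\calL,x) = v]$ (with the remaining mass on a sink symbol); by Lemma~\ref{lem:maxindex}, one sample from $\calD^*$ is obtained from a single $\calD$-sample at cost $\tilde{O}(n^{\delta}/\eps)$. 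The algorithm draws $m = m' = O(\sqrt{n}/\eps)$ independent samples from each side, caches each $\MaxIndex$ value together with the block label it computes internally, queries $f(e_v)$ for each sampled $v$, and rejects iff some pair $(v,x)$ is an edge of $G_1$. This uses $\tilde{O}(n^{0.5+\delta}/\eps^2)$ queries in total.

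For the yes case, I will argue $G_1$ is edge-free when $f$ is a monotone decision list. Given $x$ with $u = \MaxIndex(x) \ne \nil$, Lemma~\ref{lem:maxindex} gives $f(e_u) = f(x)$ and $\FindBlock(f,\calS,e_u) = \FindBlock(f,\calS,x) =: \ell_x$; monotonicity forces $\ell_x$ to be the block of the highest-priority rule fired by $x$, so every $v \in \supp(x)$ has $\ell_v \ge \ell_x = \ell_u$. Combined with $f(e_v) \ne f(x)$, which forces opposite parity of the sketch values, this sharpens to $\ell_v \ge \ell_u + 1$, and $\ell_v \le \ell_u - 2$ is impossible. Hence $\textsc{TestType-1}$ never rejects.

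For the no case, I will show that any vertex cover $(C_1, C_2)$ of $G_1$ satisfies $\calD^*(C_1) + \calD(C_2) \ge \Omega(\eps)$; combined with Lemma~\ref{lem:bipbirthday} applied with $|U| = n$, the sample sizes then yield a covering pair with probability at least $0.99$. Given $(C_1, C_2)$, I set
\[ U_H := \{(v,W) \in V(H) : v \in C_1\} \cup \varphi(C_2) \cup \{(a,W) \in V(H)\setminus \varphi(\{0,1\}^n) : (a,W)\text{ sources a type-}1\text{ edge}\} \]
and check that $U_H$ is a feedback vertex set of type-$1$ cycles in $H$. Indeed, for any type-$1$ edge $(a,W_a) \to (b,W_b)$: if $b \in C_1$ then $(b, W_b) \in U_H$; else if $(a,W_a) \notin \varphi(\{0,1\}^n)$ then it belongs to the third piece; else for any $x_0 \in \varphi^{-1}((a,W_a))$, the pair $(b, x_0)$ is an edge of $G_1$, forcing $x_0 \in C_2$ and hence $(a, W_a) \in \varphi(C_2) \subseteq U_H$. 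Applying Corollary~\ref{coro:pushforward} gives $\calD \circ \varphi^{-1}(U_H) \ge \Omega(\eps)$, and since the third piece of $U_H$ lies outside $\varphi(\{0,1\}^n)$ and carries no $\calD \circ \varphi^{-1}$-mass, this reduces to $\calD^*(C_1) + \calD(C_2) \ge \Omega(\eps)$.

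The main obstacle I expect is this feedback-vertex-set construction, specifically handling type-$1$ cycles of $H$ whose vertices lie entirely outside the image of $\varphi$; these are absorbed into the third piece of $U_H$ without inflating its $\calD \circ \varphi^{-1}$-mass, which is what lets the vertex-cover bound translate cleanly into the bipartite birthday-paradox hypothesis.
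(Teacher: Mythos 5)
Your overall structure matches the paper's: the same algorithm (sample two groups, run \textsc{FindBlock} and \textsc{MaxIndex}, search for a witnessing pair) and the same engine (\Cref{lem:bipbirthday} applied to a bipartite graph encoding type-$1$ violations). But both halves of the argument have genuine errors.

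\textbf{Yes case.} The claim ``every $v \in \supp(x)$ has $\ell_v \ge \ell_x$'' is false, and so is its sharpening ``$\ell_v \ge \ell_u + 1$.'' The block map is not monotone in priority: adjacent blocks alternate parity of $f(e_i)$, so the interval of $\min_\pi$-values between two consecutive sketch elements $s^{(j)}$ and $s^{(j+1)}$ is split between blocks $j$ and $j+1$. Concretely, take $\FindBlock(x) = j$, $\min_\pi(s^{(j-1)}) < \min_\pi(x) < \min_\pi(s^{(j)})$, and pick $v \in \supp(x)$ with $f(e_v) \neq f(x)$ and $\min_\pi(x) < \pi^{-1}(v) < \min_\pi(s^{(j)})$; then $\FindBlock(e_v) = j-1 = \ell_x - 1$, contradicting both of your inequalities. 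The conclusion $\ell_v \le \ell_u - 2$ is indeed impossible, but the proof has to run through the weaker fact $\ell_v \ge \ell_x - 1$, which follows because $\pi^{-1}(v) > \min_\pi(x) > \min_\pi(s^{(\ell_x - 1)})$ forces $e_v$ to lie below $s^{(\ell_x - 1)}$. The paper gets this cleanly by chaining $s^{(\ell-1)} \succ_f x \succ_f e_i \succ_f s^{(\ell'+1)}$ and comparing $\min_\pi$-values, without ever asserting $\ell_v \ge \ell_x$.

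\textbf{No case.} The final reduction ``$\ldots$ this reduces to $\calD^*(C_1) + \calD(C_2) \ge \Omega(\eps)$'' does not follow from what precedes it. You bound $\calD\circ\varphi^{-1}(U_H)$ by $\calD^*(C_1) + \calD\circ\varphi^{-1}(\varphi(C_2))$, but $\calD\circ\varphi^{-1}(\varphi(C_2)) = \calD\big(\varphi^{-1}(\varphi(C_2))\big) \ge \calD(C_2)$, which is the wrong direction. This is an artifact of putting $\{0,1\}^n$ rather than $V(H)$ on the right side of your bipartite graph: the same vertex $(a,W_a)\in V(H)$ can have several $\calD$-heavy preimages, only one of which need lie in $C_2$ a priori. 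The paper sidesteps this entirely by taking $V = V(H)$ and $\nu = \calD\circ\varphi^{-1}$, so that the feedback vertex set $\{(v,W):v\in C_1\}\cup C_2$ has mass exactly $\mu(C_1)+\nu(C_2)$. Your argument can be rescued: you in fact prove that for every relevant head $(a,W_a)$ the \emph{entire} preimage $\varphi^{-1}((a,W_a))$ lies in $C_2$, so replacing the second piece of $U_H$ by $\{(a,W_a):\varphi^{-1}((a,W_a))\subset C_2\}$ gives a feedback vertex set whose $\calD\circ\varphi^{-1}$-mass \emph{is} at most $\calD(C_2)$ (the preimages are disjoint), and the reduction goes through. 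But that step needs to be made explicit; as written the inequality does not hold.

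(Minor: the hypothesis of the lemma already grants that every type-$1$ feedback vertex set has mass $\Omega(\eps)$, so invoking \Cref{coro:pushforward} is unnecessary.)
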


\begin{lemma}\label{lemma:type2}
    $\textsc{TestType-2}$ makes $\tilde{O}(n^{1-\delta/2}/\eps)$ queries.

When $f$ is a monotone decision list and $\calS$ is scattered, 
  it rejects with probability at most $o_n(1)$.

If any feedback vertex set of type-$2$ cycles 
  in $H$ has probability mass at least $\Omega(\eps)$ in $\calD\circ\varphi^{-1}$, 
  then $\textsc{TestType-2}$ rejects
  with probability at least $0.9$.
\end{lemma}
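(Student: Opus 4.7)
The plan is to design $\textsc{TestType-2}$ so that it draws $m = \tilde{\Theta}(n^{1-\delta/2}/\eps)$ independent samples from $\calD$, computes for each sample $x$ its block $\ell_{x} = \FindBlock(f,\calS,x)$, invokes $u_{x} = \MaxIndex(f,\calS,\calL,x)$ whenever $\ell_{x}\in\calL$, and rejects whenever there is a pair of samples $(x,y)$ with $\ell_{x},\ell_{y}\in\calL$, $\ell_{y}=\ell_{x}-1$, $u_{y}\in\supp(x)$, and $f(e_{u_{y}})\ne f(x)$. Each sample contributes $O(\log n)$ queries for the $\FindBlock$ call and $O(\log n)$ queries for $\MaxIndex$ on a big block by Lemma \ref{lem:maxindex}, and the pairwise check needs only one additional $f$-evaluation per $y$, for a total query cost of $\tilde{O}(n^{1-\delta/2}/\eps)$.

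By construction, the rejection condition is precisely the existence of a type-2 edge $\varphi(x)\to\varphi(y)$ in $H$ witnessed by two of our samples. For the no case, where every feedback vertex set for type-2 cycles in $H$ has $\calD\circ\varphi^{-1}$-mass at least $c\eps$, I convert this into a vertex-cover lower bound on the bipartite graph of \emph{positive-mass} type-2 edges: any vertex cover of this bipartite graph, augmented by the (zero-mass) endpoints of zero-mass type-2 edges, covers every type-2 edge of $H$ and hence breaks every type-2 cycle (since each such cycle contains at least one type-2 edge), so the augmented set is a feedback vertex set and therefore has mass at least $c\eps$. Projecting to first coordinates produces a bipartite graph $G'$ on $[n]\sqcup[n]$; a lifting argument shows every vertex cover of $G'$ still has mass at least $c\eps$ under the first-coordinate pushforwards $\mu',\nu'$ of $\calD$ via $x\mapsto u_{x}$, extended by a dummy $\#$ on non-endpoints. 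Now Lemma \ref{lem:bipbirthday} applies with $|U'|\le n$, and re-using our $m$ samples as draws from both $\mu'$ and $\nu'$ yields an edge of $G'$ with probability at least $0.99$ whenever $m^{2}\gtrsim n/\eps^{2}$, which our budget comfortably satisfies; every such $G'$-edge is exactly a triggering sample pair, so $\textsc{TestType-2}$ rejects.

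For the yes case, I argue that a triggering pair $(x,y)$ for a monotone decision list forces $u_{y}$ to be a $\overline{f(x)}$-rule coordinate in $\supp(x)$ whose firing priority lies strictly between $x$'s firing priority and $s^{(\ell_{x})}$'s firing priority, exactly the ``in-between'' event controlled by Definition \ref{def:scattered}. Writing the expected number of triggering pairs as a double expectation over $x$ and $y$ and using that (i) the scattered condition bounds the probability of $x$ having an in-between coordinate by $O(\eps\log n/n^{1-\delta/2})$ per block and (ii) the additional requirement that the in-between coordinate be matched by $u_{y}$ of some other sample $y$ supplies an extra $o_{n}(1)$ suppression factor, one can then conclude that the total false-rejection probability is $o_{n}(1)$.

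The main obstacle is the tightness of the yes-case analysis: a naive union bound over the $m$ samples combined with the per-sample scattered bound only gives $O(\log n)$ total failure probability, so the argument must exploit the \emph{joint} birthday-style event that $x$'s in-between coordinate coincides with $u_{y}$ of another sample. Carrying this out requires a careful decomposition of the expected number of triggering pairs by the value of $u_{y}$, combined with the observation that the first-coordinate pushforward of $\calD$ is not too concentrated on the few in-between coordinates, a fact that follows from the same scattered structure once summed across blocks.
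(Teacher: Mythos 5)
Your proposal has a genuine gap that traces to a key design choice: you draw a single batch of $m=\tilde\Theta(n^{1-\delta/2}/\eps)$ samples and apply the birthday paradox with $m\cdot m$, whereas the paper's $\textsc{TestType-2}$ (Algorithm~\ref{alg:type2}) deliberately uses two \emph{asymmetric} sample sets: $P$ of size $n^{\delta/2}/(\eps\log^2 n)$ and $Q$ of size $n^{1-\delta/2}\log^3 n/\eps$. The product $|P|\cdot|Q|\approx n\log n/\eps^2$ is still just enough for Lemma~\ref{lem:bipbirthday} with $|U|=n$, so the no case goes through. But the asymmetry is what makes the yes case possible. The paper observes that if $\textsc{TestType-2}$ rejects on $(x,y)$ with $x\in P$, then $x$ alone falls into the "bad" event of Definition~\ref{def:scattered} (there is an $i=\MaxIndex(y)\in\supp(x)$ with $f(e_i)\ne f(x)$ and $x\succ_f e_i\succ_f s^{(\ell)}$, or $\FindBlock(x)=k+1$), and a direct union bound gives
$$\frac{10\eps\log n}{n^{1-\delta/2}}\cdot|\calL|\cdot|P|\le \frac{10\eps\log n}{n^{1-\delta/2}}\cdot n^{1-\delta}\cdot\frac{n^{\delta/2}}{\eps\log^2 n}=\frac{10}{\log n}=o_n(1).$$
With your $|P|=n^{1-\delta/2}/\eps$, that same union bound gives $\approx n^{1-\delta}\log n\gg 1$, so the argument fails. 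You correctly identify this obstacle, but the "joint birthday-style event" workaround you sketch --- needing an extra $o_n(1)$ suppression from requiring the in-between coordinate to match $\MaxIndex$ of another sample --- is not carried out, and there is no obvious reason it succeeds: the adversary could concentrate $\calD$ so that a few in-between coordinates are also the $\MaxIndex$ of many samples, defeating the extra factor. The paper sidesteps the whole issue by shrinking $|P|$; the union bound never needs to look at $Q$ at all.

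Two smaller issues. First, Lemma~\ref{lem:bipbirthday} requires $S$ and $S'$ to be independent sets of samples, so "re-using our $m$ samples as draws from both $\mu'$ and $\nu'$" is not licensed by the lemma as stated (this is fixable by splitting, but worth flagging). Second, projecting \emph{both} sides of the auxiliary bipartite graph down to $[n]$ is unnecessary and slightly lossy to reason about; the paper keeps $V(H)$ on one side and projects only the other to $[n]$, so that $|U|=n$ in the birthday lemma while the right-hand distribution is exactly $\calD\circ\varphi^{-1}$, avoiding the lifting argument you invoke.
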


\begin{lemma}\label{lemma:type3}
    $\textsc{TestType-3}$ makes $\tilde{O}(n^{0.5+\delta}/\eps^2)$ queries and always  accepts when $f$ is a monotone decision list.
If any  feedback vertex set of type-$3$ cycles 
  in $H$ has probability mass $\Omega(\eps)$ in $\calD\circ\varphi^{-1}$, 
  then \textsc{TestType-3} rejects
  with probability at least $0.9$.
\end{lemma}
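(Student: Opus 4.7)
My plan is to have $\textsc{TestType-3}$ look for length-$2$ type-$3$ cycles in the auxiliary graph $H$. The algorithm draws two independent sets $X$ and $Y$ of $m = \tilde{O}(\sqrt{n}/\eps)$ strings each from $\calD$, computes $(u_z, W_z) = \varphi(z)$ via $\MaxIndex$ for every $z \in X \cup Y$, and searches for a pair $(x,y) \in X \times Y$ such that $u_y \in W_x$, $u_x \in W_y$, and the blocks of $u_x$ and $u_y$ (found via $\FindBlock$) differ by exactly $1$ with both blocks outside $\calL \cup N(\calL)$. Any such pair is a length-$2$ type-$3$ cycle in $H$: the edge condition $f(e_{u_x}) \neq f(e_{u_y})$ is automatic from $u_y \in W_x$, and since the Boolean value $f(e_{u_x} \vee e_{u_y})$ is unequal to at least one of $f(e_{u_x}), f(e_{u_y})$, at least one of the two directed edges satisfies the type-$3$ $f$-condition.

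For the yes case, $H$ is acyclic whenever $f$ is a monotone decision list (implicit in the proof of Lemma~\ref{prop:pushforward}: $\MaxIndex$ returns a maximally-prioritized index and $W$ lists strictly lower-priority ones, so every edge strictly decreases priority), hence no length-$2$ cycle ever exists and the algorithm never rejects. For the query complexity, $\MaxIndex$ costs $\tilde{O}(n^\delta/\eps)$ per call on small-block inputs, contributing $\tilde{O}(n^{0.5+\delta}/\eps^2)$ in total. A naive scan over all $m^2 = \tilde{O}(n/\eps^2)$ pairs would exceed the budget, so, as in $\textsc{TestLocalCycles}$, the algorithm rejects if any single small block receives too many sampled points; a Chernoff bound together with the scatteredness of $\calS$ guaranteed by Lemma~\ref{lem:scattered} makes this early rejection fire only with probability $o_n(1)$ in the yes case, and after the pruning only $\tilde{O}(\sqrt{n}/\eps^2)$ pairs survive, each checkable in $O(\log n)$ queries.

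For the far case, I would apply Lemma~\ref{lem:bipbirthday} with both sides equal to $V(H)$ weighted by $\calD\circ\varphi^{-1}$ and edges given by the length-$2$ type-$3$ cycle structure. Once one establishes that any vertex cover of this auxiliary bipartite graph has mass $\Omega(\eps)$, the threshold $m \cdot m = \tilde{O}(n/\eps^2)$ safely meets the birthday-paradox condition (restricted to the effectively-$\tilde{O}(n)$-supported portion of $V(H)$ where $u$ lies in a small non-$N(\calL)$ block), yielding a length-$2$ type-$3$ cycle with probability at least $0.99$.

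The main obstacle is exactly this vertex-cover reduction: passing from the hypothesis ``any feedback vertex set of arbitrary type-$3$ cycles has mass $\Omega(\eps)$'' to ``any vertex cover of length-$2$ type-$3$ cycles has mass $\Omega(\eps)$.'' My strategy, following the spirit of the triangle-closing argument in Lemma~\ref{lem:localcycles}, is to show that any type-$3$ cycle in $H$ forces a length-$2$ type-$3$ cycle on the pair $\{u, v\}$ realized by its offending type-$3$ edge: specifically, a witness $x$ with $\varphi(x)=(u,W_1)$ and $v \in W_1$, combined with $f(e_u) \neq f(e_v)$, forces any $y$ with $\varphi(y)=(v,W_2)$ and $u \in \supp(y)$ to satisfy $u \in W_2$, closing the cycle. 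Matching these witnesses correctly under the push-forward $\calD\circ\varphi^{-1}$ so that the resulting length-$2$ vertex cover still inherits mass $\Omega(\eps)$ is the technical heart of the proof.
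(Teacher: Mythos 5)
The core difficulty is in your no-case reduction, and I don't think it can be repaired as stated. You propose to detect only \emph{length-$2$} type-$3$ cycles, and to justify this by arguing that every type-$3$ cycle in $H$ ``forces'' a length-$2$ one on the pair $\{u,v\}$ realized by its offending edge. But that implication is false. A type-$3$ cycle only supplies a single directed edge $(u,W_1)\rightarrow(v,W_2)$ with the prescribed properties; the reverse edge $(v,W_2)\rightarrow(u,W_1)$ exists only when $u\in W_2$, i.e.\ when $u\in\supp(y)$ for the witness $y$ with $\varphi(y)=(v,W_2)$, and nothing in the hypothesis guarantees this. Your phrasing (``forces any $y$ with $\varphi(y)=(v,W_2)$ \emph{and} $u\in\supp(y)$ to satisfy $u\in W_2$'') already smuggles in exactly the assumption you need and cannot justify. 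Consequently, the hypothesis ``any feedback vertex set of type-$3$ cycles has mass $\Omega(\eps)$'' does not imply ``any vertex cover of length-$2$ type-$3$ cycles has mass $\Omega(\eps)$'': there may be plenty of mass on long type-$3$ cycles and almost none on length-$2$ ones, and your tester would miss them.

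The paper's proof avoids this entirely by using an \emph{asymmetric} certificate. Its rejection condition for a pair $(x,y)$ asks only that $v=\MaxIndex(f,\calS,\calL,y)\in\supp(x)$ together with the block conditions and $f(e_u\vee e_v)\ne f(e_u)$; it never requires $u\in\supp(y)$. Correspondingly, when invoking \Cref{lem:bipbirthday}, the bipartite graph has left side $U=[n]$ (with $\mu(v)$ aggregating $\calD\circ\varphi^{-1}$ over all pairs $(v,W)$), not $V(H)$. A vertex cover $C_1\sqcup C_2$ of this graph, with $C_1\subseteq[n]$ and $C_2\subseteq V(H)$, then lifts to the feedback vertex set $U'=\{(v,W):v\in C_1\}\cup C_2$ for type-$3$ cycles (any type-$3$ edge $(u,W_1)\rightarrow(v,W_2)$ corresponds to an edge $(v,(u,W_1))$ of the bipartite graph, so one endpoint must be covered, hence a vertex of the cycle is in $U'$), which is how the mass lower bound transfers. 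Your symmetric set-up with both sides equal to $V(H)$ has no such reduction.

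Two smaller points. First, your yes-case justification says ``$H$ is acyclic whenever $f$ is a monotone decision list''; this is literally false since $V(H)$ ranges over \emph{all} pairs $(u,W)$, not just those in the image of $\varphi$. What is true, and what you actually use, is that if $\FindBlock(f,\calS,x)\notin\calL$ then $\MaxIndex$ returns a maximal index, so on such samples $u_y\in W_x$ gives $e_{u_x}\succ_f e_{u_y}$ and the symmetric containment yields a contradiction. Second, for the query count you introduce a block-pruning step depending on scatteredness; the paper instead bounds, for each $u$, the number of admissible $v$ directly by the small-block size $O(n^\delta\log n/\eps)$ using only that $\calL$ is good, which is cleaner and does not import a yes-case-only hypothesis into the complexity argument.

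Please revise the rejection condition to match the asymmetric certificate $\MaxIndex(f,\calS,\calL,y)\in\supp(x)$, $f(e_u\vee e_v)\ne f(e_u)$, and set up the bipartite birthday paradox with one side $[n]$ as above.
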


\begin{lemma}\label{lemma:type4}
$\textsc{TestType-4}$ makes $\tilde{O}(n^{2/3+\delta}/\eps^2)$ queries and  always accepts when $f$ is a monotone decision list.
If any feedback vertex  set of type-$4$ cycles 
  in $H$ has probability mass $\Omega(\eps)$ in $\calD\circ \varphi^{-1}$, 
  then $\textsc{TestType-4}$ rejects
  with probability at least $0.9$.
\end{lemma}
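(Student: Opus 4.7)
The plan for proving \Cref{lemma:type4} is to have $\textsc{TestType-4}(f,\calD,\eps,\calS,\calL)$ search for a type-$4$ triple of variables via the $3$-uniform birthday paradox (\Cref{lem:hypbirthday}), organized in two stages. First, draw $m=\tilde{O}(n^{2/3}/\eps)$ samples $x_1,\ldots,x_m\sim\calD$ and compute $u_i:=\MaxIndex(f,\calS,\calL,x_i)$ for each, costing $\tilde{O}(n^{\delta}/\eps)$ queries per sample by \Cref{lem:maxindex} (the $\nil$ case having already been handled on line~6 of $\MonoDL$, so we may assume every $u_i$ is a bona fide variable). Let $S=\{u_i:i\in[m]\}$ and $S_\ell=\{u\in S:\FindBlock(f,\calS,e_u)=\ell\}$. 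Second, iterate over each $v\in S$ such that $\ell':=\FindBlock(f,\calS,e_v)$ and both $\ell'-1,\ell'+1$ lie outside $\calL\cup N(\calL)$; query $f(e_u\vee e_v)$ for every $u\in S_{\ell'+1}$ and $f(e_v\vee e_w)$ for every $w\in S_{\ell'-1}$; reject if some $u\in S_{\ell'+1}$ satisfies $f(e_u\vee e_v)=f(e_u)$ \emph{and} some $w\in S_{\ell'-1}$ satisfies $f(e_v\vee e_w)=f(e_v)$; otherwise accept.

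For the query complexity, the first stage costs $\tilde{O}(n^{2/3+\delta}/\eps^2)$. Since $|S_\ell|\le|B_\ell|\le 16n^{\delta}\log n/\eps$ for every block $\ell\notin\calL$ by \Cref{lem:numofbb}, the second stage uses at most
\[
\sum_{\ell'}|S_{\ell'}|\bigl(|S_{\ell'+1}|+|S_{\ell'-1}|\bigr)\;\le\;2\cdot\max_{\ell\notin\calL}|B_\ell|\cdot\sum_{\ell'}|S_{\ell'}|\;\le\;\tilde{O}(n^{\delta}/\eps)\cdot m\;=\;\tilde{O}(n^{2/3+\delta}/\eps^2)
\]
queries, matching the claimed bound.

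For completeness, suppose $f$ is a monotone decision list with representation $(\pi,\nu)$ and set $t_\ell:=\min_\pi(s^{(\ell)})$. I claim no triple detected by the algorithm can exist. Letting $\ell$ denote the block of the candidate $u$, the block constraints force $\pi^{-1}(u)\in(t_{\ell-1},t_{\ell+1})$, $\pi^{-1}(v)\in(t_{\ell-2},t_\ell)$, and $\pi^{-1}(w)\in(t_{\ell-3},t_{\ell-1})$; since blocks alternate $f$-values we also have $f(e_u)\neq f(e_v)\neq f(e_w)$, so the OR conditions $f(e_u\vee e_v)=f(e_u)$ and $f(e_v\vee e_w)=f(e_v)$ are equivalent to $\pi^{-1}(u)<\pi^{-1}(v)$ and $\pi^{-1}(v)<\pi^{-1}(w)$ respectively, forcing $\pi^{-1}(u)<\pi^{-1}(w)$. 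But the $u$- and $w$-block constraints give $\pi^{-1}(u)>t_{\ell-1}>\pi^{-1}(w)$, a contradiction. Hence $\textsc{TestType-4}$ never rejects a monotone decision list.

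For soundness, define the $3$-uniform hypergraph $G$ on vertex set $[n]$ whose hyperedges are the triples $\{u,v,w\}$ satisfying the type-$4$ conditions (with the ordering determined by the unique consecutive block pattern $\ell,\ell-1,\ell-2$). For any vertex cover $U\subseteq[n]$ of $G$, the lift $U^*:=\{(u',W')\in V(H):u'\in U\}$ is a feedback vertex set of type-$4$ cycles in $H$, because every type-$4$ cycle contains a type-$4$ two-edge path whose underlying variable triple is a hyperedge of $G$ and hence intersects $U$. Letting $\mu$ be the pushforward on $[n]\cup\{\#\}$ of $\calD$ under $x\mapsto\varphi(x)_1$ (sending $\nil/\star$ cases to $\#$), we have $\mu(U)=\calD\circ\varphi^{-1}(U^*)\ge\Omega(\eps)$ by hypothesis. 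Since our $m$ samples of $u_i$'s are exactly i.i.d.\ draws from $\mu$, \Cref{lem:hypbirthday} with $k=3$ and $m\ge 90n^{2/3}/\Omega(\eps)$ ensures $S$ contains a hyperedge of $G$ with probability at least $0.99$; the second stage detects it by iterating at the hyperedge's (unique) middle vertex. The main obstacle I anticipate is the query bookkeeping in the second stage: bounding the verification cost by $\tilde{O}(n^{2/3+\delta}/\eps^2)$ relies crucially on iterating by the middle vertex $v$ rather than by an outer pair, together with the small-block size bound from \Cref{lem:numofbb}, and on verifying that the unique-middle property of the block pattern lets this iteration catch every sampled hyperedge.
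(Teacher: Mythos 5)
Your proposal is correct and follows essentially the same approach as the paper: the query complexity rests on the small-block size bound from Lemma~\ref{lem:numofbb}, the completeness argument derives a contradiction from the $\succ_f$-order (or equivalently, $\min_\pi$-positions) versus the block indices, and the soundness argument defines the $3$-uniform hypergraph on the $\MaxIndex$-image, lifts a vertex cover to a feedback vertex set of type-$4$ cycles in $H$, and applies Lemma~\ref{lem:hypbirthday}. The one organizational difference --- iterating over the middle vertex $v$ rather than over arbitrary sample triples --- is a valid alternative implementation of the same idea; both bookkeeping schemes give $\tilde{O}(n^{2/3+\delta}/\eps^2)$ because in either case the number of distinct $f$-queries is charged to the number of samples times the small-block size. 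Your remark that the $\nil$-producing samples were ``already handled'' earlier in the algorithm is slightly imprecise: those samples can still occur, but they are simply discarded and accounted for by the $\#$ mass in the pushforward measure $\mu$, exactly as you in fact use in the soundness step.
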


\begin{lemma}\label{lemma:type5}
$\textsc{TestType-5}$ makes $\tilde{O}(n^{3/4+\delta}/\eps^2)$ queries and always accepts when $f$ is a monotone decision list.
If any  feedback vertex set of type-$5$ cycles 
  in $H$ has probability mass $\Omega(\eps)$ in $\calD\circ\varphi^{-1}$, 
  then $\textsc{TestType-5}$ rejects
  with probability at least $0.9$.
\end{lemma}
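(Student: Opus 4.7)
The plan for $\textsc{TestType-5}$ is to draw $m = \tilde{O}(n^{3/4}/\eps)$ independent samples $x_1, \ldots, x_m \sim \calD$ and, for each, compute $u_i = \MaxIndex(f, \calS, \calL, x_i)$ together with $\ell_i = \FindBlock(f, \calS, e_{u_i})$.  Samples with $\MaxIndex$ returning $\nil$ or with $\ell_i \in \calL \cup N(\calL)$ are discarded.  For each pair $(\ell, \ell+1)$ of adjacent small blocks, the procedure then looks at the subgraph of $H$ induced by $\{\varphi(x_i) : u_i \in B_\ell \cup B_{\ell+1}\}$ and searches for a type-5 cycle among these sampled vertices.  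Each candidate edge $\varphi(x_i) \to \varphi(x_j)$ is verified by the free bit lookup $u_j \in \supp(x_i)$ plus one query to $f(e_{u_i} \vee e_{u_j})$; any cycle found triggers rejection.

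The $m$ invocations of $\MaxIndex$ account for $m \cdot \tilde{O}(n^{\delta}/\eps) = \tilde{O}(n^{3/4+\delta}/\eps^{2})$ queries, which I expect to dominate the overall budget.  The edge-verification phase is controlled by the bound $|B_\ell| \le \tilde{O}(n^{\delta}/\eps)$ from Lemma \ref{lem:numofbb} combined with a Chernoff-type concentration on how many samples can land in any one block; I would add an early-rejection rule when any single block collects more than a prescribed threshold of samples to keep this cost within $\tilde{O}(n^{3/4+\delta}/\eps^{2})$.

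In the yes case, a monotone decision list representation $(\pi,\nu)$ of $f$ induces a strict priority order on the variables, and every edge $(u, W_1) \to (v, W_2)$ of $H$ forces $v$ to lie strictly below $u$ in this order; consequently $H$ is acyclic, so no type-5 cycle can ever be exhibited and $\textsc{TestType-5}$ accepts with probability $1$.

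In the no case, I would appeal to Lemma \ref{lem:hypbirthday} with $k = 4$ applied to a $4$-uniform hypergraph $G$ on the index set $[n]$ whose hyperedges are $4$-subsets $\{u_1, u_2, u_3, u_4\}$ arising from a type-5 cycle of $H$ (choosing a canonical $4$-subset of vertices in each cycle).  The central reduction is that every vertex cover of $G$ lifts to a feedback vertex set of type-5 cycles in $H$, so the hypothesis on feedback vertex sets transfers into the statement that every vertex cover of $G$ has mass $\Omega(\eps)$ under the pushforward $\mu$ on $[n]$ defined by $\mu(u) = \Pr_{x\sim \calD}[\MaxIndex(x) = u]$.  Lemma \ref{lem:hypbirthday} then says that the $m = \tilde{O}(n^{3/4}/\eps)$ sampled indices contain a hyperedge of $G$ with probability at least $0.99$, at which point the algorithm verifies the associated four-edge witness in $O(1)$ queries and rejects.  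The main obstacle is precisely in this reduction: type-5 cycles may have length larger than $4$ and need not decompose into $4$-cycles of the bipartite sub-structure $G_\ell$, so the choice of canonical $4$-subset (and the consistency check on the sampled strings that witness the implied violation) must be made carefully enough that vertex covers of $G$ really do hit every type-5 cycle in $H$.
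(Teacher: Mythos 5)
You correctly identify the overall scheme: reduce to a $4$-uniform hypergraph $G$ on $[n]$ whose hyperedges witness a violation among indices in two adjacent small blocks, and apply \Cref{lem:hypbirthday} with $k=4$ to the pushforward distribution $\mu$ on indices. But you explicitly flag — and do not resolve — the central obstacle: why should vertex covers of a $4$-uniform hypergraph pierce every type-$5$ cycle, when type-$5$ cycles can have any even length $\ge 4$? The paper resolves this with \Cref{lem:alternating-4}, a purely combinatorial lemma: in any complete bipartite graph with a $\{0,1\}$ edge-labeling, the existence of an alternating $2k$-cycle (labels $0,1,0,1,\dots$ around the cycle) implies the existence of an alternating $4$-cycle on four of those same vertices. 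The proof is a minimality argument: take a shortest alternating cycle of length $\ge 6$ and examine the chord $\{u_{2},u_{2k-1}\}$; either label either shortens the cycle or yields a $4$-cycle, contradicting minimality. This lets the paper define $G$ intrinsically — its hyperedges are all alternating $4$-cycles in the complete bipartite graph on $B_{\ell}\cup B_{\ell+1}$ under the labeling $\phi(\{u,v\}) := f(e_{u}\vee e_{v})$ — and conclude that any vertex cover of $G$ is a feedback vertex set of type-$5$ cycles in $H$ (since a surviving type-$5$ cycle would force an alternating $2k$-cycle among uncovered vertices, hence a hyperedge of $G$ among uncovered vertices). Without this lemma, the ``canonical $4$-subset'' you invoke has no construction that works, and the reduction does not close.

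A secondary point: the procedure in the paper is simpler than the one you sketch. It does not search for type-$5$ cycles among sampled images of $\varphi$; it just looks for a $4$-tuple $u_{1},\dots,u_{4}$ of $\MaxIndex$ values — two in each of a pair of adjacent small blocks — satisfying condition iv), which is exactly the alternating $4$-cycle pattern, verified with $O(1)$ extra queries per candidate tuple given the $|B_{\ell}| = \tilde{O}(n^{\delta}/\eps)$ bound. This also makes the yes case trivial and local: condition iv) alone is inconsistent with $f$ being a monotone decision list (since $f(e_{u_{1}}\vee e_{u_{2}})=f(e_{u_{3}}\vee e_{u_{4}})=0$ forces $f(e_{u_{1}}\vee e_{u_{2}}\vee e_{u_{3}}\vee e_{u_{4}})=0$ while $f(e_{u_{2}}\vee e_{u_{3}})=f(e_{u_{4}}\vee e_{u_{1}})=1$ forces it to equal $1$), so the algorithm never rejects a monotone decision list — no appeal to acyclicity of $H$ is needed, and the subtleties about $\MaxIndex$ only guaranteeing $e_{u}\succ_{f} e_{v}$ for small blocks (which your acyclicity claim quietly relies on) never arise.
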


We prove these five lemmas in Section \ref{sec:mainlemmaanalysis}.
Theorem \ref{theo:monotonealg} follows directly:
\begin{proof}[Proof of Theorem \ref{theo:monotonealg}]
The overall query complexity of $\textsc{MonotoneDL}$ is 
$$
\tilde{O}\left(\frac{n^{1-\delta/2}}{\eps}\right)+ 
\tilde{O}\left(\frac{n^{1-\delta}}{\eps^2}\right)+\tilde{O}\left(\frac{n^{3/4 +\delta}}{\eps^2}\right)=\tilde{O}\left(\frac{n^{11/12}}{\eps^2}\right),
$$
when $\delta$ is set to be $1/6$.

When $f$ is a monotone decision list, the only possibility for it to be rejected is by $\textsc{TestType-2}$.
But by Lemma \ref{lem:scattered}, $\calS$ is not scattered with probability $o_n(1)$ and when it is scattered, by Lemma \ref{lemma:type2}, $\textsc{TestType-2}$ rejects with probability $o_n(1)$.

When $f$ is $\eps$-far from monotone decision lists with respect to $\calD$, it is accepted by $\Preprocess$ with probability $o_n(1)$, given that $\calD$ cannot have more than $1-\eps$  mass on $0^n$ and that $f$ cannot be $\eps$-close to the all-$0$ or all-$1$ function with respect to $\calD$. Therefore, with probability at least $1-o_n(1)$ either $\textsc{MonotoneDL}$ already rejected $(f,\calD)$ or it reaches line 7 with a sketch $\calS$ consistent with $f$ and an $\calL$ that is good with respect to $\calS$ by Lemma \ref{lem:numofbb}.
When this happens, 
either the probability of $\MaxIndex(f,\calS,x)=\nil$ as $x\sim \calD$ is at least $0.5\eps$, in which case line 8 rejects with probability at least $0.9$, or the rest of $\textsc{MonotoneDL}$ rejects with probability at least $0.9$ by Lemma \ref{lemma:type1}, \ref{lemma:type2}, \ref{lemma:type3},
\ref{lemma:type4}
and \ref{lemma:type5}. This finishes the proof of the theorem.
\end{proof}

\subsection{Main procedures and their analyses}\label{sec:mainlemmaanalysis}


\begin{proof}[Proof of Lemma \ref{lemma:type1}]
The procedure $\textsc{TestType-1}(f,\calD,\eps,\calS,\calL)$ is described in Algorithm \ref{alg:type1}. 

First we show that $\textsc{TestType-1}$ never rejects when $f$ is a monotone decision list. 
To see this is the case, assume for a contradiction that the event occurs on $x\in P$ and $y\in Q$. Let $$i=\MaxIndex(f,\calS,\calL,y),\quad \ell=\FindBlock(f,\calS,x),\quad\text{and}\quad \ell'=\FindBlock(f,\calS,y).$$ We have by Lemma \ref{lem:maxindex} that 
$$
\FindBlock(f,\calS,e_i)=\FindBlock(f,\calS,y)=\ell'\le \ell-2.
$$
Using $i\in \supp(x)$, we have 
$ 
x \succ_f e_i\succ_f s^{(\ell'+1)}$  which contradicts with $s^{(\ell-1)}\succ_f x$ and $\ell\ge \ell'+2$.

Now we consider the case when any  feedback vertex set of type-1 cycles in 
  $H$ has mass at least $\Omega(\eps)$ in $\calD\circ\varphi^{-1}$.
To this end, we apply the birthday paradox lemma (\Cref{lem:bipbirthday}) 
  on the following bipartite graph $G$.
The left side of $G$ has vertices $U:=[n]$;  
  the right side has vertices $V:=V(H)$;
  and $(v,(u,W))$ is an edge in $G$ if and only if $v\in W$, $f(e_u)\ne f(e_v)$,
  and 
$$
{\FindBlock}(f,\calS,e_v)\le {\FindBlock}(f,\calS,e_u)-2.
$$
To apply \Cref{lem:bipbirthday}, the distribution $\mu$ over $U\cup \{\#\}$ is defined as 
  $$\mu(u)=\sum_{(u,W)\in V(H)}\calD\circ\varphi^{-1}(u,W) 
  \quad\text{and}\quad \mu(\#)=\calD\circ\varphi^{-1}(\star)+\calD\circ\varphi^{-1}(\nil).$$
The distribution $\nu$ over $V\cup \{\#\}$ is exactly $\calD\circ\varphi^{-1}$
  except that $\nu(\#)=\calD\circ\varphi^{-1}(\star)+
    \calD\circ\varphi^{-1}(\nil).$
It follows that any vertex cover of $G$ must have total weight at least $\Omega(\eps)$.
The no-part of the lemma follows directly from \Cref{lem:bipbirthday}.
\end{proof}

\begin{algorithm}[t!]
\caption{$\textsc{TestType-1}(f,\calD,\eps,\calS,\calL)$}\label{alg:type1}
\begin{algorithmic}[1]
\State Draw two sets $P^*,Q^*$ of $O(\sqrt{n}/\varepsilon)$ samples from $\mathcal{D}$; let $P\gets P^*\setminus \{0^n\}$ and $Q\gets Q^*\setminus \{0^n\}$
\State For each $x\in P\cup Q  $, compute $\textsc{FindBlock}({f,\sketch},x)$ and $\MaxIndex(f,\calS,\calL,x)$
\State \textbf{reject} if there exist $x\in P $ and $y\in Q $ such that $f(x)\ne f(y)$,
         $$\textsc{FindBlock}(f,\sketch,y)\le \textsc{FindBlock}(f,\sketch,x)-2\quad\text{and}\quad
         \MaxIndex({f,\sketch},\calL,y)\in\mathrm{supp}(x)$$ 
\State \textbf{accept} otherwise 
\end{algorithmic}
\end{algorithm}

\begin{algorithm}[t!]
\caption{$\textsc{TestType-2}(f,\calD,\eps,\calS,\calL)$}\label{alg:type2}
\begin{algorithmic}[1]
\State Draw a set $P^*$ of  samples and a set $Q^*$ of  samples from $\mathcal{D}$ of size given as follows:
$$
\frac{n^{\delta/2}}{\varepsilon\log^2 n}\quad\ \text{and}\ \quad 
\frac{n^{1-\delta/2}\log^3 n}{\eps}, \quad\text{\ \ respectively}
$$ 
\State Let $P\gets P^*\setminus \{0^n\}$ and $Q\gets Q^*\setminus \{0^n\}$
\State For each $x\in P\cup Q$, compute $\FindBlock(f,\calS,x)$
\State For each $x\in P\cup Q$ with $\FindBlock(f,\calS,x)\in \calL$, compute $\MaxIndex(f,\calS,\calL,x)$
\State \textbf{reject} if there exist $x\in P $ and $y\in Q $ such that \vspace{0.15cm}
\begin{enumerate}
    \item[i)] $\textsc{FindBlock}({f,\sketch},y)=
         \textsc{FindBlock}(f,\sketch,x)-1$;
    \item[ii)] $\textsc{FindBlock}({f,\sketch},x),
         \textsc{FindBlock}(f,\sketch,y)\in \calL$; and 
    \item[iii)] $\MaxIndex(f,\calS,\calL,y)\in \supp(x)$\vspace{0.15cm}
\end{enumerate}
\State \textbf{accept} otherwise 
\end{algorithmic}
\end{algorithm}

\begin{algorithm}[t!]
\caption{$\textsc{TestType-3}(f,\calD,\eps,\calS,\calL)$}\label{alg:type3}
\begin{algorithmic}[1]
\State Draw two sets $P^*,Q^*$ of $O(\sqrt{n}/\eps)$ samples  from $\calD$; let $P\gets P^*\setminus \{0^n\}$ and $Q\gets Q^*\setminus \{0^n\}$ 
\State Compute $\textsc{FindBlock}({f,\sketch},x)$ and $\textsc{MaxIndex}(f,\sketch,\calL,x)$ for each $x\in P\cup Q$
\State \textbf{reject} if there exist $x \in P$ and $y\in Q$ such that \vspace{0.15cm}
        \begin{itemize}
        \item[i)] $\big|\textsc{FindBlock}(f,\sketch,x)- \textsc{FindBlock}(f,\sketch,y)\big|=1$; 
        \item[ii)] 
        $\textsc{FindBlock}(f,\sketch,x),\textsc{FindBlock}(f,\sketch,y) \notin \calL\cup N(\calL)$;  
        \item[iii)] $\textsc{MaxIndex}(f,\sketch,\calL,y)\in \mathrm{supp}(x)$; and
        \item[iv)] $f(e_u\vee e_v)\ne f(e_u)$,
         where $u=\textsc{MaxIndex}(f,\sketch,\calL,x)$ and $v=\textsc{MaxIndex}(f,\sketch,\calL,y)$\vspace{0.15cm} 
        \end{itemize}
\State \textbf{accept} otherwise
\end{algorithmic}
\end{algorithm}

\begin{algorithm}[t!]
\caption{$\textsc{TestType-4}(f,\calD,\eps,\calS,\calL)$}\label{alg:type4}
\begin{algorithmic}[1]
\State Draw a set $P^* $ of $O({n^{2/3}}/\eps)$ samples  from $\calD$; let $P\gets P^*\setminus \{0^n\}$ 
\State Compute $\textsc{FindBlock}({f,\sketch},x)$ and $\textsc{MaxIndex}(f,\sketch,\calL,x)$ for each $x\in P $
\State \textbf{reject} if there exist $x,y,z \in P$  such that \vspace{0.15cm}
        \begin{itemize}
        \item[i)] $ \textsc{FindBlock}(f,\sketch,z)+2=\textsc{FindBlock}(f,\sketch,y)+1=
        \textsc{FindBlock}(f,\sketch,x) $; 
        \item[ii)] 
        $\textsc{FindBlock}(f,\sketch,x),\textsc{FindBlock}(f,\sketch,y),\textsc{FindBlock}(f,\sketch,z)\notin \calL\cup N(\calL)$; and
        \item[iii)]   $f(e_u\vee e_v)= f(e_u)$ and $f(e_v\vee e_w)=f(e_v)$: $u,v$ and $w$ are $\MaxIndex$ of $x,y$ and $z$\vspace{0.15cm}
        \end{itemize}
\State \textbf{accept} otherwise
\end{algorithmic}
\end{algorithm}

\begin{algorithm}[t!]
\caption{\textsc{TestType-5}$(f,\calD,\eps,\calS,\calL)$}\label{alg:type5}
\begin{algorithmic}[1]
\State Draw a set $P^*$ of $O(n^{3/4}/\eps)$ samples from $\mathcal{D}$; let $P\gets P^*\setminus \{0^n\}$
\State Compute $\textsc{FindBlock}_{f,\sketch}(x)$ and $\textsc{MaxIndex}_{f,\sketch}(x)$ for each $x\in P$ 
    \State \textbf{reject} if there exist  $x^1,x^2,x^3,x^4\in P$ such that \vspace{0.15cm}
        \begin{itemize}
        \item[i)] Let $\ell_1,\ell_2,\ell_3$ and $\ell_4$ be the $\FindBlock$ of $x^1,x^2,x^3$ and $x^4$, respectively
        \item[ii)] $\ell_{1}=\ell_{3}=\ell_{2}+1=\ell_{4}+1$ and $\ell_1,\ell_2,\ell_3,\ell_4\notin \calL\cup N(\calL)$
        \item[iii)] Let $u_1,u_2,u_3$ and $u_4$ be the $\MaxIndex$ of $x^1,x^2,x^3$ and $x^4$, respectively
        \item[iv)] $ f(e_{u_{1}}\vee e_{u_{2}})=f(e_{u_{3}}\vee e_{u_{4}})=0$ and $f(e_{u_{2}}\vee e_{u_{3}})=f(e_{u_{4}}\vee e_{u_{1}})=1$.\vspace{0.15cm}
        \end{itemize}
    \State \textbf{accept} otherwise
\end{algorithmic}
\end{algorithm}

\begin{proof}[Proof of Lemma \ref{lemma:type2}]
 $\textsc{TestType-2} $ is described in Algorithm \ref{alg:type2}.
It uses $\tilde{O}(n^{1-\delta/2}/\eps)$  queries as each call to $\MaxIndex$ uses $O(\log n)$ queries 
  when $\FindBlock(f,\calS,x)\in \calL$ given that $\calL$ is good.

For the yes case,   if $\textsc{TestType-2}$ rejects because of $x\in P $ and $y\in Q $, it must be either
\begin{enumerate}
    \item $\FindBlock(f,\calS,x)=k+1$; or
\item Letting $\ell=\FindBlock(f,\calS,x) 
\in [k]$, $\exists\hspace{0.05cm}i\in [n]$ such that $f(e_i)\ne f(x)$ and  
$x\succ_f e_i\succ_f s^{(\ell)}$.
\end{enumerate}
As $\calS$ is scattered, it follows from $|\calL|\le n^{1-\delta}$ that the probability of $P$ having such an $x$ is at most
$$
\frac{10\eps \log n}{n^{1-\delta/2}}\cdot n^{1-\delta}\cdot \frac{n^{\delta/2}}{\eps\log^2 n}=o_n(1).
$$

The proof of the no case is similar to that of Lemma \ref{lemma:type1}.
Assume that any  feedback vertex set of type-2 cycles in 
  $H$ has mass at least $\Omega(\eps)$ in $\calD\circ \varphi^{-1}$.
The left side of $G$ has vertices $U:=[n]$ and 
  the right side has vertices $V:=V(H)$,
  and $(v,(u,W))$ is  in $G$ iff $v\in W$, $f(e_u)\ne f(e_v)$
  and 
$$
{\FindBlock}(f,\calS,e_v)= {\FindBlock}(f,\calS,e_u)-1$$ \text{and} 
${\FindBlock}(f,\calS,e_u), {\FindBlock}(f,\calS,e_v)\in \calL$.

The distribution $\mu$ over $U\cup \{\#\}$  and 
  $\nu$ over $V\cup \{\#\}$ is defined in the same way as
  those in the proof of Lemma \ref{lemma:type1}.
It follows that any vertex cover of $G$ must have total weight at least $\Omega(\eps)$.
The no-part of the lemma follows directly from \Cref{lem:bipbirthday}.
\end{proof}

\begin{proof}[Proof of Lemma \ref{lemma:type3}]
The procedure $\textsc{TestType-3}$ is described in Algorithm \ref{alg:type3}.
For the analysis of its query complexity, the only nontrivial part is to observe that we don't need $|P|\times |Q|$ many queries to evaluate $f(e_u\vee e_v)$ on line iv) but $$|P|\cdot \frac{32n^\delta \log n}{\eps}$$ queries suffice.
This is because 
  $$\FindBlock(f,\calS,x)=\FindBlock(f,\calS,e_u)\quad\text{and}\quad 
  \FindBlock(f,\calS,y)=\FindBlock(f,\calS,e_v)$$ and thus,
for any $u$, the number of $v$ that can satisfy i) and ii) 
  is no more than by $2\cdot (16n^\delta \log n)/\eps$
  given that all these are small blocks outside of $\calL\cup N(\calL)$.

It is easy to verify that $\textsc{TestType-3}$ never rejects when $f$ is a monotone decision list. Assume for a contradiction that the event occurs on $x\in P$ and $y\in Q$ with $u,v$ given in iv). 
Given that $v\in \supp(x)$, we should have $f(e_u\vee e_v)=f(e_u)$ by Lemma \ref{lem:maxindex}, a contradiction.

The proof of the no case is similar to that of the previous two lemmas. We apply the birthday paradox lemma (\Cref{lem:bipbirthday}) 
  on the following bipartite graph $G$.
The left side has vertices $U=[n]$ and 
  the right side of $G$ has vertices $V=V(H)$,
  and $(v,(u,W))$ is an edge in $G$ if and only if $v\in W$, $f(e_u)\ne f(e_v)$, $|\FindBlock(f,\calS,e_u)-\FindBlock(f,\calS,e_v)|=1$, $$\FindBlock(f,\calS,e_u),\FindBlock(f,\calS,e_v)\notin \calL\cup N(\calL),$$ and $f(e_u\vee e_v)\ne f(e_u)$.
The rest of the proof is similar so we skip the details.
\end{proof}

\begin{proof}[Proof of Lemma \ref{lemma:type4}]
The procedure $\textsc{TestType-4}(f,\calD,\calS,\calL)$ is described in Algorithm \ref{alg:type4}.
For the analysis of its query complexity, we make the same observation as in the previous lemma that the number of queries needed for $f(e_u\vee e_v)$ and $f(e_v\vee e_w)$ in iii) is at most $|P|\cdot O((n^\delta \log n)/\eps)$.

For the yes case, we assume for a contradiction that $f$ is a monotone decision list but there are $x,y,z\in P$ and $u,v,w$
  that satisfy i), ii) and iii).
Then we have $e_u\succ_f e_v\succ_f e_w$ but \begin{align*} 
\FindBlock(f,\calS,e_u)&=\FindBlock(f,\calS,x)\\&=\FindBlock(f,\calS,z)+2=\FindBlock(f,\calS,e_w)+2,
\end{align*}
a contradiction.

For the no case, we apply the hypergraph birthday paradox lemma
  (Lemma \ref{lem:hypbirthday}). 
We consider the following $3$-uniform hypergraph $G$ over $[n]$:
$\{u,v,w\}$ is an edge of $G$ if and only if $$\FindBlock(f,\calS,e_w)+2=\FindBlock(f,\calS,e_v)+1=\FindBlock(f,\calS,e_u),$$
$\FindBlock(f,\calS,e_u), \FindBlock(f,\calS,e_v),\FindBlock(f,\calS,e_w)\notin \calL\cup N(\calL)$, $f(e_u\vee e_v)=f(e_u)$ and $f(e_v\vee e_w)=f(e_v)$.
The distribution $\mu$ over $[n]\cup \{\#\}$ is defined naturally as 
$$\mu(u)=\sum_{(u,W)\in V(H)} \calD\circ\varphi^{-1}(u,W)\quad\text{and}\quad
\mu(\#)=\calD\circ\varphi^{-1}(\star)+\calD\circ\varphi^{-1}(\nil).
$$
It follows from the assumption of the lemma that any vertex cover of $G$ must have mass at least $\Omega(\eps)$.
The no part of the lemma follows directly from Lemma \ref{lem:hypbirthday}.
\end{proof}

\begin{proof}[Proof of Lemma \ref{lemma:type5}]
The procedure $\textsc{TestType-5}$ is described in Algorithm \ref{alg:type5}. The analysis of its query complexity is similar to that of the previous two lemmas.

To see that $\textsc{TestType-5}$ always rejects when $f$ is a monotone decision list,  
it is easy to verify that iv) cannot be consistent with any monotone decision list since $f(e_{u_1}\vee e_{u_2})=f(e_{u_3}\vee e_{u_4})=0$ implies that 
$f(e_{u_1}\vee e_{u_2}\vee e_{u_3}\vee e_{u_4})=0$ but the second part implies it to be $1$.


For the no case, we consider the following 4-uniform hypergraph $G$ over $[n]$: $\{u_{1},u_{2},u_{3},u_{4}\}$ is an edge of $G$ if and only if
\begin{align*}
& \FindBlock(f,\calS,e_{u_{1}})=\FindBlock(f,\calS,e_{u_{3}})\\ 
&\hspace{1cm}=\FindBlock(f,\calS,e_{u_{2}})+1=\FindBlock(f,\calS,e_{u_{4}})+1
\end{align*}
and $f(e_{u_{1}}\vee e_{u_{2}})=f(e_{u_{3}}\vee e_{u_{4}})=0$ and $f(e_{u_{2}}\vee e_{u_{3}})=f(e_{u_{4}}\vee e_{u_{1}})=1$. The distribution $\mu$ over $[n]\cup\{\#\}$ is defined naturally as
$$\mu(u)=\sum_{(u,W)\in V(H)} \calD\circ\varphi^{-1}(u,W)\quad\text{and}\quad
\mu(\#)=\calD\circ\varphi^{-1}(\star)+\calD\circ\varphi^{-1}(\nil).
$$
It follows from the assumption of the lemma and \Cref{lem:alternating-4} that any vertex cover of $G$ must have mass at least $\Omega(\eps)$.
The no part of the lemma follows directly from Lemma \ref{lem:hypbirthday}. 
\end{proof}

\begin{lemma}\label{lem:alternating-4}
Let $G=(U,V,E)$ be a complete bipartite graph with an edge labeling $\phi:E\rightarrow\{0,1\}$. For any integer $k\geq 2$, we call a sequence of vertices $(u_{1},u_{2},\dots,u_{2k})$ an alternating $2k$-cycle in $(G,\phi)$ if the following holds:
\begin{itemize}
\item 
$u_{1},\dots,u_{2k-1}\in U$ and $u_{2},\dots,v_{2k}\in V$.

\item $\phi(\{u_{2i-1},u_{2i}\})=0$ and $\phi(\{u_{2i},u_{2i+1}\})=1$ for all $i\in[k]$, where $u_{2k+1}:=u_{1}$.
\end{itemize}
If $(G,\phi)$ has an alternating $2k$-cycle, then it also has an alternating 4-cycle.
\end{lemma}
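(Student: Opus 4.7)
The plan is induction on $k$, with base case $k=2$ being trivial since the given alternating $4$-cycle is itself the desired object. For the inductive step with $k\ge 3$, the key idea is to exploit that $G$ is complete bipartite by introducing a chord into the $2k$-cycle and branching on its label; depending on the label, either a $4$-cycle appears immediately or the cycle shortens to length $2(k-1)$.

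Concretely, given an alternating $2k$-cycle $(u_1,u_2,\ldots,u_{2k})$, I would consider the chord edge $\{u_1,u_4\}\in E$ (which exists since $u_1\in U$ and $u_4\in V$). If $\phi(\{u_1,u_4\})=1$, then $(u_1,u_2,u_3,u_4)$ is already an alternating $4$-cycle: the three consecutive edges retain labels $0,1,0$ from the original cycle, and the closing chord $\{u_4,u_1\}$ has label $1$, matching the required pattern. If instead $\phi(\{u_1,u_4\})=0$, then I would shortcut the cycle and pass to the sequence $(u_1,u_4,u_5,\ldots,u_{2k})$, which has $2(k-1)\ge 4$ vertices. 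Its edge labels are $0,1,0,1,\ldots,0$ with closing edge $\{u_{2k},u_1\}$ of label $1$, all inherited from the original cycle (the only new edge being $\{u_1,u_4\}$ itself, which has label $0$ by assumption). Hence this is an alternating $2(k-1)$-cycle, and the inductive hypothesis yields an alternating $4$-cycle.

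I do not expect a serious obstacle here; the only real content is the right choice of chord. One small subtlety to note is that the definition of an alternating $2k$-cycle does not forbid vertex repetitions, but this is harmless for the induction since the hypothesis only constrains the labels along the cycle. Moreover, any alternating $4$-cycle produced automatically satisfies $u_1\ne u_3$ and $u_2\ne u_4$, since otherwise two edges of the $4$-cycle would coincide while being assigned conflicting labels $0$ and $1$; so the conclusion is meaningful in the application to \Cref{lemma:type5}.
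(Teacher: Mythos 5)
Your proof is correct and takes essentially the same approach as the paper's: introduce a chord into the $2k$-cycle, branch on its label, and observe that one case gives an alternating $4$-cycle directly while the other shortens the cycle to length $2(k-1)$. The only cosmetic differences are the choice of chord ($\{u_1,u_4\}$ versus the paper's $\{u_2,u_{2k-1}\}$) and the presentation as induction rather than as a minimal-counterexample argument.
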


\begin{proof}
Assume that $k\geq 3$ is the smallest integer such that $(G,\phi)$ contains an alternating $2k$-cycle, and let $(u_{1},\dots,u_{2k})$ be such a cycle. On one hand, if $\phi(\{u_{2},u_{2k-1}\})=1$, then $(u_{1},u_{2},u_{2k-1},u_{2k})$ forms an alternating 4-cycle. On the other hand, if $\phi(\{u_{2},u_{2k-1}\})=0$, then $(u_{2k-1},u_{2},u_{3},\dots,u_{2k-2})$ forms an alternating $2(k-1)$-cycle. In both cases we find a contradiction to the assumed minimality of $k$, so there must exist an alternating 4-cycle in $(G,\phi)$.
\end{proof}




\def\MonotoneDL{\textsc{MonotoneDL}}
\def\DL{\textsc{DecisionList}}
\def\TestDL{\textsc{TestDL}}

\section{Testing Algorithm for Decision Lists}\label{sec:gdl}

We prove Theorem \ref{theo:mainalg} by giving a  reduction from the problem of testing decision lists to that~of testing monotone decision lists. We start with a standard amplification on $\MonotoneDL$ to get
  $\MonotoneDL^*$ with the following properties:
\begin{flushleft}\begin{enumerate}
\item The number of queries made by $\MonotoneDL^*(f,\calD,\eps)$, denoted by $N_{n,\eps}$, is $ \tilde{O}(n^{11/12}/\eps^2)$;
\item For any $(f,\calD)$ such that $f$ is a monotone 
  decision list,  $\MonotoneDL^*(f,\calD,\eps)$ accepts with
  probability at least $1-o_n(1)$; and
\item For any $(f,\calD)$ such that $f$ is $\eps$-far from monotone 
  decision lists with respect to $\calD$, we have that $\MonotoneDL^*(f,\calD,\eps)$ rejects with
  probability at least $1-o_n(1)$.
\end{enumerate}\end{flushleft}
The algorithm $\DL(f,\calD,\eps)$ for testing general decision lists is described in Algorithm \ref{alg:DLtester}.
It uses a procedure called $\textsc{CheckDL}$ described in Algorithm \ref{alg:CheckDL}.

The high-level idea behind the reduction is that when $f$ is a decision list represented by $(\pi,\mu,\nu)$,
if we happen to know the minimum element in the decision list
  (i.e., $r\in \{0,1\}^n$ such that $r_{\pi(j)}\ne \mu_j$ for all $j\in [n]$ or equivalently, $\min_{\pi,\mu}(r)=n+1$, then $g$ defined as $g(x):=f(x\oplus r)$ would become a monotone  
  decision list on which we can run $\MonotoneDL^*$.
Of course, it is not clear how to find the minimum element
  efficiently, but we will show that it suffices to work with an element that is \emph{close} to being the minimum.

\begin{algorithm}[!t]
\caption{$\textbf{DecisionList}(f,\calD,\eps)$}\label{alg:DLtester}
\begin{algorithmic}[1]
\Require {Oracle access to $f$, sampling access to $\mathcal{D}$, and a distance parameter $\eps>0$}
\For{$100/\eps$ rounds}
\State Draw $r\sim \calD$ and set counter $c\gets 0$
\For{$100\log (n/\eps)$ rounds}
  \State Run $\textsc{CheckDL}(f,\calD,\eps,r)$ and set $c\gets c+1$ if it accepts
\EndFor
\State \textbf{accept} if $c\ge \log (n/\eps)$
\EndFor
\State \textbf{reject}
\end{algorithmic}
\end{algorithm}

In more details, let's consider the case when $f$ is a decision list
  and is represented by $(\pi,\mu,\nu)$.
Since we repeat the main loop of $\DL$ (which starts on line 1) for $100/\eps$
  times, it is not hard to show that with probability at least $0.9$,
  at least one of the  $r$ sampled on line 2 satisfies
\begin{equation}\label{condonr}
\Pr_{x\sim \calD}\left[\min_{\pi,\mu}(r)<\min_{\pi,\mu}(x) \right]\le 0.1\eps.
\end{equation}
To see this is the case, let $j^*\in [n+1]$ be the largest integer such that 
$$
\Pr_{x\sim\calD} \left[\min_{\pi,\mu}(x)\ge j\right]\ge 0.1\eps.
$$ 
Then at least one of the $r$'s sampled on line 2 satisfies $\min_{\pi,\mu}(x)\ge j^*$ with 
  probability at least $0.9$ and any such $r$ satisfies (\ref{condonr}).

Assuming that $r$ satisfies (\ref{condonr}). 
The simple subroutine $\TestDL$ described in Algorithm \ref{alg:TestDL} can help us test whether $f$ is 
  a decision list with almost one-sided error.
Its performance guarantees~are stated in the following lemma:

\begin{lemma}\label{lem:TestDL}
$\TestDL$ makes $ \tilde{O}(n^{11/12}/\eps^2)$ many queries.

If $f$ is $\eps$-far from   decision lists with respect to $\calD $,
  then for any strings $r,z\in \{0,1\}^n$, $\TestDL$ rejects with probability at least
  $1-o_n(1)$.

Suppose $f$ is a decision list represented by $(\pi,\mu,\nu)$, $r$ satisfies (\ref{condonr}) and $z$ satisfies $\min_{\pi,\mu}(z)>\min_{\pi,\mu}(r)$.
Then $\TestDL$ accepts with probability at least $1-o_n(1)$.
\end{lemma}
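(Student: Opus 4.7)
The plan is to reduce testing $f$ for being a decision list to testing the XOR-shifted function $g(x) := f(x \oplus r)$ for being a \emph{monotone} decision list, then invoke $\MonotoneDL^*$. Both query and sampling access to $(g, \calD \oplus r)$ can be simulated at no cost from $(f, \calD)$: each query $g(y)$ is answered by $f(y \oplus r)$, and each sample $y \sim \calD$ is transformed to $y \oplus r \sim \calD \oplus r$. The query bound $\tilde{O}(n^{11/12}/\eps^2)$ for $\TestDL$ then reduces to that of $\MonotoneDL^*$ by property (1), with the $O(1)$ queries to $f$ used to cross-check the roles of $r$ and $z$ absorbed into the $\tilde{O}$ notation.

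For the no case, I would argue contrapositively. If $g$ were $\eps$-close to some monotone decision list $h$ under $\calD \oplus r$, then the composed function $h(x \oplus r)$ is a (general) decision list, and
$$\Pr_{y \sim \calD}\big[f(y) \neq h(y \oplus r)\big] = \Pr_{x \sim \calD \oplus r}\big[g(x) \neq h(x)\big] < \eps,$$
contradicting the assumption that $f$ is $\eps$-far from decision lists. Hence, for any choice of $r, z \in \{0,1\}^n$, $g$ is $\eps$-far from monotone decision lists under $\calD \oplus r$, and property (3) of $\MonotoneDL^*$ ensures rejection with probability $1 - o_n(1)$.

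For the yes case, assume $f$ is the decision list $(\pi, \mu, \nu)$, $r$ satisfies (\ref{condonr}), and $j^* := \min_{\pi,\mu}(r) \le n$ with $\min_{\pi,\mu}(z) > j^*$. I would construct a monotone decision list $h$ that uses the first $j^*-1$ rules of $f$ in the ordering $\pi$ with values $\nu_1,\ldots,\nu_{j^*-1}$, with an appropriate tail extension. The core bitwise calculation is that for $j < j^*$ one has $r_{\pi(j)} \neq \mu_{\pi(j)}$, so $(y \oplus r)_{\pi(j)} = \mu_{\pi(j)}$ is equivalent to $y_{\pi(j)} = 1$; the first $j^*-1$ rules of $g$ therefore behave as a monotone decision list in the $y$-coordinates. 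Consequently, whenever $\min_{\pi,\mu}(y) < j^*$, both $g(y \oplus r) = f(y)$ and $h(y \oplus r)$ fire at the same index and take the common value $\nu_{\min_{\pi,\mu}(y)}$. The auxiliary string $z$ is used to certify — via $O(1)$ cheap checks involving $f(r), f(z), f(r \vee z)$ — that $j^* \le n$ (so the tail construction of $h$ is nondegenerate). By (\ref{condonr}), the $\calD$-mass of $\{y : \min_{\pi,\mu}(y) > j^*\}$ is at most $0.1\eps$; absorbing the boundary $j_y = j^*$ contribution into the tail rule of $h$ (whose behavior can be chosen to match $\nu_{j^*}$), we get $g$ is $0.1\eps$-close to $h$ under $\calD \oplus r$, and $\MonotoneDL^*$ accepts with probability $1 - o_n(1)$.

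The principal obstacle is the completeness step: extracting an acceptance guarantee from $\MonotoneDL^*$ on an instance that is only $0.1\eps$-close to — not exactly — a monotone decision list. This relies on a robustness property of $\MonotoneDL^*$: since every rejection criterion in Section~\ref{sec:mdl} is triggered by an event of $\calD \oplus r$-mass $\Omega(\eps)$, a $0.1\eps$-sized discrepancy cannot create a rejection signal except with $o_n(1)$ probability after amplification. Formalizing this amounts to noting that $\MonotoneDL^*$ can, at polylog multiplicative overhead, be taken as a tolerant tester distinguishing $0.1\eps$-close from $\eps$-far monotone decision lists, which suffices for our application.
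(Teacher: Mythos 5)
Your proposed algorithm differs from the one the paper actually analyzes, and the deviation introduces a gap that you yourself flag and then cannot close. You shift by $r$ to obtain $g(x) = f(x\oplus r)$, observe that $g$ is $0.1\eps$-close (under $\calD \oplus r$) to some monotone decision list $h$ in the yes case, and then want to run $\MonotoneDL^*$ on $g$ and conclude acceptance. But $\MonotoneDL^*$ is a plain (amplified) tester, not a tolerant one: its guarantee says nothing about functions that are merely $0.1\eps$-close to the class. Your final paragraph asserts that "every rejection criterion in Section~\ref{sec:mdl} is triggered by an event of $\calD\oplus r$-mass $\Omega(\eps)$" and that $\MonotoneDL^*$ can be upgraded to a tolerant tester at polylog overhead. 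Neither claim is justified, and the second is false as a general principle; tolerant testing is a genuinely harder problem. In fact, several of the rejection criteria in \Cref{sec:mdl} are triggered by \emph{any single witness} found by brute-force pairwise checking over the sampled points (e.g., a type-$3$ or type-$4$ witness in \Cref{alg:type3,alg:type4}, or a cycle detected during $\Sketch$/$\FindBlock$), so a function that is merely $0.1\eps$-close to a monotone decision list can still contain witnesses that the tester will find and reject on. Your completeness argument therefore does not go through.

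The paper's \Cref{alg:TestDL} is structured precisely to avoid needing tolerance. It shifts by $z$ (not $r$), and \emph{explicitly constructs} a new function $h$ that the algorithm can evaluate with $O(1)$ oracle queries to $f$: $h$ agrees with $g(x) = f(x\oplus z)$ except that any $\bar b$-point that $g$ ranks below $r\oplus z$ (checked via a single query $f((x\oplus z)\lor r)$) is flipped to $b$. In the yes case this $h$ is \emph{exactly} a monotone decision list --- not merely close to one --- so $\MonotoneDL^*(h,\calD\oplus z,\eps/2)$ accepts with probability $1 - o_n(1)$ by its ordinary completeness guarantee, with no tolerance needed. The distance between $g$ and $h$ is handled separately by the empirical estimate on lines $3$--$6$: (\ref{condonr}) guarantees $\dist_{\calD\oplus z}(g,h)\le 0.1\eps$, so the counter test passes. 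In the no case, either the estimate detects $\dist_{\calD\oplus z}(g,h)\ge 0.4\eps$ and rejects outright, or it is small, in which case the triangle inequality forces $h$ to be $\ge 0.6\eps$-far from (monotone) decision lists under $\calD\oplus z$, so $\MonotoneDL^*$ rejects. Your no-case contrapositive argument is fine in spirit, but it applies to a run on $g$, not to the paper's run on $h$, and so does not substitute for the two-case triangle-inequality step above. The conceptual move you are missing is: do not feed a close-but-imperfect function to the tester; instead synthesize an exactly-correct instance $h$ from oracle access to $f$, and use direct sampling to certify that $h$ is a faithful proxy for $g$.
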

\begin{proof}
The query complexity part is trivial. When $f$ is $\eps$-far 
  from decision lists with respect to $\calD$, note that either $\dist_{\calD\oplus z}(g,h)$ is at least $0.4\eps$, in which case
  line 6 rejects with probability  $1-o_n(1)$, or
  $\dist_{\calD\oplus z}(g,h)\le 0.4\eps$ and thus, by triangular inequality,
  $h$ is at least $0.6\eps$-far from decision lists with
  respect to $\calD\oplus z$.
Therefore, $\MonotoneDL^*$ on line 7 rejects with probability $1-o_n(1)$.

On the other hand, suppose that $f$ is a decision list represented by
  $(\pi,\mu,\nu)$ and $r,z\in \{0,1\}^n$ satisfy (\ref{condonr}) and $\min_{\pi,\mu}(z)>\min_{\pi,\mu}(z)$.
It is easy to verify that  $h$ is a monotone decision list. Also by  (\ref{condonr}), $\dist_{\calD\oplus z}(g,h)\le 0.1\eps$ so
  line 6 continues with probability at least $1-o_n(1)$.
Finally, line 7 accepts with probability at least $1-o_n(1)$ given that $h$ is a monotone decision list. 
\end{proof}
\begin{algorithm}[!t]
\caption{$\textbf{TestDL}(f,\calD ,\eps,r,z)$}\label{alg:TestDL}
\begin{algorithmic}[1]
\Require {Oracle access to $f $, sampling access to $\calD $, $\eps>0$ and $r, z\in \{0,1\}^n$ }
\State Let $b=f(r)$, $g $ be the function $g (x)=g(x\oplus z)$ and 
  $h$ be the function defined as follows:
 $$
h(x)=\begin{cases}
b & \text{if $g (x)=b$}\\
\overline{b} & \text{if $g(x)=\overline{b}$ and $x\succ_g r\oplus z$}\\
b & \text{if $g(x)=\overline{b}$ and $r\oplus z\succ_g x $}
\end{cases}
$$
\State Set a counter $c\gets 0$
\For{$ (10\log n)/\eps $ rounds}
\State Draw $x\sim \calD\oplus z$ and increment $c$ if $g(x)\ne h(x)$
\EndFor
\State \textbf{reject} if $c\ge (2\log n)/\eps$ and continue otherwise
\State Run $\MonotoneDL^*(h,\calD\oplus z,\eps/2)$; \textbf{accept} if
  it accepts; \textbf{reject} if it rejects 
\end{algorithmic}
\end{algorithm}

Given $\TestDL$, the challenge for the case when $f$ is a decision list
  is to find a $z\in \{0,1\}^n$ with
  $\min_{\pi,\mu}(z)>\min_{\pi,\mu}(r)$.
This is done in $\textsc{CheckDL}$ (Algorithm \ref{alg:CheckDL}),
where the deterministic binary search subroutine called 
  $\textsc{IndexSearch}$ (Algorithm \ref{alg:IndexSearch}) will play an important role:

\begin{algorithm}[!t]
\caption{$\textbf{IndexSearch}(f,r,y)$}\label{alg:IndexSearch}
\begin{algorithmic}[1]
\Require {Oracle access to $f $ and $r,y\in \{0,1\}^n$ with $f(r)\ne f(y)$}
\State Let $b=f(r)$ and  $g(x):=f(x\oplus r)$;
below we write $g(T)$ to denote $g(x)$ where $x_i=1$ iff $i\in T$
\State Run deterministic binary search to look for an $i\in \supp(y\oplus r)$ such that $g(e_i)=\overline{b}$ 
\State \textbf{return} this $i$ if found and continue if the binary search fails
\State Note that the binary search can fail if in one round, both branches 
  evaluate to $b$ in $g$. In more\newline details, let $T_0,T_1,\ldots,T_p$ be the 
  sequence of subsets of $\supp(y\oplus r)$ followed by the binary search\newline  such that $T_0=\supp(y\oplus r)$,
  $T_{i+1}\subset T_i$ with $|T_{i+1}|\le \lceil |T_i|/2\rceil$ and 
  $g(T_0)=\cdots=g(T_p)=\overline{b}$
  but\newline both subsets that $T_p$ splits into evaluate to $b$ in $g$ so 
  the binary search fails
\If{at least one of $g(T_0\setminus T_1),\ldots,g(T_{p-1}\setminus T_p)$ is $\overline{b}$, say $g(T^*)=\overline{b}$}
\State Run binary search on $T^*$ to look for an $i\in T^*$ such that $g(e_i)=\overline{b}$
\State \textbf{return} this $i$ if found; \textbf{return} $\nil$ if binary search fails again
\Else
\State \textbf{return} $\nil$
\EndIf
\end{algorithmic}
\end{algorithm}

\begin{lemma}\label{lem:IndexSearch}
$\textsc{IndexSearch}(f,r,y)$ is deterministic and uses $O(\log n)$ queries on $f$.

When $f$ is a decision list represented by $(\pi,\mu,\nu)$ and $r,y$ satisfy 
  $f(r)\ne f(y)$ and $\min_{\pi,\mu}(r)>\min_{\pi,\mu}(y)$, it always 
  returns an $i$ with $\pi^{-1}(i)\le \min_{\pi,\mu}(r)$ such that $f(r^{(i)})\ne f(r)$.
  
When $f$ is a decision list represented by $(\pi,\mu,\nu)$ and $r,y$ satisfy
   $f(r)\ne f(y)$ and $\min_{\pi,\mu}(r)<\min_{\pi,\mu}(y)$, 
   it returns either $\pi(\min_{\pi,\mu}(r))$ or $\emph{\nil}$.
\end{lemma}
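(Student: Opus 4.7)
I would start by observing that the determinism of $\textsc{IndexSearch}$ follows directly from its specification, and that the $O(\log n)$ query bound is obtained by noting that each binary-search phase uses at most two oracle calls per level over $O(\log n)$ halving levels, while the recovery phase contributes only $p = O(\log n)$ additional difference-set queries.

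The crux is case (b), where $j_r := \min_{\pi,\mu}(r) > \min_{\pi,\mu}(y) =: j_y$. I would first establish the structural identity: for $T \subseteq \supp(y \oplus r)$, if the $\pi^{-1}$-smallest element of $T$ is $\pi(j^*)$ with $j^* < j_r$, then $g(T) = \nu_{j^*}$. This follows because for $j < j_r$ we have $r_{\pi(j)} \neq \mu_{\pi(j)}$, so the condition $(r \oplus \mathbbm{1}_T)_{\pi(j)} = \mu_{\pi(j)}$ reduces to $\pi(j) \in T$. I would use this identity to verify that $\pi(j_y) \in \supp(y \oplus r)$ with $g(e_{\pi(j_y)}) = \nu_{j_y} = \overline{b}$, establishing a valid candidate, and to set up the invariant: whenever the recursion's current set $T$ satisfies $g(T) = \overline{b}$ with $\pi^{-1}$-minimum below $j_r$, the half containing that minimum again has $g = \overline{b}$, so the split offers a half on which to continue.

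The binary search can fail only by entering a ``complicated regime'' in which the current $\pi^{-1}$-minimum equals $\pi(j_r)$: a minimum strictly greater than $j_r$ would force $g = b$ via the condition at $j = j_r$ and thus cannot coexist with $g = \overline{b}$. Suppose this regime is entered at step $i_0$, so $T_{i_0}$'s minimum is $\pi(j_r)$; the nested structure $T_p \subseteq \cdots \subseteq T_0$ then gives $\pi(j_r) \in T_i$ for every $i \leq i_0$. I would argue that $\pi(j_r)$ also persists in $T_i$ for every $i \geq i_0$, since dropping it would leave a set whose $\pi^{-1}$-minimum exceeds $j_r$, forcing $g = b$, which the binary-search rule (recurse only on a half with $g = \overline{b}$) rejects. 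Consequently, $\pi(j_r)$ lies in none of the discarded halves $T_{i-1} \setminus T_i$. At step $i_0$ itself, the discarded half inherits the previous minimum (whose position is below $j_r$ and whose $\nu$-value is $\overline{b}$), so $g(T_{i_0-1} \setminus T_{i_0}) = \overline{b}$. Hence the recovery selects some $T^*$ with $g(T^*) = \overline{b}$; since $T^*$ contains no $\pi(j_r)$, its binary search cannot enter the complicated regime and therefore succeeds by the invariant. A short direct calculation then shows that any returned $i$ satisfies $\pi^{-1}(i) \leq j_r$.

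For case (c), with $j_r < j_y$, I would carry out a case analysis of $i \in \supp(y \oplus r)$ by $j := \pi^{-1}(i)$: indices with $j < j_r$ fail to lie in $\supp(y \oplus r)$ since both $y_{\pi(j)}$ and $r_{\pi(j)}$ equal $\overline{\mu_{\pi(j)}}$; indices with $j > j_r$ satisfy $g(e_i) = \nu_{j_r} = b$ because flipping bit $\pi(j)$ of $r$ preserves $r_{\pi(j_r)} = \mu_{\pi(j_r)}$; so only $j = j_r$ can yield $g(e_i) = \overline{b}$, giving $i = \pi(j_r)$. Since every non-$\nil$ return has been certified to satisfy $g(e_i) = \overline{b}$ (the binary search terminates on a singleton under the invariant $g(T) = \overline{b}$), the output in case (c) is either $\pi(j_r)$ or $\nil$. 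The main obstacle lies in the case-(b) argument, specifically in showing that $\pi(j_r)$ never migrates into a discarded half, which rests on the twin observations that prior to the complicated regime $\pi(j_r)$ remains in the recursed-into set, and once the complicated regime is entered, dropping $\pi(j_r)$ would flip $g$ to $b$.
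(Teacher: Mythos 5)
Your proof is correct and follows essentially the same approach as the paper's (structural identity for $g(T)$ via the $\pi^{-1}$-minimum; any returned $i$ has $g(e_i)=\overline{b}$ hence $\pi^{-1}(i)\le j_r$; $\pi(j_r)$ persists in every $T_i$ so lies in no discarded half; some discarded half has $g=\overline{b}$, and the recovery search on it cannot fail). The only cosmetic difference is the witness used to certify a discarded $T^*$ with $g(T^*)=\overline{b}$: the paper tracks where $\pi(\min_{\pi,\mu}(y))$ is dropped, while you track the pre-regime minimum at step $i_0$; the two arguments are equivalent.
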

\begin{proof}
The query complexity is trivial. The case when $\min_{\pi,\mu}(r)<\min_{\pi,\mu}(y)$ is also trivial because the 
  algorithm always returns either $\nil$ or some $i\in \supp(y\oplus r)$ 
  such that $\overline{b}=g(e_i)=f(r^{(i)})$.
But given the assumption that $\min_{\pi,\mu}(r)<\min_{\pi,\mu}(y)$, it must 
  be the case that $i=\min_{\pi,\mu}(r)$.

Next consider the case when $\min_{\pi,\mu}(r)>\min_{\pi,\mu}(y)$, and let 
  $i^*=\pi(\min_{\pi,\mu}(r))$.
If an $i$ is~found by the first binary search on line 3, we are done as any $i$ with $f(r^{(i)})\ne f(r)$ has $\pi^{-1}(i)\le \min_{\pi,\mu}(r)$.
Otherwise, the first binary search fails and this can happen only when  
 $i^*\in T_p$
  and $\pi(\min_{\pi,\mu}(y))\notin T_p$.
As a result, the $T^*$ we look for on line 5 must exist and it satisfies that 
  $i^*\notin T^*$.
Therefore, the second binary search always finds an $i$ such that $f(r^{(i)})=\overline{b}\ne f(r)$.
\end{proof}

We are now ready to prove Theorem \ref{theo:mainalg}:

\begin{proof}[Proof of Theorem \ref{theo:mainalg}]
The correctness of $\DL$ for the case when $f$ is $\eps$-far from
  decision lists with respect to $\calD$ follows
  from the following claim:

\begin{algorithm}[!t]
\caption{$\textbf{CheckDL}(f,\calD,\eps,r)$}\label{alg:CheckDL}
\begin{algorithmic}[1]
\Require {Oracle access to $f $, sampling access to $\mathcal{D}$, $\eps>0$ and $r\in \{0,1\}^n$}
\State Run $\MonotoneDL^*(g,\calD,\eps)$ with
  $g(x):=f(x\oplus r)$ and \textbf{accept} if it accepts
\State Otherwise, let $b=f(r)$ and $S$ be the set of strings queried by $\MonotoneDL^*$ on line 1 
\State Run $\Sketch(g,S)$ but stop on line 9 to obtain  sequence $X=(x^{(1)},\ldots,x^{(m)})$ and  $I_1,\ldots,I_k$ 

\State 
Let 
  $x^*$ denote the last  $\overline{b}$-string of $g$ in $X$ 
\State Run $\TestDL(f,\calD,\eps,r,x^*\oplus r)$; \textbf{accept} if it accepts and continue
  otherwise
\State Let $A$ (or $B$) denote the last interval of $\overline{b}$-strings (or $b$-strings) of $g$ in $X$  
\State Run $\textsc{IndexSearch}(f,r,x\oplus r)$ for every $x\in A$
\If{$\textsc{IndexSearch}(f,r,z\oplus r)=\nil$ for some $z\in A$}
  \State Pick any such $z\in A$
  \State Run $\TestDL(f,\calD,\eps,r,z\oplus r)$; \textbf{accept}
  if it accepts and \textbf{reject} if it rejects
\ElsIf{$\textsc{IndexSearch}(f,r,x\oplus r)=i$ for some $x\in A$ such that  
  $y_{i}=1$ for some $y\in B$}
  \State Pick any such $i$ and run $\TestDL(f,\calD,\eps,r,r^{(i)})$;
  \textbf{accept} if it accepts; \textbf{reject} if it rejects
\Else
  \State \textbf{reject}
\EndIf
\end{algorithmic}
\end{algorithm}

\begin{claim}\label{lem:general2}
\emph{Suppose $f$ is $\eps$-far from decision lists with respect to $\calD$.
Then for any string $r\in \{0,1\}^n$, $\textsc{CheckDL}(f,\calD,\eps,r)$ rejects with probability
  at least $1-o_n(1)$.}
\end{claim}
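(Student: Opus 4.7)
Since $\TestDL(f,\calD,\eps,r,z)$ rejects with probability $1-o_n(1)$ for any strings $r$ and $z$ whenever $f$ is $\eps$-far from decision lists with respect to $\calD$ (by \Cref{lem:TestDL}), the three $\textsc{CheckDL}$ branches that depend on $\TestDL$---line 5 (with $z=x^{*}\oplus r$), line 10 (with $z=z'\oplus r$ for some $z'\in A$), and line 12 (with $z=r^{(i)}$)---each produce \textbf{accept} with probability only $o_n(1)$, irrespective of how the earlier steps chose $x^{*}$, $A$, or $i$. So the claim reduces to showing that the remaining accept branch, line 1, where $\MonotoneDL^{*}$ is invoked on $g(x):=f(x\oplus r)$, is also taken with probability only $o_n(1)$.

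The plan for line 1 is to interpret the call to $\MonotoneDL^{*}$ as testing $g$ against monotone decision lists under the shifted distribution $\calD\oplus r$ (since each sample $x\sim\calD$ used to evaluate $g(x)=f(x\oplus r)$ corresponds to a sample $y:=x\oplus r\sim \calD\oplus r$ on which $f$ is actually queried), and then to verify that $g$ is $\eps$-far from monotone decision lists with respect to $\calD\oplus r$. This is the main---and only nontrivial---step: if $g$ were $\eps$-close to some monotone decision list $m$ under $\calD\oplus r$, then $h(x):=m(x\oplus r)$ would be a decision list satisfying
\[\Pr_{x\sim\calD}\bigl[f(x)\ne h(x)\bigr]\,=\,\Pr_{x\sim\calD}\bigl[g(x\oplus r)\ne m(x\oplus r)\bigr]\,=\,\Pr_{y\sim\calD\oplus r}\bigl[g(y)\ne m(y)\bigr]\,<\,\eps,\]
contradicting the assumption that $f$ is $\eps$-far from decision lists with respect to $\calD$. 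Hence $g$ really is $\eps$-far from monotone decision lists with respect to $\calD\oplus r$, and by the amplified rejection guarantee of $\MonotoneDL^{*}$, line 1 accepts with probability only $o_n(1)$.

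A union bound over the four potential accept branches (a constant number of events, each with failure probability $o_n(1)$) then yields that $\textsc{CheckDL}$ accepts with probability $o_n(1)$, i.e., it rejects with probability at least $1-o_n(1)$, establishing the claim. The main obstacle here is pinning down the correct distribution shift---converting $\eps$-far-ness of $f$ under $\calD$ into $\eps$-far-ness of $g$ under $\calD\oplus r$ via the bijection $m\leftrightarrow m(\cdot\oplus r)$ between monotone decision lists and a subclass of decision lists; once that correspondence is in hand, both the line-1 argument and the three $\TestDL$ arguments are black-box applications of the results already established.
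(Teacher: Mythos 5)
Your proof is correct and follows the same decomposition as the paper's: the three $\TestDL$ branches each accept with probability $o_n(1)$ by Lemma~\ref{lem:TestDL}, the $\MonotoneDL^*$ call on line~1 accepts with probability $o_n(1)$, and a union bound over these four accept branches gives the claim. Where you go beyond the paper is in spelling out the distribution-shift calculation for line~1, which the paper compresses into the one-line assertion that ``$g$ is also $\eps$-far from decision lists'' without naming the distribution. Your change-of-variables identity and the resulting contradiction are correct.

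The one spot to tighten is the parenthetical you give to justify treating line~1 as a run of $\MonotoneDL^*$ on the pair $(g,\calD\oplus r)$. The observation that the oracle calls to $g$ ``correspond to'' queries of $f$ at $\calD\oplus r$-distributed points is a statement about the implementation, not about the tester's input distribution: if $\MonotoneDL^*$ is handed samples drawn from $\calD$ together with oracle access to $g$, it is testing the pair $(g,\calD)$, and soundness would then hinge on $\dist_\calD(g,\text{monDL})\ge\eps$---a quantity that equals $\dist_{\calD\oplus r}(f,\text{DL})$ and is \emph{not} controlled by the hypothesis $\dist_\calD(f,\text{DL})\ge\eps$. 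What the argument actually needs is for line~1 to pass $\MonotoneDL^*$ samples drawn from $\calD\oplus r$ (feed it $x\oplus r$ for each $x\sim\calD$); the pseudocode as printed writes ``$\calD$,'' but this shifted reading is the one under which the soundness step goes through, and your computation that $\dist_{\calD\oplus r}(g,\text{monDL})\ge\eps$ is exactly the bound that the amplified tester's rejection guarantee requires.
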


By a Chernoff bound followed by a union bound over the $100/\eps$
  rounds, $\DL$ accepts in this case with probability at most
$$
\frac{100}{\eps}\cdot \exp(-\Omega(\log (n/\eps)))< 1/3.
$$

\begin{proof}[Proof of Claim \ref{lem:general2}]
Given that $f$ is $\eps$-far from decision lists with respect to $\calD$, we have that 
  $g(x):=f(x\oplus r)$ is also $\eps$-far from decision lists and in particular, $\eps$-far from monotone decision lists.
As a result, line 1 of $\textsc{CheckDL}$ accepts with probability $o_n(1)$.

Other than line 1, $\textsc{CheckDL}$ accepts  when one of the 
  three executions of $\TestDL$ accepts but by Lemma \ref{lem:TestDL}, this also happens with probability $o_n(1)$.
This finishes the proof of the claim.
\end{proof}

The correctness of the algorithm for the case when $f$ is a decision list follows from
  the following claim about $\textsc{CheckDL}$:

\begin{claim}\label{lem:general1}
\emph{Suppose $f$ is a decision list represented by $(\pi,\mu,\nu)$, and 
  $r\in \{0,1\}^n$ satisfies (\ref{condonr}),
Then $\textsc{CheckDL}(f,\calD,\eps,r)$ accepts with probability
  at least $0.1$.}
\end{claim}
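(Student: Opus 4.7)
Write $j^* := \min_{\pi,\mu}(r)$, $b := \nu_{j^*} = f(r)$, and $g(x) := f(x\oplus r)$. My plan is to split on the distance between $g$ and the class of monotone decision lists with respect to $\calD$, and to argue that in either case at least one of lines 1, 5, 10, 12 of $\textsc{CheckDL}$ accepts with probability at least $0.1$.

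If $g$ is $(\eps/2)$-close to some monotone decision list under $\calD$, then by the amplified correctness of $\MonotoneDL^*$, line 1 accepts with probability $1 - o_n(1)$ and we are done. Otherwise $g$ is $(\eps/2)$-far from every monotone decision list, and I would compare $g$ with the monotone decision list $\tilde g$ obtained by truncating $f$ at rule $j^*$: take $\pi(1),\ldots,\pi(j^*-1)$ as positive literals with values $\nu_1,\ldots,\nu_{j^*-1}$ and default value $b$. A direct verification, using that $r_{\pi(j)}=\overline{\mu_{\pi(j)}}$ for every $j<j^*$ by the definition of $j^*$, shows $g(x)=\tilde g(x)$ whenever $\min_{\pi,\mu}(x\oplus r)<j^*$. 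Hence $g$ and $\tilde g$ can disagree only on the set $\{x:\min_{\pi,\mu}(x\oplus r)>j^*\text{ and }g(x)=\overline b\}$ (noting that $\min_{\pi,\mu}(x\oplus r)=j^*$ forces $g(x)=b$), and this event has $\calD$-mass at least $\eps/2$. Because $\MonotoneDL^*$ queries $\Omega(1/\eps)$ independent $\calD$-samples into $S$, with probability at least $0.9$ some $x^{\circ}\in S$ satisfies both $g(x^{\circ})=\overline b$ and $\min_{\pi,\mu}(x^{\circ}\oplus r)>j^*$. Using that $\FindRep$ iteratively extracts the highest-priority representative of a given $g$-value from $S$, I would argue that the last $\overline b$-string $x^{*}$ in the sorted sequence $X$ produced by $\Sketch(g,S)$ (i.e., the $\overline b$-string of \emph{lowest} priority in $S$) inherits $\min_{\pi,\mu}(x^{*}\oplus r)\ge \min_{\pi,\mu}(x^{\circ}\oplus r)>j^*$. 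Then $z := x^{*}\oplus r$ satisfies $\min_{\pi,\mu}(z)>\min_{\pi,\mu}(r)$, and by Lemma \ref{lem:TestDL}, $\TestDL(f,\calD,\eps,r,z)$ on line 5 accepts with probability $1-o_n(1)$, giving an overall acceptance probability at least $0.8$.

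The main obstacle is justifying the sorting-based claim about $x^{*}$: $\Sketch$ is invoked on a $g$ that is not a monotone decision list but only approximately matches the monotone $\tilde g$ on $S$, so the binary searches inside $\FindRep$ do not a priori produce a clean monotone ordering. When this primary argument fails, the algorithm falls back to lines 10 and 12, which try to identify the specific variable $\pi(j^*)$ using $\textsc{IndexSearch}$ on strings of the last $\overline b$-interval $A$ together with the witness provided by the last $b$-interval $B$: the key identity $y_{\pi(j^*)}=1$ for any $y$ with $\min_{\pi,\mu}(y\oplus r)>j^*$, which follows from $r_{\pi(j^*)}=\mu_{\pi(j^*)}$, is exactly what powers the ``$y_i=1$'' check on line 11. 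Completing the proof then reduces to a careful case analysis showing that in every situation not covered by the primary argument, either Lemma \ref{lem:IndexSearch} returns $\nil$ on some $z\in A$ (triggering line 10 with the valid shift $z\oplus r$) or returns exactly $\pi(j^*)$ on some $x\in A$ together with a witness $y\in B$ with $y_{\pi(j^*)}=1$ (triggering line 12 with the valid shift $r^{(\pi(j^*))}$), so that Lemma \ref{lem:TestDL} applies in either branch.
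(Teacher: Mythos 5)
The proposal does not recover the paper's argument, and its very first step is wrong: if $g$ is $(\eps/2)$-close to a monotone decision list but is not itself a monotone decision list, $\MonotoneDL^*$ gives no guarantee at all — a tester accepts with high probability only on exact members of the class, not on functions in the gap between distance~$0$ and distance~$\eps$. So the dichotomy ``close vs.\ far'' does not yield acceptance on line~1 in the close case, and the proof collapses at the outset. The paper avoids this entirely by using a different, query-based dichotomy: letting $\calS$ be the distribution over the set $S$ of strings that $\MonotoneDL^*(g,\calD,\eps)$ queries, it shows that either $\MonotoneDL^*(g,\calD,\eps)$ accepts with probability at least $0.2$, or with probability at least $0.2$ the set $S$ contains \emph{both} an $\overline{b}$-string $x$ with $\min_{\pi,\mu'}(x)>j^*$ \emph{and} a $b$-string $y$ with $\min_{\pi,\mu'}(y)>j^*$. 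The engine behind this is a simulation argument that you do not use: besides your $\tilde g=g^*$ (which agrees with $g$ off $\{g=\overline b,\min>j^*\}$), the paper also introduces a second monotone decision list $g^\dagger$ that agrees with $g$ off $\{g=b,\min>j^*\}$; if $S$ misses either of the two witness types, the run of $\MonotoneDL^*(g,\cdot)$ is \emph{bitwise identical} to a run on $g^*$ or on $g^\dagger$, and hence accepts with probability $1-o_n(1)$.

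This matters for the second half too. Your argument only manufactures a $\overline b$-string $x^\circ\in S$ with $\min_{\pi,\mu'}(x^\circ)>j^*$, but the line-11/12 branch of $\textsc{CheckDL}$ needs a $b$-string $y\in B$ with $y_{\pi(j^*)}=1$, equivalently with $\min_{\pi,\mu'}(y)>j^*$; this witness comes for free from the paper's dichotomy and is not supplied by your Case~2. Finally, the ``sorting'' claim that the last $\overline b$-string $x^*$ in $X$ has $\min_{\pi,\mu'}(x^*)>j^*$ is not only unproved (you flag it yourself) but is false in general; the paper does not rely on it and instead handles the case $\min_{\pi,\mu'}(x^*)<j^*$ by showing, via the interval structure of $X$, that $A$ still contains a suitable $x^\dagger$ and $B$ a suitable $y^\dagger$, after which $\textsc{IndexSearch}$ either fails (line~10) or isolates $i=\pi(j^*)$ (line~12). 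In short: the missing idea is the simulation dichotomy with the pair $(g^*, g^\dagger)$ over the tester's query set, together with the resulting guarantee that $S$ simultaneously contains both witness strings.
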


For this case, it follows by Claim \ref{lem:general1} that $\DL$ accepts with probability at least
$$
0.9\cdot \big(1-\exp (-\Omega( \log (n/\eps)) )\big)> 2/3.
$$

\begin{proof}[Proof of Claim \ref{lem:general1}]
Let $(\pi,\mu,\nu)$ be a representation of the input 
  decision list $f$ and
let $b=f(r)$ and $j^*=\min_{\pi,\mu}(r)\in [n+1]$.
Let $g$ be the function with $g(x):=f(x\oplus r)$.
Then $g$ is also a decision list and can be represented by
  $(\pi,\mu',\nu)$ with $\mu'=\mu\oplus r$.
Let
$g^*$ be the function defined as
$$
g^*(x):=\begin{cases} 
b & \text{if $g(x)=b$}\\
\overline{b} & \text{if $g(x)=\overline{b}$ and $\min_{\pi,\mu'}(x)<j^*$}\\
b & \text{if $g(x)=\overline{b}$ and $\min_{\pi,\mu'}(x)>j^*$}
\end{cases}
$$
Note that these three cases cover all possible $x$ since  $\min_{\pi,\mu'}(x)=j^*$ implies that $g(x)=b$.
Let $g^\dagger$ be the function defined as 
$$
g^\dagger(x):=\begin{cases}
\overline{b} & \text{if $g(x)=\overline{b}$}\\
b & \text{if $g(x)=b$ and $\min_{\pi,\mu'}(x)\le j^*$}\\
\overline{b} & \text{if $g(x)=b$ and $\min_{\pi,\mu'}(x)> j^*$}
\end{cases}
$$

We show that $g^*$ is a monotone decision list.
To see this is the case,
  it is easy to verify that $g^*$ can be represented as $(\pi,\mu',\nu^*)$
  with $\nu^*_j=b$ for all $j\ge j^*$.
This can be equivalently represented as $(\pi, 1^n,\nu^*)$,
  a monotone decision list (basically every $x$ with
  $\min_{\pi,\mu'}(x)\ge j^*$ always gets $g^*(x)=b$).
It is also easy to verify that $f^\dagger$ is a monotone decision list as well.
  
Consider $\MonotoneDL^*(g,\calD,\eps)$ on line $2$. 
Let $\calS$ denote the distribution over $N_{n,\eps}$-subsets
  of $\{0,1\}^n$ as the (random) set of queries made by $\MonotoneDL^*(g,\calD,\eps)$.
We show that either
$$
\Pr\Big[\MonotoneDL^*(g,\calD,\eps)\ \text{accepts}\Big]
$$
is at least $0.2$, or
$$
\Pr_{S\sim \calS}\left[\text{$S$ has an $x $ with $g(x)=\overline{b}$ and 
  $\min_{\pi,\mu'}(x)>j^*$, and a $y $ with $g(y)=b$ and $\min_{\pi,\mu'}(y)>j^*$}\right]
$$
is at least $0.2$.
This is because, if the second probability is at least $0.2$ then we are done.
Otherwise either with probability at least $0.4$, $\MonotoneDL^*(g,\calD,\eps)$ behaves exactly the same as if it runs on $(g^*,\calD)$ and it follows that the first probability is 
  at least $0.4-o_n(1)$ given that $g^*$ is a monotone decision list, or 
  with probability at least $0.4$, $\MonotoneDL^*(g,\calD,\eps)$ behaves exactly the same as if it runs on $(g^\dagger,\calD)$ and thus, the first probability is at least $0.4-o_n(1)$ as well.

It suffices to show that whenever $S$ contains an $x$ with $g(x)=\overline{b}$ such that 
  $\min_{\pi,\mu'}(x)>j^*$ and a $y$ with $g(y)=b$ such that $\min_{\pi,\mu'}(y)>j^*$, the rest of the procedure (lines 3--15 of $\textsc{CheckDL}$) accepts with
  probability at least $1-o_n(1)$.

First, if $x^*$ on line 4 satisfies $\min_{\pi,\mu'}(x^*)>\min_{\pi,\mu'}(r)$
then we are done because  the $\TestDL$~on line 5 accepts with probability $1-o_n(1)$.
So we assume below that $\min_{\pi,\mu'}(x^*)<\min_{\pi,\mu'}(r)$ (they cannot be equal because $x^*$ is a $\overline{b}$-string of $g$).

From this we can infer that $A$ must contain a string $x^\dagger$ such that 
  $g(x^\dagger)=\overline{b}$ and $\min_{\pi,\mu'}(x^\dagger)>j^*$ and 
  $B$ must contain a string $y^\dagger$ such that 
  $g(y^{\dagger})=b$ and $\min_{\pi,\mu'}(y^\dagger)>j^*$.
To see this is the case, we just use the following property of the sorted sequence $X$ together with the intervals $I_1,\ldots,I_k$: If $x$ is any
  string in $I_t$ for some $t$ such that $\min_{\pi,\mu'}(x)<j^*$ (such as the $x^*$ above),
  then all strings $z$ in $I_1,\ldots,I_{t-1}$ (i.e., intervals before $I_t$) must satisfy $\min_{\pi,\mu'}(z)<j^*$ as well (though this does not apply to strings
  in the same interval $I_t$ as $x$.)

Now if $\textsc{CheckDL}$  enters line 9, $z$ must satisfy 
  $\min_{\pi,\mu'}(z)>j^*$ by Lemma \ref{lem:IndexSearch} and thus,
  we are done because  $\TestDL$ on line 9 accepts with probability $1-o_n(1)$.
Otherwise we must have that $\textsc{IndexSearch}(f,r,x^{\dagger}\oplus r)$  
 returns $i^*=\pi(j^*)$
  by Lemma \ref{lem:IndexSearch}.
And checking among all indices returned by $\textsc{IndexSearch}(f,r,x\oplus r)$, $x\in A$, $i^*$ is the only index that can have $y_{i^*}=1$ for some $y\in B$
  (using that $y^\dagger \in B$ and that  all strings $y\in B$ satisfy
  $\min_{\pi,\mu'}(y)\ge j^*$ because $x^{\dagger}\in A$).
As a result, $\textsc{TestDL}$ must enter line 12 with $i=i^*=\pi(j^*)$ and thus,
  $\min_{\pi,\mu}(r^{(i)})>\min_{\pi,\mu}(r)$.
So we are done because $\TestDL$ on line 12 accepts with probability $1-o_n(1)$.
\end{proof}

It suffices to upperbound the number 
  of queries made by $\DL$, which is at most
$$
O\left(\frac{1}{\eps}\right)\cdot O\left(\log\left(\frac{n}{\eps}\right)\right)\cdot \tilde{O}\left(N_{n,\eps/2}\right)
= \tilde{O}\left(\frac{n^{11/12}}{\eps^3}\right).
$$
This finishes the proof of Theorem \ref{theo:mainalg}.
\end{proof}
\section{Birthday Paradox Lemmas}\label{sec:birthday}

\subsection{Bipartite Birthday Paradox}
We prove the bipartite birthday paradox lemma (\Cref{lem:bipbirthday}) in this subsection. We first present a proof of the classical birthday paradox (\Cref{lem:birth_classical_bip}) where we bound the probability of 2 people sharing a birthday. Note that for our purposes, we assume that each person's birthday follows a same (possibly non-uniform) distribution on a finite set.

\begin{lemma}\label{lem:birth_prelim_bip}
Let $\mathcal{D}$ be a probability distribution over a ground set $B$, and let $T\subseteq B$ be a subset such that $\mathcal{D}(T)\geq \varepsilon$ and $\mathcal{D}(\{t\})\geq p$ for each element $t\in T$, for some $\eps, p>0$. Let $X_1,\ldots,X_m$ be a sequence of $m$ samples drawn independently from $\mathcal{D}$ and let $S$ be the set they form. We have
$$\Pr_{S}\left[\mathcal{D}(S\cap T)\leq \min\left\{\frac{\varepsilon pm}{4},\frac{\varepsilon}{2}\right\}\right]\leq e^{-\varepsilon m/{ 16}}.$$
\end{lemma}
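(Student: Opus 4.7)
The plan is to show that $m$ i.i.d.\ samples from $\mathcal{D}$ already cover at least $\tau := \min\{\varepsilon p m/4,\,\varepsilon/2\}$ worth of mass in $T$ with the claimed probability. I would split the argument by the size of the atoms of $T$ relative to~$\tau$.

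First, partition $T = T_H \sqcup T_L$ with $T_H := \{t \in T : q_t \geq \tau\}$ and $T_L := T \setminus T_H$, where $q_t := \mathcal{D}(\{t\})$. Since $\mathcal{D}(T) \geq \varepsilon$, at least one of $\mathcal{D}(T_H), \mathcal{D}(T_L)$ is at least $\varepsilon/2$. In the heavy case $\mathcal{D}(T_H) \geq \varepsilon/2$, a single sample landing in $T_H$ already pushes $\mathcal{D}(S\cap T)$ past $\tau$, and the probability of no such sample is $(1-\mathcal{D}(T_H))^m \leq e^{-\varepsilon m/2}$, comfortably within the target $e^{-\varepsilon m/16}$.

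In the light case $\mathcal{D}(T_L) \geq \varepsilon/2$, every atom in $T_L$ has $q_t < \tau$. I would write $\mathcal{D}(S \cap T_L) = \sum_{t\in T_L} q_t Y_t$ with $Y_t := \mathbf{1}[t \in S]$, and observe that $\{Y_t\}_{t\in T_L}$ is a negatively associated family (each $Y_t$ is a monotone function of the disjoint subfamily $\{\mathbf{1}[X_i = t]\}_{i \in [m]}$ of the underlying NA system $\{\mathbf{1}[X_i = t']\}_{i, t'}$). I would then apply the MGF inequality $\mathbb{E}[e^{-\lambda Z}] \leq \exp\bigl(-\sum_t p_t(1 - e^{-\lambda q_t})\bigr)$ with $p_t := 1-(1-q_t)^m$, valid by NA applied to the non-increasing maps $y \mapsto e^{-\lambda q_t y}$. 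Combining the elementary estimate $1 - (1-q_t)^m \geq \tfrac{1}{2}\min(1, q_t m)$ with $q_t \geq p$ gives $\mathbb{E}[\mathcal{D}(S\cap T_L)] \geq (\varepsilon/4)\min(1, pm)$, and tuning $\lambda = \Theta(1/\tau)$ would yield the desired tail bound after Markov.

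The main obstacle is calibrating constants in the light case when $pm$ is large (so $\tau = \varepsilon/2$) and some $q_t$ approaches $\tau$: the ratio $\mu / \max_{T_L} q_t$ governs the NA Chernoff decay rate, and a naive split leaves only a factor-of-two margin between $\mu$ and $\tau$, producing constant-probability decay rather than the required $e^{-\varepsilon m/16}$. I expect the remedy is either a dyadic sub-split of $T_L$ into scales $[\tau/2^{k+1},\,\tau/2^k]$ with the heavy-case ``hit-at-least-one'' argument applied at each scale via a union bound, or a Poissonization in which the $Y_t$ become independent Bernoullis and a sharper multiplicative Chernoff applies directly. The constant $1/16$ in the exponent should then emerge as the worst of these per-case losses.
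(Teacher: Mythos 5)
Your proposal has a genuine gap, and in fact it is exactly where you flagged it: the light case. Your heavy/light split handles $\mathcal{D}(T_H)\geq\varepsilon/2$ cleanly (up to a boundary issue: a single hit in $T_H$ only guarantees $\mathcal{D}(S\cap T)\geq\tau$, not strictly greater than $\tau$, which would need to be handled for the event $\mathcal{D}(S\cap T)\leq\tau$), but the light case is the substantive one and it is not resolved. Your own calibration analysis shows why the MGF bound stalls: when $pm$ is large so that $\tau=\varepsilon/2$, the light atoms satisfy $q_t<\tau=\varepsilon/2$, so $T_L$ may consist of two atoms of mass just below $\varepsilon/2$ with $\mathcal{D}(T_L)$ just above $\varepsilon/2$. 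Then $\mathbb{E}[\mathcal{D}(S\cap T_L)]$ is not appreciably larger than $\tau$, and no one-shot exponential-moment argument at a single scale $\lambda=\Theta(1/\tau)$ can yield $e^{-\Omega(\varepsilon m)}$ decay below the mean. The dyadic sub-split is a reasonable instinct, but it is not carried out, and it is not obvious it recovers the uniform rate $e^{-\varepsilon m/16}$ without a more careful union bound over scales (the mass at a given dyadic scale can be arbitrarily small). Poissonization makes the $Y_t$ independent but does not change the fact that the mean sits too close to the threshold; what actually saves the bound in these near-critical examples is that each light atom is individually hit with probability $1-(1-q_t)^m\approx 1$, so the failure probability is governed by a union bound over ``some heavy-ish light atom is missed,'' which is a different estimate than a below-mean deviation.

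The paper sidesteps the case analysis entirely with a sequential stochastic-domination argument, and this is the idea your proposal misses. Define $S_j=\{X_1,\dots,X_j\}\cap T$ and set $Y_j=\mathcal{D}(S_j\setminus S_{j-1})$ as long as $\mathcal{D}(T\setminus S_{j-1})\geq\varepsilon/2$, and $Y_j=1$ once less than $\varepsilon/2$ of $T$'s mass remains uncovered. The key observation is that $\Pr[Y_j\geq p\mid X_1,\dots,X_{j-1}]\geq\varepsilon/2$ always: either we are in the capped regime and $Y_j=1\geq p$ deterministically, or at least $\varepsilon/2$ mass of $T$ is still uncovered and the next sample lands in it (and hence contributes $\geq p$) with probability $\geq\varepsilon/2$. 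Hence $p^{-1}\sum_j Y_j$ stochastically dominates $\mathrm{Binomial}(m,\varepsilon/2)$, and the multiplicative Chernoff lower tail at half the mean gives $e^{-\varepsilon m/16}$. The cap by $1$ is then unwound by noting that on the event $\mathcal{D}(S\cap T)\leq\varepsilon/2$ the cap is never triggered, so $\sum_j Y_j=\mathcal{D}(S\cap T)$ exactly on that event, which is precisely the event appearing in the statement. This trick --- truncating so that the conditional per-step success probability stays uniformly bounded below, and then observing the truncation is vacuous on the event you care about --- is what delivers the clean exponent without any split into heavy and light atoms. You would do well to internalize it; the same device recurs in the hypergraph version (\Cref{lem:hypbirthday}) in a Poissonized guise.
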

\begin{proof}
Let $S_{j}=\{X_{1},\dots,X_{j}\}\cap T$. For each $j\in [m]$, we define a random variable
$$Y_{j}=\begin{cases}
1 &\text{if }\mathcal{D}(T\setminus S_{j-1})<\varepsilon/2\\
\mathcal{D}(S_{j}\setminus S_{j-1}) &\text{otherwise.}
\end{cases}$$
We clearly have $\Pr[Y_{j}\geq p\mid X_{1},\dots,X_{j-1}]\geq \varepsilon/2$. So $p^{-1}(Y_{1}+\cdots+Y_{m})$ stochastically dominates the sum of $m$ independent Bernoulli random variables with parameter $\varepsilon/2$. By the multiplicative Chernoff bound, we have
$$\Pr\Big[p^{-1}(Y_{1}+\dots+Y_{m})\leq \varepsilon m/4\Big]\leq e^{-\varepsilon m/{ 16}}.$$
The lemma follows from the fact that  $Y_{1}+\dots+Y_{m}=\mathcal{D}(S\cap T)$ on the event $\mathcal{D}(S\cap T)\leq \varepsilon /2$. 
\end{proof}

\begin{lemma}[Classical birthday paradox]\label{lem:birth_classical_bip}
Let $n$ be a positive integer, and let $\mathcal{D}$ be a distribution over $[n+1]$. Let $p_{i}=\mathcal{D}(\{i\})$ for each $i\in [n+1]$, and let $\eps=p_{1}+\dots+p_{n}$. Let $S$ and $S'$ be two sets of samples of size $m$ and $m'$, respectively, drawn independently from $\mathcal{D}$, with $m$ and $m'$ satisfying $m\cdot m'\geq 100n/\varepsilon^{2}$ and that $m,m'\geq { 200}/\varepsilon$. Then with probability at least $0.99$, there exists an $i\in [n]$ such that $i$ appears in both $S$ and $S'$.
\end{lemma}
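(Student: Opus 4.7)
The plan is to apply Lemma \ref{lem:birth_prelim_bip} once to $S$, and then argue directly that $S'$ hits the captured portion of $S$. First, I would set a threshold $p^{*}=\varepsilon/(2n)$ and define $T=\{i\in[n]:p_{i}\geq p^{*}\}$. Since every element of $[n]\setminus T$ has mass less than $p^{*}$, we have $\mathcal{D}([n]\setminus T)< n\cdot p^{*}=\varepsilon/2$, so $\mathcal{D}(T)\geq\varepsilon/2$, and by definition every element of $T$ has mass at least $p^{*}$.

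Next, I would invoke Lemma \ref{lem:birth_prelim_bip} with ground set $[n+1]$, subset $T$, lemma-$\varepsilon$ equal to $\varepsilon/2$, lemma-$p$ equal to $p^{*}$, and sample set $S$. This gives
\[
\Pr\!\left[\mathcal{D}(S\cap T)\leq c_{S}\right]\leq e^{-\varepsilon m/32},
\qquad c_{S}:=\min\!\left(\frac{\varepsilon^{2}m}{16n},\,\frac{\varepsilon}{4}\right).
\]
Under the hypothesis $m\geq 200/\varepsilon$, the exponent $\varepsilon m/32$ is at least $6.25$, so the failure probability here is at most $e^{-6.25}<0.003$. Conditional on the event $\mathcal{D}(S\cap T)\geq c_{S}$, the $m'$ independent samples forming $S'$ each miss the fixed set $S\cap T$ with probability at most $1-c_{S}$, hence
\[
\Pr\!\left[S'\cap(S\cap T)=\emptyset\ \big|\ S\right]\leq (1-c_{S})^{m'}\leq e^{-c_{S}m'}.
\]
Using the two hypotheses $m\cdot m'\geq 100n/\varepsilon^{2}$ and $m'\geq 200/\varepsilon$ gives $c_{S}m'\geq\min(\varepsilon^{2}mm'/(16n),\,\varepsilon m'/4)\geq\min(100/16,\,50)=6.25$, so this is again at most $e^{-6.25}<0.003$. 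A union bound over the two failure events yields $\Pr[S\cap S'\cap[n]=\emptyset]\leq 0.006$, so the required $0.99$ bound holds with room to spare.

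The key judgement is the choice of the threshold $p^{*}$. Taking $p^{*}$ too large would shrink $\mathcal{D}(T)$ below $\varepsilon/2$, which would break the first step; taking it too small would shrink $c_{S}\leq\varepsilon^{2}m/(16n)$ and hence the exponent $c_{S}m'$ in the second step below the required threshold. The choice $p^{*}=\varepsilon/(2n)$ is exactly what makes both quantities scale as needed: the first term of $c_{S}m'$ then becomes $\varepsilon^{2}mm'/(16n)$, which is precisely where the hypothesis $mm'\geq 100n/\varepsilon^{2}$ enters. No genuine case split on the structure of $\mathcal{D}$ is necessary, because every non-empty distribution supported on $[n]$ with mass $\varepsilon$ has a non-trivial set of elements above $\varepsilon/(2n)$; that observation is the main simplification over more elaborate heavy/light arguments.
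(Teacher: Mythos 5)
Your proposal is correct and follows essentially the same approach as the paper: both define $T=\{i\in[n]:p_i\geq\varepsilon/(2n)\}$, apply Lemma \ref{lem:birth_prelim_bip} to $S$ with parameters $\varepsilon/2$ and $\varepsilon/(2n)$ to lower-bound $\mathcal{D}(S\cap T)$, and then bound the probability that $S'$ misses $S\cap T$ by $(1-c_S)^{m'}$, finishing with a union bound. The constants and the final union-bound calculation match the paper's almost exactly.
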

\begin{proof}
Let $T=\{i\in [n]:p_{i}\geq \varepsilon/(2n)\}$. Since $\mathcal{D}([n]\setminus T)<n\cdot \varepsilon/(2n)=\varepsilon/2$ and $\mathcal{D}([n])=\varepsilon$, it follows that $\mathcal{D}(T)\geq \varepsilon/2$. Applying \Cref{lem:birth_prelim_bip}, we have
$$\Pr_{S}\left[\mathcal{D}(S\cap T)\leq \min\left\{\frac{\varepsilon^{2}m}{16n},\frac{\varepsilon}{4}\right\}\right]\leq e^{-\varepsilon m/{ 32}}\leq \frac{1}{500}.$$
Furthermore, we have
\begin{align*}
&\quad \Pr_{S'}\left[S'\cap (S\cap T)= \emptyset\hspace{0.06cm}\middle|\hspace{0.06cm}\mathcal{D}(S\cap T)\geq \min\left\{\frac{\varepsilon^{2}m}{16n},\frac{\varepsilon}{4}\right\}\right]\\ 
&\hspace{1.5cm}\leq \left(1-\min\left\{\frac{\varepsilon^{2}m}{16n},\frac{\varepsilon}{4}\right\}\right)^{m'}\leq \exp\left(-\min\left\{\frac{\varepsilon^{2}m\cdot m'}{16n},\frac{\varepsilon m'}{4}\right\}\right)\leq \frac{1}{500}.
\end{align*}
Combining these two inequalities, we have that $\Pr_{S,S'}\left[S\cap S'=\emptyset\right]\leq 0.01$.
\end{proof}

The ingredient we need in order to extend the classical birthday paradox to the bipartite graph version stated below in Lemma \ref{lem:bipbirthday} is the well-known relationship between fractional matchings and vertex covers in bipartite graphs. 

\birthdayparadoxone*

\begin{proof}
Let $P$ be the function over $U\cup V$ given by $P(u)=\mu( u )$ for all $u\in U$ and $P(v)=\nu( v )$ for all $v\in V$. Consider the following linear program on variables $\{x_{e}: e\in E\}$:
\begin{align}
    \text{maximize\ \  } & \sum_{e\in E}x_{e}\nonumber\\
    \text{subject to\ \ } & \sum_{e\in E:w\in e}x_{e}  \le P(w) \label{eq:LP_cond},\quad\text{for each $w\in U\cup V$}\\
    & x_{e}\geq 0,\,\quad\text{for all $e\in E$}.\nonumber
\end{align}
It's easy to see that the linear program has a maximum. Let $(\lambda_{e})_{e\in E}$ be an optimal solution to the linear program. By linear programming duality, the optimal value $\sum_{e\in E}\lambda_{e}$ is equal to that of the following dual program over variables $\{y_w:w\in U\cup V\}$:
\begin{align*}
\text{minimize\ \  } & \sum_{w\in U}P(w)y_{w} \\
\text{subject to\ \  } & y_{u}+y_{v}\geq 1,\,\quad\text{for each edge $\{u,v\}\in E$} \\
& y_{w} \geq 0,\,\quad \text{for all $w\in U\cup V$}. 
\end{align*}
Since all extreme points of the fractional vertex-cover polyhedron (of any bipartite graph) are $0/1$-vectors (refer to \cite[Theorem 7.1.3]{lovasz2009matching}), the optimal value of the dual program is the weight of a vertex cover and is thus at least $\varepsilon$ by the assumption of the lemma. Hence we have $\sum_{e\in E}\lambda_{e}\geq \varepsilon$. 

We then apply the birthday paradox argument. Let $U=\{u_{1},\dots,u_{n}\}$. Consider the distribution $\mathcal{D}$ on $[n+1]$ defined by
\begin{align*}
\mathcal{D}( n+1)&=1-\sum_{e\in E}\lambda_{e}\quad\text{and}\quad
\mathcal{D}( i ) =\sum_{e\in E:u_i\in e}\lambda_{e},\quad\text{for each }i\in [n], 
\end{align*}
We can then construct a Markov kernel $K$ from $[n+1]$ to $V\cup\{\#\}$ such that $\mathcal{D}(\{i\})\cdot K(v|i)=\lambda_{\{u_{i},v\}}$ for all $i\in \supp(\mathcal{D})$ and $v\in V$. Also let $\varphi:[n+1]\rightarrow U\cup\{\#\}$ be the map that sends $i$ to $u_{i}$ for $i\in [n]$ and sends $(n+1)$ to $\#$. By \eqref{eq:LP_cond}, the push-forward measure $\mathcal{D}\circ\varphi^{-1}$ is dominated by $\mu$ on $U$, and the push-forward measure $\mathcal{D}K$ is dominated by $\nu$ on $V$. Therefore, using $\sum_{e\in E}\lambda_{e}\geq \varepsilon$, the desired conclusion follows from \Cref{lem:birth_classical_bip} and stochastic domination.
\end{proof}

\subsection{Hypergraph Birthday Paradox}

We will prove a hypergraph version of the bipartite graph birthday paradox in this subsection. We begin with a couple of calculus facts.

\begin{proposition}\label{lem:birth_prelim_hyp1}
\emph{Let $k$ be a positive integer. The function $f:(0,+\infty)\rightarrow(0,1)$ defined by $$f(x)=\ln\left(1-(1-e^{-x})^{k}\right)$$ is concave in $x$.}
\end{proposition}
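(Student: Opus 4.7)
The plan is to compute $f''$ directly and show it is nonpositive. Substitute $a = a(x) := 1 - e^{-x}$, so $a \in (0,1)$ for $x>0$ and $a'(x) = 1 - a$, and set $F(x) := 1 - a(x)^k$. Then $F > 0$ on $(0,+\infty)$, $f = \ln F$, and $f'' = (F''F - (F')^2)/F^2$ has the same sign as $F''F - (F')^2$. Hence concavity of $f$ is equivalent to the polynomial inequality $F''F \le (F')^2$.

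A short differentiation yields
\[ F'(x) = -k\, a^{k-1}(1-a), \qquad F''(x) = k(1-a)\, a^{k-2}\bigl[ka - (k-1)\bigr]. \]
When $a \le (k-1)/k$ we have $F'' \le 0$, so $F''F \le 0 \le (F')^2$ and the desired inequality is trivial. When $a > (k-1)/k$, every factor above is strictly positive. I would then write $1 - a^k = (1-a)\, S$ with $S := 1 + a + \cdots + a^{k-1}$ and divide $F''F \le (F')^2$ by the positive quantity $k(1-a)^2 a^{k-2}$, which reduces the claim to
\[ \bigl[ka - (k-1)\bigr]\, S \;\le\; k\, a^k. \]

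To finish, I would use the identity $aS = a + a^2 + \cdots + a^k = (S - 1) + a^k$ to expand the left-hand side as $k(aS) - (k-1)S = k(S-1) + k a^k - (k-1)S$; after cancelling $k a^k$ from both sides, the target inequality collapses to $S \le k$. This is immediate because each of the $k$ summands in $S = \sum_{i=0}^{k-1} a^i$ is at most $1$, and strictly less than $1$ since $a \in (0,1)$, so in fact $f$ is strictly concave. The argument is essentially routine algebra; the only mildly subtle step is choosing which common factors to clear so that the inequality collapses, and the useful observation is that pulling out $(1-a)$ via $1 - a^k = (1-a)S$ reduces everything to the elementary geometric-sum bound $1 + a + \cdots + a^{k-1} < k$. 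I do not foresee any genuine obstacle.
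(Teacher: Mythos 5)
Your proof is correct. However, it takes a noticeably different route from the paper. The paper computes $f'(x)$ once and rewrites it, via the factorization $1-a^k = (1-a)(1+a+\cdots+a^{k-1})$ with $a = 1-e^{-x}$, as
\[
f'(x) = -\frac{k}{\sum_{\ell=0}^{k-1}(1-e^{-x})^{-\ell}},
\]
from which concavity is immediate: as $x$ grows, $a$ grows, each term $a^{-\ell}$ shrinks, so the denominator shrinks and $f'$ decreases. This is a one-line monotonicity argument that never touches the second derivative. You instead compute $F''$ explicitly, observe that the sign of $f''$ is governed by $F''F - (F')^2$, dispose of the regime $a \le (k-1)/k$ by a sign observation, and reduce the remaining case to the geometric-sum bound $1 + a + \cdots + a^{k-1} < k$ after clearing common factors. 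Both arguments use the same key algebraic fact (the finite geometric sum factorization of $1-a^k$), but the paper's is shorter because writing $f'$ as a negative constant divided by a positive decreasing function makes concavity visible without any case split or second differentiation. Your version is more mechanical and a bit longer, but it does yield strict concavity as a free byproduct, which the paper does not explicitly note.
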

\begin{proof}
We have
$$f'(x)=-\frac{ke^{-x}(1-e^{-x})^{k-1}}{1-(1-e^{-x})^{k}}=-\frac{k}{\sum_{\ell=0}^{k-1}(1-e^{-x})^{-d}},$$
which is decreasing in $x$.
\end{proof}
\begin{proposition}\label{lem:birth_prelim_hyp2}
\emph{Suppose $k$ is a positive integer and $a_{1},\dots,a_{k}$ are positive numbers such that $a_{1}+\dots+a_{k}=1$. The function $g:(0,+\infty)\rightarrow(0,1)$ defined by
$$g(x)=\left(1-\prod_{j=1}^{k}(1-e^{-x^{-a_{j}}})\right)^{x}$$
is non-increasing in $x$.}
\end{proposition}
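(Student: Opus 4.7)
The plan is to reduce monotonicity of $g$ to a one-variable inequality about an auxiliary function $\beta:(0,1)\to(0,1)$, and then prove that inequality by a short three-step chain that combines monotonicity, weighted AM--GM, and Jensen's inequality.

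First I would take a logarithmic derivative. Writing $P(x) := \prod_{j=1}^k(1-e^{-x^{-a_j}})$, we have $\ln g(x) = x\ln(1-P(x))$, so $g$ non-increasing is equivalent to $(\ln g)'(x)\le 0$, i.e.
\begin{equation*}
-\ln(1-P(x)) \;\ge\; \frac{-x P'(x)}{1-P(x)}.
\end{equation*}
Differentiating $\ln P$ factor by factor, and setting $y_j := x^{-a_j}$ and $\xi_j := 1-e^{-y_j}\in(0,1)$ (so that $P = \prod_j \xi_j$ and $(1-\xi_j)/\xi_j = 1/(e^{y_j}-1)$), a routine computation yields $-xP'(x)/P(x) = \sum_j a_j y_j/(e^{y_j}-1)$, and the displayed inequality becomes the equivalent form
\begin{equation*}
\frac{(1-P)(-\ln(1-P))}{P} \;\ge\; \sum_{j=1}^k a_j\cdot \frac{y_j}{e^{y_j}-1}.
\end{equation*}

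Next I would introduce $\beta:(0,1)\to(0,1)$ defined by $\beta(\xi) := -(1-\xi)\ln(1-\xi)/\xi$. The one-line substitution $1-\xi = e^{-y}$ gives the useful identity $y/(e^y-1) = \beta(1-e^{-y})$, so the target inequality is exactly
\begin{equation*}
\beta\!\left(\prod_{j=1}^k \xi_j\right) \;\ge\; \sum_{j=1}^k a_j\,\beta(\xi_j).
\end{equation*}
I would then verify two analytic facts about $\beta$: it is strictly decreasing (its derivative simplifies to $(\ln(1-\xi)+\xi)/\xi^2$, whose numerator is negative on $(0,1)$), and it is concave on $(0,1)$ (one writes $\beta''(\xi) = N(\xi)/\xi^3$ and checks that $N(0) = 0$ together with $N'(\xi) = -\xi^2/(1-\xi)^2 \le 0$, so $N\le 0$).

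With these two properties in hand, the inequality follows from the three-step chain
\begin{equation*}
\beta\!\left(\prod_j \xi_j\right) \;\overset{(i)}{\ge}\; \beta\!\left(\prod_j \xi_j^{a_j}\right) \;\overset{(ii)}{\ge}\; \beta\!\left(\sum_j a_j \xi_j\right) \;\overset{(iii)}{\ge}\; \sum_j a_j\,\beta(\xi_j),
\end{equation*}
where $(i)$ uses $\prod_j \xi_j\le \prod_j \xi_j^{a_j}$ (since each $\xi_j\in(0,1)$ and $a_j\le 1$) combined with $\beta$ decreasing, $(ii)$ uses the weighted AM--GM inequality $\prod_j\xi_j^{a_j}\le \sum_j a_j\xi_j$ together with $\beta$ decreasing, and $(iii)$ is Jensen's inequality applied to the concave function $\beta$. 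The main obstacle I foresee is establishing the concavity of $\beta$, which is not visually apparent from its defining formula but falls out of the short sign analysis of $N'$ described above.
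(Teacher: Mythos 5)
Your proof is correct, and its first half tracks the paper's: you differentiate $\ln g$, change variables $\xi_j = 1-e^{-x^{-a_j}}$, and reduce the claim to the inequality $\beta\bigl(\prod_j\xi_j\bigr)\geq \sum_j a_j\,\beta(\xi_j)$ for the function $\beta(\xi)=-(1-\xi)\ln(1-\xi)/\xi$, which is exactly the paper's auxiliary function $\rho$. Where you diverge is in how you finish: the paper observes only that $\prod_j\xi_j\leq\xi_i$ for every fixed $i$ (each factor lies in $(0,1)$), so $\beta$ decreasing alone gives $\beta\bigl(\prod_j\xi_j\bigr)\geq\beta(\xi_i)$ for every $i$, and the target inequality then follows by averaging with weights $a_i$ --- no further structure on $\beta$ is needed. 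Your route instead passes through the intermediate quantity $\sum_j a_j\xi_j$ via weighted AM--GM, and then needs Jensen's inequality, which in turn requires the concavity of $\beta$ (a fact you verify by a somewhat delicate second-derivative computation). Both arguments are valid, but the concavity step, which you flag as the main obstacle, is actually dispensable: the paper's averaging trick sidesteps it entirely and yields a strictly shorter proof from a strictly weaker hypothesis on $\beta$. If you want to keep your structure, you could also shortcut (i)--(ii)--(iii) by noting $\prod_j\xi_j\leq\min_j\xi_j$ and then $\beta(\min_j\xi_j)=\max_j\beta(\xi_j)\geq\sum_j a_j\beta(\xi_j)$, again avoiding concavity.
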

\begin{proof}
Let $\rho:(0,1)\rightarrow(0,1)$ be defined by $\rho(x)=\frac{1-x}{x}\ln\left(\frac{1}{1-x}\right)$. By taking the derivative, it is easy to see that $\rho$ is a decreasing function on $(0,1)$. Define the function $h_{j}(x)=1-e^{-x^{-a_{j}}}$ for each $j\in [k]$, and let $h(x)=\prod_{j=1}^{k}h_{j}(x)$.
We have
$$\frac{h'(x)}{h(x)}=-\sum_{j=1}^{k}a_{j}x^{-a_{j}-1}\frac{e^{-x^{-a_{j}}}}{1-e^{-x^{-a_{j}}}}=-\frac{1}{x}\sum_{j=1}^{k}a_{j}\cdot \rho(h_{j}(x)).$$
For each $j\in [k]$ and $x\in(0,1)$, since $h_{j}(x)\geq h(x)$ we have $\rho(h_{j}(x))\leq \rho(h(x))$. So
$$\frac{h'(x)}{h(x)}\geq -\frac{1}{x}\cdot\rho(h(x))=-\frac{\rho(h(x))}{x}=\frac{(1-h(x))\ln(1-h(x))}{x\cdot h(x)}.$$
Therefore 
$$\frac{\d}{\d x}\ln(g(x))=\ln(1-h(x))-\frac{x\cdot h'(x)}{1-h(x)}\leq 0,$$
and thus $g$ is non-increasing.
\end{proof}

We use the preceding propositions to prove a birthday paradox where we bound the probability of $k$ people sharing a birthday. To make the random variables arising from the balls-into-bins procedure mutually independent, we use the Poisson approximation as in~\cite[Theorem 2.11]{mitzenmacher1996power}.

\begin{lemma}[Classical birthday paradox]\label{lem:birth_classical_hyp}
Let $n$ and $k$ be two positive integers.
Let $p_1,\ldots,p_{n+1}$ be nonnegative numbers that sum to $1/k$ and   $\mathcal{D}$ be the probability distribution over $[n+1]\times [k]$
with $\mathcal{D}\big(\{(i,j)\}\big)=p_{i}$ for $i\in [n+1]$ and $j\in [k]$. Let $\eps=p_{1}+\dots+p_{n}$ and $m$ be a positive integer with  $$m\geq \frac{10k n^{(k-1)/k}}{\eps}.$$ Draw $m$ independent samples from $\mathcal{D}$, and let $Y_{i,j}$ be the number of  $(i,j)$ drawn.   Then 
$$\Pr\Big[\forall\hspace{0.04cm} i\in [n], \exists \hspace{0.04cm}j\in [k]\text{ such that } Y_{ij}=0\Big]\leq \frac{1}{100}.$$
\end{lemma}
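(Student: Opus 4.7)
The plan is to Poissonize, reduce to the uniform case via Proposition \ref{lem:birth_prelim_hyp1}, and close the calculation via Proposition \ref{lem:birth_prelim_hyp2} combined with the quantitative lower bound on $m$.

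First, I would replace the multinomial sample with independent Poissons $\tilde Y_{ij}\sim\mathrm{Poisson}(mp_i)$. The event $E=\bigl\{\forall\,i\in[n]\ \exists\,j\in[k]:Y_{ij}=0\bigr\}$ is monotonically non-increasing in the counts, so the cited Poisson-approximation result yields $\Pr_{\mathrm{multi}}[E]\le 2\,\Pr_{\mathrm{Poisson}}[E]$. By independence in the Poisson model,
\[
\Pr_{\mathrm{Poisson}}[E]\;=\;\prod_{i=1}^{n}\Bigl(1-(1-e^{-mp_i})^{k}\Bigr),
\]
so it suffices to upper-bound the right-hand side by $1/200$. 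By Proposition \ref{lem:birth_prelim_hyp1}, $f(x)=\ln(1-(1-e^{-x})^k)$ is concave on $(0,\infty)$, so Jensen's inequality applied to the values $mp_1,\ldots,mp_n$ (which sum to $m\eps$) gives
\[
\prod_{i=1}^n\Bigl(1-(1-e^{-mp_i})^k\Bigr)\;\le\;\Bigl(1-(1-e^{-c})^k\Bigr)^{n},\qquad c:=\frac{m\eps}{n}\ge 10k\,n^{-1/k}.
\]

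Next, I would invoke Proposition \ref{lem:birth_prelim_hyp2} with $a_1=\cdots=a_k=1/k$ to obtain that $g(x)=\bigl(1-(1-e^{-x^{-1/k}})^k\bigr)^{x}$ is non-increasing on $(0,\infty)$. Substituting $y=c^{-k}$ rewrites the preceding bound as $g(y)^{nc^{k}}$, while the hypothesis $m\ge 10k\,n^{(k-1)/k}/\eps$ is exactly the statement $nc^{k}\ge(10k)^k$. A short case split then finishes the estimate. When $c\le 1$ (equivalently $y\ge 1$), monotonicity of $g$ gives $g(y)\le g(1)=1-(1-1/e)^k$, so the quantity is at most $\bigl(1-(1-1/e)^k\bigr)^{(10k)^k}\le\exp\bigl(-(10k(1-1/e))^{k}\bigr)$, which lies far below $1/200$ for every $k\ge 1$. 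When $c>1$, I would instead apply Bernoulli's inequality $1-(1-e^{-c})^{k}\le k e^{-c}$ to get $(ke^{-c})^{n}\le\exp\bigl(-(10k\,n^{(k-1)/k}-n\log k)\bigr)$; either this already lies below $1/200$, or $n$ is large enough that the complementary bound $\bigl(1-(1-e^{-c})^k\bigr)^n\le \bigl(1-(1-1/e)^k\bigr)^n\le 1/200$ applies.

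The main obstacle is the final casework: the simple exponent $10k\,n^{(k-1)/k}-n\log k$ can fail to exceed $\log 200$ near the transition $c\approx 1$ when $k\ge 3$, so one has to interleave the two bounds (the $g(1)$-based monotonicity bound and the Bernoulli bound) to cover the full parameter range jointly. Once the two bounds are shown to patch, the factor of $2$ from Poissonization yields $\Pr_{\mathrm{multi}}[E]\le 2\cdot(1/200)=1/100$, as required.
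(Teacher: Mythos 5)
Your proof follows the same blueprint as the paper's: Poissonize the multinomial counts, use independence to express the acceptance probability as the product $\prod_{i=1}^n\bigl(1-(1-e^{-mp_i})^k\bigr)$, invoke the concavity from \Cref{lem:birth_prelim_hyp1} to reduce to the uniform case $p_i = \eps/n$, and then invoke the monotonicity from \Cref{lem:birth_prelim_hyp2}. The Poissonization factor (your $2$ vs.\ the paper's $4$, the latter coming directly from $\Pr[\sum_{i,j}Z_{ij}\le m]\ge 1/4$) is immaterial since the final product bound is far smaller than either threshold.

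Where you part ways is in the endgame, and here the paper has a decisively cleaner route. You set $y = c^{-k}$ and write the quantity as $g(y)^{nc^k}$, which forces a case split on $c\le 1$ vs.\ $c>1$ because monotonicity of $g$ is only useful to you from the base point $y=1$. The paper instead first notes that $\bigl(1-(1-e^{-c})^k\bigr)^n$ is decreasing in $c$, so one may take $c = 10kn^{-1/k}$ and rewrite the resulting expression as $g\bigl(n/(10k)^k\bigr)^{(10k)^k}$. Applying monotonicity of $g$ from the point $n=1$ (i.e.\ from $x = 1/(10k)^k$, not from $x=1$) immediately gives the uniform bound $1-(1-e^{-10k})^k \le 1/400$ for all $n\ge 1$ and $k\ge 1$, with no case split. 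This sidesteps exactly the obstruction you identify: your Bernoulli bound $\exp(n\log k - 10kn^{(k-1)/k})$ does fail near $n\approx (10k)^k$ once $k\ge 3$, and while your plan to patch it with the complementary $\bigl(1-(1-e^{-1})^k\bigr)^n$ bound can be carried out (the two regimes overlap, roughly around $n\asymp (10k/\log k)^k$), you leave that verification as a sketch. If you instead apply \Cref{lem:birth_prelim_hyp2} at the natural base point $1/(10k)^k$, the entire argument closes in one line and no interleaving is needed.
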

\begin{proof}
Let $(Z_{ij})_{i\in [n+1],\,j\in [k]}$ be independent random variables such that each $Z_{ij}$ follows the Poisson distribution with mean $mp_{i}$. It is well known that the conditional joint distribution of $(Z_{ij})_{i\in [n+1],\,j\in[k]}$ on the event $\left\{\sum_{i=1}^{n+1}\sum_{j=1}^{n}Z_{ij}=m\right\}$ is exactly the joint distribution of $(Y_{ij})_{i\in [n+1],\,j\in [k]}$. Let function $f:\mathbb{Z}_{\geq 0}^{(n+1)k}\rightarrow \{0,1\}$ be defined as
$$f\left((x_{ij})_{i\in [n+1],\,j\in [k]}\right)=\mathbbm{1}\big\{\forall i\in [n], \exists j\in [k]\text{ such that } x_{ij}=0\big\}.$$
Then since $\E\left[f\left((Y_{ij})_{i\in [n+1],\,j\in[k]}\right)\right]$ is decreasing in $m$, we have
\begin{align}
\E\left[f\left((Z_{ij})_{i\in [n+1],\,j\in[k]}\right)\right]&\geq \sum_{t=0}^{m}\E\left[f\left((Z_{ij})_{i\in [n+1],\,j\in[k]}\right)\middle|\sum\nolimits_{i,j}Z_{ij}=t\right]\cdot\Pr\left[\sum\nolimits_{i,j}Z_{ij}=t\right]\nonumber \\ 
&\geq  \E\left[f\left((Z_{ij})_{i\in [n+1],\,j\in[k]}\right)\middle|\sum\nolimits_{i,j}Z_{ij}=m\right]\cdot\Pr\left[\sum\nolimits_{i,j}Z_{ij}\leq m\right]\nonumber \\
&= \E\left[f\left((Y_{ij})_{i\in [n+1],\,j\in[k]}\right)\right]\cdot\Pr\left[\sum\nolimits_{i,j}Z_{ij}\leq m\right]. \label{eq:Poisson_approx}
\end{align}
Since $\sum_{i,j}Z_{ij}$ follows a Poisson distribution with mean $m$, it is easy to see (for example by the Berry-Esseen theorem) that 
\begin{equation}\label{eq:median_Poisson}
\Pr\left[\sum\nolimits_{i,j}Z_{ij}\leq m\right]\geq \frac{1}{4}.
\end{equation}
By independence between the variables $Z_{ij}$, we have
\begin{align}
\E\left[f\left((Z_{ij})_{i\in [n+1],\,j\in[k]}\right)\right]&=\prod_{i=1}^{n}\Pr\left[\exists j\in [k]\text{ such that } Z_{ij}=0\right]\nonumber \\
&=\prod_{i=1}^{n}\left(1-(1-e^{-mp_{i}})^{k}\right)\nonumber \\
&\leq \left(1-(1-e^{-\varepsilon m/n})^{k}\right)^{n}& (\text{by \Cref{lem:birth_prelim_hyp1}})\nonumber \\
&\leq \left(1-(1-e^{-10kn^{-1/k}})^{k}\right)^{n} & (\text{since $m\geq 10k\varepsilon^{-1}n^{(k-1)/k}$})\nonumber \\
&\leq 1-(1-e^{-10k})^{k}\leq \frac{1}{400} &(\text{by \Cref{lem:birth_prelim_hyp2}}). \label{eq:convexity}
\end{align}
The conclusion follows by combining \eqref{eq:Poisson_approx}, \eqref{eq:median_Poisson} and \eqref{eq:convexity}.
\end{proof}

The hypergraph version of \Cref{lem:bipbirthday} follows by a similar linear programming argument.

\birthdayparadoxtwo*

\begin{proof}
Consider the following linear program over variables $\{x_{e}:e\in E\}$:
\begin{align*}
    \text{maximize\ \ } & \sum_{e\in E}x_{e}\\
    \text{subject to\ \ } & \sum_{e\in E:v\in e}x_{e}\le \mu(v),\quad\text{for each $v\in V$}\\
    & x_{e}\geq 0,\,\quad\text{for all $e\in E$}.
\end{align*}
 Let  $(\lambda_{e})_{e\in E}$ be an optimal solution. 
Then 
$$C=\left\{v\in V:\sum_{e\in E:v\in e}\lambda_{e}=\mu(v)\right\}$$ must be a vertex cover of $H$. So by assumption
$$\varepsilon\leq \sum_{v\in C}\mu(v)\leq 
k\sum_{e\in E}\lambda_{e}.$$
By Carath\'{e}odory's theorem there exists another optimal solution $(\lambda'_{e})_{e\in E}$ to the linear program such that $\big|\{e\in E:\lambda'_{e}\neq 0\}\big|\leq |V|$. Furthermore, we still have $\sum_{e\in E}\lambda'_{e}=\sum_{e\in E}\lambda_{e}\geq \varepsilon/k$. 

We are now ready for the birthday paradox argument. Let $n$ be the size of the set $\{e\in E:\lambda'_{e}\neq 0\}$ and denote its elements by $e_{1},\dots,e_{n}$. There obviously exists a map $\varphi:[n+1]\times [k]\rightarrow V\cup\{\#\}$ that maps the set $\{(i,j):j\in [k]\}$ one-to-one onto the vertices of $e_{i}$ for each $i\in [n]$, and maps the set $\{(n+1,j):j\in [k]\}$ to $\{\#\}$. Considering the distribution $\mathcal{D}$ on $[n+1]\times [k]$ defined by
\begin{align*}
\mathcal{D}\big(\{(i,j)\}\big)&=\lambda'_{e_{i}},\text{ for all } i\in [n]\text{ and } j\in [k], \\
\mathcal{D}\big(\{(n+1,j)\}\big)&=\left(\frac{1}{k}-\sum_{i= 1}^{n}\lambda'_{e_{i}}\right),\text{ for all } j\in [k].
\end{align*}
Using $\sum_{i}\lambda'_{e_{i}}\geq \varepsilon/k$ and $n\leq |V|$, the lemma follows by \Cref{lem:birth_classical_hyp} and stochastic domination. 
\end{proof}
\section{An $\tilde{\Omega}(\sqrt{n})$ Lower Bound}\label{sec:lowerbound}

Our proof of the lower bound in \Cref{thm:lower_bound_wrapup} is based on Yao's minimax principle. We first construct two distributions $\calD_{YES}$ and $\calD_{NO}$ over function-distribution pairs $(f, \calD)$. For simplicity, we will assume that $n$ is a multiple of 16. 


\begin{itemize}
	\item $\calD_{NO}$: For each permutation $\pi$ over $[n]$, we define  the distribution $\mathcal{D}^{\mathsf{no}}_{\pi}$ to be the uniform distribution over
    \[\bigcup_{k=\frac{n}{8}}^{\frac{n}{4}-1}\big \{e_{\pi(4k+1)} \lor e_{\pi(4k+2)},\; e_{\pi(4k+2)} \lor e_{\pi(4k+3)},\; e_{\pi(4k+3)} \lor e_{\pi(4k+4)},\; e_{\pi(4k+4)} \lor e_{\pi(4k+1)} \big\}. \]
    We then define $\nu^{\mathsf{no}}\in \{0,1\}^{n+1}$ by setting 
    \[\nu^{\mathsf{no}}_{4k+1}=\nu^{\mathsf{no}}_{4k+3}=0, \; \nu^{\mathsf{no}}_{4k+2}=\nu^{\mathsf{no}}_{4k+4}=1\]
    for each $k\in \{\frac{n}{8},\frac{n}{8}+1,\dots,\frac{n}{4}-1\}$, and $\nu^{\mathsf{no}}_{i}=1$ for each $i\in \{1,2,\dots,\frac{n}{4}-1\}\cup\{n+1\}$. Now suppose $x\in\{0,1\}^{n}$. Let $i$ be the smallest number in the set $\pi^{-1}(\supp(x))$. If $i=4k+1$ for some $k\in \{\frac{n}{8},\frac{n}{8}+1,\dots,\frac{n}{4}-1\}$ and $$\pi^{-1}(\supp(x))\cap\{4k+1,4k+2,4k+3,4k+4\}=\{4k+1,4k+4\},$$ then set $f^{\mathsf{no}}_{\pi}(x)=1$. Otherwise, set $f^{\mathsf{no}}_{\pi}(x)$ so that it agree with the monotone decision list represented by the pair $(\pi,\nu^{\mathsf{no}})$. Note that we have
    \begin{equation}\label{eq:f-no-1}
		f^{\mathsf{no}}_{\pi}(e_{\pi(4k+4)} \lor e_{\pi(4k+1)}) = f^{\mathsf{no}}_{\pi}(e_{\pi(4k+2)}\lor e_{\pi(4k+3)})=1
    \end{equation}
    and
    \begin{equation}\label{eq:f-no-2}
        f^{\mathsf{no}}_{\pi}(e_{\pi(4k+1)} \lor e_{\pi(4k+2)}) = f^{\mathsf{no}}_{\pi}(e_{\pi(4k+3)}\lor e_{\pi(4k+4)})=0
    \end{equation}
	for each $k\in \{\frac{n}{8},\frac{n}{8}+1,\dots,\frac{n}{4}-1\}$. The final distribution $\mathcal{D}_{NO}$ is taken to be the distribution of the random pair $(f^{\mathsf{no}}_{\pi},\calD^{\mathsf{no}}_{\pi})$, where $\pi$ is a uniformly random permutation over $[n]$.
	
	
	\item $\calD_{YES}$: For each permutation $\pi$ over $[n]$, we define  the distribution $\mathcal{D}^{\mathsf{yes}}_{\pi}$ to be the uniform distribution over
    \[\bigcup_{k=\frac{n}{8}}^{\frac{n}{4}-1}\big \{e_{\pi(4k+1)} \lor e_{\pi(4k+2)},\; e_{\pi(4k+1)} \lor e_{\pi(4k+3)},\; e_{\pi(4k+2)} \lor e_{\pi(4k+4)},\; e_{\pi(4k+3)} \lor e_{\pi(4k+4)} \big\}. \]
    We then define $\nu^{\mathsf{yes}}\in \{0,1\}^{n+1}$ by setting 
    \[\nu^{\mathsf{yes}}_{4k+1}=\nu^{\mathsf{yes}}_{4k+4}=0, \; \nu^{\mathsf{yes}}_{4k+2}=\nu^{\mathsf{yes}}_{4k+3}=1\]
    for each $k\in \{\frac{n}{8},\frac{n}{8}+1,\dots,\frac{n}{4}-1\}$, and $\nu^{\mathsf{no}}_{i}=1$ for each $i\in \{1,2,\dots,\frac{n}{4}-1\}\cup\{n+1\}$. We then let $f^{\mathsf{yes}}_{\pi}(x)$ be the monotone decision list represented by the pair $(\pi,\nu^{\mathsf{yes}})$. The final distribution $\calD_{YES}$ is taken to be the distribution of the random pair $(f^{\mathsf{yes}}_{\pi},\mathcal{D}^{\mathsf{yes}}_{\pi})$, where $\pi$ is a uniformly random permutation over $[n]$.
 
	


\end{itemize}

By definition, we have that any function drawn from $\calD_{YES}$ will be a decision list. Correspondingly, functions from $\calD_{NO}$ will be far from any decision list.

\begin{lemma}
For any permutation $\pi$ over $[n]$ and for any linear threshold function $g:\{0,1\}^{n}\rightarrow\{0,1\}$, we have that 
	\[\Pr_{x \sim \calD^{\no}_{\pi}}[g(x) \not = {f}^{\no}_{\pi}(x)] \geq \frac{1}{4}. \]
\end{lemma}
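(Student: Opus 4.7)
The plan is to exploit an affine dependence within each 4-cycle in the construction to show that no linear threshold function can match $f_\pi^{\no}$ on all four edge-strings of any single cycle. Since $\calD_\pi^{\no}$ is uniform over all these edge-strings, disagreement on one point per cycle will yield the required constant probability.

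Fix a permutation $\pi$ and an LTF $g(x)=\mathbf{1}[\langle w,x\rangle\geq \theta]$. For each $k\in\{n/8,\ldots,n/4-1\}$, consider the four strings in the support of $\calD_\pi^{\no}$ coming from group $k$:
\[ a = e_{\pi(4k+1)}\vee e_{\pi(4k+2)},\quad b = e_{\pi(4k+2)}\vee e_{\pi(4k+3)},\quad c = e_{\pi(4k+3)}\vee e_{\pi(4k+4)},\quad d = e_{\pi(4k+4)}\vee e_{\pi(4k+1)}. \]
By \eqref{eq:f-no-1} and \eqref{eq:f-no-2} we have $f_\pi^{\no}(a)=f_\pi^{\no}(c)=0$ and $f_\pi^{\no}(b)=f_\pi^{\no}(d)=1$. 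The crucial observation is that $a+c = b+d$ as vectors in $\mathbb{R}^n$ (both equal $e_{\pi(4k+1)}+e_{\pi(4k+2)}+e_{\pi(4k+3)}+e_{\pi(4k+4)}$), so $\langle w,a\rangle+\langle w,c\rangle = \langle w,b\rangle+\langle w,d\rangle$. If $g$ agreed with $f_\pi^{\no}$ at all four points, then $\langle w,a\rangle,\langle w,c\rangle<\theta$ and $\langle w,b\rangle,\langle w,d\rangle\geq \theta$, giving $\langle w,a\rangle+\langle w,c\rangle<2\theta\leq\langle w,b\rangle+\langle w,d\rangle$, contradicting the identity. Therefore $g$ disagrees with $f_\pi^{\no}$ on at least one of $a,b,c,d$.

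Since the supports of different groups involve disjoint coordinates, the $4\cdot(n/8)=n/2$ strings described above are all distinct, and $\calD_\pi^{\no}$ is uniform over this support, placing mass $2/n$ on each. Summing the per-group lower bound over $k\in\{n/8,\ldots,n/4-1\}$ gives
\[ \Pr_{x\sim\calD_\pi^{\no}}\bigl[g(x)\neq f_\pi^{\no}(x)\bigr] \;\geq\; \frac{n}{8}\cdot \frac{2}{n} \;=\; \frac{1}{4}, \]
as desired. The argument is short and there is no real obstacle; the only thing to double-check is that the four strings $a,b,c,d$ in each group and across different groups really are distinct elements of the support (which they are, since the sets $\{\pi(4k+1),\ldots,\pi(4k+4)\}$ are disjoint across $k$), so that the uniform-mass accounting is correct.
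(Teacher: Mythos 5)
Your proof is correct and uses essentially the same argument as the paper: the paper phrases the contradiction via convexity of the halfspace at the midpoint $\tfrac{1}{2}(e_{\pi(4k+1)}+\cdots+e_{\pi(4k+4)})$, which is just a repackaging of your linear identity $a+c=b+d$. The accounting step ($n/8$ disagreements out of $n/2$ support points, each with mass $2/n$) also matches.
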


\begin{proof}

By \eqref{eq:f-no-1} and \eqref{eq:f-no-2}, it suffices to show that for each $k\in \{\frac{n}{8},\frac{n}{8}+1,\dots,\frac{n}{4}-1\}$, we cannot have
\begin{equation}\label{eq:LTF-1}
g(e_{\pi(4k+4)} \lor e_{\pi(4k+1)}) = g(e_{\pi(4k+2)}\lor e_{\pi(4k+3)})=1
\end{equation}
and
\begin{equation}\label{eq:LTF-2}
g(e_{\pi(4k+1)} \lor e_{\pi(4k+2)}) = g(e_{\pi(4k+3)}\lor e_{\pi(4k+4)})=0
\end{equation}
simultaneously. Assume on the contrary that both \eqref{eq:LTF-1} and \eqref{eq:LTF-2} hold for some $k\in \{\frac{n}{8},\frac{n}{8}+1,\dots,\frac{n}{4}-1\}$. Since $g$ is an linear threshold function, $g$ agrees with a halfspace function $h:\mathbb{R}^{n}\rightarrow\{0,1\}$. Let $x_{0}$ be the vector in $\mathbb{R}^{n}$ defined by 
$$x_{0}=\frac{1}{2}\left(e_{\pi(4k+1)}+e_{\pi(4k+2)}+e_{\pi(4k+3)}+e_{\pi(4k+4)}\right).$$
By convexity of halfspaces, \eqref{eq:LTF-1} implies $h(x_{0})=1$ and \eqref{eq:LTF-2} implies $h(x_{0})=0$, a contradiction.
\end{proof}

Note that the first $n/2$ rules of each decision list $f^{\yes}_{\pi}$ from $\calD_{YES}$ are all set to be 1. Intuitively, this ensures that any query of a random string of large Hamming weight would very likely get an answer of 1 and reveal little information. More formally, we make the following definition.

\begin{definition}
Suppose $\ALG$ is a deterministic algorithm running on a pair $(f,\calD)$. If it has taken samples $s^{(1)},\dots,s^{(m)}$ and made sequential queries $q^{(1)},\dots,q^{(j)}$, we call the $j$-th query a ``large'' query if
$$\left|\supp(q^{(j)})\setminus\supp\left(\bigvee_{i=1}^{m}s^{(i)}\vee\bigvee_{\ell=1}^{j-1}q^{(\ell)}\right)\right|\geq 10\log n.$$
\end{definition}
\begin{lemma}
\label{lem:unseen-vx-bound}
	Suppose that $\ALG$ is a deterministic algorithm that distinguishes with probability $5/6$ between $\calD_{YES}$ and $\calD_{NO}$ using at most $m$ samples and $M$ queries, where $m+M\leq\frac{n}{30\log n}$. It then follows that there is an algorithm $\ALG'$ that distinguishes between $\calD_{YES}$ and $\calD_{NO}$ with probability $2/3$ such that $\ALG'$ never makes ``large'' queries. Moreover, $\ALG'$ uses at at most $m$ samples and $M$ queries.
\end{lemma}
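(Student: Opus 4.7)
The plan is to define $\ALG'$ as the natural simulation of $\ALG$: every sample is forwarded to the oracle, and for each query $q^{(j)}$ that $\ALG$ attempts, $\ALG'$ first checks whether $|\supp(q^{(j)}) \setminus U_{j-1}| \geq 10\log n$, where $U_{j-1}$ is the union of supports of all samples drawn and all previously attempted queries in the simulation. If so, $\ALG'$ returns the value $1$ to the simulated $\ALG$ without calling the oracle; otherwise it forwards the query and relays the true answer. By construction $\ALG'$ uses at most $m$ samples and at most $M$ queries, and every query it actually makes is not large.

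For the error analysis, the views of $\ALG$ and its simulated copy inside $\ALG'$ agree until the first step at which $\ALG'$ substitutes $1$ for a query whose true answer is $0$. Hence
\[
\Pr[\ALG'\text{ disagrees with }\ALG]\;\leq\;\sum_{j=1}^{M}\Pr\bigl[q^{(j)}\text{ is large and }f(q^{(j)})=0\bigr].
\]
In both $\calD_{YES}$ and $\calD_{NO}$ we have $\nu_i=1$ for every $i\in\{1,\ldots,n/4-1\}$, so $f(q)=0$ forces $\pi^{-1}(\supp(q))\cap\{1,\ldots,n/4-1\}=\emptyset$; writing $E:=\pi(\{1,\ldots,n/4-1\})$ and $N_j:=\supp(q^{(j)})\setminus U_{j-1}$, this gives $E\cap N_j=\emptyset$. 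So it suffices to show that for each $j$, conditional on any view $V_{j-1}$ under which $q^{(j)}$ is a large query, $\Pr[E\cap N_j=\emptyset\mid V_{j-1}]\leq n^{-\Omega(1)}$. Combined with $M\leq n/(30\log n)$, a union bound then bounds the disagreement probability by $o(1)\leq 1/6$, so $\Pr[\ALG'\text{ correct}]\geq \Pr[\ALG\text{ correct}]-1/6\geq 2/3$ on either distribution.

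The crux is the conditional bound. Since no index of $\bar U_{j-1}:=[n]\setminus U_{j-1}$ has appeared in any sample or earlier query, the view $V_{j-1}$ is invariant under relabelings of $\bar U_{j-1}$; consequently, conditional on $V_{j-1}$, the restriction $E\cap \bar U_{j-1}$ is uniformly distributed among the $s$-subsets of $\bar U_{j-1}$, for its (posterior-random) size $s=|E\cap \bar U_{j-1}|$. A hypergeometric calculation then gives
\[
\Pr[E\cap N_j=\emptyset\mid V_{j-1}]\;\leq\;\Bigl(1-\tfrac{|N_j|}{|\bar U_{j-1}|}\Bigr)^{s}\;\leq\;\exp\Bigl(-\Omega\bigl(s|N_j|/|\bar U_{j-1}|\bigr)\Bigr),
\]
which is $n^{-\Omega(1)}$ as soon as $|N_j|\geq 10\log n$ and $s/|\bar U_{j-1}|$ is bounded below by a positive constant. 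The main obstacle will be justifying this last ratio bound: the samples drawn from $\calD^{\yes/\no}_\pi$ are always supported on $[n]\setminus E$ and hence cannot decrease the posterior fraction of $E$ in $\bar U_{j-1}$, whereas queries, being more delicate, must be handled by noting that a $1$-returning query yields only weak ambiguous evidence about which of its support indices lies in $E$; combining these observations with an exchangeability and concentration argument should give $s\geq c|\bar U_{j-1}|\cdot|E|/n$ with high probability for some absolute constant $c>0$, which suffices.
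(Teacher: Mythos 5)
Your overall strategy is the same as the paper's: simulate $\ALG$, answer every large query with $1$ without calling the oracle, and bound the error by the probability that a large query's true answer is $0$. (The paper phrases this as a telescoping chain $\ALG_0,\ldots,\ALG_M$ that prunes large queries one depth at a time, but that is equivalent to your direct coupling.) The gap is precisely at the point you yourself flag as the "main obstacle," and the argument you sketch does not close it. With $E=\pi(\{1,\ldots,n/4-1\})$, the deterministic bound $s=|E\cap\bar U_{j-1}|\geq|E|-|U_{j-1}|$ is vacuous, because $|U_{j-1}|$ can be as large as $(m+M)\cdot 10\log n\leq n/3>n/4-1$. You are therefore forced to control the \emph{posterior} size of $E\cap\bar U_{j-1}$, and your heuristics (samples avoid $E$ so "cannot decrease" its fraction; a $1$-answer is "weak ambiguous evidence") are not a proof: a $1$-answer to a query $q$ \emph{does} condition $\pi$ on a nontrivial event involving $E$, and establishing the needed concentration would be considerably harder than the lemma itself.

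The paper sidesteps all of this by taking $E=\pi([n/2])$: in both $\calD_{YES}$ and $\calD_{NO}$ the first $n/2$ rules are all $1$-rules (the paper says so explicitly just before the lemma), so $f(q)=0$ already forces $\supp(q)\cap\pi([n/2])=\emptyset$. The complement $[n]\setminus E$ then has size exactly $n/2$. Using the exchangeability over $\bar U_{j-1}$ that you correctly identified, one gets the purely deterministic bound
$$\Pr\bigl[E\cap N_j=\emptyset\mid V_{j-1}\bigr]\;\leq\;\Bigl(\frac{|\bar U_{j-1}|-s}{|\bar U_{j-1}|}\Bigr)^{|N_j|}\;\leq\;\Bigl(\frac{n/2}{n-|U_{j-1}|}\Bigr)^{|N_j|}\;\leq\;(3/4)^{10\log n}\;\leq\; n^{-1},$$
since $|\bar U_{j-1}|-s\leq|[n]\setminus E|=n/2$ always and $|U_{j-1}|\leq n/3$; no analysis of the posterior distribution of $s$ is required. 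With this replacement the rest of your argument (union bound over the at most $M$ queries, losing at most $M/n\leq 1/6$ in success probability on each of $\calD_{YES}$ and $\calD_{NO}$) is correct and completes the proof.
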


\begin{proof}
	Since $\ALG$ is a deterministic algorithm, conditioned on the samples $S=(s^{(1)},\dots,s^{(m)})$ it receives, all subsequent queries it will possibly make can be represented by a decision tree. We perform a sequence of transformations to remove all ``large'' queries in the decision trees. Let $\ALG_0 := \ALG$. For each $j\in\{1,\dots,M\}$, we get $\ALG_{j}$ by removing all ``large'' queries at depth $j$ of the decision trees of $\ALG_{j-1}$ (the root query, i.e. $q^{(1)}$, is considered to be at depth 1) and assuming the answers to these queries are all $1$. Since the decision trees of $\ALG$ have depths at most $M$, it follows that $\ALG_{M}$ does not make any bad queries.     
	
	Now we claim that for each $j\in\{1,\dots,M\}$,
    \begin{equation}\label{eq:alg_convert_YES}
		\Pr_{\substack{(f,\calD)\sim \calD_{YES}\\ S\sim \calD}}[\ALG_{j}(f,S) \text{ accepts}] \geq \Pr_{\substack{(f,\calD)\sim \calD_{YES}\\ S\sim \calD}} [\ALG_{j-1}(f,S) \text{ accepts}] - n^{-1}. 	
    \end{equation}
    
    For any underlying pair $(f,\calD)$ and sequence of samples $S=(s^{(1)},\dots,s^{(m)})$, by definition $\ALG_{j}(f,S)\neq \ALG_{j-1}(f,S)$ only if $\ALG_{j-1}$ makes a ``large'' query at depth $j$ and the gets answer 0. Let $P=\left(S, q^{(1)}, f(q^{(1)}), \dots, q^{(j-1)},f(q^{(j-1)}), q^{(j)}\right)$ be a decision-tree path of $\ALG_{j-1}$ such that $|\supp(q^{(j)})\setminus T|\geq 10\log n$, where $T=\supp\left(\bigvee_{i=1}^{m}s^{(i)}\vee\bigvee_{\ell=1}^{j-1}q^{(\ell)}\right)$ (note that this means $q^{(j)}$ is a ``large'' query).   Conditioned on $\ALG_{j-1}$ following the decision tree path $P$, the distribution of the random set $\{\pi(1),\pi(2),\dots,\pi(n/2)\}\setminus T$ is symmetric over all indices in $[n]\setminus T$ (recall that $\pi$ is a uniformly random permutation over $[n]$ that determines the pair $(f^{\yes}_{\pi},\calD^{\yes}_{\pi})$). Furthermore, since $\ALG_{j-1}$ does not make ``large'' queries at depth less than or equal to $j-1$, it follows that $|T|\leq 2m+(j-1)\cdot 10\log n\leq (m+M)\cdot 10\log n\leq n/3$. Combining the observations in this paragraph, we have
    \begin{align*}
    &\qquad\Pr_{\substack{(f,\calD)\sim \calD_{YES}\\ S\sim \calD}}\left[f(q^{(j)})=0\middle|\ALG_{j-1}(f,S)\text{ follows }P\right]\\
    &\leq\Pr_{\substack{\pi\sim \frak{S}_{n}\\S\sim \calD^{\yes}_{\pi}}}\left[\supp(q^{(j)})\cap \left\{\pi(1),\pi(2),\dots,\pi(n/2)\right\}=\emptyset\middle|\ALG_{j-1}(f^{\yes}_{\pi},S)\text{ follows }P\right]\\
    &\leq\left(\frac{n/2}{n-|T|}\right)^{\left|\supp(q^{(j)})\setminus T\right|}\leq \left(\frac{3}{4}\right)^{10\log n}\leq n^{-1},
    \end{align*}
    where $\frak{S}_{n}$ is the set of all permutations over $[n]$. By the observation at the beginning of the paragraph and the arbitrariness of $P$, we conclude that
    $$\Pr_{\substack{(f,\calD)\sim \calD_{YES}\\ S\sim \calD}}\left[\ALG_{j}(f,S)\neq \ALG_{j-1}(f,S)\right]\leq n^{-1},$$
    proving the claim \eqref{eq:alg_convert_YES}. Summing \eqref{eq:alg_convert_YES} over $j$ yields
    \[\Pr_{\substack{(f,\calD)\sim \calD_{YES}\\ S\sim \calD}}[\ALG_{k}(f,S) \text{ accepts}] \geq \Pr_{\substack{(f,\calD)\sim \calD_{YES}\\ S\sim \calD}} [\ALG_{k}(f,S) \text{ accepts}] - n^{-1}\cdot M\geq \frac{2}{3}.\]
    In an entirely similar way, we have
    \[\Pr_{\substack{(f,\calD)\sim \calD_{NO}\\ S\sim \calD}}[\ALG_{k}(f,S) \text{ rejects}] \geq \Pr_{\substack{(f,\calD)\sim \calD_{NO}\\ S\sim \calD}} [\ALG_{k}(f,S) \text{ rejects}]- n^{-1}\cdot M\geq \frac{2}{3}.\qedhere\]
\end{proof}

The reason we want to exclude ``large'' queries is that we want to make sure that in any set of the form $\{\pi(4k+1),\pi(4k+2),\pi(4k+3),\pi(4k+4)\}$, no more than two indices will appear in support of any query. To make it formal, we make the following definition.

\begin{definition}
For any fixed sequence $S=(s^{(1)},s^{(2)},\dots,s^{(m)})$ of samples and fixed sequence $Q=(q^{(1)},\dots,q^{(M)})$ of queries in $\{0,1\}^{n}$, let $\Pi(S,Q)$ denote the set of all permutations $\pi$ over $[n]$ such that
$$\left|\supp\left(\bigvee_{i=1}^{m}s^{(i)}\vee\bigvee_{j=1}^{M}q^{(j)}\right)\cap\{\pi(4k+1),\pi(4k+2),\pi(4k+3),\pi(4k+4)\}\right|\leq 2$$
for each $k\in \{\frac{n}{8},\frac{n}{8}+1,\dots,\frac{n}{4}-1\}$. We further define two subsets of $\Pi_{S,Q}$
$$\Pi^{\no}(S,Q):=\left\{\pi\in\Pi(S,Q):\calD^{\no}_{\pi}(s^{(i)})>0\text{ for all }i\in [m]\right\}$$
and
$$\Pi^{\yes}(S,Q):=\left\{\pi\in\Pi(S,Q):\calD^{\yes}_{\pi}(s^{(i)})>0\text{ for all }i\in [m]\right\}.$$
\end{definition}

\begin{lemma}\label{lem:no_more_than_2}
Suppose $\ALG$ is a deterministic algorithm using at most $m$ samples and $M$ queries, where $m+M\leq \frac{\sqrt{n}}{100\log n}$ and never makes ``large'' queries. Let $\calQ(f,S)$ be the sequence of queries made by $\ALG$ when running $\ALG(f,S)$. We then have (here $\frak{S}_{n}$ denotes the set of all permutations over $[n]$)
$$\Pr_{\substack{\pi\sim \frak{S}_{n}\\S\sim \calD^{\yes}_{\pi}}}\Big[\pi\in\Pi\big(S,\calQ(f^{\yes}_{\pi},S)\big)\Big]\geq \frac{4}{5}.$$
\end{lemma}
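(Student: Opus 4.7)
Here is my plan. First, I will bound the combined support $T := \supp\big(\bigvee_i s^{(i)}\vee \bigvee_j q^{(j)}\big)$, and then I will union-bound the probability that some group $G_k := \{\pi(4k+1),\ldots,\pi(4k+4)\}$ meets $T$ in at least three elements. The support bound is immediate from the no-large-queries hypothesis: each sample contributes at most $2$ indices and each query adds at most $10\log n$ new ones, yielding $|T|\le 2m+10M\log n\le 10(m+M)\log n\le \sqrt{n}/10$.

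Next I will decompose the bad event $\{\pi\notin\Pi(S,\calQ(f^{\yes}_\pi,S))\}$ into three subcases: (a) two different samples land in the same group; (b) a query introduces a third index into a group that already contains two sampled indices; (c) a single query contributes three or more new indices to some previously untouched group. Cases (a) and (c) reduce to short counting arguments. Conditional on $\pi$, the group index of the $i$-th sample is uniform on the $n/8$ active groups and IID across $i$, so a birthday bound gives $\Pr[\textrm{(a)}]\le \binom{m}{2}\cdot 8/n=o(1)$ since $m\le \sqrt{n}/(100\log n)$. For (c), any three fixed indices lie in a common active group with probability $O(1/n^2)$ (only $n/8$ out of roughly $\binom{n}{3}$ triples are contained in an active $4$-group), so Markov gives $\Pr[\textrm{(c)}]\le M\binom{10\log n}{3}\cdot O(1/n^2)=o(1)$.

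The heart of the argument is case (b), which I will handle by deferred decisions. The key claim is that, conditional on the algorithm's history through step $j-1$, the posterior distribution of $\pi$ on $[n]\setminus T_{j-1}$ is exchangeable. This is because every historical constraint on $\pi$ is captured by $\pi^{-1}|_{T_{j-1}}$: each sample $s^{(i)}$ fixes $\pi$ on two specific positions (both mapping into $T_{j-1}$), and each past query answer equals $\nu^{\yes}_{\min_\pi(q^{(\ell)})}$, which depends on $\pi^{-1}$ only through its restriction to $\supp(q^{(\ell)})\subseteq T_{j-1}$. Hence swapping any two indices $u,v\in [n]\setminus T_{j-1}$ in $\pi$ preserves the entire history, and the $2m$ ``hidden'' indices of the sampled groups (each sampled group has two unrevealed members) are uniformly distributed among $[n]\setminus T_{j-1}$. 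Each of the at most $10\log n$ new indices in a query's support therefore lies on a hidden index with conditional probability at most $2m/(n-|T_{j-1}|)\le 4m/n$; summing over $M$ queries gives $\Pr[\textrm{(b)}]\le 40Mm\log n/n = O(1/\log n)=o(1)$.

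Combining the three subcase bounds gives $\Pr[\pi\notin\Pi]=o(1)\le 1/5$ for all large $n$, completing the proof. The main technical obstacle is the exchangeability step in (b); the argument above addresses it by noting that every query output is a deterministic function of $\pi^{-1}|_{T_{j-1}}$, so no information about positions of unrevealed indices is ever leaked and the uniform posterior on the unrevealed block of $\pi$ is preserved throughout the run.
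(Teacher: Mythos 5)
Your argument shares the paper's key structural insight — that conditional on the history, the posterior of $\pi$ on unrevealed positions is exchangeable, so new query indices ``land uniformly'' over the unrevealed part of $[n]$. The bound $|T|\le 10(m+M)\log n\le \sqrt{n}/10$ is also identical, and your cases (a) and (b) parallel the paper's two steps (sample collisions, then sequential bound on queries). However, your case decomposition is \emph{not exhaustive}, and this is a genuine gap.

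Observe that a group $G_k$ can reach $|T\cap G_k|\ge 3$ with \emph{no} samples and with the offending indices contributed by \emph{several different queries}, each of which adds fewer than three new indices to $G_k$ — for example, $q^{(j_1)}$ introduces two fresh indices into $G_k$ and then a later $q^{(j_2)}$ introduces a third. This scenario is not case (a) (no two samples collide), not case (b) (the group contains no sampled indices, so your $2m$ ``hidden members of sampled groups'' argument never fires), and not case (c) (no single query deposits three or more new indices into $G_k$). Your bound for (c) is a union bound over triples $\binom{10\log n}{3}$ \emph{within a single query}; triples spread across queries are simply not counted. The paper avoids this by using a single sequential bound: at each query $j$, conditioned on the history, every new index $u\in\supp(q^{(j)})\setminus T_{j-1}$ lands in a group already touched by $T_{j-1}$ with probability at most $O(|T_{j-1}|/(n-|T_{j-1}|))$, regardless of whether those earlier indices came from samples or from prior queries. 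This unified treatment covers the multi-query case automatically. Your deferred-decisions machinery is exactly the right tool to close the gap — you would replace ``hidden indices of the $m$ sampled groups'' in case (b) with ``unrevealed positions of any group that already has two indices in $T_{j-1}$'' (at most $|T_{j-1}|$ such positions), yielding a per-new-index failure probability of $O(|T|/(n-|T|))=O(1/\sqrt{n})$ and a union bound of $O(M\log n/\sqrt{n})=O(1)$ small constant — but as written the proof does not do this, so the decomposition must be repaired.
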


\begin{proof}
For each $k\in\{\frac{n}{8},\frac{n}{8}+1,\dots,\frac{n}{4}-1\}$, we call the set $\{\pi(4k+1),\pi(4k+2),\pi(4k+3),\pi(4k+4)\}$ the $k$-th group of indices. The probability that no two samples are supported on the same group is
\begin{equation}\label{eq:alg_norep_init}
\Pr_{\substack{\pi\sim \frak{S}_{n}\\S\sim \calD^{\yes}_{\pi}}}\Big[\pi\in\Pi(S,\emptyset)\Big]=\frac{\frac{n}{8}\cdot\left(\frac{n}{8}-1\right)\cdot\dots\cdot\left(\frac{n}{8}-m+1\right)}{\left(\frac{n}{8}\right)^{m}}\geq \left(1-\frac{8m}{n}\right)^{m}\geq 1-\frac{8m^{2}}{n}\geq \frac{9}{10}.
\end{equation}
Let $\calQ(f^{\yes}_{\pi},S)=(q^{(1)},\dots,q^{(M)})$. Fix any $j\in\{1,\dots,M\}$ and let $T=\supp\left(\bigvee_{i=1}^{m}s^{(i)}\vee\bigvee_{\ell=1}^{j-1}q^{(\ell)}\right)$. For any fixed $k\in \{\frac{n}{8},\frac{n}{8}+1,\dots,\frac{n}{4}-1\}$, conditioned on $T$, all indices in $[n]\setminus T$ has the same probability of being in the $k$-th group of indices. It is easy to see from this fact that
\begin{equation}\label{eq:alg_norep_iter}
\Pr_{\substack{\pi\sim \frak{S}_{n}\\S\sim \calD^{\yes}_{\pi}}}\Big[\pi\in\Pi\big(S,(q^{(1)},\dots,q^{(j)})\big)\Big|\pi\in\Pi\big(S,(q^{(1)},\dots,q^{(j-1)})\big)\Big]\geq \left(1-\frac{4|T|}{n-|T|}\right)^{\left|\supp(q^{(j)})\setminus T\right|}.
\end{equation}
By the assumption that every query is not large, we always have $|\supp(q^{(j)})|\leq 10\log n$ and  $|T|\leq (m+M)\cdot 10\log n\leq \frac{\sqrt{n}}{10}$. Plugging into \eqref{eq:alg_norep_iter} and taking product over $j$ yields
$$\Pr_{\substack{\pi\sim \frak{S}_{n}\\S\sim \calD^{\yes}_{\pi}}}\Big[\pi\in\Pi\big(S,(q^{(1)},\dots,q^{(M)})\Big|\pi\in\Pi(S,\emptyset)\Big]\geq \left(1-\frac{5\sqrt{n}/10}{n}\right)^{\sqrt{n}/10}\geq \frac{9}{10}.$$
Taking product with \eqref{eq:alg_norep_init} yields the conclusion.
\end{proof}

Finally, the following lemma establishes a connection between the YES case and the NO case, a central ingredient in our lower bound proof.

\begin{lemma}\label{lem:YES_NO_connection}
Fix any sequence of Boolean values $W=(w_{1},\dots,w_{M})$. The set
$$\Pi^{\no}(S,Q,W):=\left\{\pi\in \Pi^{\no}(S,Q):f^{\no}_{\pi}(q^{(j)})=w_{j}\text{ for all }j\in [M]\right\}$$
has the same cardinality as the set
$$\Pi^{\yes}(S,Q,W):=\left\{\pi\in \Pi^{\yes}(S,Q):f^{\yes}_{\pi}(q^{(j)})=w_{j}\text{ for all }j\in [M]\right\}.$$
\end{lemma}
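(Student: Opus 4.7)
The plan is to prove the cardinality identity by a product factorization over the active groups. Decompose each permutation $\pi\in\frak{S}_n$ into three pieces: a bijection $\rho_0$ from the passive positions $\{1,\ldots,n/2\}$ onto a subset $A_0\subset[n]$ of size $n/2$; the ordered family of $4$-subsets $V_k:=\pi(\{4k+1,\ldots,4k+4\})$, for $k=n/8,\ldots,n/4-1$, partitioning $[n]\setminus A_0$; and, for each $k$, the internal bijection $\{4k+1,\ldots,4k+4\}\to V_k$. By construction $\nu^{\no}_i=\nu^{\yes}_i$ on every passive position $i\le n/2$ and no sample $s^{(i)}$ has support outside $\bigcup_k V_k$, so every query $q^{(j)}$ whose support meets $A_0$ (or equals $0^n$) has the same value in both worlds, depending only on $\rho_0$; every query supported in $\bigcup_k V_k$ is governed by the internal arrangement of the lowest-indexed active group $V_{k^*(j)}$ it hits, where $k^*(j)$ is determined by $\rho_0$ and $(V_k)$; and every sample $s^{(i)}=e_u\vee e_v$ constrains only the internal arrangement of the unique group $k$ with $\{u,v\}\subseteq V_k$. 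Consequently,
$$|\Pi^{\no}(S,Q,W)|=\sum_{\rho_0,(V_k)}\prod_{k}N_{k}^{\no}, \qquad |\Pi^{\yes}(S,Q,W)|=\sum_{\rho_0,(V_k)}\prod_{k}N_{k}^{\yes},$$
where the two sums range over the \emph{same} set of admissible $(\rho_0,(V_k))$---namely those for which each sample's two endpoints lie in a common $V_k$ and $w_j=1$ for every query whose support meets $A_0$ or is empty---and $N_k^{\no}$ (resp.\ $N_k^{\yes}$) counts the valid internal arrangements of $V_k$ subject to the sample and query constraints localized to group $k$.

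It therefore suffices to show $N_k^{\no}=N_k^{\yes}$ for each group $k$. Let $U:=\bigcup_i\supp(s^{(i)})\cup\bigcup_j\supp(q^{(j)})$; by $\pi\in\Pi(S,Q)$ we have $|V_k\cap U|\le 2$. When $|V_k\cap U|=0$ both counts equal $24$. When $|V_k\cap U|=1$, say $\{u\}$, the relevant queries constrain only the singleton value $f(e_u)=\nu_{p_u}$, and since $\nu^{\no}=(0,1,0,1)$ and $\nu^{\yes}=(0,1,1,0)$ each contain two $0$'s and two $1$'s among the four positions of the group, both sides count $12$ arrangements per desired value. The main case is $|V_k\cap U|=2$, say $V_k\cap U=\{u,v\}$: an internal arrangement is parameterized by the ordered pair $(p_u,p_v)$ of positions of $u,v$ together with a binary choice for the two unseen elements, and the constraints are (optionally, when some sample has support $\{u,v\}$) that $(p_u,p_v)$ lie at a NO-edge (resp.\ YES-edge) position pair, together with whatever values queries with $k^*(j)=k$ demand of the triple $(f(e_u),f(e_v),f(e_u\vee e_v))$.

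The crux is a direct combinatorial identity, verified by enumerating the $24$ arrangements: in both the NO and YES worlds, the induced map from arrangements to value triples is uniformly distributed over the same six ``consistent'' triples $\{(0,0,0),(0,1,0),(0,1,1),(1,0,0),(1,0,1),(1,1,1)\}$ (four arrangements per triple), and upon restricting to the eight arrangements compatible with a sample, becomes uniform over the four triples $\{(0,1,0),(0,1,1),(1,0,0),(1,0,1)\}$ (two arrangements per triple). I expect the main obstacle to be verifying the NO side of this identity: without the ``special rule'' that reassigns $f^{\no}_\pi(x)=1$ on inputs $x$ with $\min$-position $4k+1$ and group-$k$ intersection pattern exactly $\{\pi(4k+1),\pi(4k+4)\}$, the NO distribution over triples is \emph{not} uniform---triples $(0,1,0)$ and $(1,0,0)$ would each have three arrangements while $(0,1,1)$ and $(1,0,1)$ would have only one. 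The special rule is precisely the correction that moves the two arrangements with $(p_u,p_v)\in\{(1,4),(4,1)\}$ between triples to restore uniformity and match the YES distribution exactly. Once this per-group identity is in hand, the product/sum factorization above yields the desired cardinality identity.
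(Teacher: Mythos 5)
Your proof is correct, and it takes a genuinely different (though related) route from the paper's. The paper's proof constructs an explicit bijection $\pi \mapsto \pi' = \pi\circ\tau_{n/8}\circ\cdots\circ\tau_{n/4-1}$ from $\Pi^{\yes}(S,Q)$ to $\Pi^{\no}(S,Q)$, where each $\tau_k$ is a transposition or a $3$-cycle of positions $\{4k+1,\ldots,4k+4\}$ chosen according to the intersection pattern of $T=\supp(\bigvee_i s^{(i)}\vee\bigvee_j q^{(j)})$ with the $k$-th group, and then shows $f^{\yes}_\pi(x)=f^{\no}_{\pi'}(x)$ for all $x$ with $\supp(x)\subset T$. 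Your argument instead factorizes the count over groups and verifies per-group equality $N_k^{\no}=N_k^{\yes}$ by enumerating the distribution of value triples $\bigl(f(e_u),f(e_v),f(e_u\vee e_v)\bigr)$ over the $24$ internal arrangements. I verified your enumeration: both worlds produce exactly the six triples $\{(0,0,0),(0,1,0),(0,1,1),(1,0,0),(1,0,1),(1,1,1)\}$ with four arrangements each, and restricting to the ordered pairs $(p_u,p_v)$ that are NO-edges (resp.\ YES-edges) gives the same four triples $\{(0,1,0),(0,1,1),(1,0,0),(1,0,1)\}$ with equal multiplicity; your remark about the corrective role of the NO ``special rule'' on the $\{1,4\}$ intersection pattern is also exactly right. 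Both proofs rest on the same localization insight, and in fact the paper's group-local $\tau_k$'s can be read as exhibiting an explicit matching that witnesses your per-group count equality. The paper's bijection is stronger (it preserves $f$ on the full cube $\{x:\supp(x)\subset T\}$, which is slightly more than what the lemma needs), while your approach is a purely mechanical counting argument that avoids guessing the right correspondence. Two minor presentational issues, neither mathematically substantive: your phrase ``eight arrangements compatible with a sample'' should be ``eight ordered pairs $(p_u,p_v)$'' (giving sixteen arrangements, four per triple), to stay consistent with your own parameterization; and the admissibility of $(\rho_0,(V_k))$ should also include the constraint $|T\cap V_k|\le 2$ from $\pi\in\Pi(S,Q)$, which you implicitly use but do not list.
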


\begin{proof}
Let $T=\supp\left(\bigvee_{i=1}^{m}s^{(i)}\vee\bigvee_{j=1}^{M}q^{(j)}\right)$. It suffices to construct a bijective map $\pi\mapsto\pi'$ from $\Pi^{\yes}(S,Q)$ to $\Pi^{\no}(S,Q)$ such that for all $\pi\in\Pi^{\yes}(S,Q)$ we have

\begin{equation}\label{eq:bijection_property}
f^{\yes}_{\pi}(x)=f^{\no}_{\pi'}(x)\text{ holds for all }x\in\{0,1\}^{n}\text{ such that }\supp(x)\subset T.
\end{equation}

We fix a $\pi\in\Pi^{\yes}(S,Q)$. For each $k\in \{\frac{n}{8},\frac{n}{8}+1,\dots,\frac{n}{4}-1\}$, we divide into the following cases based on the intersection of $T$ with $\{\pi(4k+1),\pi(4k+2),\pi(4k+3),\pi(4k+4)\}$. Note that since $\pi\in\Pi(S,Q)$, we know that this intersection has cardinality at most 2.

Case 1: The intersection is a subset of $\{\pi(4k+1),\pi(4k+2),\pi(4k+4)\}$, or is a subset of $\{\pi(4k+2),\pi(4k+3)\}$. In this case, we define a permutation $\tau_{k}$ over $[n]$ to be the transposition of $4k+3$ and $4k+4$.

Case 2: The intersection is $\{\pi(4k+1),\pi(4k+3)\}$, or is $\{\pi(4k+3),\pi(4k+4)\}$. In this case, we define the permutation $\tau_{k}$ over $[n]$ to be the 3-cycle that maps $4k+3$ to $4k+1$, maps $4k+1$ to $4k+4$ and maps $4k+4$ to $4k+3$.

We now define $\pi'$ to be $\pi\circ \tau_{\frac{n}{8}}\circ\tau_{\frac{n}{8}+1}\circ\dots\circ\tau_{\frac{n}{4}-1}$. It is easy to see that $\pi\mapsto\pi'$ is a well-defined bijection between $\Pi^{\yes}(S,Q)$ and $\Pi^{\no}(S,Q)$. To verify \eqref{eq:bijection_property}, first note that $f^{\yes}_{\pi}(x)=1=f^{\no}_{\pi'}(x)$ if $\supp(x)\cap\{\pi(1),\pi(2),\dots,\pi(n/2)\}\neq\emptyset$. In the nontrivial case $\supp(x)\cap\{\pi(1),\pi(2),\dots,\pi(n/2)\}=\emptyset$ and $x\neq 0^{n}$, let $k$ be the smallest integer in $\{\frac{n}{8},\frac{n}{8}+1,\dots,\frac{n}{4}-1\}$ such that $\supp(x)\cap\{\pi(4k+1),\pi(4k+2),\pi(4k+3),\pi(4k+4)\}\neq\emptyset$. Let $y=x\wedge(e_{\pi(4k+1)}\vee e_{\pi(4k+2)}\vee e_{\pi(4k+3)}\vee e_{\pi(4k+4)})$. Since $\left|T\cap\{\pi(4k+1),\pi(4k+2),\pi(4k+3),\pi(4k+4)\}\right|\leq 2$, we have $|\supp(y)|\leq 2$. It is then straightforward to verify that $f^{\yes}_{\pi}(y)=f^{\no}_{\pi'}(y)$, and hence we have $f^{\yes}_{\pi}(x)=f^{\no}_{\pi'}(x)$.
\end{proof}

We are finally ready to prove our lower bound theorem.

\begin{theorem}\label{thm:lower_bound}
Let $\varepsilon\leq \frac{1}{4}$ be a fixed positive parameter. Assume that $\ALG$ is a randomized algorithm that for any function $f:\{0,1\}^{n}\rightarrow\{0,1\}$ and any distribution $\calD$ over $\{0,1\}^{n}$, takes at most $m$ samples from $\calD$ and at most $M$ queries to $f$, and achieves the following:
\begin{itemize}[itemsep = 0pt]
\item If $f$ is a monotone decision list then $\Pr_{\ALG,\, S\sim \calD}\left[\ALG(f,S)=1\right]\geq 5/6$; and,
\item If $f$ is $\varepsilon$-far from linear threshold functions under $\calD$ then $\Pr_{\ALG,\, S\sim \calD}\left[\ALG(f,S)=0\right]\geq 5/6$.
\end{itemize}
Then $m+M=\Omega(\sqrt{n}/\log n)$.
\end{theorem}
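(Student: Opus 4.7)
The plan is to invoke Yao's minimax principle and reduce to showing that no deterministic tester $\ALG$ with budget $m+M=o(\sqrt{n}/\log n)$ can distinguish $\calD_{YES}$ from $\calD_{NO}$ with probability $5/6$. This suffices because every $(f,\calD)$ in the support of $\calD_{YES}$ has $f$ a monotone decision list (hence an LTF), while every $(f,\calD)$ in the support of $\calD_{NO}$ has $f$ being $1/4$-far from every LTF under $\calD$, by the lemma proved just before Lemma \ref{lem:unseen-vx-bound}. Since decision lists are a subclass of linear threshold functions, ``$\eps$-far from LTFs'' is a stronger promise than ``$\eps$-far from decision lists'', so a lower bound in this promise model yields the theorem.

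So assume, for contradiction, that such an $\ALG$ exists with $m+M\le\eta\sqrt{n}/\log n$ for a sufficiently small absolute constant $\eta>0$ (in particular, well under the threshold $n/(30\log n)$ of Lemma \ref{lem:unseen-vx-bound}). I would first apply Lemma \ref{lem:unseen-vx-bound} to obtain a deterministic tester $\ALG'$ with the same sample/query budget that succeeds with probability at least $2/3$ on both sides and never makes a ``large'' query. Then, by Lemma \ref{lem:no_more_than_2} applied to the YES distribution, and by the directly analogous balls-into-bins argument applied to the NO distribution (the proof of \Cref{lem:no_more_than_2} only uses that each sample hits exactly two of $n/2$ equally likely indices per group, which is the same under $\calD^{\no}_\pi$), the event $\mathcal{E}$ that $\pi\in\Pi(S,\calQ(f,S))$ holds with probability at least $1-\delta$ under both $\calD_{YES}$ and $\calD_{NO}$, where $\delta\to 0$ as $\eta\to 0$.

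The heart of the argument is then a transcript-equivalence computation based on Lemma \ref{lem:YES_NO_connection}. Because $\ALG'$ is deterministic, for each sample sequence $S$ and query-answer vector $W$, the query sequence $Q=Q(S,W)$ is forced. Under $\calD_{YES}$, the probability of observing transcript $(S,Q,W)$ with $\pi\in\Pi(S,Q)$ equals $\frac{|\Pi^{\yes}(S,Q,W)|}{n!}\cdot(2/n)^m$, since every sample valid for $\calD^{\yes}_\pi$ has mass $2/n$; the corresponding NO-side probability is $\frac{|\Pi^{\no}(S,Q,W)|}{n!}\cdot(2/n)^m$. By Lemma \ref{lem:YES_NO_connection} these are equal, so conditional on $\mathcal{E}$ the transcript distribution, and therefore the acceptance probability of $\ALG'$, coincides in the two cases. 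Hence
\[
\bigl|\Pr_{\calD_{YES}}[\ALG'\text{ accepts}]-\Pr_{\calD_{NO}}[\ALG'\text{ accepts}]\bigr|\ \le\ 2\delta,
\]
while the success guarantee forces this gap to be at least $1/3$. Choosing $\eta$ small enough that $2\delta<1/3$ yields the contradiction.

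The routine parts are Yao's principle and reverifying Lemma \ref{lem:no_more_than_2} for the NO side; the conceptual pivot is the factorization of transcript probabilities that lets the purely combinatorial bijection of Lemma \ref{lem:YES_NO_connection} translate into an information-theoretic indistinguishability statement. The main obstacle I anticipate is verifying carefully that when $\ALG'$ (which fabricates answers to suppressed ``large'' queries) is analyzed, the remaining interaction truly only depends on the transcript $(S,Q,W)$ at non-large queries, so that the posterior over $\pi$ restricted to $\mathcal{E}$ is governed exactly by the sizes $|\Pi^{\yes}(\cdot)|$ and $|\Pi^{\no}(\cdot)|$ that the bijection equates.
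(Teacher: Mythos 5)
Your proposal is correct and follows essentially the same approach as the paper: Yao's minimax principle, eliminating large queries via Lemma~\ref{lem:unseen-vx-bound}, conditioning on the event $\pi\in\Pi(S,Q)$ via Lemma~\ref{lem:no_more_than_2}, and the transcript-counting bijection of Lemma~\ref{lem:YES_NO_connection}. The only (harmless) departure is that you re-derive the high-probability bound for $\pi\in\Pi$ on the NO side directly, whereas the paper gets it for free from the bijection by summing the path-probability identity \eqref{eq:path_connect} over all paths.
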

\begin{proof}
Assume on the contrary that $m+M\leq \frac{\sqrt{n}}{100\log n}$. By Yao's minimax principle, there is a deterministic algorithm that distinguishes between $\calD_{YES}$ and $\calD_{NO}$ with probability $5/6$, using no more than $m$ samples and $M$ queries. By \Cref{lem:unseen-vx-bound}, we may further assume that $\ALG$ does not make ``large'' queries (though with a slightly lower success probability of $2/3$). We consider the collection of all full decision tree path 
$$P=\Big(s^{1},\dots,s^{(m)},q^{1},f(q^{(1)}),\dots,q^{(M)},f(q^{(M)})\Big)=(S^{*},Q,W),$$
where $W$ collects the data $\left(f(q^{(1)}),\dots,f(q^{(M)})\right)$. By \Cref{lem:no_more_than_2}, we have (note that for $\ALG(g,S)$ to ``follow the path $P=(S^{*},Q,W)$'' it is necessary that $S=S^*$)
\begin{equation}\label{eq:path_no_more_than_2}
\sum_{P=(S^{*},Q,V)}\Pr_{\substack{\pi\sim\frak{S}_{n}\\ S\sim\calD^{\yes}_{\pi}}}\Big[\ALG(f^{\yes}_{\pi},S)\text{ follows }P,\text{ and }\pi\in\Pi(S,Q)\Big]\geq \frac{4}{5}.
\end{equation}
By \Cref{lem:YES_NO_connection}, for each fixed path $P=(S^{*},Q,W)$ we have
\begin{align}
&\qquad\Pr_{\substack{\pi\sim\frak{S}_{n}\\ S\sim\calD^{\yes}_{\pi}}}\Big[\ALG(f^{\yes}_{\pi},S)\text{ follows }P,\text{ and }\pi\in\Pi(S,Q)\Big]
=\frac{\left|\Pi^{\yes}(S^{*},Q,W)\right|}{n!}\cdot\left(\frac{1}{n/2}\right)^{m}\nonumber\\
&=\frac{\left|\Pi^{\no}(S^{*},Q,W)\right|}{n!}\cdot\left(\frac{1}{n/2}\right)^{m}
=\Pr_{\substack{\pi\sim\frak{S}_{n}\\S\sim\calD^{\no}_{\pi}}}\Big[\ALG(f^{\no}_{\pi},S)\text{ follows }P,\text{ and }\pi\in\Pi(S,Q)\Big].\label{eq:path_connect}
\end{align}
Note that if $\ALG(g_{1},S)$ and $\ALG(g_{2},S)$ follow the same decision-tree path, then clearly we have $\ALG(g_{1},S)=\ALG(g_{2},S)$. Combining \eqref{eq:path_no_more_than_2} and \eqref{eq:path_connect} thus yields
$$\left|\Pr_{\substack{(f,\calD)\sim \calD_{YES}\\ S\sim \calD}}\Big[\ALG(f,S)\text{ accepts}\Big]-\Pr_{\substack{(f,\calD)\sim \calD_{NO}\\ S\sim \calD}}\Big[\ALG(f,S)\text{ accepts}\Big]\right|\leq \frac{1}{5},$$
a contradiction.
\end{proof}
\section{Sample-Based Lower Bounds}\label{sec:sample_based}

Our proof of the sample-based lower bounds (\Cref{thm:sample_based_wrapup}) relies on the support-size-distinction framework of \cite{blais2021vc}.
\begin{definition}[Support-size distinction, \cite{blais2021vc}]\label{def:SSD}

Fix positive integer $n$ and real numbers $\alpha,\beta$ with $0<\alpha<\beta<1$. Define $\mathsf{SSD}(n,\alpha,\beta)$ to be the minimum integer $m$ such that there is a randomized algorithm $\ALG'$ that for any distribution $\mathcal{D}$ over $[n]$, takes $m$ samples $x^{(1)},\dots,x^{(m)}$ from $\mathcal{D}$ and achieves the following:
\begin{itemize}[itemsep = 0pt]
\item If $|\supp(\mathcal{D})|\leq \alpha n$ and $\mathcal{D}(\{x\})\geq 1/n$ for all $x\in\supp(\mathcal{D})$, then $$\Pr_{\ALG',\,x^{(1)},\dots,x^{(m)}}\left[\ALG'(x^{(1)},\dots,x^{(m)})=1\right]\geq 2/3.$$
\item If $|\supp(\mathcal{D})|\geq \beta n$ and $\mathcal{D}(\{x\})\geq 1/n$ for all $x\in\supp(\mathcal{D})$, then $$\Pr_{\ALG',\,x^{(1)},\dots,x^{(m)}}\left[\ALG'(x^{(1)},\dots,x^{(m)})=0\right]\geq 2/3.$$
\end{itemize}
\end{definition}

The following lower bound on $\mathsf{SSD}(n,\alpha,\beta)$ essentially follows from \cite{wu2019chebyshev} but only explicitly appears in \cite{blais2021vc}. 

\begin{lemma}[\cite{wu2019chebyshev,blais2021vc}]\label{lem:SSD_lower_bound}

There exists a constant $C$ such that, for any $\delta\geq C(\log n)^{1/2}n^{-1/4}$ and $\alpha,1-\beta\geq \delta$, 
$$\mathsf{SSD}(n,\alpha,\beta)=\Omega\left(\frac{n\delta^{2}}{\log n}\right).$$
\end{lemma}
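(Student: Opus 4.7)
The plan is to establish the lower bound via Le Cam's two-point method, using Poissonization to decouple coordinates and then Chebyshev polynomial approximation to design two priors over distributions in $[n]$ whose sample distributions are statistically indistinguishable at the claimed sample budget.

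First, I would invoke the standard Poissonization reduction: replace the fixed sample size $m$ with $\mathrm{Poi}(m)$, so that the multiplicity $N_i$ of each element $i\in[n]$ in the sample is an independent $\mathrm{Poi}(m\cdot\mathcal{D}(\{i\}))$ variable. Standard concentration for the Poisson distribution shows a lower bound in the Poissonized model gives the same bound up to constants in the original. Next, construct two priors $\mu_{\yes}$ and $\mu_{\no}$ on distributions over $[n]$, satisfying the constraints of \Cref{def:SSD}, by drawing masses $(p_1,\ldots,p_n)$ i.i.d.\ from scalar ``prototype'' distributions $\pi_{\yes}$ and $\pi_{\no}$ on $\{0\}\cup[1/n,1/n']$ (for an appropriate scale $n'$) and renormalizing. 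The key invariants are that $\pi_{\yes}$ puts expected mass on $\{0\}$ equal to $1-\alpha$ (matching the $\leq \alpha n$ support-size constraint in expectation) while $\pi_{\no}$ puts expected mass on $\{0\}$ equal to $1-\beta$, with concentration ensuring the support-size constraints hold with high probability.

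Because the coordinates are i.i.d.\ under each prior, the Poissonized sample-count vector $(N_1,\ldots,N_n)$ has independent coordinates conditional on the prior, and the total variation between the two induced sample distributions factors through a sum over coordinates of the $\chi^2$-divergence between the two Poisson mixtures $\int \mathrm{Poi}(mp)\,d\pi_{\yes}(p)$ and $\int\mathrm{Poi}(mp)\,d\pi_{\no}(p)$. The heart of the argument is then to choose $\pi_{\yes}$ and $\pi_{\no}$ so these two mixtures agree on the first $k$ moments, i.e.\ $\int p^j\,d\pi_{\yes}(p)=\int p^j\,d\pi_{\no}(p)$ for all $j\leq k$. By Wu--Yang's use of the Chebyshev polynomial of best approximation to the indicator $\mathbf{1}\{x>0\}$, one can take $k=\Theta(\sqrt{n\log n})$ while still producing a gap of $\Omega(\delta n)$ between the expected supports; the Chebyshev estimates translate into a per-coordinate $\chi^2$-divergence of $O((m/n)^{2k+2})$, so that the aggregate TV between the two $n$-fold mixtures remains below $1/3$ so long as $m=O(n\delta^2/\log n)$.

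The main obstacle will be the polynomial-approximation step: one must pick the degree $k$, the scale of masses, and the gap $\delta$ so that the moment-matching construction (i) respects the hard support-size constraints $|\supp(\mathcal{D})|\leq\alpha n$ or $\geq \beta n$ and the per-mass lower bound $\mathcal{D}(\{i\})\geq 1/n$, and (ii) yields a $\chi^2$-divergence bound that stays $o(1)$ at $m=\Theta(n\delta^2/\log n)$. The hypothesis $\delta\geq C(\log n)^{1/2}n^{-1/4}$ is exactly the regime in which Chebyshev's extremal polynomial on an interval of the relevant length admits the degree-$\sqrt{n\log n}$ approximation needed. Since the full quantitative accounting is carried out in \cite{wu2019chebyshev} and packaged in precisely the form of \Cref{def:SSD} in \cite{blais2021vc}, the proof here ultimately reduces to verifying that our $(\alpha,\beta,\delta)$ parameters fall into the range where those two works apply and then invoking their main theorems.
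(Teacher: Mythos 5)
The paper does not prove Lemma~\ref{lem:SSD_lower_bound}; it imports it from \cite{wu2019chebyshev,blais2021vc} without argument. Your sketch accurately reconstructs the method behind those cited works --- Poissonization to decouple coordinates, two i.i.d.-product priors whose per-coordinate mass distributions match moments up to degree $\Theta(\sqrt{n\log n})$ via a Chebyshev best-approximation construction, and a per-coordinate $\chi^2$-divergence bound that keeps the aggregate total variation below a constant at $m = \Theta(n\delta^2/\log n)$ --- and then, like the paper, ultimately defers to \cite{wu2019chebyshev} and \cite{blais2021vc} for the quantitative accounting. That is consistent with what the paper does; the only substantive point worth stressing if you wanted to make this self-contained is the item you flag as the ``main obstacle'': the Wu--Yang construction in its raw form does not impose the hard per-element mass floor $\mathcal{D}(\{x\})\geq 1/n$ appearing in \Cref{def:SSD}, and one has to check (as \cite{blais2021vc} does in packaging the bound) that the moment-matched priors can be supported on $\{0\}\cup[1/n, O(\log n/n)]$ so that every element in the support carries at least $1/n$ mass while the degree-$k$ moment matching and the renormalization to a genuine probability distribution both go through. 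Absent that verification, your sketch would not by itself establish the lemma, but as an account of where the argument lives and what it relies on, it matches the paper's treatment.
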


We also need the following corollary of the Sauer-Shelah lemma.

\begin{lemma}[Lemma 2.7 of \mbox{\cite{blais2021vc}}]\label{lem:Sauer-Shelah}

Let $\mathcal{X}$ be a finite set, and let $\mathcal{H}$ be a class of functions $\mathcal{X}\rightarrow\{0,1\}$ with $\mathsf{VC}(\mathcal{H})=d$. If $T\subset \mathcal{X}$ has size $|T|\geq 4d$, then a uniformly random function $f:T\rightarrow\{0,1\}$ satisfies 
$$\Pr_{f}\left[\Big|\{x\in T:f(x)\neq h(x)\}\Big|\geq \frac{|T|}{100}\text{ for all } h\in\mathcal{H}\right]\geq 1-e^{-d/10}.$$
\end{lemma}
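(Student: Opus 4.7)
The plan is to combine the Sauer--Shelah lemma with a Chernoff upper-tail bound for the Binomial distribution and a union bound over the restriction of $\mathcal{H}$ to $T$. Since the event $\lvert\{x\in T:f(x)\ne h(x)\}\rvert \ge |T|/100$ depends on $h$ only through $h|_T$, I would first reduce to the restricted class $\mathcal{H}|_T:=\{h|_T:h\in\mathcal{H}\}$. The Sauer--Shelah lemma then bounds its size by $\lvert\mathcal{H}|_T\rvert \le \sum_{i=0}^{d}\binom{|T|}{i} \le (e|T|/d)^{d}$.

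Next, for each fixed $h\in\mathcal{H}|_T$, the uniformly random choice of $f$ makes $\lvert\{x\in T:f(x)=h(x)\}\rvert$ distributed as $\mathrm{Binomial}(|T|,1/2)$. A standard Chernoff tail bound (equivalently, the KL-divergence form) gives
$$\Pr_{f}\Big[\lvert\{x\in T:f(x)=h(x)\}\rvert>\tfrac{99}{100}|T|\Big]\;\le\;e^{-c|T|}$$
for an absolute constant $c=D(0.99\,\|\,1/2)\approx 0.64$. A union bound over $\mathcal{H}|_T$ then bounds the probability that some $h$ disagrees with $f$ on fewer than $|T|/100$ points by $(e|T|/d)^{d}\cdot e^{-c|T|}$.

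The only step that requires real care is checking that the Chernoff decay dominates the Sauer--Shelah growth already at the boundary $|T|=4d$: substituting $|T|\ge 4d$, the bound above is at most $(4e)^{d}e^{-4cd}=e^{(1+\ln 4-4c)d}$, and since $4c-\ln 4-1$ exceeds $1/10$ for $c\approx 0.64$, this is at most $e^{-d/10}$, which completes the proof. The constants are tight enough that one genuinely needs to verify the arithmetic, but there is no deeper obstacle because Sauer--Shelah and the Chernoff bound are both off-the-shelf.
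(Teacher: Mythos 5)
The paper does not give its own proof of this lemma; it cites it verbatim as Lemma 2.7 of \cite{blais2021vc}, so there is no in-paper argument to compare against. Your proposal is the standard proof and it is correct: Sauer--Shelah gives $|\mathcal{H}|_T|\le (e|T|/d)^d$, the KL-form Chernoff bound gives $\Pr[\mathrm{Binom}(|T|,1/2)>0.99|T|]\le e^{-D(0.99\,\|\,1/2)|T|}$ with $c=D(0.99\,\|\,1/2)\approx 0.637$, and the union bound yields failure probability at most $(e|T|/d)^d e^{-c|T|}$. The one step you should make fully explicit is that this product is decreasing in $|T|$ on $[4d,\infty)$: its log-derivative in $|T|$ is $d/|T|-c\le 1/4 - c<0$, so it suffices to check the endpoint $|T|=4d$, where the exponent is $(1+\ln 4-4c)d\approx -0.162d<-d/10$, matching the claimed bound.
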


We then state our main structural lemma.

\begin{lemma}\label{lem:shatter}
Let $m=\frac{n}{10\log n}$, and let $y^{(1)},y^{(2)},\dots,y^{(m)}$ be independent samples from the uniform distribution on $\{y\in\{0,1\}^{n}:\|y\|_{1}=n-\log n\}$. Then the set $\{y^{(1)},y^{(2)},\dots,y^{(m)}\}$ is shattered by the class of monotone conjunctions with probability at least $1-O(n^{-1})$.
\end{lemma}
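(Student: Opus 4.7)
The plan is to reformulate shattering in terms of the complements $A_j := [n] \setminus \supp(y^{(j)})$, each of which is a uniformly random $\log n$-subset of $[n]$, with the $A_j$'s mutually independent. A monotone conjunction $c_S(y) := \bigwedge_{i \in S} y_i$ satisfies $c_S(y^{(j)}) = 1$ iff $S \cap A_j = \emptyset$, so to realize a target labeling $T' \subseteq [m]$ (meaning $c_S(y^{(j)}) = 1$ iff $j \in T'$) we need $S$ disjoint from every $A_j$ with $j \in T'$ and meeting every $A_j$ with $j \notin T'$. I claim such an $S$ exists for every choice of $T'$ if and only if each $A_j$ contains a \emph{private} element, i.e., an element not belonging to any $A_i$ with $i \neq j$: necessity follows by considering $T' = [m] \setminus \{j\}$, and sufficiency by picking $S$ to be one private element from each $A_j$ with $j \notin T'$ (privacy ensures that $S \cap A_j = \emptyset$ for $j \in T'$). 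Thus the lemma reduces to showing that with probability $1 - O(n^{-1})$, every $A_j$ has a private element.

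Fix $j \in [m]$ and let $B_j$ denote the bad event $A_j \subseteq \bigcup_{i \neq j} A_i$. For any $k \in [n]$, independence of the $A_i$'s and the uniform choice of each $A_i$ give $\Pr[k \in \bigcup_{i \neq j} A_i] = p := 1 - (1 - \log n / n)^{m-1}$, and Bernoulli's inequality yields $p \leq (m-1)\log n / n \leq 1/10$. To bound $\Pr[B_j]$ I would invoke negative association. The indicator family $X_i^{(k)} := \mathbbm{1}[k \in A_i]$ for $k \in [n]$, $i \neq j$ is NA, since for each fixed $i$ the indicator vector of the uniformly random $\log n$-subset $A_i$ is NA (the classical fixed-sum / permutation-distribution fact), and the $A_i$'s across $i$ are independent (and the concatenation of independent NA families remains NA). Conditional on $A_j = \{k_1, \dots, k_{\log n}\}$ (which is independent of $(A_i)_{i \neq j}$), each event $E_{k_l} := \{k_l \in \bigcup_{i \neq j} A_i\}$ is a non-decreasing function of the disjoint slice $(X_i^{(k_l)})_{i \neq j}$, so NA yields
$$\Pr[B_j \mid A_j] \;=\; \Pr\Big[\bigcap_{l=1}^{\log n} E_{k_l} \;\Big|\; A_j\Big] \;\leq\; \prod_{l=1}^{\log n} \Pr[E_{k_l}] \;=\; p^{\log n}.$$

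Taking expectation over $A_j$ and union-bounding over $j \in [m]$ then gives
$$\Pr[\exists j: B_j] \;\leq\; m \cdot p^{\log n} \;\leq\; \frac{n}{10 \log n} \cdot \left(\frac{1}{10}\right)^{\log n} \;=\; O(n^{-1}),$$
since $(1/10)^{\log n} \leq n^{-c}$ for a constant $c > 1$ regardless of the convention for $\log$ (base at least $2$). The key conceptual step is the reduction of shattering to the existence of a private element in each $A_j$, and the main technical ingredient is negative association, which upgrades the trivial pairwise bound on the $E_{k_l}$'s to the product-type bound $p^{\log n}$; once NA is in hand, the rest is a direct calculation via Bernoulli's inequality.
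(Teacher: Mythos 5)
Your proof is correct, and it takes a recognizable but genuinely different path through the argument. The bad event is the same in both proofs: your $B_j = \{A_j \subseteq \bigcup_{i \neq j} A_i\}$ is exactly the paper's $\Gamma_j = \{\supp(y^{(j)}) \supset \bigcap_{i\neq j}\supp(y^{(i)})\}$ after taking complements. Where you diverge is in two places. First, you prove a clean \emph{iff}: the set is shattered by monotone conjunctions if and only if every $A_j$ has a private element, and your witness conjunction $S = \{a_j : j \notin T'\}$ (one private element per negative example) is dual to the paper's choice $E = \bigcap_{j: b^{(j)}=1}\supp(y^{(j)})$ (the common support of the positive examples); both verify sufficiency, but the paper only needs and only proves that direction. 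Second, and more substantively, you bound $\Pr[B_j]$ by conditioning on $A_j$ and applying negative association to the indicators $\mathbbm{1}[k\in A_i]$, $i\neq j$, to decouple the $\log n$ events $E_{k_l}$ into $p^{\log n}$. The paper instead conditions on the \emph{other} samples $(A_i)_{i\neq j}$, at which point $A_j$ is a single uniformly random $\log n$-subset and one gets the same $(1/10)^{\log n}$ bound from the elementary ratio $\binom{n/10}{\log n}/\binom{n}{\log n} \le (1/10)^{\log n}$, with no need for NA. The paper's route is a bit more elementary (no NA machinery), while yours buys a sharper conceptual picture (the private-element characterization) at the cost of invoking the Joag-Dev--Proschan NA facts; both give the same quantitative conclusion $m\cdot (1/10)^{\log n} = O(n^{-1})$.
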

\begin{proof}
For each $k\in [m]$, define the event
$$\Gamma_{k}:=\left\{\supp(y^{(k)})\supset \bigcap\nolimits_{j\in [m]\setminus\{k\}}\supp(y^{(j)})\right\}.$$

For each $k\in [m]$, the intersection $\bigcap_{j\in [m]\setminus\{k\}}\supp(y^{(j)})$ has cardinality at least $n-(m-1)\cdot \log n\geq \frac{9}{10}n$. Since $y^{(k)}$ is independent with the samples $\{y^{(j)}\}_{j\in [m]\setminus\{k\}}$, we know that

$$\Pr[\Gamma_{k}]=\frac{\Big|\big\{y\in\{0,1\}^{n}:\|y\|_{1}=n-\log n\text{, and }\supp(y)\supset \bigcap_{j\in [m]\setminus\{k\}}\supp(y^{(j)})\big\}\Big|}{\Big|\big\{y\in\{0,1\}^{n}:\|y\|_{1}=n-\log n\big\}\Big|}\leq \frac{\binom{n/10}{\log n}}{\binom{n}{\log n}}\leq \frac{1}{10^{\log n}}.$$


It then follows from union bound that
$$\Pr\left[\bigcup\nolimits_{k\in [m]}\Gamma_{k}\right]\leq \frac{m}{10^{\log n}}=O(n^{-1}).$$

Let $\Gamma=\bigcup\nolimits_{k\in [m]}\Gamma_{k}$. It now suffices to show that, on the complement event $\Gamma^{c}$, the set $\{y^{(1)},\dots,y^{(m)}\}$ is always shattered by the class of monotone conjunctions. Let $b^{(1)},\dots,b^{(m)}\in \{0,1\}$ be any sequence of bits, and whenver $\Gamma^{c}$ holds we will define a monotone conjunction $f$ such that $f(y^{(k)})=b^{(k)}$ for each $k\in [m]$. Let 
$$E=\bigcap_{j\in [m]:\; b^{(j)}=1}\supp(y^{(j)}),$$
and for $x\in \{0,1\}^{n}$, let $f(x)=\bigwedge\nolimits_{i\in E}x_{i}$. For $k\in [m]$, it follows directly from the definition of $f$ that if $b^{(k)}=1$ then $f(y^{(k)})=b^{(k)}$. If $b^{(k)}=0$, then due to the condition of $\Gamma^{c}$ we have $$\supp(y^{(k)})\not\supset \bigcap_{j\in [m]:\; b^{(j)}=1}\supp(y^{(j)}).$$
By the definition of $f$, this implies that $f(y^{(k)})=0=b^{(k)}$.
\end{proof}

Now we are ready to prove a reduction from sample-based testing of monotone conjunctions to the problem of support-size distinction.

\begin{theorem}\label{thm:sample_based}
Let $\varepsilon\leq 1/250$ be a fixed positive number. Assume that $\ALG$ is a randomized sample-based algorithm that for any function $f:\{0,1\}^{n}\rightarrow\{0,1\}$ and any distribution $\mathcal{D}$ over $\{0,1\}^{n}$, when given as input a sequence $S=\{(y^{(k)},f(y^{(k)}))\}_{k\in [m]}$ such that $y^{(k)}\overset{\text{i.i.d.}}{\sim}\mathcal{D}$, achieves the following:
\begin{itemize}[itemsep = 0pt]
\item If $f$ is a monotone conjunction, then $\Pr_{\ALG,S}\left[\ALG(S)=1\right]\geq 5/6$; and,
\item If $f$ is $\varepsilon$-far from linear threshold functions under $\calD$ then $\Pr_{\ALG,S}\left[\ALG(S)=0\right]\geq 5/6$.
\end{itemize}
Then we must have $m=\Omega(n/\log^{3}n)$.
\end{theorem}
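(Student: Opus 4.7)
The plan is a reduction from support-size distinction ($\mathsf{SSD}$) to sample-based distribution-free testing of monotone conjunctions, combining Lemma \ref{lem:SSD_lower_bound}, Lemma \ref{lem:shatter}, and Lemma \ref{lem:Sauer-Shelah}. I would choose $\mathsf{SSD}$ parameters $K = 5n$, $\alpha = 1/(50 \log n)$, $\beta = 1 - \alpha$, and distinguishing margin $\delta = \alpha$. For $n$ sufficiently large the condition $\delta \ge C(\log K)^{1/2} K^{-1/4}$ of Lemma \ref{lem:SSD_lower_bound} is satisfied, yielding $\mathsf{SSD}(K,\alpha,\beta) = \Omega(K\delta^2/\log K) = \Omega(n/\log^3 n)$, which will be the target bound.

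Given an $\mathsf{SSD}$ instance $\mu$ on $[K]$ with $\mu(i) \ge 1/K$ on its support, I would build a testing instance $(f,\calD)$ over $\{0,1\}^n$ as follows. First, draw iid uniform random strings $y^{(1)}, \ldots, y^{(K)}$ from $\{y \in \{0,1\}^n : \|y\|_1 = n - \log n\}$ together with iid uniform bits $b^{(1)}, \ldots, b^{(K)}$, and let $\calD$ be the pushforward of $\mu$ under $i \mapsto y^{(i)}$. In the YES case, where $|\supp(\mu)| \le \alpha K = n/(10\log n)$, Lemma \ref{lem:shatter} applied to the iid sub-sequence indexed by $\supp(\mu)$ implies that $\{y^{(i)} : i \in \supp(\mu)\}$ is shattered by monotone conjunctions with probability $1 - O(n^{-1})$; I would choose $f$ to be any monotone conjunction that agrees with $b^{(i)}$ at $y^{(i)}$ for every $i \in \supp(\mu)$. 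In the NO case, set $f(y^{(i)}) = b^{(i)}$ for $i \in \supp(\mu)$ and extend $f$ arbitrarily outside $\supp(\calD)$.

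For the NO case analysis, let $T = \{y^{(i)} : i \in \supp(\mu)\}$, so $|T| \ge \beta K \ge 4(n+1)$ for large $n$. Since linear threshold functions on $\{0,1\}^n$ have VC dimension $n+1$, Lemma \ref{lem:Sauer-Shelah} (conditional on the $y^{(i)}$'s and on $\mu$) gives that with probability at least $1 - e^{-(n+1)/10}$ over $b$, the labeling $y^{(i)} \mapsto b^{(i)}$ differs from every LTF on at least $|T|/100$ points of $T$. Combined with $\calD(\{y^{(i)}\}) = \mu(i) \ge 1/K$ for each $i \in \supp(\mu)$, this yields $\dist_{\calD}(f, \text{LTFs}) \ge |T|/(100 K) \ge \beta/100 > 1/250 \ge \eps$, so $f$ is $\eps$-far from LTFs under $\calD$ as required.

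Finally, any hypothesized sample-based tester $\ALG$ using $m$ samples would yield an $m$-sample $\mathsf{SSD}$ algorithm on $[K]$: on an $\mathsf{SSD}$ sample sequence $i_1, \ldots, i_m$, internally generate the $y^{(i)}$'s and $b^{(i)}$'s, feed $\ALG$ the sequence $(y^{(i_1)}, f(y^{(i_1)})), \ldots, (y^{(i_m)}, f(y^{(i_m)}))$, and output $\ALG$'s decision. By the preceding analysis this $\mathsf{SSD}$ algorithm succeeds with probability at least $5/6 - o(1) \ge 2/3$, so Lemma \ref{lem:SSD_lower_bound} forces $m = \Omega(n/\log^3 n)$. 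The main technical subtlety is justifying the shattering step in the YES case, since $\supp(\mu)$ is determined by the $\mathsf{SSD}$ instance rather than being chosen in advance; this is resolved by conditioning on $\supp(\mu)$ first, after which the relevant $y^{(i)}$'s are $|\supp(\mu)| \le n/(10 \log n)$ iid samples of the exact form demanded by Lemma \ref{lem:shatter}, so the lemma's hypotheses are met verbatim and the rest is bookkeeping of constants.
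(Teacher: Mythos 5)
Your proposal is correct and follows essentially the same route as the paper: reduce from support-size distinction via Lemma \ref{lem:SSD_lower_bound}, use Lemma \ref{lem:shatter} to argue the YES instances embed as monotone conjunctions, and use the Sauer--Shelah corollary (Lemma \ref{lem:Sauer-Shelah}) for the NO instances. The parameter choices differ slightly ($K=5n$ and $\beta=1-\alpha$ versus the paper's $K=10n$ and $\beta=1/2$), but both settings satisfy the hypotheses of Lemma \ref{lem:SSD_lower_bound} and give $\Omega(n/\log^3 n)$. One small point to tighten: you draw a random bit $b^{(i)}$ per domain element $i\in[K]$, whereas the paper draws a random function $f$ on the \emph{codomain} $\{y:\|y\|_1=n-\log n\}$ and then also inserts a Chernoff step to lower bound the image size $|\varphi(\supp(\mathcal{D}))|$. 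Your variant needs to rule out collisions among the $y^{(i)}$ in order for $f$ to be well-defined and for $|T|=|\supp(\mu)|\ge\beta K$ to hold as you claim; since the codomain has size $\binom{n}{\log n}$, a union bound gives collision probability $O(K^2/\binom{n}{\log n})=o(1)$, so this is a cosmetic omission rather than a real gap, but it should be stated.
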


\begin{proof}
We will construct a support-size-distinguisher out of $\ALG$. Let $\mathcal{D}$ be a distribution over $[10n]$ with $\mathcal{D}(\{x\})\geq (10n)^{-1}$ for each $x\in\supp(\mathcal{D})$. Consider the following algorithm:

\begin{algorithm}[t!]
\caption{Reduction to support-size distinction}\label{alg:reduce_SSD}
\begin{algorithmic}[1]
\Require{Sample access to $\mathcal{D}$ over $[10n]$}
\State Draw samples $x^{(1)},x^{(2)},\dots,x^{(m)}\in [10n]$. 
\State Let $\varphi$ be a uniformly random function from $[10n]$ to the set $\{y\in\{0,1\}^{n}:\|y\|_{1}=n-\log n\}$. 
\State Let $f$ be a uniformly random function from $\{y\in\{0,1\}^{n}:\|y\|_{1}=n-\log n\}$ to $\{0,1\}$.
\State Construct a sequence $$S=\Big\{\Big(\varphi(x^{(k)}), f\left(\varphi(x^{(k)})\right)\Big)\Big\}_{k\in [m]}.$$
\State \textbf{return} $\ALG(S)$.
\end{algorithmic}
\end{algorithm}

\vspace{6pt}

We then show that \Cref{alg:reduce_SSD} can distinguish between the case where $\mathcal{D}$ has a small support-size and the case where it has a large support-size.

\textbf{Small-support-size case.} If $\mathcal{D}$ has support size at most $\frac{n}{10\log n}$, then by \Cref{lem:shatter}, the image $\varphi(\supp(\mathcal{D}))$ is shattered by the class of monotone conjunctions with probability at least $1-O(n^{-1})$. So with probability at least $1-O(n^{-1})$, the sequence $S$ constructed in step 4 of \Cref{alg:reduce_SSD} is consistent with a monotone conjunction. Thus by assumption we have $$\Pr_{\ALG,\,\varphi,\,f,\,x^{(1)},\dots,x^{(m)}}[\ALG(S)=1]\geq \frac{5}{6}(1-O(n^{-1}))\geq \frac{2}{3}.$$

\textbf{Large-support-size case.} If $\mathcal{D}$ has support size at least $5n$, then by Chernoff bound we have 
\begin{equation}\label{eq:Chernoff_varphi}
\Pr_{\varphi}\left[\big|\varphi(\supp(\mathcal{D}))\big|\leq 4n\right]\leq e^{-n/10}.
\end{equation}
For any fixed $\varphi$ such that $\big|\varphi(\supp(\mathcal{D}))\big|\geq 4n$, using \Cref{lem:Sauer-Shelah} with $\mathcal{X}:=\{y\in\{0,1\}^{n}:\|y\|_{1}=n-\log n\}$,  $\mathcal{H}:=\mathscr{LTF}$ (the class of linear threshold functions) and $T:=\varphi(\supp(\mathcal{D}))$, we have
$$\Pr_{f}\left[\Big|\{x\in T:f(x)\neq h(x)\}\Big|\geq \frac{|T|}{100}\text{ for all } h\in\mathscr{LTF}\right]\geq 1-e^{-n/10}.$$
For each element of $T$, by assumption we have $\mathcal{D}\circ\varphi^{-1}(\{x\})\geq (10n)^{-1}$. So the previous display implies that for any fixed $\varphi$ such that $\big|\varphi(\supp(\mathcal{D}))\big|\geq 4n$,
$$
\Pr_{f}\left[\mathsf{dist}_{\mathcal{D}\circ\varphi^{-1}}(f,\,\mathscr{LTF})\geq \frac{1}{250}\right]\geq 1-e^{-n/10}.
$$
Combining with \eqref{eq:Chernoff_varphi}, we see that 
$$\Pr_{f,\varphi}\left[\mathsf{dist}_{\mathcal{D}\circ\varphi^{-1}}(f,\,\mathscr{LTF})\geq \frac{1}{250}\right]\geq 1-2e^{-n/10}.$$
Since $\varphi(x^{(1)}),\dots,\varphi(x^{(m)})$ are independent samples from the push-forward distribution $\mathcal{D}\circ \varphi^{-1}$, by the guarantee of $\ALG$, the previous display implies 
$$\Pr_{\ALG,\,\varphi,\,f,\,x^{(1)},\dots,x^{(m)}}[\ALG(S)=0]\geq \frac{5}{6}(1-2e^{-n/10})\geq \frac{2}{3}.$$

In conclusion, we have shown that $\ALG$ solves the problem (\Cref{def:SSD}) on $[10n]$ with $\alpha =(100\log n)^{-1}$ and $\beta =1/2$ with $m$ samples. Using \Cref{lem:SSD_lower_bound} with $\delta=(100\log n)^{-1}$, we conclude that
\[m\geq \Omega\left(\frac{10n\cdot\delta^{2}}{\log (10n)}\right)=\Omega\left(\frac{n}{\log^{3}n}\right). \qedhere\]
\end{proof}

\bibliographystyle{alpha}
\bibliography{reference}

\begin{thebibliography}{SSBD14}

\bibitem[AC06]{AilonChazelle}
N.~Ailon and B.~Chazelle.
\newblock Information theory in property testing and monotonicity testing in
  higher dimension.
\newblock {\em Information and Computation}, 204(11):1704--1717, 2006.

\bibitem[BFH21]{blais2021vc}
Eric Blais, Renato {Ferreira Pinto Jr}, and Nathaniel Harms.
\newblock {VC} dimension and distribution-free sample-based testing.
\newblock In {\em Proceedings of the 53rd Annual ACM SIGACT Symposium on Theory
  of Computing}, pages 504--517, 2021.

\bibitem[CP22]{ChenPatel}
Xi~Chen and Shyamal Patel.
\newblock {\em Distribution-free Testing for Halfspaces (Almost) Requires PAC
  Learning}, pages 1715--1743.
\newblock 2022.

\bibitem[CX16]{chen2016tight}
Xi~Chen and Jinyu Xie.
\newblock Tight bounds for the distribution-free testing of monotone
  conjunctions.
\newblock In {\em Proceedings of the 27th Annual ACM-SIAM Symposium on Discrete
  Algorithms}, pages 54--71. SIAM, 2016.

\bibitem[DR11]{DolevRon}
E.~Dolev and D.~Ron.
\newblock Distribution-free testing for monomials with a sublinear number of
  queries.
\newblock {\em Theory of Computing}, 7(1):155--176, 2011.

\bibitem[GGR98]{goldreich1998property}
Oded Goldreich, Shari Goldwasser, and Dana Ron.
\newblock Property testing and its connection to learning and approximation.
\newblock {\em Journal of the ACM (JACM)}, 45(4):653--750, 1998.

\bibitem[GLR01]{GLR01}
David Guijarro, Victor Lavin, and Vijay Raghavan.
\newblock Monotone term decision lists.
\newblock {\em Theoretical Computer Science}, 259(1-2):549--575, 2001.

\bibitem[GS09]{GlasnerServedio}
D.~Glasner and R.~Servedio.
\newblock Distribution-free testing lower bound for basic boolean functions.
\newblock {\em Theory of Computing}, 5(10):191--216, 2009.

\bibitem[HK07]{HalevyKushilevitz}
S.~Halevy and E.~Kushilevitz.
\newblock Distribution-free property-testing.
\newblock {\em SIAM Journal on Computing}, 37(4):1107--1138, 2007.

\bibitem[HK08a]{HalevyKushilevitz2}
S.~Halevy and E.~Kushilevitz.
\newblock Distribution-free connectivity testing for sparse graphs.
\newblock {\em Algorithmica}, 51(1):24--48, 2008.

\bibitem[HK08b]{HalevyKushilevitz3}
S.~Halevy and E.~Kushilevitz.
\newblock Testing monotonicity over graph products.
\newblock {\em Random Structures \& Algorithms}, 33(1):44--67, 2008.

\bibitem[LP09]{lovasz2009matching}
L{\'a}szl{\'o} Lov{\'a}sz and Michael~D Plummer.
\newblock {\em Matching theory}, volume 367.
\newblock American Mathematical Soc., 2009.

\bibitem[Mit96]{mitzenmacher1996power}
Michael Mitzenmacher.
\newblock The power of two choices in randomized load balancing.
\newblock {\em PhD thesis, University of California at Berkeley}, 1996.

\bibitem[Riv87]{Rivest87}
Ronald~L. Rivest.
\newblock Learning decision lists.
\newblock {\em Machine Learning}, 2(3):229--–246, 1987.

\bibitem[SSBD14]{MLbook}
Shai Shalev-Shwartz and Shai Ben-David.
\newblock {\em Understanding machine learning: From theory to algorithms}.
\newblock Cambridge university press, 2014.

\bibitem[Tur93]{turan1993lower}
Gy{\"o}rgy Tur{\'a}n.
\newblock Lower bounds for pac learning with queries.
\newblock In {\em Proceedings of the sixth annual conference on Computational
  learning theory}, pages 384--391, 1993.

\bibitem[Val84]{Valiant}
L.G. Valiant.
\newblock A theory of the learnable.
\newblock {\em Communications of the ACM}, 27(11):1134--1142, 1984.

\bibitem[WY19]{wu2019chebyshev}
Yihong Wu and Pengkun Yang.
\newblock Chebyshev polynomials, moment matching, and optimal estimation of the
  unseen.
\newblock {\em Annals of Statistics}, 47(2):857--883, 2019.

\end{thebibliography}
\end{document}